\newtheorem{theorem}{Theorem}
\newtheorem{proposition}[theorem]{Proposition}
\newtheorem{lemma}[theorem]{Lemma}
\newtheorem{claim}[theorem]{Claim}
\newtheorem{corollary}[theorem]{Corollary}
\theoremstyle{definition}
\newtheorem{definition}[theorem]{Definition}
\newtheorem{remark}[theorem]{Remark}
\theoremstyle{remark}
\newcommand{\cA}{\mathcal{A}}\newcommand{\cB}{\mathcal{B}}
\newcommand{\cC}{\mathcal{C}}
\newcommand{\cE}{\mathcal{E}}
\newcommand{\cK}{\mathcal{K}}\newcommand{\cL}{\mathcal{L}}
\newcommand{\cM}{\mathcal{M}}
\newcommand{\cP}{\mathcal{P}}
\newcommand{\bC}{\mathbb{C}}
\newcommand{\bE}{\mathbb{E}}\newcommand{\bF}{\mathbb{F}}
\newcommand{\bN}{\mathbb{N}}
\newcommand{\bZ}{\mathbb{Z}}
\newcommand{\poly}{\operatorname{poly}}
\newcommand{\Enc}{\operatorname{Enc}}
\newcommand{\Dec}{\operatorname{Dec}}
\newcommand{\ListDec}{\operatorname{ListDec}}
\newcommand{\Rec}{\operatorname{Rec}}
\newcommand{\evl}{\operatorname{ev}}
\newcommand{\fevl}{\widetilde{\operatorname{ev}}}
\newcommand{\CSS}{\operatorname{CSS}}
\newcommand{\dis}{\operatorname{dis}}
\newcommand{\foldi}[1]{F_{#1}}
\newcommand{\nc}{\newcommand}
\nc{\on}{\operatorname}
\nc{\Spec}{\on{Spec}}
\nc{\Aut}{\textit{Aut}}
\nc{\id}{\textit{id}}
\nc{\chr}{\on{char}}
\nc{\im}{\on{im}}
\nc{\Hom}{\on{Hom}}
\nc{\lcm}{\on{lcm}}
\nc{\dual}[1]{\prescript{t}{}{#1}}
\nc{\transpose}[1]{{#1}^{\intercal}}
\nc{\Sym}{\on{Sym}}
\nc{\End}{\on{End}}
\nc{\stab}{\on{stab}}
\nc{\Li}{\on{Li}}
\nc{\spn}{\on{span}}
\nc{\sgn}{\on{sgn}}
\nc{\supp}{\on{supp}}
\nc{\Unif}{\on{Unif}}
\DeclareMathOperator*{\argmax}{arg\,max}
\DeclareMathOperator*{\argmin}{arg\,min}
\title{Quantum Locally Recoverable Codes\thanks{Research supported in part by a Simons Investigator award, and a UC Noyce initiative award. L.~Golowich is supported by a National Science Foundation Graduate Research Fellowship under Grant No.~DGE 2146752.}
}
\author{Louis Golowich \\
  UC Berkeley \\
  \href{mailto:lgolowich@berkeley.edu}{\texttt{lgolowich@berkeley.edu}}
  \and
  Venkatesan Guruswami \\
  UC Berkeley \\
  \href{mailto:venkatg@berkeley.edu}{\texttt{venkatg@berkeley.edu}}
}
\begin{document}

\pagenumbering{gobble}

\maketitle


\begin{abstract}
  Classical locally recoverable codes, which permit highly efficient recovery from localized errors as well as global recovery from larger errors, provide some of the most useful codes for distributed data storage in practice. In this paper, we initiate the study of quantum locally recoverable codes (qLRCs). In the long term, like their classical counterparts, such qLRCs may be used for large-scale quantum data storage. Furthermore, our results have concrete implications for quantum LDPC codes, which are widely applicable to near-term quantum error-correction, as local recoverability is a weakening of the LDPC property.

  After defining quantum local recoverability, we provide an explicit construction of qLRCs based on the classical LRCs of Tamo and Barg (2014), which we show have (1) a close-to-optimal rate-distance tradeoff (i.e.~near the Singleton bound), (2) an efficient decoder, and (3) permit good spatial locality in a physical implementation. The analysis for both the distance and the efficient decoding of these quantum Tamo-Barg (qTB) codes is significantly more involved than in the classical case. Nevertheless, we obtain close-to-optimal parameters by introducing a ``folded'' version of these qTB codes, which we then analyze using a combination of algebraic techniques. We furthermore present and analyze two additional constructions using more basic techniques, namely random qLRCs, and qLRCs from AEL distance amplification. Each of these constructions has some advantages, but neither achieves all 3 properties of our folded qTB codes described above.

  We complement these constructions with Singleton-like bounds that show our qLRC constructions achieve close-to-optimal parameters. We also apply these results to obtain Singleton-like bounds for qLDPC codes, which to the best of our knowledge are novel. We then show that even the weakest form of a stronger locality property called local correctability, which permits more robust local recovery and is achieved by certain classical codes, is impossible quantumly.
\end{abstract}

\newpage

\tableofcontents

\newpage

\pagenumbering{arabic}


\section{Introduction}
\label{sec:intro}
Classical locally recoverable codes (LRCs) provide one of the most important coding theoretic tools for distributed data storage. Such codes are defined to permit highly efficient recovery from common localized errors, as well as larger-scale recovery from rarer but more ``catastrophic'' global errors.

In this paper, we initiate the study of quantum locally recoverable codes (qLRCs). In particular, we define qLRCs, present and analyze constructions, and also prove fundamental limitations on the achievable parameters and properties. While our constructions can be viewed as quantum generalizations of classical constructions, the analysis becomes surprisingly intricate, and requires new ideas that were not needed classically. Our results may also shed light on the study of locality in quantum coding theory, for instance as it pertains to quantum LDPC codes.

Classically, the properties of a LRC are well suited for the needs of large datacenters, which can cost billions of dollars to build and maintain, and must often account for localized server failures while also handling occasional more global failures. Indeed, companies such as Microsoft \cite{huang_erasure_2012} and Facebook \cite{muralidhar_f4_2014} have implemented LRCs to obtain improved performance for data storage.

Currently, experimental quantum computers remain at a vastly smaller scale than that of the classical datacenters in which LRCs are often used in practice. However, it is not implausible that quantum computing technology eventually follows its classical counterpart by growing to the scale where codes such as qLRCs become an integral part of quantum data storage.

Furthermore, our study of qLRCs reveals the potential for more broad and near-term implications as well. Indeed, locality properties in quantum codes, such as the ability to decode using local measurements (i.e.~LDPC codes), are of particular importance for quantum error correction. Yet such locality is notoriously difficult to achieve in the quantum setting. Indeed, the first linear-distance quantum LDPC (qLDPC) codes were only recently constructed \cite{panteleev_asymptotically_2022,leverrier_quantum_2022-1,dinur_good_2023}, and good quantum codes with stronger properties such as local testability have yet to be constructed. This difficulty of achieving locality is in contrast to the classical setting, where good LDPC codes have been known for decades (e.g.~\cite{sipser_expander_1996}), good locally testable codes were recently constructed \cite{panteleev_asymptotically_2022,dinur_locally_2022}, and other strong locality properties such as local correctability exist in linear-distance, albeit low-rate codes.

From this perspective, our study of qLRCs provides a new angle to investigate locality properties in quantum codes. Indeed, classical local recoverability requires each code component to participate in one low-weight parity check, while quantum local recoverability requires each code component to participate in two low-weight stabilizers. Thus local recoverability can be viewed as a weaking of the LDPC property, in which each code component participates in many low-weight parity checks/stabilizers. Our study of qLRCs can therefore be viewed as progress towards understanding stronger locality properties possessed by qLDPC codes. One concrete example of this connection is provided in Section~\ref{sec:singleton}, where we show that qLRCs, and therefore also qLDPC codes, of constant locality $r=O(1)$ must have relative distance bounded away from $1/2$; to the best of our knowledge such a bound for qLDPC codes had not been previously shown.

\subsection{Our Contributions}
In this section we present the contributions of our paper. For details on notation or basic definitions, the reader is referred to Section~\ref{sec:prelim}.

\subsubsection{Definition of Quantum Local Recoverability}
\label{sec:qlrcdefinf}
To begin, we define qLRCs. Recall that a classical LRC (cLRC) is a classical code $C$ such that for every $c\in C$ and every component $i$, the value of $c_i$ can be recovered by looking at the restriction of $c$ to just $r-1$ other components.

\begin{definition}[Informal statement of Definition~\ref{def:qLRC}]
  A \textbf{quantum locally recoverable code (qLRC) of locality $r$} is a quantum code $\cC$ such that if any single qudit of a code state $\ket{\psi}\in\cC$ is erased (i.e.~it experiences a completely depolarizing channel), the original code state $\ket{\psi}$ can be recovered by applying a recovery channel that accesses only $r-1$ other code state qudits.
\end{definition}

For intuition, recall that a classical linear code is a cLRC of locality $r$ if each component takes part in a parity-check of weight $\leq r$. Similarly, we show that a quantum CSS code $\cC=\CSS(C_X,C_Z)$ is a qLRC if both $C_X,C_Z$ are cLRCs, so that every qudit takes part in a low weight $X$-parity-check and a low-weight $Z$-parity-check (see Corollary~\ref{cor:CSSLRC}).

We remark that classically, there is also a notion of message locally recoverable codes (mLRCs), which require that every message (instead of codeword) symbol can be recovered from $r-1$ codeword symbols. As any linear classical code has a systematic encoding, meaning that the first $k$ codeword symbols equal the message, classical mLRCs are strictly weaker than LRCs. However, the local indistinguishability property of quantum codes (see Lemma~\ref{lem:localind}) implies that local queries to quantum codes cannot reveal anything about the message. Thus mLRCs do not exist quantumly, at least in the regime where the locality is less than the distance.

\subsubsection{Explicit Construction of qLRCs}
\label{sec:qTBinf}
One of our principal technical contributions is the following explicit construction of qLRCs.

\begin{theorem}[Folded quantum Tamo-Barg codes; informal statement of Corollary~\ref{cor:fqTBdisnice} combined with Lemma~\ref{lem:qTBdim}]
  \label{thm:fqTBdisinf}
  For every prime number $r$ and every $0<R<1$, there exists an infinite explicit family of qLRCs of locality $r$, rate $\geq R$, relative distance
  \begin{equation*}
    \delta \geq \frac{1-R}{2}-O\left(\frac{1}{\sqrt{r}}\right),
  \end{equation*}
  and alphabet size $n^{O(r^2)}$, where $n$ denotes the block length.
\end{theorem}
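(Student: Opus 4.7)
The plan is to construct the folded quantum Tamo--Barg code via a CSS construction applied to a nested pair of classical Tamo--Barg codes, with a folding operation that groups the $r$ symbols within each locality block into a single large-alphabet symbol. The rate and locality follow essentially from the classical construction (together with Corollary~\ref{cor:CSSLRC} which ensures the CSS of two cLRCs is a qLRC); the principal difficulty is the claimed distance bound.

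First I would recall the classical Tamo--Barg construction: fix a partition of the evaluation set $A \subseteq \mathbb{F}_q$ into $n/r$ blocks of size $r$ on which some ``good polynomial'' $g(X)$ of degree $r$ is constant, and take evaluations of message polynomials of the form $f(X)=\sum_{i=0}^{r-1} a_i(g(X))X^i$. The restriction to each block is a degree-less-than-$r$ polynomial in $X$, providing the local recoverability, and the code nests inside a Reed--Solomon code whose dual is again Reed--Solomon. This lets me realize the CSS inclusion $C_Z^\perp \subseteq C_X$ for an appropriate pair of Tamo--Barg codes with dimensions summing to roughly $(1+R)n$. The folding step then groups the $r$ evaluations within each block $A_i$ into one symbol of $\mathbb{F}_q^r$; this preserves the CSS structure block-wise and keeps both the $X$- and $Z$-stabilizer checks supported inside single folded blocks, which is what gives locality $r$ in the folded alphabet.

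The main obstacle is the distance analysis. The CSS distance is the minimum Hamming weight on $(C_X\setminus C_Z^\perp)\cup(C_Z\setminus C_X^\perp)$, and a codeword whose underlying polynomial is $f(X)=\sum_{i=0}^{r-1} a_i(g(X))X^i$ can have many blocks on which it vanishes identically if the tuple $(a_0,\dots,a_{r-1})$ conspires with $g$; under folding, a fully zero block is a \emph{single} zero folded symbol, so bounding folded Hamming weight reduces to bounding the number of blocks at which the $r$-tuple $(a_0(g(x)),\dots,a_{r-1}(g(x)))$ simultaneously produces a vanishing restriction. I would quantify this via the algebraic structure: at each fiber of $g$, the condition that the block restriction lies in $C_Z^\perp$ (i.e.\ is ``degenerate'') versus trivially zero imposes linear constraints on the $a_i(g(x))$ values, and I expect the analysis to combine a Schwartz--Zippel-style zero count on the $a_i$ with an accounting of degeneracies across blocks. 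The quantum Singleton bound delivers the main term $(1-R)/2$, and I would get the LRC loss by this blockwise argument. The fact that the loss comes out as $O(1/\sqrt{r})$ rather than the classical $O(1/r)$ suggests that the argument incurs a square-root loss from a union bound or Cauchy--Schwarz step over the $r$ summands $a_i(g(X))X^i$, rather than a tight per-block inequality.

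Finally, for the alphabet size one needs $q\geq n$ so that enough evaluation points exist, and to accommodate the higher-degree polynomials arising from the nesting together with the good-polynomial $g$ of degree $r$, the natural choice forces $q = n^{O(r)}$; folding of size $r$ then inflates the alphabet by another $r$-th power, giving $q^r = n^{O(r^2)}$. Combining the locality from the Tamo--Barg block structure, the rate $R$ from choosing $k_X+k_Z=(1+R)n$, the distance from the blockwise argument above, and the alphabet bound just described yields the claim.
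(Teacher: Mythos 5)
Your proposal correctly identifies the high-level plan (a CSS construction from Tamo--Barg-type polynomial codes, with folding to boost distance), but it contains several concrete errors that would cause the argument to fail, and the distance analysis you sketch is not what the paper does.

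First, the folding is not by locality blocks. You propose grouping the $r$ symbols of each locality block (a coset of $\Omega_r$, equivalently a fiber of the good polynomial $g$) into one folded symbol. But if an entire local recovery set is collapsed to a single symbol, the weight-$r$ parity check within that block becomes weight-$1$ in the folded code and cannot be used to recover that symbol from other symbols; the code ceases to be a qLRC of locality $r$. The paper instead folds $s$ consecutive powers $\{\omega_{q-1}^{si},\dots,\omega_{q-1}^{si+s-1}\}$ of a generator $\omega_{q-1}$, with $s\mid (q-1)/r$ and eventually $s=\Theta(r^2)$. Because $s\mid(q-1)/r$ this folding is \emph{transverse} to the $\Omega_r$-cosets, so each locality block contributes at most one element per folded symbol and the local recovery structure survives (Corollary~\ref{cor:fqTBLRC}).

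Second, the CSS structure cannot come from a nested pair of Tamo--Barg codes. With $C$ a TB code, $C^\perp\not\subseteq C$: the dual contains all high-degree monomials $X^i$ with $i\equiv 1\pmod r$, which TB excludes. The paper therefore takes $C_X=C_Z=C$ where $C$ is the TB code \emph{plus} the space $B^\perp$ of ``piecewise linear functions'' (Definition~\ref{def:qTBcode}, Lemma~\ref{lem:qTBdef}). These added low-weight codewords are exactly what creates the extra difficulty in the distance analysis --- they lie in $C^\perp$, so they are degenerate and do not hurt the distance, but they prevent $C$ from being a Reed--Solomon subcode and invalidate the naive degree-counting. Your sentence about ``an appropriate pair of Tamo--Barg codes with dimensions summing to roughly $(1+R)n$'' glosses over this.

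Third, the distance argument is not a Schwartz--Zippel-style zero count, and the $O(1/\sqrt{r})$ loss is not a Cauchy--Schwarz or union-bound artifact. The paper decomposes $f=g+h$ with $h$ piecewise linear, and analyzes two regimes depending on $m$, the number of residue classes mod $r$ in which $g$ has a nonzero coefficient. For large $m$ it forms a nonzero auxiliary polynomial $G(X)=\det\bigl(\omega_r^{-i}g(\omega_r^i\omega_{q-1}^j X)\bigr)_{i,j\in[m]}$ of degree $\le m(\ell-1)$, and shows that every folded zero of $f$ within a coset forces a high-multiplicity root of $G$ via a rank-deficiency argument (Lemma~\ref{lem:detrank}, Claim~\ref{claim:determinant}). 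For small $m$ it instead invokes the Chebotar\"ev-type uncertainty principle over $\bF_q$ (Proposition~\ref{prop:uncertainty}, Claim~\ref{claim:uncertainty}) to show $\evl(f)$ has few zero cosets. The $O(1/\sqrt r)$ comes from optimizing the crossover point $m^*\approx\sqrt{r/(1-R)}$ between these two bounds (Lemma~\ref{lem:boundloss}); it is not a lossiness of the argument.

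Finally, the alphabet-size arithmetic is off: $q$ is only linear in the (unfolded) block length, not $n^{O(r)}$. The folded block length is $n=(q-1)/s$ with $s=\Theta(r^2)$, so $q=O(r^2n)$ and the folded alphabet is $q^s = n^{O(r^2)}$. The $O(r^2)$ in the exponent is entirely from the folding parameter $s$, not from a product of two $O(r)$ factors as you propose.
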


In Theorem~\ref{thm:fqTBdisinf} (and in future informal result statements in Section~\ref{sec:intro}), for readability we state slightly looser bounds than the formal result statements. For instance, our actual distance bound in Corollary~\ref{cor:fqTBdisnice} is stronger than than stated in Theorem~\ref{thm:fqTBdisinf} for high rates, and in particular shows that $\delta\geq\Omega(1/r)$ for all $R\leq 1-10/r$.

We remark that while the alphabet size $n^{O(r^2)}$ may seem large, in the LRC literature one typically thinks of each code component as being a fairly large entity, so such a polynomial alphabet size is not unreasonable. Classically each code component could for instance be a hard drive, while quantumly each component would likely be itself a fault-tolerant quantum memeory.

We prove Theorem~\ref{thm:fqTBdisinf} by introducing a quantum CSS version of the classical LRCs of Tamo and Barg \cite{tamo_family_2014}, which achieve the optimal classical rate-distance-locality tradeoff \cite{gopalan_locality_2012}. The classical Tamo-Barg (TB) codes are constructed as subcodes of Reed-Solomon codes by carefully inserting low-weight parity checks.

Multiple complications arise when converting TB codes into CSS codes. Specifically, to ensure the CSS orthogonality relations are satisfied, we define a quantum Tamo-Barg (qTB) code to be a CSS code $\cC=\CSS(C,C)$ consisting of two copies of a classical code $C$ that contains a TB code as a subcode, but also contains some added low-weight codewords. These added low-weight codewords also lie in $C^\perp$, so they do not necessarily degrade the distance of $\cC$, which equals the minimum weight of an element of $C\setminus C^\perp$. However, these added low-weight codewords make the distance analysis significantly more challenging, and we are only able to show a ``Johnson-like'' bound on the distance of qTB codes:

\begin{theorem}[Quantum Tamo-Barg codes; informal statement of Theorem~\ref{thm:qTBdis}]
  \label{thm:qTBdisinf}
  For every prime number $r$ and every $0<R<1$, there exists an infinite explicit family of qLRCs of locality $r$, rate $\geq R$, relative distance
  \begin{equation*}
    \delta \geq 1-\sqrt{\frac{1+R}{2}}-O\left(\frac{1}{r}\right),
  \end{equation*}
  and alphabet size $n+1$, where $n$ denotes the block length.
\end{theorem}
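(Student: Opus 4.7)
The plan is to construct $\cC = \CSS(C,C)$ where $C$ is obtained from a classical Tamo-Barg code $C_{TB}$ of locality $r$ by adjoining a small set of low-weight generators so as to enforce dual-containment. A classical TB code is built from evaluations of polynomials on $n$ points partitioned into local groups of size $r+1$, and sits inside a Reed-Solomon code over a field of size $n+1$; the indicator of each local group has weight $r+1$ and lies in $C_{TB}^{\perp}$. I would take $C = C_{TB} + L$, where $L$ is the span of these local-group indicators (the ``added low-weight codewords''), and choose parameters so that the ambient Reed-Solomon code has rate approximately $(1+R)/2$, while $C^{\perp}\subseteq C$ and $\cC$ has rate at least $R$. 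Dual-containment can be verified from the explicit polynomial description of $C_{TB}^{\perp}$ (as a TB-type code with complementary degree parameters, since duals of Reed-Solomon codes are again Reed-Solomon).

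The distance of $\cC$ equals $\min\{|c|:c\in C\setminus C^{\perp}\}$, so I must analyze low-weight elements of $C$ that are \emph{not} in $C^{\perp}$. Writing $c\in C$ as $c = c_{TB}+\ell$ with $c_{TB}\in C_{TB}$ and $\ell\in L$ does not let me directly invoke the classical TB distance, since $\ell$ is a linear combination of up to $n/(r+1)$ local-group indicators and can have linear weight. Instead, every element of $C$ is the evaluation of a polynomial drawn from a subspace of dimension roughly $(1+R)n/2$, so a codeword of weight $w$ corresponds to a polynomial with $n-w$ zeros on the evaluation set.

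The core of the argument would be a Johnson-type counting: for $w < (1-\sqrt{(1+R)/2}-O(1/r))n$, the number of zeros is so large that a Guruswami-Sudan-style interpolation (combined with the restricted ``local'' polynomial structure of $C$) forces the polynomial representing $c$ to lie in a structured low-dimensional subspace. I would identify this subspace with the one describing $C^{\perp}$, so that any $c$ achieving weight below the threshold must in fact lie in $C^{\perp}$, contradicting the choice $c\in C\setminus C^{\perp}$. The $O(1/r)$ slack encodes the additional flexibility that $L$ introduces beyond the pure Reed-Solomon degree constraint, and the $(1+R)/2$ inside the square root is exactly the effective rate of the ambient RS code of the enlarged $C$.

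The main obstacle will be executing this Johnson-style interpolation in the presence of the added low-weight codewords in $L$: one must show that the only way for a codeword to have weight below the critical radius $1-\sqrt{(1+R)/2}$ is for its polynomial to decompose into local-parity contributions, thereby placing $c$ inside $C^{\perp}$. This interaction between the polynomial degree bound and the added local parities is precisely why a Singleton-like bound $\delta\geq(1-R)/2-O(1/r)$ (which the classical TB code satisfies) is not immediately recoverable here, and only the weaker Johnson-like bound $\delta\geq 1-\sqrt{(1+R)/2}-O(1/r)$ comes out of the analysis. Strengthening this is what motivates introducing the ``folded'' variant in Theorem~\ref{thm:fqTBdisinf}.
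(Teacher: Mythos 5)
Your construction matches the paper's: the qTB code is $\CSS(C,C)$ with $C$ the sum of a classical Tamo--Barg code and a space of low-weight codewords of $C_{TB}^\perp$ supported on the recovery groups, adjoined to enforce dual-containment. One correction: in the paper's (multiplicative-coset) version the recovery groups are cosets $x\Omega_r$ of the $r$th roots of unity, and the adjoined low-weight codewords are not coset \emph{indicators} but the \emph{piecewise linear} functions $x\mapsto\beta_{x\Omega_r}\cdot x$, i.e.\ $\evl(\bF_q[X]^{[q-1]\cap(1+r\bZ)})$; it is these, not the indicators, that lie in the dual. Your rate accounting and the identification of $(1+R)/2$ with $\ell/(q-1)$ are correct.

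The gap is in the distance argument, which is the entire content of the theorem. You propose a ``Guruswami--Sudan-style interpolation'' that forces a low-weight codeword into $C^\perp$, but you never specify what is interpolated or why it would succeed. The obstruction is that a codeword of $C\setminus C^\perp$ is $\evl(f)$ with $f=g+h$, where the piecewise-linear part $h$ has degree up to $q-2$; so the many zeros of $f$ contradict no degree bound on $f$ itself, and interpolation against a degree-$(q-2)$ polynomial is vacuous. The paper's mechanism, which your sketch is missing, is to form the twisted differences $\omega_r^{-i}f(\omega_r^iX)-f(X)$: piecewise linearity gives $\omega_r^{-i}h(\omega_r^ix)=h(x)$ on each coset, so on common roots of $f$ within a coset these differences reduce to $\omega_r^{-i}g(\omega_r^iX)-g(X)$, a polynomial of degree $\leq\ell-1$ built from the TB part alone. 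Every ordered pair of roots of $f$ in a common coset then yields a root of $G(X)=\prod_{i=1}^{r-1}\bigl(\omega_r^{-i}g(\omega_r^iX)-g(X)\bigr)$, which is nonzero of degree $\leq(r-1)(\ell-1)$ --- and the nonvanishing is exactly where primality of $r$ enters (it forces $\omega_r^{i(j-1)}\neq 1$ for $j\not\equiv 1\pmod r$), a point your proposal does not address. A Cauchy--Schwarz count of coincident root pairs per coset against $\deg G$ produces the quadratic whose solution is the claimed $1-\sqrt{(1+R)/2}-O(1/r)$. Without this cancellation-and-counting step the argument does not get off the ground.
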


To obtain the improved bound $\delta\geq 1-(1+R)/2-O(1/\sqrt{r})$ in Theorem~\ref{thm:fqTBdisinf}, we ``fold'' together code components, thereby reblocking the symbols into larger components. We then use a combination of algebraic techniques to bound the distance of these folded qTB (fqTB) codes; the two main tools are a root detection method involving a determinant polynomial, and an uncertainty principle over finite fields.

In Section~\ref{sec:explicitoverviewinf} below, we describe the formal construction of our (f)qTB codes, and outline the proofs of Theorem~\ref{thm:qTBdisinf} and Theorem~\ref{thm:fqTBdisinf}. A more detailed description of these codes is provided in Section~\ref{sec:construct}, and the distance bounds are formally proven in Section~\ref{sec:distance}.



\subsubsection{Singleton-Like Bound for qLRCs, with Implications for qLDCP Codes}
\label{sec:singletoninf}
Our fqTB codes in Theorem~\ref{thm:fqTBdisinf} achieve an optimal rate-distance tradeoff as the locality $r$ grows large. In particular, we prove the following fundamental limitation on any qLRC.

\begin{theorem}[Singleton-like bound; informal statement of Theorem~\ref{thm:singletongen}]
  \label{thm:singletongeninf}
  If $\cC$ is a qLRC of locality $r$ and rate $R$, then $\cC$ has relative distance
  \begin{equation*}
    \delta\leq\frac{1-R}{2}-\Omega\left(\frac{1}{r}\right).
  \end{equation*}
\end{theorem}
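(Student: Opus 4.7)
The plan is to sharpen the entropy-based proof of the quantum Singleton bound by using local recoverability to enlarge the Knill--Laflamme decoupling set beyond the usual size $d-1$. Let $\cC$ be an $[[n,k,d]]_q$ qLRC of locality $r$, and let $\ket{\Psi}\in\cH_R\otimes(\bC^q)^{\otimes n}$ be a purification of the maximally mixed code state, with $\dim\cH_R=q^k$ and $\rho_R=q^{-k}\bfI$. The Knill--Laflamme conditions for correcting erasures give the decoupling $\rho_{R,T}=\rho_R\otimes\rho_T$ for every $T\subseteq[n]$ with $|T|\leq d-1$.

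Next, I use local recoverability to extend this decoupling to a strictly larger set $T^*\supseteq T$. A greedy packing argument yields $t=\lfloor (d-1)/(r-1)\rfloor$ pairwise disjoint local recovery sets $A_1,\ldots,A_t\subseteq[n]$, each of size $r$, each with a distinguished qudit $i_j\in A_j$ recoverable from $A_j\setminus\{i_j\}$ via a CPTP channel $\cR_{i_j}$. Setting $T=\bigcup_j(A_j\setminus\{i_j\})$ and $T^*=T\cup\{i_1,\ldots,i_t\}$ gives $|T|=t(r-1)\leq d-1$ and $|T^*|=tr$. The channels $\cR_{i_j}$ are mutually commuting (their supports in $T$ are disjoint), act only on qudits in $T$, and leave $\cH_R$ untouched; applying them in parallel to $\rho_T$ reproduces $\rho_{T^*}$. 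Consequently the decoupling is preserved, $\rho_{R,T^*}=\rho_R\otimes\rho_{T^*}$, even though $|T^*|>d-1$.

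The bound then follows from a short entropy computation (with $S$ measured in units of $\log q$). By purity of $\ket{\Psi}$, $S(RT^*)=S([n]\setminus T^*)\leq n-|T^*|$; by the extended decoupling, $S(RT^*)=S(R)+S(T^*)=k+S(T^*)$; and each $\cR_{i_j}$ can be dilated to an ancilla-assisted isometry (initialize qudit $i_j$ in a fixed state and apply a unitary that reproduces the recovered code marginal), which preserves entropy. Combined with the pure-code Knill--Laflamme identity $S(T)=|T|$, this yields $S(T^*)=S(T)=|T|$, so
\[
k+|T|+|T^*|\leq n, \qquad \text{i.e.,}\qquad k+t(2r-1)\leq n.
\]
Substituting $t\approx (d-1)/(r-1)$ and dividing by $n$ gives
\[
\delta \;\leq\; \frac{r-1}{2r-1}(1-R)+O(1/n) \;=\; \frac{1-R}{2}\Bigl(1-\frac{1}{2r-1}\Bigr)+O(1/n),
\]
which is $\frac{1-R}{2}-\Omega(1/r)$ whenever $R$ is bounded away from $1$.

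The main technical obstacle I expect is handling impure qLRCs: for these, Knill--Laflamme does not force $\rho_T$ to be maximally mixed, so the identity $S(T)=|T|$ can fail and the entropy bound weakens. A natural workaround is a reduction to the pure case by adjoining low-weight stabilizers (which preserves distance and locality while possibly reducing $k$), or to carefully track the non-purity entropy deficit through the argument. A secondary point is verifying that $t=\Omega(d/r)$ pairwise disjoint local recovery sets always exist; this follows from a standard greedy packing on the local-set hypergraph, noting that in the relevant regime $d\leq(1-R)n/2$ the required packing of size $(d-1)/(r-1)$ is comfortably smaller than the trivial upper bound $n/r$ on the matching number.
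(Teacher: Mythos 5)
Your overall strategy---an entropic Singleton argument in which local recoverability is used to enlarge a correctable set beyond size $d-1$---is the same one the paper uses for Theorem~\ref{thm:singletongen}, but your single-set execution has a genuine gap at the step $S(T^*)=S(T)=|T|$. First, $S(T)=|T|$ requires the code to be pure ($\rho_T$ maximally mixed on every correctable $T$), which is not part of the hypothesis; you flag this but the proposed workarounds are not proofs (``adjoining low-weight stabilizers'' is not available for a general non-stabilizer code, and no reduction is given). Second, and independently, $S(T^*)=S(T)$ requires each $\Rec_{i_j}$ to act isometrically on the actual code marginal; a recovery channel is in general a CPTP map with a nontrivial Stinespring environment and need not preserve entropy. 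This failure is not hypothetical: for the paper's own qTB codes the recovery set $x\Omega_r$ supports weight-$r$ stabilizers, so (using $S(\rho_A)=|A|-\log_q|S_A|$ for stabilizer-code marginals) one gets $S(\rho_{x\Omega_r})\leq r-2$ while $S(\rho_{x\Omega_r\setminus\{x\}})=r-1$, i.e.\ recovery \emph{strictly decreases} entropy. The generic substitute $S(T^*)\geq S(T)-t$ (Araki--Lieb) then yields only $k\leq n-2t(r-1)\approx n-2(d-1)$, the ordinary Singleton bound, with no $\Omega(1/r)$ gain. The paper sidesteps all of this by constructing \emph{two disjoint} enlarged correctable sets and summing two subadditivity inequalities (see~(\ref{eq:entropies})), so the unknown marginal entropies cancel; your argument is repaired the same way by building disjoint $T^*_1,T^*_2$ and concluding $k\leq n-|T^*_1|-|T^*_2|$.

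A second gap is the packing claim: you need $t=\lfloor(d-1)/(r-1)\rfloor$ pairwise disjoint recovery sets and justify this by comparing to ``the trivial upper bound $n/r$ on the matching number.'' An upper bound on the matching number cannot certify that a large matching exists, and it need not: all $I_i$ may share one common qudit (e.g.\ $I_i=\{i,v\}$ for fixed $v$), making the matching number $1$. The fix is the paper's cascading construction---choose each new center outside all previously touched qudits and recover the centers sequentially rather than in parallel---which needs no disjointness. Finally, even after both repairs your bound is quantitatively weaker: you enlarge the correctable set by only $\approx(d-1)/(r-1)$ qudits, giving $\delta\leq\frac{1-R}{2}\bigl(1-\Theta(1/r)\bigr)$ with a loss vanishing as $R\to 1$, whereas the paper recovers $\approx(n-d+1)/r$ extra qudits from the \emph{complement} of the erasure set and obtains a deficit $\Omega(1/r)$ uniform in $R$. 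In the regime the paper emphasizes ($R$ bounded away from $1$, in particular $R\to 0$) your version would still give the key conclusion $\delta\leq 1/2-\Omega(1/r)$.
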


The reader is referred to Theorem~\ref{thm:singletongen} for the specific value of the constant hidden in the the $\Omega(1/r)$ term above.

For comparison, recall that the ordinary quantum Singleton bound states that every quantum code of rate $R$ has relative distance $\delta\leq(1-R)/2+O(1/n)$, where $n$ denotes the block length. Thus for a fixed rate, Theorem~\ref{thm:singletongeninf} shows that imposing local recoverability with locality $r$ decreases the optimal relative distance of a quantum code by at least $\Omega(1/r)$. Meanwhile, our explicit construction in Theorem~\ref{cor:fqTBdisnice} has relative distance $O(1/\sqrt{r})$ below that of the Singleton bound.

Theorem~\ref{thm:singletongeninf} implies a fundamental difference between the classical and quantum cases for LRCs of low rate. Classically, the ordinary Singleton bound says that a code of rate $R$ has relative distance $\delta\leq 1-R+O(1/n)$, while for cLRCs of locality $r$ this bound becomes $\delta\leq 1-R\cdot r/(r-1)+O(1/n)$. The TB codes achieve this latter bound, proving its tightness. Therefore in particular, for fixed locality $r$, by letting the rate $R\rightarrow 0$, we see that there exist classical LRCs of relative distance $\delta\rightarrow 1$. In contrast, whereas there exist quantum codes of relative distance approaching $1/2$, Theorem~\ref{thm:singletongeninf} shows every qLRC has relative distance at most $1/2-\Omega(1/r)$, which is bounded away from $1/2$ for fixed $r$.

As we mentioned previously, because every (q)LRC is a by definition a (q)LDPC code, it follows that qLDPC codes of locality $r$ have relative distance $\leq 1/2-\Omega(1/r)$ for arbitrarily large alphabets. To the best of our knowledge, such a bound has not been previously shown in the literature. This result is in again in contrast to the classical case, where for instance the $q$-ary Hadamard code is an LDPC code of locality $3$ with relative distance approaching $1$ as $q$ grows large.

The discussion above raises the interesting question as to how tight Theorem~\ref{thm:singletongeninf} is for qLDPC codes. That is, what is the additional cost to the optimal rate-distance tradeoff of requiring a quantum code be LDPC, compared to just being an LRC?

\subsubsection{Comparison to Basic Constructions}
\label{sec:basicinf}
While our explicit fqTB codes in Theorem~\ref{thm:fqTBdisinf} are $O(1/\sqrt{r})$ below the Singleton-like bound in Theorem~\ref{thm:singletongeninf}, we show that a randomized construction improves this gap to $O(1/r)$, at the cost of explicitness and efficiency.

\begin{proposition}[Random qLRCs; informal statement of Proposition~\ref{prop:random}]
  \label{prop:randominf}
  For every $r\geq 3$ and $\delta>0$, there exists a randomized construction that with high probability gives a qLRC of locality $r$, rate $R$, relative distance
  \begin{equation*}
    \delta \geq \frac{1-R}{2}-O\left(\frac{1}{r}\right),
  \end{equation*}
  and alphabet size $2^{O(r)}$.
\end{proposition}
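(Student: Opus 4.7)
The plan is to construct $\cC = \CSS(C,C)$ where $C$ is a uniformly random weakly self-dual classical $\bF_q$-code with $q = 2^{\Theta(r)}$ that is forced to contain prescribed low-weight local parity checks. Assume $r \mid n$ and partition the coordinates into $m = n/r$ disjoint groups $S_1,\dots,S_m$ of size $r$. For each group, choose a vector $u_i$ supported on $S_i$, with all $r$ coordinates nonzero, and satisfying $\langle u_i, u_i \rangle = 0$; over $\bF_q$ of characteristic $2$ with $q = 2^{\Theta(r)}$, this reduces to $\sum_j (u_i)_j = 0$ with all summands nonzero, which is easily satisfied. Set $U = \spn\{u_1,\dots,u_m\}$; the pairwise disjoint supports make $U$ self-orthogonal with $\dim U = m$. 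Then draw $C$ uniformly from the $((1+R)n/2)$-dimensional subspaces with $U \subseteq C \subseteq U^\perp$, which forces $C^\perp \subseteq C$. Thus $\cC = \CSS(C,C)$ is a valid CSS code of dimension $2\dim C - n = Rn$, and since each $u_i \in C^\perp$ has full support on $S_i$, Corollary~\ref{cor:CSSLRC} shows that $\cC$ is a qLRC of locality $r$.

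For the distance, which equals $\min\{|c| : c \in C \setminus C^\perp\}$, I apply a union bound over all $w$ of weight $\leq \delta n$. If $w \in U$ then $w \in C^\perp$ and contributes nothing; if $w \notin U^\perp$ then $w \notin C$ deterministically. For $w \in U^\perp \setminus U$, pass to the quotient $U^\perp / U$ of dimension $n - 2m$: the image $\bar w$ is nonzero, $C/U$ is uniformly random of dimension $(1+R)n/2 - m$, and the standard Gaussian-binomial calculation gives
\begin{equation*}
  \Pr[w \in C] \;\leq\; q^{(1+R)n/2 + m - n} \;=\; q^{-n((1-R)/2 - 1/r)}.
\end{equation*}
The number of nonzero $w$ of weight at most $\delta n$ is at most $q^{n H_q(\delta)}$, and since $q = 2^{\Theta(r)}$ we have $H_q(\delta) \leq \delta + O(1/r)$. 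Multiplying, the expected number of bad $w$ is at most $q^{n(\delta - (1-R)/2 + O(1/r))}$, which is $o(1)$ whenever $\delta \leq (1-R)/2 - O(1/r)$; with high probability the random code then achieves the claimed distance.

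The main obstacle is showing that the two restrictions on $C$---that $C$ lie in $U^\perp$ (for the CSS self-orthogonality) and contain $U$ (for local recoverability)---cost only $O(1/r)$ in the random-subspace exponent rather than some larger amount that would push the code away from the Singleton-like line. Each restriction trims the ambient quotient dimension by $m = n/r$, contributing an additive $1/r$, and this is exactly absorbed by the $O(1/\log q) = O(1/r)$ slack in the entropy approximation $H_q(\delta) \approx \delta$ at alphabet size $q = 2^{\Theta(r)}$. This balance is precisely why random qLRCs approach the bound of Theorem~\ref{thm:singletongeninf} to within additive $O(1/r)$, and also why the argument inherently requires alphabet exponential in $r$.
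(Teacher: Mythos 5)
Your overall strategy (fix low-weight local checks on a partition of the coordinates, randomize the rest of the code, and run a union bound with the $q$-ary entropy estimate $H_q(\delta)\leq\delta+O(1/\log q)$ at alphabet size $q=2^{\Theta(r)}$) matches the paper's, and your parameter accounting — the two losses of $m=n/r$ in the exponent plus the entropy slack — lands on the same final bound. But there is a genuine gap at the foundation of your construction: the claim that drawing $C$ with $U\subseteq C\subseteq U^\perp$ ``forces $C^\perp\subseteq C$'' is false. The sandwich condition only gives $U\subseteq C^\perp$ (from $C\subseteq U^\perp$) and $U\subseteq C$; it does not make $C$ dual-containing. A minimal counterexample: take $U=0$ and $C=\spn\{e_1,\dots,e_{n-1}\}\subseteq\bF_q^n$, so $C^\perp=\spn\{e_n\}\not\subseteq C$ even though $0\subseteq C\subseteq\bF_q^n$. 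The same failure occurs generically for your $U$ of dimension $n/r$. Consequently $\CSS(C,C)$ is not a well-defined CSS code for a typical draw from your ensemble, and everything downstream (dimension $2\dim C-n$, distance $=\min\{|c|:c\in C\setminus C^\perp\}$) is resting on an invalid object.

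There are two ways to repair this, and each costs you something. (a) You could restrict the ensemble to dual-containing $C$ with $U\subseteq C\subseteq U^\perp$, but then $C/U$ is \emph{not} a uniformly random subspace of $U^\perp/U$, so your Gaussian-binomial bound $\Pr[w\in C]\leq q^{(1+R)n/2+m-n}$ no longer follows as stated; you would need the counting argument for random self-orthogonal/dual-containing codes (the quantum GV ensemble), which is more delicate. (b) You could do what the paper does: give up on $C_X=C_Z$ and sample two codes, initializing $H_X,H_Z$ with mutually orthogonal local checks on each block and then sequentially adding random rows to $H_X$ drawn from $\mathrm{row\mbox{-}span}(H_Z)^\perp$ and vice versa. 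This keeps the CSS orthogonality valid by construction while keeping each added row uniform in a large enough set that the per-vector probability $\Pr[y\in C_Z\setminus C_X^\perp]<q^{-\ell}$ is immediate, after which the union bound proceeds exactly as you wrote it. Your LRC argument (full-support weight-$r$ checks on each block, via Corollary~\ref{cor:CSSLRC}) and your entropy/rate bookkeeping are otherwise correct.
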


The constant hidden in the $O(1/r)$ term in Proposition~\ref{prop:randominf} is larger than the constant in the $\Omega(1/r)$ term in Theorem~\ref{thm:singletongeninf}, so our bound on the randomized construction is still $O(1/r)$ below that of our Singleton-like bound.

Thus the randomized construction in Proposition~\ref{prop:random} achieves relative distance $O(1/r)$ below our Singleton-like bound with has alphabet size $2^{O(r)}$. These parameters improve upon our fqTB construction, which has relative distance $O(1/\sqrt{r})$ below the Singleton-like bound, and has alphabet size $n^{O(r^2)}$.

However, the main disadvantage of the randomized construction is its non-explicitness. As a result, we have no efficient algorithm to certify that a randomly sampled qLRC has good distance, and we have no efficient decoding algorithm for errors in unknown locations. In contrast, our fqTB codes are explicit, so their distance bound is guaranteed. Furthermore, we show that they have an efficient decoding algorithm (see Section~\ref{sec:decodeinf} below).

One way of derandomizing the random qLRCs in Proposition~\ref{prop:randominf} is to use them as in inner code in the concatenation and distance amplification scheme of Alon, Edmunds, and Luby (AEL) \cite{alon_linear_1995}. This technique has been used extensively in classical coding theory \cite{guruswami_expander-based_2001,guruswami_near-optimal_2002,guruswami_linear_2003,guruswami_explicit_2008,hemenway_linear-time_2018,kopparty_high-rate_2016,gopi_locally_2018,hemenway_local_2020}, but has only recently been considered in the quantum setting \cite{bergamaschi_approaching_2022,wills_tradeoff_2023}. We show that applying AEL using a random qLRC as an ``inner code,'' which is small enough to be found efficiently via brute force, yields the following result.

\begin{proposition}[qLRCs from AEL; informal statement of Proposition~\ref{prop:qLRCAEL}]
  \label{prop:qLRCAELinf}
  For every fixed $0<R<1$, it holds for all sufficiently large $r\in\bN$ that there exists an infinite family of efficiently constructable qLRCs of locality $r$, rate $R$, relative distance
  \begin{equation*}
    \delta \geq \frac{1-R}{2}-O_R\left(\frac{1}{r^{1/4}}\right),
  \end{equation*}
  and alphabet size $q=2^{O(r)}$, where the $O_R$ above hides a constant depending on $R$.
\end{proposition}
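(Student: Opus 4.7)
The plan is to apply the AEL distance-amplification scheme in the quantum setting, using a random qLRC from Proposition~\ref{prop:randominf} as the inner code and an efficiently constructable quantum outer code of moderate relative distance. The key intuition is that AEL amplifies the overall relative distance toward the inner code's relative distance $\delta_{in}$, which by Proposition~\ref{prop:randominf} is already within $O(1/r)$ of the Singleton bound; the AEL loss itself can be pushed down to $O_R(1/r^{1/4})$ by using a Ramanujan expander whose degree matches the inner block length, and the two error terms combine to give the claimed $O_R(1/r^{1/4})$ gap.

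Concretely, fix $\epsilon>0$ small (depending on $R$) and set $R_{in}=R+\epsilon$ and $R_{out}=R/R_{in}$, both in $(0,1)$. Apply Proposition~\ref{prop:randominf} to obtain a random inner qLRC $\cC_{in}$ of locality $r$, rate $R_{in}$, relative distance $\delta_{in}\geq (1-R_{in})/2-O(1/r)$, and alphabet size $q_{in}=2^{O(r)}$. For each fixed $r$ this inner code has constant block length $n_{in}=\poly(r)$, so a good sample can be exhibited by brute-force enumeration in constant time (constant in the outer block length $n$). For the outer code $\cC_{out}$, use an explicit quantum code of rate $R_{out}$ and some constant relative distance $\delta_{out}=\delta_{out}(R)>0$, over the message alphabet of $\cC_{in}$; a quantum Reed--Solomon or quantum algebraic-geometry code suffices. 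Combining $\cC_{out}$ and $\cC_{in}$ via the quantum AEL scheme of \cite{bergamaschi_approaching_2022,wills_tradeoff_2023}, using a Ramanujan expander of degree $n_{in}$ (so the normalized second eigenvalue $\lambda$ is $O(1/\sqrt{n_{in}})=O(1/\sqrt{r})$) yields a quantum code of rate $R_{out}R_{in}=R$, alphabet $q_{in}=2^{O(r)}$, and relative distance at least $\delta_{in}-O(\sqrt{\lambda/\delta_{out}})=(1-R)/2-O_R(1/r^{1/4})$ once $\epsilon$ is absorbed into the $O_R$ constant.

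It remains to argue that the resulting code is still a qLRC of locality $r$. AEL partitions its $n_{out}n_{in}$ physical coordinates into $n_{out}$ ``inner blocks'' (each corresponding to one outer coordinate, with symbols shuffled across blocks by the expander edges), and each such block carries a codeword of $\cC_{in}$. Hence the $\cC_{in}$-recovery channel for a single erased qudit, which touches only $r-1$ other qudits of the same inner block, directly realizes local recovery in the AEL code: the $r-1$ qudits it needs are well-defined physical coordinates of the AEL code, read off from the expander edge structure. The AEL construction itself preserves the CSS structure (it is a concatenation followed by a permutation of physical systems), so the output is a bona fide CSS qLRC by Corollary~\ref{cor:CSSLRC}.

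The main obstacle will be invoking a version of quantum AEL tailored to our needs. Existing statements \cite{bergamaschi_approaching_2022,wills_tradeoff_2023} are phrased for amplifying asymptotically good qLDPC codes rather than for qLRCs, so some adaptation may be required to extract the precise distance bound $\delta_{\text{final}}\geq\delta_{in}-O(\sqrt{\lambda/\delta_{out}})$ in a form that also preserves the inner-code locality; the latter check is structurally straightforward but must be done carefully given that AEL reshuffles physical coordinates. A secondary bookkeeping point is balancing $\epsilon$, $R_{in}$, $R_{out}$, and $\delta_{out}$ as functions of $R$ so that the $O(1/r)$ inner-code gap and the $O(1/r^{1/4})$ expander gap combine cleanly into the stated $O_R(1/r^{1/4})$ bound, with the constant depending only on $R$.
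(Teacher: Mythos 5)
Your overall architecture (AEL with a brute-forced random inner qLRC and an explicit outer code) matches the paper's, but there is a genuine gap in the locality accounting, and it is precisely the point where the exponent $1/4$ comes from.

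The problem is your claim that the final code inherits the inner code's locality $r$ because ``the $\cC_{\mathrm{in}}$-recovery channel for a single erased qudit touches only $r-1$ other qudits of the same inner block.'' A physical coordinate of the AEL code is not a single symbol of an inner block: the alphabet-reduction step folds $\Delta$ symbols (one from each of the $\Delta$ expander-neighbors) into a single qudit of the final code, and this folding is essential — without it the permutation preserves Hamming distance exactly and no amplification occurs. To locally recover one erased qudit of the final code you must recover all $\Delta$ of its constituent inner symbols, each of which needs $r_{\mathrm{in}}-1$ symbols of its own inner codeword, living in up to $\Delta(r_{\mathrm{in}}-1)$ distinct final qudits. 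So the final locality is $r_{\mathrm{in}}\Delta$ (Lemma~\ref{lem:AELloc}), not $r_{\mathrm{in}}$. With your choice $\Delta=\Theta(r)$ (needed to get $\lambda=O(1/\sqrt{r})$) and inner locality $r$, the resulting code has locality $\Theta(r^2)$, so it does not prove the statement; alternatively, if you refuse to fold, you get no distance amplification at all.

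The correct argument must therefore split the locality budget multiplicatively, $r=r_{\mathrm{in}}\Delta$, and balance three loss terms: the inner rate loss $\Theta(1/r_{\mathrm{in}})$, the outer rate loss $\Theta(\delta_{\mathrm{out}})$, and the AEL distance loss $\Theta(\sqrt{1/(\delta_{\mathrm{out}}\Delta)})$ coming from $\lambda=2/\sqrt{\Delta}$. Optimizing gives $r_{\mathrm{in}}=\Theta(r^{1/4})$, $\Delta=\Theta(r^{3/4})$, $\delta_{\mathrm{out}}=\Theta(r^{-1/4})$, and all three losses become $\Theta(r^{-1/4})$. Your two-term accounting ($O(1/r)$ inner loss plus $O(r^{-1/4})$ expander loss) lands on the right exponent only coincidentally; the true bottleneck is the tension between $r_{\mathrm{in}}$ and $\Delta$ inside the fixed product $r$. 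A secondary point: with a quantum Reed--Solomon outer code the inner message length is $k_{\mathrm{in}}=\Theta(\log n_{\mathrm{out}})$, so the brute-force search for a good inner code is $\poly(n)$-time randomized rather than ``constant time,'' which is why the paper's Proposition~\ref{prop:qLRCAEL} yields efficient constructability rather than explicitness (see Remark~\ref{rem:effconst}).
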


As described in Remark~\ref{rem:effconst}, the codes in Proposition~\ref{prop:qLRCAELinf} are technically only efficiently constructable by a randomized algorithm with high probability, but can be made truly explicit by using a slightly more complicated construction, with slightly worse parameters. Also, like our fqTB codes, we show that these codes from AEL have efficient decoders from errors in unknown locations.

The alphabet size $q=2^{O(r)}$ in Proposition~\ref{prop:qLRCAELinf} is smaller than the $q=n^{O(r^2)}$ of our fqTB codes in Theorem~\ref{thm:fqTBdisinf}. However, qLRCs from AEL have worse rate-distance-locality tradeoff, as their relative distance is $O_R(1/r^{1/4})$ below the Singleton-like bound, compared to only $O(1/\sqrt{r})$ for our fqTB codes.

Our (f)qTB codes may have additional practical advantages over the qLRCs from AEL. For instance, our (f)qTB codes can achieve locality as small as $r=3$, whereas the the minimum possible locality in the qLRCs from AEL is $r=9$. Furthermore, as one step in the AEL construction redistributes code symbols according to the edges of an expander graph, the resulting size-$r$ sets of code components used for local recovery form an $r$-uniform hypergraph with a complex expanding structure. In contrast, the recovery sets of our (f)qTB codes form a partition of the code components, which is ideally suited for a physical implementation with good spatial locality. That is, our (f)qTB codes can easily be implemented in 1, 2, or 3-dimensional space such that each local recovery operation only involves code components that are close together; such a spatially local implementation would be much less feasible for a qLRC from AEL.

\subsubsection{Efficient Decoding Algorithms}
\label{sec:decodeinf}
This section presents our results on the decodability of our qLRCs. As we discussed previously, in the classical setting, LRCs are typically used for data storage where errors correspond to events such as server failures that are detectable. Such errors occur in known locations, so they can be treated as erasures. Because every linear code can be efficiently decoded from a number of erasures up to the distance using Gaussian elimination, the efficiency of decoding is often not a primary concern for cLRCs. Note that all of our qLRC constructions are stabilizer (and in fact CSS) codes, which can similarly be decoded efficiently from erasures.

However, in the quantum setting there are additional potential applications of efficient decoding from errors in unknown locations. Due to the inherently noisy nature of quantum states, even if a qLRC is used to store a large quantum state where each code component is itself stored in a fault-tolerant memory, these individual fault-tolerant memories may eventually accumulate errors. As such, it may be beneficial to occasionally perform a global decoding procedure to reduce the overall error rate in the long term.

Furthermore, as we have previously discussed, qLRCs can be viewed as a stepping stone towards better understanding stronger locality properties, such as the LDPC property, which are important for near-term quantum error correction. As efficient decoding is critical for these error correction applications, it is desirable that the qLRCs we study are also efficiently decodable.

Below we present a polynomial-time decoding algorithm for the fqTB codes in Theorem~\ref{thm:fqTBdisinf}.

\begin{theorem}[Decoding fqTB codes; informal statement of Corollary~\ref{cor:fqTBdecradiusnice}]
  \label{thm:fqTBdecinf}
  The fqTB codes in Theorem~\ref{thm:fqTBdisinf} of block length $n$, prime locality parameter $r$, and rate $R$ can be decoded from errors acting on an unknown
  \begin{equation*}
    \frac12\left(\frac{1-R}{2}-O\left(\frac{1}{\sqrt{r}}\right)\right)
  \end{equation*}
  fraction of the code components in time $n^{O_r(1)}$, provided the alphabet size is increased to some sufficiently large $q=n^{O_r(1)}$ with respect to $r$. Here $O_r(1)$ denotes a sufficiently large constant depending only on $r$.
\end{theorem}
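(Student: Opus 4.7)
The plan is to reduce the quantum decoding problem to a pair of classical algebraic decoding problems via the CSS structure, and then to invoke a polynomial-time Reed--Solomon--based decoder on the underlying folded evaluation code. Since fqTB codes have the form $\CSS(C,C)$, standard syndrome-based CSS decoding reduces the task to the following: given the syndrome of an $X$-error (resp.\ $Z$-error) $e$ with respect to $C$, find any coset representative of $e$ modulo $C^\perp$. Concretely, one picks any $y\in\bF_q^n$ matching the observed syndrome, and then seeks $c\in C$ minimizing $|y-c|$; if $|y-c|\le \tau n$ for $\tau=\frac{1}{2}\bigl(\frac{1-R}{2}-O(1/\sqrt r)\bigr)$, then $y-c$ determines $e$ modulo $C^\perp$ uniquely by the half-distance bound inherited from Theorem~\ref{thm:fqTBdisinf}.

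For the classical subroutine, I would exploit the fact that $C$ (or a mild relaxation obtained by dropping the Tamo--Barg local parities) embeds into a folded Reed--Solomon code $\mathrm{fRS}$ whose folding parameters are compatible with those used in the fqTB construction. Running a polynomial-time list-decoder for $\mathrm{fRS}$ on $y$ (e.g.\ Guruswami--Sudan, or the Guruswami--Rudra folded RS list-decoder) returns a $\poly(n)$-sized list $L$ of candidate $\mathrm{fRS}$-codewords within a fractional distance strictly larger than $\tau$. By choosing the alphabet size $q=n^{O_r(1)}$ to support the required folding and agreement parameters, one makes the list-decoding radius exceed $\tau$ while keeping the runtime $n^{O_r(1)}$. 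Filtering $L$ by the low-weight Tamo--Barg local parities together with the remaining defining constraints of $C$ then leaves at most one coset of $C^\perp$, by uniqueness of decoding within half the fqTB distance, and so recovers $e \bmod C^\perp$ as desired. Combining the $X$- and $Z$-decoding halves yields the full quantum decoder.

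The main obstacle is calibrating the list-decoding radius of $\mathrm{fRS}$ against the fqTB distance. The latter was obtained via a relatively delicate argument (root detection via a determinantal polynomial together with a finite-field uncertainty principle), whereas folded RS list-decoding gives a generic radius in terms of the ambient rate and folding parameter. One must therefore choose the embedding $C\hookrightarrow\mathrm{fRS}$ and the folding so that the standard folded RS guarantee yields radius exceeding $\tau$ for the same $R$ and $r$ used in the distance analysis; this interplay is what forces $q=n^{O_r(1)}$ and absorbs the $r$-dependent constants into the exponent of the runtime. The extra low-weight codewords added to $C$ for CSS compatibility do not obstruct the decoding, since they can be verified via linear-algebraic parity checks in the filtering step, and moreover they only enlarge $C$ and so cannot cause the nearest-codeword decoder to miss the correct candidate in $L$.
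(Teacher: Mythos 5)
The central step of the paper's decoder is missing from your proposal, and without it the reduction to (folded) Reed--Solomon list-decoding does not go through. The classical code $C = \evl(\bF_q[X]^S)$ underlying a qTB code has
$S = \{i\in[\ell]: i\not\equiv r-1\pmod r\}\cup\{i\in[q-1]: i\equiv 1\pmod r\}$; the second set contains exponents all the way up to $q-2$, so $C$ is \emph{not} a subcode of any Reed--Solomon code of useful degree, folded or otherwise. ``Dropping the Tamo--Barg local parities'' removes constraints from $S$ and only makes the code larger, so it does not produce an RS embedding. Consequently the corrupted word $y = c + b$ is generally far from every codeword of $\mathrm{fRS}(q,\ell,s)$ even when $b=0$, and the list returned by the folded RS decoder will simply not contain the information needed to recover $c$. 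Your claim that the added codewords ``only enlarge $C$ and so cannot cause the nearest-codeword decoder to miss the correct candidate'' conflates two different things: the list decoder's search space is $\mathrm{fRS}$, not $C$, and the true codeword $c\in C$ is typically not in that search space.

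What the paper does (Algorithm~\ref{alg:fqTBdec}) is first apply the twist $a^{(i)}(x) = \omega_r^{-i}a(\omega_r^i x) - a(x)$ for each $i\in\{1,\dots,r-1\}$. Writing $f = g + h$ with $g$ the bounded-degree Tamo--Barg part and $h$ the piecewise-linear part $\bF_q[X]^{[q-1]\cap(1+r\bZ)}$, this twist annihilates $h$ (because $(\omega_r^{i(j-1)}-1)=0$ for $j\equiv 1\pmod r$) and sends $g$ to another polynomial $g^{(i)}$ of degree $<\ell$ with the same support of nonzero coefficients. Now $\tilde a^{(i)}$ genuinely \emph{is} a corrupted folded RS codeword, and after an averaging argument over $i$ to control how many positions of $a^{(i)}$ inherit a corruption from $b$, the Guruswami--Rudra decoder is applicable and returns a list containing $g^{(i)}$, from which $g$ (and hence a representative of $c$ modulo the piecewise-linear stabilizers $B^\perp\subseteq C^\perp$) can be reconstructed. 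This twist, together with the bookkeeping that converts an error set of weight $e$ on $a$ into a bound on the error seen on some $a^{(i)}$, is the heart of the argument and would need to be supplied before your reduction to fRS list decoding is sound.
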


Theorem~\ref{thm:fqTBdecinf} provides a decoding algorithm for the fqTB codes with decoding radius up to half our distance bound in Theorem~\ref{thm:fqTBdisinf}, which is therefore optimal barring an improvement in the distance bound. While the algorithm runs in polynomial time $n^{O_r(1)}$ for fixed locality parameter $r$, this algorithm is inefficient for growing $r$. We address this issue by providing the following decoding algorithm for unfolded qTB codes, which therefore also applies to folded qTB codes, and whose running time is a polynomial independent of $r$.

\begin{theorem}[Decoding (f)qTB codes; informal statement of Theorem~\ref{thm:qTBdec}]
  \label{thm:qTBdecinf}
  An (unfolded or folded) qTB code of block length $n$, prime locality parameter $r$, and rate $R$ can be decoded from errors acting on an unknown
  \begin{equation*}
    \frac12\left(1-\sqrt{\frac{1+R}{2}}-O\left(\frac{1}{r}\right)\right)
  \end{equation*}
  fraction of the code components in time $n^{O(1)}$.
\end{theorem}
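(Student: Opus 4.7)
The plan is to reduce the quantum decoding task for a (folded or unfolded) qTB code to classical unique decoding of the ambient Reed--Solomon code via the Berlekamp--Welch algorithm. Since $\cC = \CSS(C,C)$, the $X$- and $Z$-type components of an error $X^{e_X}Z^{e_Z}$ can be corrected independently, and each reduces to the classical problem of finding, from the measured syndrome $s = He$ (where $H$ is a parity-check matrix of $C$), some $e' \in \bF_q^n$ with $He' = s$ and $|e'| < d_{qTB}/2$. Any such $e'$ automatically satisfies $e' \equiv e \pmod{C^\perp}$: the two elements of $e + C$ differ by an element of $C$ of weight less than $d_{qTB}$, which by the definition of $d_{qTB}$ as the minimum weight in $C \setminus C^\perp$ must lie in $C^\perp$ and therefore acts trivially on the code state.

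Computing an initial coset representative $e_0 \in e + C$ via Gaussian elimination in time $n^{O(1)}$, the task becomes to find $c \in C$ with $|e_0 + c| < d_{qTB}/2$, or equivalently to find the nearest codeword of $C$ to $-e_0$ within this radius. The key observation is that $C$ sits inside a Reed--Solomon code of rate $R_{\mathrm{RS}} = (1+R)/2$, and moreover the decoding radius satisfies
\begin{equation*}
  \frac{d_{qTB}}{2n} \;=\; \frac{1 - \sqrt{(1+R)/2}}{2} - O\!\left(\frac{1}{r}\right) \;\leq\; \frac{1 - R_{\mathrm{RS}}}{2} \;=\; \frac{d_{\mathrm{RS}}}{2n},
\end{equation*}
which, upon the substitution $y = \sqrt{(1+R)/2}$, reduces to $y(1-y) \geq 0$ and therefore holds for all $R \in [0,1]$. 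Consequently, applying Berlekamp--Welch to the received word $-e_0$ uniquely and efficiently recovers the nearest RS codeword in time $n^{O(1)}$; since the true element $c_0 := e - e_0 \in C \subseteq \mathrm{RS}$ lies within distance $d_{qTB}/2 < d_{\mathrm{RS}}/2$ of $-e_0$, the algorithm outputs precisely $c_0$, so that $e' := e_0 + c_0$ is a valid classical correction, whose Pauli implementation then restores the code state up to a stabilizer element.

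For folded qTB codes, we note that a fraction-$\alpha$ error at the folded-symbol level corresponds to at most a fraction-$\alpha$ error at the unfolded-symbol level, since each folded error is localized within a single unfolded block. Applying the above unfolded decoder to the unfolded received word therefore handles folded errors at the stated rate, and the total running time remains $n^{O(1)}$ and independent of $r$, because Berlekamp--Welch and Gaussian elimination both run in time polynomial in $n$ over $\bF_q$ with $q = n^{O(1)}$.

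The main obstacle is confirming that the added low-weight codewords in $C \setminus C_{TB}$ indeed lie within the same ambient Reed--Solomon code, so that the output $c_0$ of Berlekamp--Welch is guaranteed to lie in $C$ rather than in some larger code. This requires direct inspection of the explicit construction in Section~\ref{sec:construct} of the codewords adjoined to $C_{TB}$ to enforce the CSS orthogonality $C^\perp \subseteq C$; one expects these to arise as polynomial evaluations compatible with the same evaluation map used for $C_{TB}$. Once the containment $C \subseteq \mathrm{RS}$ is in hand, the proof reduces entirely to classical Reed--Solomon decoding, and no new algorithmic ideas beyond the Berlekamp--Welch black box are required.
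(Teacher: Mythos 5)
Your reduction hinges on the containment $C\subseteq\mathrm{RS}(q,\ell)$, and this is exactly where the argument breaks: the containment is false. By Definition~\ref{def:qTBcode}, $C=\evl(\bF_q[X]^S)$ with $S$ containing all of $\{i\in[q-1]:i\equiv 1\pmod r\}$, so the codewords adjoined to the classical Tamo--Barg code to enforce $C^\perp\subseteq C$ (the ``piecewise linear'' functions $B^\perp=\evl(\bF_q[X]^{[q-1]\cap(1+r\bZ)})$) are evaluations of polynomials of degree up to $q-2$, far above $\ell-1$. The only Reed--Solomon code containing $C$ is essentially the trivial one of dimension $\approx q-1$, which has no distance, so Berlekamp--Welch applied to $-e_0$ has no reason to return the element $c_0=e-e_0\in C$ (which generically has a large high-degree component and is not an RS codeword of degree $<\ell$ at all). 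This is precisely the difficulty the paper flags at the start of Section~\ref{sec:decode}: unlike classical TB codes, the quantum TB codes are \emph{not} subcodes of (quantum) Reed--Solomon codes, because the CSS duality forces the extra high-degree terms. Your arithmetic comparing $\frac12(1-\sqrt{(1+R)/2})$ to the RS unique-decoding radius $(1-R_{\mathrm{RS}})/2$ is fine as far as it goes, but it only becomes relevant once one has a legitimate reduction to RS decoding, which your proposal does not supply. (Your syndrome-decoding setup and the observation that any coset representative of weight $<d/2$ suffices are both correct and match Proposition~\ref{prop:CSSdec}; the gap is entirely in the classical decoding step.)

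The paper's actual route (Algorithm~\ref{alg:qTBdec}, Theorem~\ref{thm:qTBdec}) is built to eliminate the high-degree part before invoking an RS decoder: writing $f=g+h$ with $g$ the low-degree TB part and $h$ piecewise linear, the differenced word $a^{(i)}(x)=\omega_r^{-i}a(\omega_r^i x)-a(x)$ annihilates $h$ (since $h^{(i)}=0$), leaving a corrupted codeword of $\mathrm{RS}(q,\ell)$ whose message determines $g$ up to the invertible coefficient map $g_j\mapsto(\omega_r^{i(j-1)}-1)g_j$. The price is that the differencing can double the error locations, which is why the paper must average over $i\in\{1,\dots,r-1\}$, use the Guruswami--Sudan \emph{list} decoder up to the Johnson radius rather than unique decoding, and then prune the list by minimizing distance to $a$ modulo $B^\perp$ --- and why the resulting radius involves $1-\sqrt{(1+R)/2}$ rather than $(1-R)/4$. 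If you want to repair your proposal, you would need either this differencing idea or some other mechanism for handling the $B^\perp$ summand; a black-box RS unique decoder on the raw received word cannot work.
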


The decoder in Theorem~\ref{thm:qTBdecinf} simply performs $r-1$ calls to a classical Reed-Solomon (list) decoder, and then performs some postprocessing on the resulting outputs (see Algorithm~\ref{alg:qTBdec}). Therefore this algorithm should be efficient in practice, as Reed-Solomon decoders have been optimized for practical use.

The decoding radius in Theorem~\ref{thm:qTBdecinf} is half our distance bound in Theorem~\ref{thm:qTBdisinf}. Therefore this decoding radius is optimal among decoders for qTB codes, barring an improvement to our distance bound in Theorem~\ref{thm:qTBdisinf}. As
\begin{equation*}
  \frac{1-R}{4} \leq 1-\sqrt{\frac{1+R}{2}} \leq \frac{1-R}{2},
\end{equation*}
the unfolded qTB codes achieve distance and decoding radius within roughly a factor of $2$ of the optimal values as dictated by our Singleton-like bound in Theorem~\ref{thm:singletongeninf}.

We prove both Theorem~\ref{thm:qTBdecinf} and Theorem~\ref{thm:fqTBdecinf} using similar techniques as in the proof of Theorem~\ref{thm:qTBdisinf}. Specifically, let $\cC=\CSS(C,C)$ denote the qTB code. The decoding task for $\cC$ can be reduced to the following problem: given a corrupted codeword $a$ of $C$, we want an efficient algorithm the recovers some $c\in C$ that is close to $a$ in Hamming weight. For this purpose, we show how every $c\in C$ can be modified to obtain some $c'$ that is a codeword of a Reed-Solomon code. Applying the same modification to $a$, we obtain a corrupted Reed-Solomon codeword $a'$. We then apply a Reed-Solomon (list) decoder to recover $c'$, which we can then map back to the desired codeword $c$.

To decode fqTB codes in Theorem~\ref{thm:fqTBdecinf}, we use the same algorithm, except now we apply a folded Reed-Solomon list decoder \cite{guruswami_explicit_2008} to recover $c'$ from $a'$. As folded Reed-Solomon codes have a larger list-decoding radius than ordinary Reed-Solomon codes, and our fqTB distance bound in Theorem~\ref{thm:fqTBdisinf} is better than our qTB distance bound in Theorem~\ref{thm:qTBdisinf}, we obtain a larger decoding radius in Theorem~\ref{thm:fqTBdecinf} than in Theorem~\ref{thm:qTBdecinf}.

\subsubsection{Impossibility of Quantum Locally Correctable Codes}
Our results on qLRCs described above indicate that while optimal local recoverability is more nuanced and difficult to achieve quantumly than classically, there do exist constructions of qLRCs approaching the optimal parameters. It is therefore natural to consider quantum analogues of stronger forms of locality that exist classically.



Local correctability and local decodability provide particularly notable examples of such properties. Recall that an LRC of block length $n$ and locality $r$ has the property that any single code component $i\in[n]$ is erased, the value of a codeword at that component can be recovered from $\leq r-1$ unerased components. A \textit{locally correctable code (LCC)} has the stronger property that after a linear number $\Omega(n)$ of code components are erased, the value of a codeword at each erased component $i$ can be recovered from $\leq r-1$ unerased components.

Equivalently, an LRC requires the value of a codeword at each component $i\in[n]$ to be recoverable from some recovery set of $\leq r-1$ other components. In contrast, an LCC requires each component $i\in[n]$ to have $\Omega(n)$ disjoint recovery sets, each of which contains $\leq r-1$ components that can be used to recover the codeword's value at $i$.

\textit{Local decodable codes (LDCs)} are defined similarly, except they only need to support local recovery for message components, rather than for codeword components.

The Hadamard and Reed-Muller codes provide examples of classical LCCs and LDCs. It is therefore natural to ask whether there are quantum versions of these types of codes.

As local access to a quantum code of large distance cannot reveal any information about the message, local decodability seems to make little sense quantumly, at least in the regime of large distance and small locality. However, quantum local correctability may seem to be a reasonable strengthening of quantum local recoverability.

We show below that quantum local correctability is impossible in a strong sense: if a quantum code has even two disjoint recovery sets for a single qudit, then that qudit is unentangled with the remainder of the code state, and contains no information about the message state. Recall that LCCs are typically required to have linearly many such disjoint recovery sets for each code component. Thus qLRCs, which have a single recovery set for each qudit, are in some sense the limit of what is possible for quantum local correctability.

\begin{theorem}[Impossibility of quantum LCCs; informal statement of Theorem~\ref{thm:notworec}]
  \label{thm:notworecinf}
  Let $\cC$ be a quantum code of block length $n$ such that for some qudit $i\in[n]$, there exist subsets $I_i^1,I_i^2\subseteq[n]$ satisfying $I_i^1\cap I_i^2=\{i\}$ such that the following holds for each $b=1,2$: if qudit $i$ of a code state $\psi\in\cC$ is erased (i.e.~it experiences a completely depolarizing channel), the original code state $\psi$ can be recovered by applying a recovery channel that only accesses qudits in $I_i^b$. Then there exists a 1-qudit state $\alpha$ such that every $\psi\in\cC$ can be decomposed as $\psi=\alpha_i\otimes\psi_{[n]\setminus\{i\}}$.
\end{theorem}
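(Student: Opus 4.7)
The plan is to combine the standard decoupling consequence of erasure-recovery with a no-cloning-style argument enabled by the two disjoint recovery sets. Let $\ket{\Psi}\in\cH_{[n]}\otimes\cH_R$ be the canonical purification of the maximally mixed state on $\cC$, where $R$ is a reference system of dimension $K=\dim\cC$. The conclusion of the theorem is equivalent to showing that $\ket{\Psi}$ factorizes as $\ket{\alpha}_i\otimes\ket{\Psi'}_{[n]\setminus\{i\},R}$ for some pure state $\ket{\alpha}_i$. Since qudit $i$ admits an erasure-recovery via either $\Phi_1$ or $\Phi_2$, the local indistinguishability property (Lemma~\ref{lem:localind}) applied to the single qudit $i$ yields that $\rho_i=:\alpha$ is fixed across all codewords, so $\rho_{iR}=\alpha\otimes(I_R/K)$ decouples in the purification; by Uhlmann's theorem, the desired factorization then reduces to showing that $\alpha$ is a pure state.

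To show purity of $\alpha$, I would apply a no-cloning argument using the disjoint sets $J_b:=I_i^b\setminus\{i\}$. Take the Stinespring dilation $V_b$ of $\Phi_b$ and redirect its reconstructed qudit $i$ to a fresh ancilla $i_b'$; since $J_1\cap J_2=\emptyset$, we can apply $V_1$ and $V_2$ in parallel to $\ket{\Psi}$ together with ancillas. By the recovery property, the resulting state has each $(J_b,i_b')$-marginal equal to the $(J_b,i)$-marginal of $\ket{\Psi}$, so each of the three qudits $i,i_1',i_2'$ plays the role of qudit $i$ with respect to the reference $R$ and the rest of the code. If $\alpha$ were mixed---so that in the Schmidt decomposition $\ket{\Psi}=\sum_k\sqrt{\lambda_k}\,\ket{k}_i\,\ket{\phi_k}_{[n]\setminus\{i\},R}$ at least two $\lambda_k$ were nonzero---this construction would effectively extract the quantum Schmidt index $k$ from each of two disjoint subsystems $J_1,J_2$, contradicting the no-cloning theorem for the coherent superposition carried by qudit $i$.

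The main obstacle is to turn this heuristic into a rigorous contradiction, since a naively produced ``extra copy'' on $i_b'$ leaves the joint state in a GHZ-like entangled form rather than a true independent clone. One route is information-theoretic: the Petz-recovery characterization of each $\Phi_b$ yields $I(\{i\}:[n]\setminus I_i^b\cup R\mid J_b)=0$, which combined with $I(\{i\}:R)=0$ and the purity of $\ket{\Psi}$ gives, via the chain rule and monotonicity of CMI, the two-sided Markov conditions $I(\{i\}:J_2\mid J_1)=I(\{i\}:J_1\mid J_2)=0$ as well as $I(\{i\}:J_1\cup J_2)=2S(\alpha)$. The crucial additional constraint is that $\Phi_b$ must restore the \emph{pure} state $\ket{\Psi}$ from the \emph{mixed} post-erasure state while acting as the identity on $[n]\setminus I_i^b\cup R$; since a channel supported on a restricted subsystem cannot create coherence between different Schmidt sectors on the untouched factor, this purity-preservation constraint (applied for both $b=1,2$ with $[n]\setminus I_i^1\supseteq J_2$ and vice versa) rules out any nontrivial classical or quantum redundancy of the Schmidt index, forcing $\rho_{i,J_1,J_2}=\rho_i\otimes\rho_{J_1,J_2}$ and hence $S(\alpha)=0$. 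Therefore $\alpha$ is pure, and by Uhlmann's theorem $\ket{\Psi}=\ket{\alpha}_i\otimes\ket{\Psi'}_{[n]\setminus\{i\},R}$, yielding the claimed decomposition of every $\psi\in\cC$.
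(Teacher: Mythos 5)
Your opening move --- applying both recovery channels in parallel on the disjoint sets $J_1,J_2$ with the reconstructed qudit redirected to fresh registers $i_1',i_2'$ --- is exactly the paper's idea, but you abandon it too early. The ``obstacle'' you name (that the naive construction yields a GHZ-like entangled state rather than an independent clone) is not actually an obstacle: you do not need a clone, only the observation that the recovery guarantee is a \emph{full-state} guarantee. Concretely, let $\rho=(\Rec_i^1)_{J_1\to J_1\cup\{i_1'\}}(\Rec_i^2)_{J_2\to J_2\cup\{i_2'\}}(\psi)$. Since $\Tr_{i_1'}\circ\Rec_i^1$ acts as the identity on $\psi_{[n]\setminus\{i\}}$ (it recovers $\psi$ and then discards the recovered qudit), one gets $\Tr_{i_1'}\Tr_i\rho=\psi_{[n]\setminus\{i\}\cup\{i_2'\}}$, which is \emph{pure}; purity of a marginal forces the tensor factorization $\Tr_i\rho=\alpha_{i_1'}\otimes\psi_{[n]\setminus\{i\}\cup\{i_2'\}}$. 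Tracing out $i_2'$ instead gives $\psi_{[n]\setminus\{i\}\cup\{i_1'\}}=\alpha_{i_1'}\otimes\psi_{[n]\setminus\{i\}}$, which is precisely the claimed decomposition; uniformity of $\alpha$ over $\cC$ then follows from Lemma~\ref{lem:localind}. No reference system, Petz map, or Schmidt analysis is needed.

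The route you substitute for this has a genuine gap at its crux. After listing the Markov conditions $I(\{i\}:([n]\setminus I_i^b)\cup R\mid J_b)=0$ and $I(\{i\}:R)=0$, you assert that the purity-preservation constraint ``rules out any nontrivial classical or quantum redundancy of the Schmidt index, forcing $\rho_{i,J_1,J_2}=\rho_i\otimes\rho_{J_1,J_2}$ and hence $S(\alpha)=0$.'' That sentence \emph{is} the theorem; no derivation is given, and it does not follow from the CMI conditions alone (a GHZ-type state $\sum_k\sqrt{\lambda_k}\ket{k}_i\ket{k}_{J_1}\ket{k}_{J_2}\ket{\phi_k}$ satisfies all the listed entropic identities with $S(\alpha)>0$ --- it is only excluded because a channel touching $J_1$ alone cannot restore coherence between sectors distinguishable on $J_2$, and making that precise requires something like the Hayden--Jozsa--Petz--Winter structure theorem for quantum Markov chains). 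Also, your intermediate claim $I(\{i\}:J_1\cup J_2)=2S(\alpha)$ needs $I(\{i\}:W\cup R)=0$ for $W=[n]\setminus(\{i\}\cup J_1\cup J_2)$, which you have not established. So the proposal correctly isolates where the quantum obstruction lives but does not prove it; the direct trace computation above closes the argument in three lines.
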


While we prove Theorem~\ref{thm:notworecinf} for general quantum codes $\cC\subseteq(\bC^d)^{\otimes n}$, it is illustrative to consider CSS or stabilizer codes. Recall from Section~\ref{sec:qlrcdefinf} that for a CSS code $\cC=\CSS(C_X,C_Z)$, a set $I_i\ni i$ can be used to recover qudit $i$ if both $C_X$ and $C_Z$ have parity checks whose support contains $i$ and lies inside $I_i$. Thus if $I_i^1,I_i^2$ are both recovery sets for component $i$ with $I_i^1\cap I_i^2=\{i\}$, then there are parity checks $c_X'\in C_X^\perp$ and $c_Z'\in C_Z^\perp$ such that $i\in\supp(c_X')\subseteq I_i^1$ and $i\in\supp(c_Z')\subseteq I_i^2$. But then $\supp(c_X')\cap\supp(c_Z')=\{i\}$, so $c_X'\cdot c_Z'=(c_X')_i(c_Z')_i\neq 0$, which contradicts the orthogonality condition $C_X^\perp\subseteq C_Z$ required for $\cC$ to be a well defined CSS code. Thus Theorem~\ref{thm:notworecinf} holds for CSS codes. A similar proof holds for stabilizer codes; In Theorem~\ref{thm:notworec} we prove the more general result for arbitrary quantum codes.

\subsection{Overview of (Folded) Quantum Tamo-Barg codes}
\label{sec:explicitoverviewinf}
In this section, we provide more details on our (folded) quantum Tamo-Barg codes described in Section~\ref{sec:qTBinf}, and we overview the techniques we use to prove the distance bounds in Theorem~\ref{thm:fqTBdisinf} and Theorem~\ref{thm:qTBdisinf}. The construction and analysis of these codes comprise one of the main technical contributions of our paper.

\subsubsection{Background and Classical Tamo-Barg Codes}
Our qTB codes are CSS codes whose associated classical codes are polynomial evaluation codes that are closely related to the classical LRCs of Tamo and Barg \cite{tamo_family_2014}. To describe these codes, we use the following notation. For a subset $S\subseteq\bZ_{\geq 0}$, let $\bF_q[X]^S=\{\sum_{i\in S}a_iX^i:a\in\bF_q^S\}$ denote the space of polynomials for which only monomials $X^i$ for $i\in S$ can have nonzero coefficients. Then let $\evl:\bF_q[X]^S\rightarrow\bF_q^{\bF_q^*}\cong\bF_q^{q-1}$ denote the evaluation map on nonzero points in $\bF_q$, so that $\evl(f)=(f(x))_{x\in\bF_q^*}$

With this notation, a classical Tamo-Barg (TB) code \cite{tamo_family_2014} specified by a prime power $q$, a locality parameter $r|(q-1)$, and an integer $\ell\in[q]$ is given by $\evl(\bF_q[X]^S)$ for
\begin{equation*}
  S = \{i\in[\ell]:i\not\equiv r-1\pmod{r}\}.
\end{equation*}
\cite{tamo_family_2014} showed that this code has alphabet size $q$, block length $q-1$, dimension $\ell-\lfloor l/r\rfloor$, distance $\geq q-\ell$, and is locally recoverable with locality $r$.

\subsubsection{Quantum Tamo-Barg codes}
\label{sec:qTBoverviewinf}
In this section, we define our quantum analogue of TB codes, and give an overview of the analysis. For more details, the reader is referred to Section~\ref{sec:qTBcode} and Section~\ref{sec:qTBdis}.

As described in Section~\ref{sec:qTBinf}, to define a qTB code, we must first modify the classical TB code by adding some low-weight codewords in order to construct a well-defined associated quantum CSS code. These added low-weight codewords can be interpreted as \textit{piecewise linear functions}, as described below.

\begin{definition}[Quantum Tamo-Barg codes; restatement of Definition~\ref{def:qTBcode}]
  \label{def:qTBcodeinf}
  For a prime power $q$, a locality parameter $r|(q-1)$ with $r\geq 3$, and an integer $\ell\in[q]$, we define the \textbf{quantum Tamo-Barg (qTB) code} to be the CSS code $\cC=\CSS(C,C)$ with $C=\evl(\bF_q[X]^S)$ for
  \begin{equation}
    \label{eq:qTBsupportinf}
    S = \{i\in[\ell]:i\not\equiv r-1\pmod{r}\} \cup \{i\in[q-1]:i\equiv 1\pmod{r}\}.
  \end{equation}
\end{definition}

By some basic algebraic manipulations (see Lemma~\ref{lem:qTBdef}), we show that $C^\perp\subseteq C$, so the qTB code $\cC=\CSS(C,C)$ is indeed a well-defined quantum CSS code. Note that by construction $\cC$ has alphabet size $q$ and block length $q-1$. A straightforward calculation (see Lemma~\ref{lem:qTBdim}) shows that $\cC$ has dimension
\begin{align*}
  k
  &= 1+|\{q-\ell\leq i\leq\ell-1:i\not\equiv\pm 1\pmod{r}\}| \\
  &= (2\ell-q)\cdot\left(1-\frac{2}{r}\right)+\epsilon
\end{align*}
for some $\epsilon\in[-2,2]$.

The code $C$ in the definition above is by definition the sum of a classical TB code $\evl(\bF_q[X]^{[\ell]\setminus(-1+r\bZ)})$ with the code\footnote{In Section~\ref{sec:fqTBcode} we denote the code $\evl(\bF_q[X]^{[q-1]\cap(1+r\bZ)})$ by $B^\perp$, but for simplicity of notation in this section we denote it by $P$.} $P:=\evl(\bF_q[X]^{[q-1]\cap(1+r\bZ)})$. We call this latter code $P$ the space of \textbf{piecewise linear functions} due to the following lemma.

\begin{lemma}[Restatement of Lemma~\ref{lem:piecelin}]
  \label{lem:piecelininf}
  Let $\Omega_r=\{x\in\bF_q^*:x^r=1\}$ denote the $r$th roots of unity. Then $P=\evl(\bF_q[X]^{[q-1]\cap(1+r\bZ)})$ consists of all functions $f:\bF_q^*\rightarrow\bF_q$ that can be expressed in the form $f(x)=\beta_{x\Omega_r}\cdot x$ for some $\beta\in\bF_q^{\bF_q^*/\Omega_r}$.
\end{lemma}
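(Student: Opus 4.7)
The plan is to prove both inclusions, with the forward inclusion following from a direct calculation on cosets of $\Omega_r$ and the reverse inclusion from a dimension count.

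First, for any $f(X) = \sum_j a_{1+jr} X^{1+jr}$ in $\bF_q[X]^{[q-1]\cap(1+r\bZ)}$, I will fix a coset $x_0\Omega_r \subseteq \bF_q^*$ and evaluate $f$ at an arbitrary point $y = x_0\zeta$ in that coset, where $\zeta \in \Omega_r$. The key simplification is that $\zeta^r = 1$, so each monomial reduces as $y^{1+jr} = x_0^{1+jr}\zeta^{1+jr} = x_0^{1+jr}\zeta$. Summing and substituting $\zeta = y/x_0$ gives
\begin{equation*}
f(y) \;=\; y \cdot \sum_j a_{1+jr}\,(x_0^r)^j.
\end{equation*}
Since $x_0^r$ is an invariant of the coset (as $(x_0\zeta)^r = x_0^r$ for any $\zeta\in\Omega_r$), this exhibits $f(y)$ in the form $\beta_{y\Omega_r}\cdot y$ with $\beta_{x_0\Omega_r} := \sum_j a_{1+jr}(x_0^r)^j$. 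This proves the containment of $P$ in the space of piecewise linear functions.

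For the reverse inclusion, I will match dimensions. The space of piecewise linear functions, parameterized by $\beta \in \bF_q^{\bF_q^*/\Omega_r}$, has dimension $|\bF_q^*/\Omega_r| = (q-1)/r$, since distinct $\beta$'s give distinct functions ($\beta_{x_0\Omega_r}$ is recovered as $f(x_0)/x_0$). The space $P$ also has dimension $(q-1)/r$: the set $[q-1]\cap(1+r\bZ) = \{1, 1+r, \ldots, q-r\}$ has exactly $(q-1)/r$ elements, and $\evl$ is injective on $\bF_q[X]^{[q-1]\cap(1+r\bZ)}$ because every polynomial in this space has degree at most $q-r$, which is strictly less than $q-1$, and therefore cannot vanish identically on $\bF_q^*$. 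Since $P$ is then a subspace of the piecewise linear functions of the same finite dimension, the two spaces coincide.

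The main obstacle, if any, is merely setting up the coset calculation cleanly; once the identity $y^{1+jr} = x_0^{1+jr}\zeta$ is observed, both inclusions follow immediately, and the dimension matching avoids the need for an explicit interpolation formula (although one could alternatively give the inverse map directly by Lagrange interpolation on the subgroup $(\bF_q^*)^r$ of $r$th powers).
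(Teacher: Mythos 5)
Your proof is correct and takes essentially the same route as the paper's: one inclusion is established by a direct computation on each coset $x_0\Omega_r$ (the paper phrases this as reducing $f$ modulo $X^r - x_0^r$, which is the same calculation as your identity $y^{1+jr} = x_0^{1+jr}\zeta$), and the other inclusion follows by dimension counting. Your writeup is somewhat more explicit about the dimension count (injectivity of $\evl$ on polynomials of degree $\le q-r$ and injectivity of $\beta \mapsto f$), which the paper leaves implicit.
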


Lemma~\ref{lem:piecelininf} follows directly from the fact that the evaluation of a polynomial $h(X)$ on inputs in $x\Omega_r$ equals the evaluation of $h(X)\pmod{X^r-x^r}$ on such inputs. The lemma implies that every function in $P$ equals a linear function on the restriction to inputs in each coset $x\Omega_r$.

We show in Lemma~\ref{lem:qTBdef} that $P\subseteq C^\perp$, from which we obtain the local recoverability of $\cC$:

\begin{corollary}[Restatement of Corollary~\ref{cor:qTBLRC}]
  \label{cor:qTBLRCinf}
  The qTB code $\cC=\CSS(C,C)$ given in Definition~\ref{def:qTBcodeinf} is locally recoverable with locality $r$.
\end{corollary}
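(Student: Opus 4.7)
The plan is to reduce the quantum claim to a classical locality claim about $C$ and then exhibit, for every coordinate, an explicit weight-$r$ parity check of $C$ sitting inside $P$.

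First, I would invoke the general criterion stated earlier as Corollary~\ref{cor:CSSLRC}: a CSS code $\CSS(C_X,C_Z)$ is a qLRC of locality $r$ as soon as both $C_X$ and $C_Z$ are classical LRCs of locality $r$. Since here $\cC=\CSS(C,C)$, it suffices to prove that $C$ itself is a cLRC of locality $r$, i.e.\ that for every coordinate $x^{*}\in\bF_q^{*}$ there exists $h\in C^{\perp}$ with $x^{*}\in\supp(h)$ and $|\supp(h)|\le r$. Given such an $h$, the parity-check equation $\sum_{y\in\supp(h)} h(y)\,c(y)=0$ determines $c(x^{*})$ from the $|\supp(h)|-1\le r-1$ other values of $c$ on $\supp(h)\setminus\{x^{*}\}$ as long as $h(x^{*})\neq 0$, which is the classical definition of locality $r$.

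Next, I would construct such an $h$ explicitly inside $P$ using Lemma~\ref{lem:piecelininf}. Let $\Omega_r\subseteq\bF_q^{*}$ denote the group of $r$th roots of unity, so that the cosets of $\Omega_r$ partition $\bF_q^{*}$ into classes of size $r$. Take $\beta\in\bF_q^{\bF_q^{*}/\Omega_r}$ to be the indicator function of the single coset $x^{*}\Omega_r$, and define
\[
 h(y)\;=\;\beta_{y\Omega_r}\cdot y.
\]
Then $h$ vanishes on every coset other than $x^{*}\Omega_r$, while on $x^{*}\Omega_r$ we have $h(y)=y\neq 0$ because $y\in\bF_q^{*}$. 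Hence $\supp(h)=x^{*}\Omega_r$, which has size exactly $r$, contains $x^{*}$, and satisfies $h(x^{*})=x^{*}\neq 0$. By Lemma~\ref{lem:piecelininf} we have $h\in P$, and by Lemma~\ref{lem:qTBdef} (the containment $P\subseteq C^{\perp}$ established in the construction) we conclude $h\in C^{\perp}$. Combined with the opening reduction, this establishes the corollary.

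There is essentially no obstacle here: the substantive algebraic content has already been absorbed into the preceding two lemmas, namely the piecewise-linear description of $P$ and the containment $P\subseteq C^{\perp}$. The only new step is the observation that choosing the coset-index data $\beta$ to be supported on a single coset $x^{*}\Omega_r$ produces a parity check whose support is exactly that size-$r$ coset, and is everywhere nonzero on it because the evaluation points are nonzero.
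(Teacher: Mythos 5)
Your proof is correct and follows essentially the same route as the paper: reduce to classical local recoverability of $C$ via Corollary~\ref{cor:CSSLRC}, then exhibit the piecewise linear function supported on the single coset $x^*\Omega_r$ (which lies in $P\subseteq C^\perp$ by Lemma~\ref{lem:piecelininf} and Lemma~\ref{lem:qTBdef}) as the weight-$r$ parity check. The paper uses the identical function $f_\alpha(x)=x$ on $\alpha\Omega_r$ and $0$ elsewhere, so there is nothing substantive to distinguish the two arguments.
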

\begin{proof}
  As described in Section~\ref{sec:qlrcdefinf} and formalized in Corollary~\ref{cor:CSSLRC}, it suffices to show that for each $\alpha\in\bF_q^*$, there exists some $f\in C^\perp$ such that $\alpha\in\supp(f)$ and $|\supp(f)|\leq r$. But because $P\subseteq C^\perp$ as described above, the piecewise linear function $f:\bF_q^*\rightarrow\bF_q$ given by $f(x)=x$ for $x\in\alpha\Omega_r$ and $f(x)=0$ for $x\notin\alpha\Omega_r$ satisfies these criteria, as desired.
\end{proof}

It only remains to bound the distance of the qTB code $\cC$. For this purpose, we show the following result, which directly implies Theorem~\ref{thm:qTBdisinf}.

\begin{theorem}[Restatement of Theorem~\ref{thm:qTBdis}]
  \label{thm:qTBdisinf2}
  The qTB code $\cC$ in Definition~\ref{def:qTBcodeinf} with a prime locality parameter $r$ has distance at least
  \begin{equation}
    \label{eq:qTBdisinf2}
    d = (q-1)\left(1-\frac{1}{2r}-\sqrt{\frac{1}{4r^2}+\frac{r-1}{r}\cdot\frac{\ell-1}{q-1}}\right).
  \end{equation}
\end{theorem}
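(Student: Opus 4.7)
The plan is to carry out a pair-counting argument on the zero set of a codeword, sharpened by a Cauchy--Schwarz step; the quadratic bound in \eqref{eq:qTBdisinf2} is exactly what such a Johnson-style count produces. The quantum distance of $\CSS(C,C)$ equals $\min\{|\supp(f)| : f\in C\setminus C^\perp\}$. Given $f\in C\setminus C^\perp$, write $f=\evl(g)$ with $g\in\bF_q[X]^S$, and uniquely split $g=g_0+g_1$ with $g_1\in\bF_q[X]^{S_P}$ (where $S_P=[q-1]\cap(1+r\bZ)$) and $g_0\in\bF_q[X]^{S\setminus S_P}$. Since $P=\evl(\bF_q[X]^{S_P})\subseteq C^\perp$, the assumption $f\notin C^\perp$ forces $g_0\neq 0$. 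Note that $S\setminus S_P$ contains no index $\equiv 1$ or $\equiv r-1\pmod r$, so $g_0$ has no $Y^1$-term and no $Y^j$-term with $j\equiv -1\pmod r$; also $\deg g_0\leq \ell-1$. Our goal reduces to bounding the number $T$ of zeros of $g=g_0+g_1$ in $\bF_q^*$ and showing $T\leq \frac{n}{2r}+\sqrt{\frac{n^2}{4r^2}+\frac{n(r-1)(\ell-1)}{r}}$, since then $|\supp(f)|=n-T$ gives the claim.

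Next I would extract the algebraic constraint satisfied by pairs of zeros in the same coset of $\Omega_r$. If $y,y'$ both lie in $x\Omega_r$ and are both zeros of $g_0+g_1$, then by the piecewise-linear description of $g_1$ on this coset (Lemma~\ref{lem:piecelininf}), there is a scalar $\beta$ such that $g_0(y)=-\beta y$ and $g_0(y')=-\beta y'$; eliminating $\beta$ gives $g_0(y)y'=g_0(y')y$. Hence the antisymmetric polynomial $G(Y,Y'):=g_0(Y)Y'-g_0(Y')Y$ vanishes at every such pair. Factoring out $Y-Y'$, write $G=(Y-Y')H$ with
\[
H(Y,Y')=-a_0+\sum_{\substack{j\in S\setminus S_P \\ j\geq 2}} a_j\sum_{k=0}^{j-2}Y^{k+1}(Y')^{j-1-k},
\]
where $g_0(Y)=\sum a_jY^j$. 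In particular $\deg H\leq\ell-1$. For each distinct pair of zeros in a common coset we therefore have $H(y,y')=0$.

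The main obstacle, and the place where the construction's number-theoretic assumptions enter, is to show that $H(Y,\omega Y)$ is not identically zero for each $\omega\in\Omega_r\setminus\{1\}$. Substituting $Y'=\omega Y$ yields, after summing the geometric series,
\[
H(Y,\omega Y)=-a_0+\sum_{\substack{j\in S\setminus S_P \\ j\geq 2}} a_j\,\omega\cdot\frac{\omega^{j-1}-1}{\omega-1}\,Y^j.
\]
If this vanishes identically then $a_0=0$ and $a_j(\omega^{j-1}-1)=0$ for every admissible $j\geq 2$. Because $r$ is prime and $\omega\neq 1$ lies in $\Omega_r$, the element $\omega$ has order exactly $r$, so $\omega^{j-1}=1$ iff $j\equiv 1\pmod r$---but these indices are excluded from $S\setminus S_P$ by construction. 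Hence every $a_j$ vanishes and $g_0=0$, contradicting our reduction. Therefore $H(Y,\omega Y)$ is a nonzero polynomial of degree $\leq \ell-1$, so it has at most $\ell-1$ zeros in $\bF_q^*$. This is the key step, and the one that most directly uses both the primality of $r$ and the piecewise-linear padding built into $S$.

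Finally I assemble a counting argument. Let $z_i$ be the number of zeros of $g$ in the $i$th coset of $\Omega_r$; there are $m=n/r$ cosets and $T=\sum_i z_i$. Each ordered pair $(y,y')$ with $y\neq y'$ in a common coset has the form $(y,\omega y)$ for a unique $\omega\in\Omega_r\setminus\{1\}$, so
\[
\sum_{i=1}^m z_i(z_i-1)=\sum_{\omega\in\Omega_r\setminus\{1\}}\#\{y\in\bF_q^*:y,\omega y\text{ both zeros of }g\}\leq (r-1)(\ell-1),
\]
by the previous paragraph. Cauchy--Schwarz gives $T^2\leq m\sum_i z_i^2\leq m\bigl((r-1)(\ell-1)+T\bigr)$, and solving the resulting quadratic for $T$ yields $T\leq \frac{m}{2}+\sqrt{\frac{m^2}{4}+m(r-1)(\ell-1)}=\frac{n}{2r}+\sqrt{\frac{n^2}{4r^2}+\frac{n(r-1)(\ell-1)}{r}}$. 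Subtracting from $n=q-1$ recovers the stated bound \eqref{eq:qTBdisinf2}, completing the proof.
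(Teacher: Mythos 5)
Your proposal is correct and follows essentially the same route as the paper's proof of Theorem~\ref{thm:qTBdis}: the same decomposition $f=g_0+g_1$ into a nonzero ``Reed--Solomon-like'' part and a piecewise-linear part, the same coset relation $g_0(y)y'=g_0(y')y$ for pairs of zeros, the same use of the primality of $r$ together with the exclusion of exponents $\equiv 1\pmod r$ to show the auxiliary polynomials are nonzero of degree $\leq\ell-1$, and the same Cauchy--Schwarz pair count. The only cosmetic difference is that you bound the roots of the $r-1$ specializations $H(Y,\omega Y)$ separately and sum, whereas the paper multiplies the corresponding factors into a single polynomial $G$ of degree $\leq(r-1)(\ell-1)$; these yield the identical inequality.
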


As $r\rightarrow\infty$, the bound in Theorem~\ref{thm:qTBdisinf2} gives
\begin{equation*}
  \frac{d}{q-1} \rightarrow 1-\sqrt{\frac{\ell-1}{q-1}}.
\end{equation*}
As was mentioned in Section~\ref{sec:qTBinf}, this bound is reminiscent of the Johnson bound (see e.g.~Theorem~7.3.1 of \cite{guruswami_essential_2022}), which in particular implies that a classical code of dimension $\ell$ and block length $q-1$ whose distance approaches the Singleton bound is \textit{list-decodable} from at least a fraction of errors approaching $1-\sqrt{\ell/(q-1)}$ as the alphabet size and block length grow large.

It is unclear if there is a deeper reason for this similarity between the Johnson bound and Theorem~\ref{thm:qTBdis}. However, it is interesting that we are able to improve the distance beyond this Johnson-like bound by folding our qTB codes, just as \cite{guruswami_explicit_2008} introduced folded RS codes to improve beyond the Johnson bound for the list-decoding radius of RS codes.

\begin{proof}[Proof sketch of Theorem~\ref{thm:qTBdisinf2}]
  Recall that the distance of $\cC=\CSS(C,C)$ equals the minimum weight of any element of $C\setminus C^\perp$. As $P\subseteq C^\perp$, it therefore suffices show that every element of $C\setminus P$ has weight at least the value $d$ given in~(\ref{eq:qTBdisinf2}).

  For this purpose, consider any $\evl(f)\in C\setminus P$. At a high level, the proof will proceed as follows. We define a polynomial $G$ such that whenever $f$ has multiple roots in a coset $x\Omega_r\subseteq\bF_q^*$, $G$ also has many roots in that coset. We can then bound the number of roots of $f$ by bounding the number of roots of $G$, which we in turn bound by the degree of $G$. By constructing $G$ to have low degree relative to the number of roots of $f$, we obtain the desired result.

  We now formally define $G$. By definition we may decompose $f(X)=g(X)+h(X)$, where $g(X)\in\bF_q[X]^{[\ell]\setminus(\pm 1+r\bZ)}$ evaluates to a nonzero classical TB codeword, and $h(X)\in\bF_q[X]^{[q-1]\cap(1+r\bZ)}$ evaluates to a piecewise linear function. Define the polynomial $G(X)\in\bF_q[X]$ by
  \begin{equation}
    \label{eq:GJohnsoninf}
    G(X) = \prod_{i=1}^{r-1}(\omega_r^{-i}g(\omega_r^iX)-g(X)).
  \end{equation}
  By definition, the degree of $G$ satisfies
  \begin{equation}
    \label{eq:degGinf}
    \deg(G) \leq (r-1)(\ell-1).
  \end{equation}
  Meanwhile, we bound the number of roots of $G$ as follows. For every $x\in\bF_q^*$ and every $r$th root of unity $\omega_r^i\in\Omega_r\setminus\{1\}$ such that $f(x)=0$ and $f(\omega_r^ix)=0$, then
  \begin{align*}
    \omega_r^{-i}g(\omega_r^ix)-g(x)
    &= -\omega_r^{-i}h(\omega_r^ix)+h(x) = -h(x)+h(x) = 0,
  \end{align*}
  where the second equality above holds because $h$ is piecewise linear. Thus $G$ has a root for every ordered pair $(x,\omega_r^ix)$ of roots of $f$ whose ratio is an $r$th root of unity $\omega_r^i$. Summing over all such pairs of roots of $f$ (see Section~\ref{sec:qTBdis} for details), we find that the total number of roots of $G$ is at least
  \begin{equation*}
    \frac{r}{q-1}(q-1-|\evl(f)|)^2-(q-1-|\evl(f)|).
  \end{equation*}
  This expression must be bounded above by the RHS of~(\ref{eq:degGinf}); rearranging terms in the resulting inequality yields the desired bound $|\evl(f)|\geq d$ for $d$ given in~(\ref{eq:qTBdisinf2}).
\end{proof}

\subsubsection{Folded Quantum Tamo-Barg Codes}
\label{sec:fqTBoverviewinf}
As described in Section~\ref{sec:qTBinf} and Section~\ref{sec:qTBoverviewinf}, we were only able to prove a ``Johnson-like'' bound on the distance of qTB codes, which does not approach our Singleton-like bounds for qLRCs described in Section~\ref{sec:singletoninf}. We address this issue by introducing a ``folded'' version of qTB codes, for which we show the distance does approach the Singleton bound for large localities $r$. The proof of this distance bound for folded qTB codes (Theorem~\ref{thm:fqTBdisinf}) is quite involved, so in this section we simply provide a brief description of the main ideas involved. The reader is referred to Section~\ref{sec:fqTBdis} for the full proof.

We define folded qTB codes by grouping together symbols of qTB codes as described below.

\begin{definition}[Restatement of Definition~\ref{def:fqTBcode}]
  \label{def:fqTBcodeinf}
  As in Definition~\ref{def:qTBcodeinf}, let $\cC=\CSS(C,C)$ be the qTB code with parameters $q,r,\ell$. Given an additional folding parameter $s|(q-1)/r$, we define the \textbf{folded quantum Tamo-Barg (fqTB) code} $\tilde{\cC}$ to be the quantum code of alphabet size $q^s$ and block length $(q-1)/s$ obtained as follows. Fix a generator $\omega_{q-1}$ for $\bF_q^*$, and then for every $i\in[(q-1)/s]$, we block together the $s$ components (each of alphabet size $q$) at positions $\{\omega_{q-1}^{si},\omega_{q-1}^{si+1},\dots,\omega_{q-1}^{si+s-1}\}$ in $\cC$ into a single component (of alphabet size $q^s$) of the folded code $\tilde{\cC}$. 
\end{definition}

A folded qTB code by construction inherits the rate and local recoverability of the underlying unfolded qTB code. Thus to prove Theorem~\ref{thm:fqTBdisinf}, we simply need to bound the distance of the fqTB code. For this purpose, define $S$ as in~(\ref{eq:qTBsupportinf}), so that the underlying unfolded qTB code $\cC=\CSS(C,C)$ has $C=\evl(\bF_q[X]^S)$. The distance of $\tilde{\cC}$ by definition equals the minimum over all codewords $\evl(f)\in C\setminus C^\perp$ of the number of distinct blocks $\{\omega_{q-1}^{si},\omega_{q-1}^{si+1},\dots,\omega_{q-1}^{si+s-1}\}$ (see Definition~\ref{def:fqTBcodeinf}) in which $f$ takes at least one nonzero value.

At a high level, we bound this minimum distance by considering two types of message polynomials $f(X)=\sum_{i\in S}f_iX^i\in\bF_q[X]^S$ separately. For the first type of message polynomial, we use a similar argument as described in the proof sketch of Theorem~\ref{thm:qTBdisinf2} in Section~\ref{sec:qTBoverviewinf}. We are able to leverage the folding to replace the polynomial $G(X)$ in~(\ref{eq:GJohnsoninf}) with another polynomial, which consists of the composition of $f$ with a determinant polynomial. This alternative choice of $G(X)$ detects roots of $f$ more efficiently relative to its degree, and hence yields a better distance bound (see Claim~\ref{claim:determinant}). However, this method breaks down when the coefficients $f_i$ of $f$ are supported in a small number of distinct values $i\pmod{r}$, that is, when there are $\ll r$ distinct values $i\pmod{r}$ among all $i$ with $f_i\neq 0$. The reason for the breakdown on these ``bad'' polynomials $f$ stems from the fact that we need $G(X)$ to be a nonzero polynomial, which becomes more difficult when $f$ has fewer nonzero coefficients.

Fortunately, we can consider such ``bad'' message polynomials $f$ separately, and instead apply an uncertainty principle (Proposition~\ref{prop:uncertainty}) to bound the weight of the associated codeword (Claim~\ref{claim:uncertainty}). Intuitively, this uncertainty principle implies that if the coefficients $f_i$ are zero for most values of $i\pmod{r}$, then most of the evaluation points in $\evl(f)$ must be nonzero, so $\evl(f)$ has large weight.

Combining our bounds from the two types of message polynomials described above, we obtain a bound on the distance of the fqTB code $\cC$, which yields Theorem~\ref{thm:fqTBdisinf}.



\subsection{Open Questions}
Our work leads the the following open questions:
\begin{itemize}
\item Can explicit qLRCs be constructed with a better rate-distance-locality tradeoff than given by Theorem~\ref{thm:fqTBdisinf}? Two approaches here are to improve our distance bound for fqTB codes in Theorem~\ref{thm:fqTBdisinf}, or to introduce new constructions of qLRCs for which stronger bounds can be shown. In general, the goal is to close the gap to the Singleton-like bounds in Section~\ref{sec:singleton}.
\item How does the optimal rate-distance-locality tradeoff of qLRCs differ from that of qLDPC codes? As described in Section~\ref{sec:singletoninf}, our bounds on qLRCs imply bounds on qLDPC codes, but it is an interesting question whether the additional structure in qLDPC codes can be used to show stronger bounds. This question also points towards the more general line of inquiry into relationships between notions of locality in quantum codes, of which local recoverability and LDPC are two examples of interest.
\end{itemize}

\section{Preliminaries}
\label{sec:prelim}
In this section, we introduce some notation, and then present preliminary definitions and results pertaining to classical codes, quantum codes, and local recoverability.

\subsection{Notation}
For $n\in\bN$, we let $[n]=\{0,\dots,n-1\}$.

For a prime power $q$, let $\bF_q$ denote the finite field of order $q$. For $n\in\bN$ and for $x,y\in\bF_q^n$, let $x\cdot y=\sum_{i\in[n]}x_iy_i$ denote the standard dot product.

Let $\bF_q^*=\bF_q\setminus\{0\}$ denote the multiplicative group of $\bF_q$. For $r|(q-1)$, we let $\omega_r\in\bF_q^*$ denote a primitive $r$th root of unity, and we let $\Omega_r=\{1,\omega_r,\dots,\omega_r^{r-1}\}$ denote the group of all $r$th roots of unity. While the choice of $\omega_r$ is not unique, we typically will think of fixing some $(q-1)$st root of unity $\omega_{q-1}\in\bF_q^*$, and then let $\omega_r=\omega_{q-1}^{(q-1)/r}$.

For a finite alphabet $A$, we denote the Hamming distance between two strings $x,y\in A^n$ by $\dis(x,y)=|\{i\in[n]:x_i\neq y_i\}|$. For two sets $X,Y\subseteq A^n$, the minimum Hamming distance is denoted $\dis(X,Y)=\min_{x\in X,y\in Y}\dis(x,y)$. If $A$ is an abelian group (e.g.~$A=\bF_q$, or $A=\bF_q^s$) we denote the Hamming weight of a string $x\in A^n$ by $|x|=\dis(0,x)=|\{i\in[n]:x_i\neq 0\}|$, so that $\dis(x,y)=|x-y|$.

We denote by $\cM(A)$ the set of density matrices on a set $A$ of qudits, where the local dimension $a$ of the qudits will be clear from context. Thus $\rho\in\cM(A)$ if $\rho$ is a positive semi-definite Hermitian operator $\rho:(\bC^a)^{\otimes A}\rightarrow(\bC^a)^{\otimes A}$ of trace $1$. For a pure state $\ket{\psi}\in(\bC^a)^{\otimes A}$, we let $\psi=\ket{\psi}\bra{\psi}\in\cM(A)$ denote the associated density matrix. For a density matrix $\rho\in\cM(A)$, if $B\subseteq A$ is a subset of the qudits, we denote the reduced density matrix $\rho_B=\Tr_{A\setminus B}\rho\in\cM(B)$.

A quantum channel $\cK$ from qudits $A$ to qudits $B$ is a completely positive, trace preserving map $\cK:\cM(A)\rightarrow\cM(B)$. Equivalently, a quantum channel is a map $\cK:\cM(A)\rightarrow\cM(B)$ that can be expressed in the form $\cK(\rho)=\sum_{\nu}K_\nu\rho K_\nu^\dagger$ for a set of operators $\{K_\nu:(\bC^a)^{\otimes A}\rightarrow(\bC^a)^{\otimes B}\}_\nu$, called Kraus operators, which satisfy $\sum_\nu K_\nu^\dagger K_\nu=I$.

We let the weight of a quantum channel $\cK:\cM(A)\rightarrow\cM(A)$ refer to the number of distinct qudits on which the channel acts nontrivially. Therefore if $\cK$ has weight $w$, then there exists a subset $W\subseteq A$ of size $|W|=w$ such that each Kraus operator $K_\nu$ of $K$ can be decomposed as $K_\nu=K_\nu^W\otimes I_{A\setminus W}$ for $K_\nu^W:(\bC^a)^{\otimes W}\rightarrow(\bC^a)^{\otimes W}$.

\subsection{Classical Codes}
Below we define the notion of a classical code.

\begin{definition}
  A \textbf{classical code} of \textbf{block length $n$}, \textbf{dimension $k$}, and \textbf{alphabet size $a$} is a subset $C\subseteq[a]^n$ of size $|C|=a^k$.

  The \textbf{distance $d$} of $C$ is defined as the minimum Hamming distance $d=\min_{c\neq c'\in C}\dis(c,c')$ between any two distinct codewords $c,c'\in C$.

  We say that $C$ is \textbf{linear} if $a=q$ is a prime power so that $[a]\cong\bF_q$ and $C\subseteq\bF_q^n$ is a linear subspace. Similarly, $C$ is \textbf{$\bF_q$-linear} if $a=q^s$ for some prime power $q$ so that $[a]\cong\bF_q^s$ and $C\subseteq(\bF_q^s)^n\cong\bF_q^{sn}$ is linear subspace over the field $\bF_q$. The \textbf{dual} of a linear code is $C^\perp=\{c'\in\bF_q^n:c'\cdot c=0\forall c\in C\}$. Nonzero elements of the dual code $C^\perp$ are sometimes referred to as \textbf{parity checks} of $C$.
\end{definition}

All classical codes satisfy that following fundamental limitation on the tradeoff between distance and dimension, whose proof is a simple application of the pigeonhole principle.

\begin{proposition}[Singleton bound]
  \label{prop:cSB}
  If $C$ is a classical code of block length $n$, dimension $k>0$, and distance $d$, then
  \begin{equation*}
    k \leq n-(d-1).
  \end{equation*}
\end{proposition}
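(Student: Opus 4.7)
The plan is to establish the bound by a pigeonhole / projection argument. Define the map $\pi \colon C \to [a]^{n-d+1}$ that sends each $c \in C$ to its first $n-d+1$ coordinates, namely $\pi(c) = (c_0, c_1, \ldots, c_{n-d})$. The strategy is to show that $\pi$ is injective and then compare cardinalities.

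First I would verify injectivity. Suppose for contradiction that $c \neq c' \in C$ satisfy $\pi(c) = \pi(c')$. Then $c$ and $c'$ agree on the first $n-d+1$ coordinates, so they can differ only among the remaining $d-1$ positions, giving $\dis(c,c') \leq d-1$. This contradicts the hypothesis that the minimum distance of $C$ equals $d$. Hence $\pi$ is injective.

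From injectivity, $|C| \leq |[a]^{n-d+1}| = a^{n-d+1}$. Substituting $|C| = a^k$ from the definition of dimension and taking $\log_a$ of both sides yields $k \leq n - d + 1 = n - (d-1)$, as required.

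I do not anticipate any serious obstacle; the argument uses only the combinatorial structure of $C$ as a subset of $[a]^n$ with pairwise Hamming distances at least $d$, and in particular requires no linearity. The only minor point to check is that $\pi$ has a well-defined codomain, which requires $d \geq 1$; this is guaranteed by $k > 0$, since then $|C| = a^k \geq a \geq 2$ and the minimum distance between two distinct codewords is a positive integer.
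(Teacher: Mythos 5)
Your proof is correct and is precisely the standard pigeonhole/projection argument that the paper alludes to (it states the bound follows from "a simple application of the pigeonhole principle" without writing out details). The injectivity of the projection onto $n-d+1$ coordinates and the cardinality comparison are exactly the intended steps, and your remark on why $d\geq 1$ is a sensible sanity check.
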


A classical (linear) code can be expressed as the image of a (linear) encoding map $\Enc:\bF_q^k\rightarrow\bF_q^n$ that maps messages to codewords encoding the messages. However, we will mostly consider the space of codewords without reference to the encoding map.

Throughout this paper we restrict attention to linear or $\bF_q$-linear codes, as will be clear from context. In particular, $\bF_q$-linear codes will arise from linear codes over $\bF_q$ by ``folding'', that is, by grouping together symbols of the linear code into larger symbols.

\begin{definition}
  A \textbf{classical locally recoverable code (cLRC)} of \textbf{locality $r$}, block length $n$, and alphabet size $a$ is a classical code $C\subseteq[a]^n$ together with a set of \textbf{local recovery maps} $\Rec_i$ for every $i\in[n]$ satisfying the following properties:
  \begin{enumerate}
  \item There exists a set $I_i\subseteq[n]$ of size $|I_i|\leq r$ with $i\in I_i$ such that $\Rec_i$ is a function $\Rec_i:[a]^{I_i\setminus\{i\}}\rightarrow[a]$.
  \item It holds for every codeword $c\in C$ that
    \begin{equation*}
      \Rec_i(c|_{I_i\setminus\{i\}}) = c_i.
    \end{equation*}
  \end{enumerate}
\end{definition}

We specify that $i\in I_i$ for notational convenience due to the following well known form of linear cLRCs. However, we emphasize that an LRC with locality $r$ can recover any given codeword symbol from $r-1$ other symbols.

\begin{lemma}[Well known]
  \label{lem:cLRClin}
  If $C\subseteq\bF_q^n$ is a linear code such that every $i\in[n]$ lies in the support of some parity check of Hamming weight $\leq r$, then $C$ is locally recoverable with locality $r$.
\end{lemma}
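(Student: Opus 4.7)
The plan is to directly unpack the defining equation of the parity check to produce the local recovery map. For each $i\in[n]$, the hypothesis gives us a parity check $p^{(i)}\in C^\perp$ with $i\in\supp(p^{(i)})$ and $|\supp(p^{(i)})|\leq r$. I would set $I_i:=\supp(p^{(i)})$, so immediately $|I_i|\leq r$ and $i\in I_i$, satisfying the first condition in the cLRC definition.

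Next, I would translate the parity-check equation $p^{(i)}\cdot c=0$ into an explicit formula for $c_i$. Since every $c\in C$ satisfies
\begin{equation*}
  \sum_{j\in I_i} p^{(i)}_j\, c_j = 0,
\end{equation*}
and $p^{(i)}_i\neq 0$ by construction (so it is invertible in $\bF_q$), I can solve for $c_i$ as
\begin{equation*}
  c_i = -\bigl(p^{(i)}_i\bigr)^{-1}\sum_{j\in I_i\setminus\{i\}} p^{(i)}_j\, c_j.
\end{equation*}
I would then define $\Rec_i:\bF_q^{I_i\setminus\{i\}}\to\bF_q$ by the right-hand side of this equation (with $c_j$ replaced by the input $x_j$), which is a well-defined function (in fact an $\bF_q$-linear map) of the symbols $c|_{I_i\setminus\{i\}}$.

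Finally, I would verify condition 2 of the cLRC definition by substituting $c|_{I_i\setminus\{i\}}$ into $\Rec_i$ and observing that the result equals $c_i$ by the derivation above. Since this holds for every codeword $c\in C$ and every coordinate $i\in[n]$, $C$ together with these recovery maps is a cLRC of locality $r$. There is no real obstacle here: the entire lemma is a one-line rearrangement of a linear equation, and the only point deserving any care is noting that the coefficient $p^{(i)}_i$ is invertible precisely because $i\in\supp(p^{(i)})$.
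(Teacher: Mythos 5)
Your proof is correct and follows essentially the same approach as the paper: take $I_i$ to be the support of the given low-weight parity check, note that its $i$th coefficient is a nonzero (hence invertible) field element, and solve the parity-check equation for $c_i$ as a linear function of the remaining coordinates in $I_i$.
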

\begin{proof}
  Fix $i\in[n]$, and let $c'\in C^\perp$ be a parity check of weight $\leq r$ with $i\in I_i:=\supp(c')$. Then by definition every $c\in C$ satisfies $c_i=-\sum_{j\in I_i\setminus\{i\}}(c_j'/c_i')c_j$, which gives the desired recovery function for index $i$.
\end{proof}

\subsection{Polynomial Evaluation Codes}
\label{sec:polyeval}
Below we describe a particularly useful type of classical linear code given by evaluations of polynomials. Here we let $\bF_q[X_1,\dots,X_m]$ denote the ring of $m$-variate polynomials over $\bF_q$.

\begin{definition}
  \label{def:polyeval}
  For a subset $S\subseteq\bZ_{\geq 0}$, let
  \begin{equation*}
    \bF_q[X]^S=\left\{\sum_{i\in S}a_iX^i:a\in\bF_q^S\right\}
  \end{equation*}
  denote the $|S|$-dimensional subspace of $\bF_q[X]$ consisting of polynomials with zero coefficients for all monomials $X^i$ with $i\notin S$. Define the \textbf{polynomial evaluation map}
  \begin{equation*}
    \evl:\bF_q[X]^S\rightarrow\bF_q^{\bF_q^*}\cong\bF_q^{q-1}
  \end{equation*}
  by $\evl(f)=(f(x))_{x\in\bF_q^*}$, so that $\evl(f)$ outputs the list of evaluations of $f$ at all points in $\bF_q^*$. The image $\evl(\bF_q[X]^S)$ of $\evl$ is a linear code, called a \textbf{polynomial evaluation code}, with alphabet size $q$, block length $q-1$, and dimension $|S|$.
\end{definition}

Below, we present the well-known Reed-Solomon codes, along with their folded counterpart.

\begin{definition}
  \label{def:RS}
  For a prime power $q$ and an integer $\ell\in[q]$, the \textbf{Reed-Solomon (RS) code} is the polynomial evaluation code $C=\evl(\bF_q[X]^{[\ell]})$.

  Given an additional folding parameter $s|(q-1)$, the \textbf{folded Reed-Solomon (fRS) code} $\tilde{C}$ is the $\bF_q$-linear code of alphabet size $q^s$ and block length $(q-1)/s$ obtained as follows. Fix a generator $\omega_{q-1}$ for $\bF_q^*$, and then for every $i\in[(q-1)/s]$, block together the $s$ components (each an element of $\bF_q$) at positions $\{\omega_{q-1}^{si},\omega_{q-1}^{si+1},\dots,\omega_{q-1}^{si+s-1}\}$ in $C$ into a single component (which is an element of $\bF_q^s$) of the folded code $\tilde{C}$.
\end{definition}

By definition Reed-Solomon codes have block length $q-1$, dimension $\ell$, and distance $d=q-\ell$. Specifically, it holds that $d\geq(q-1)-(\ell-1)=q-\ell$ because ever polynomial of degree $<\ell$ has $<\ell$ roots, and $d\leq q-\ell$ by the Singleton bound. Folded Reed-Solomon codes similarly have block length $(q-1)/s$, dimension $\ell/s$, and distance $(q-\ell)/s$.

Our decoding algorithm for the quantum LRCs we construct will rely on the efficient list-decodability of the (folded) Reed-Solomon codes, as stated in the known results below. However, we first must define list decoding.

\begin{definition}
  Let $C$ be a code of block length $n$ over alphabet $A$, and let $\delta\geq 0$. An \textbf{$e$-list-decoding} algorithm for $C$ is an algorithm that takes as input a corrupted codeword $b\in A^n$, and outputs the list of every codeword $c\in C$ such that $\dis(b,c)\leq e$.
\end{definition}

In the statements below, recall that the RS code has block length $q-1$ and rate $R=\ell/(q-1)$, while the fRS code has block length $(q-1)/s$ and rate $R=\ell/(q-1)$.

\begin{theorem}[\cite{guruswami_improved_1998}]
  \label{thm:RSdec}
  The RS code with parameters $q,\ell$ has an $e$-list-decoding algorithm that runs in time $q^{O(1)}$ for
  \begin{equation*}
    e = (q-1)(1-\sqrt{R})
  \end{equation*}
\end{theorem}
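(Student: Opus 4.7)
The plan is to follow the Guruswami-Sudan interpolation-and-factorization paradigm. Given a corrupted codeword $b \in \bF_q^{\bF_q^*}$, the algorithm proceeds in two phases: an interpolation phase that constructs a nonzero bivariate polynomial passing through the received points with high multiplicity, and a factorization phase that extracts from it every close codeword polynomial as a linear factor in $Y$.

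First I would fix an integer multiplicity parameter $m$ and compute, by Gaussian elimination, a nonzero polynomial $Q(X,Y) \in \bF_q[X,Y]$ of $(1, \ell-1)$-weighted degree at most $D$ that vanishes at each point $(x, b_x)$ with multiplicity at least $m$. Each such vanishing condition imposes $\binom{m+1}{2}$ linear constraints on the coefficients of $Q$, for a total of $(q-1)\binom{m+1}{2}$ constraints, while the number of monomials $X^i Y^j$ with $i + j(\ell-1) \le D$ is on the order of $D^2 / (2(\ell-1))$. Choosing $D$ slightly larger than $m\sqrt{(q-1)(\ell-1)}$ makes the linear system underdetermined, so a nonzero solution exists and can be found in time $q^{O(1)}$.

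Next I would argue correctness: if $f \in \bF_q[X]^{[\ell]}$ is a codeword polynomial agreeing with $b$ on $t$ positions, then $Q(X, f(X))$ is a univariate polynomial of degree at most $D$ with a root of multiplicity at least $m$ at each of the $t$ agreement points. Whenever $tm > D$, this forces $Q(X, f(X)) \equiv 0$, so $(Y - f(X))$ divides $Q(X, Y)$. Rearranging $t > D/m \approx \sqrt{(q-1)(\ell-1)}$ yields an error tolerance of $e = (q-1) - t$ approaching $(q-1)(1 - \sqrt{R})$, as claimed.

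Finally I would recover the list by factoring $Q(X,Y)$ over $\bF_q[X][Y]$ and returning every $Y$-linear factor $Y - f(X)$ with $\deg f < \ell$; standard bivariate factorization algorithms run in time polynomial in $q$. The list size is automatically bounded by the $Y$-degree of $Q$, which is $O(D/(\ell-1)) = O(\sqrt{(q-1)/\ell}\cdot m)$, and is therefore polynomial. The main obstacle is choosing $m$ large enough (e.g., $m = \Theta(\ell)$) to absorb the lower-order slack arising from the inequalities $D \ge m\sqrt{(q-1)(\ell-1)}$ and $tm > D$, so that the threshold matches the clean bound $e = (q-1)(1 - \sqrt{R})$ in the theorem rather than a slightly weaker one; the rounding losses here are the only delicate part of the parameter choice.
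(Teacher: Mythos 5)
Your outline is a correct reconstruction of the Guruswami--Sudan interpolation-with-multiplicities argument, which is exactly the proof in the cited reference \cite{guruswami_improved_1998}; the paper itself imports this theorem without proof. Your parameter choices (in particular taking the multiplicity $m=\Theta(\ell)$ so that the agreement threshold $\sqrt{(q-1)(\ell-1)(1+1/m)}$ drops to at most $\sqrt{(q-1)\ell}=(q-1)\sqrt{R}$) correctly handle the only delicate point, so nothing further is needed.
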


\begin{theorem}[\cite{guruswami_explicit_2008}]
  \label{thm:fRSdec}
  The fRS code with parameters $q,\ell,s$ has an $e$-list decoding algorithm that runs in time $q^{O(\sqrt{s})}$, where
  \begin{equation*}
    e = \frac{q-1}{s}\left(1-\left(1+\frac{2}{\sqrt{s}}\right)R^{1-1/\sqrt{s}}\right)-2,
  \end{equation*}
  assuming that $s\geq s_0$ and $q\geq q_0=q_0(s)$ for sufficiently large constants $s_0,q_0(s)$.
\end{theorem}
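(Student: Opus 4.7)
The plan is to follow the standard interpolation-and-root-finding paradigm for list decoding, specifically the Guruswami--Rudra adaptation to folded Reed--Solomon codes. Introduce an integer parameter $m \in \{1, \ldots, s\}$ to be optimized later (morally $m \approx \sqrt{s}$). Fix a generator $\gamma = \omega_{q-1}$ of $\bF_q^*$ and write the received word as $y_0, \ldots, y_{N-1}$ with $N = (q-1)/s$ and $y_i = (y_{i,0}, \ldots, y_{i,s-1}) \in \bF_q^s$.

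In the interpolation phase, I would find a nonzero $(m+1)$-variate polynomial $Q(X, Y_1, \ldots, Y_m)$ of $(1, \ell-1)$-weighted degree at most $D$ that vanishes on every ``window'' $(\gamma^{si+j}, y_{i,j}, y_{i,j+1}, \ldots, y_{i,j+m-1})$ for $i \in [N]$ and $j \in \{0, 1, \ldots, s-m\}$. This imposes $N(s - m + 1)$ linear constraints on the coefficients of $Q$, while the number of monomials of weighted degree at most $D$ is roughly $D^{m+1}/((\ell-1)^m \cdot m!)$, so choosing $D$ slightly larger than $(\ell^m \cdot N(s - m + 1))^{1/(m+1)}$ via a dimension count guarantees a nonzero $Q$ by linear algebra over $\bF_q$ in time $\poly(q)$.

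The second phase shows that every message $f(X)\in\bF_q[X]^{[\ell]}$ whose folded evaluation disagrees with $y$ on at most $e$ folded positions must satisfy the functional identity $A(X) := Q(X, f(X), f(\gamma X), \ldots, f(\gamma^{m-1} X)) \equiv 0$. Indeed, on each of the $\geq N - e$ agreeing folded positions $i$, all $s - m + 1$ windows are consistent with $f$, so $A$ vanishes at at least $(N - e)(s - m + 1)$ distinct points in $\bF_q^*$. Since $\deg A \leq D + (m-1)(\ell-1)$, forcing the root count to exceed this bound yields $A \equiv 0$; rearranging the inequality $(N-e)(s-m+1) > D + (m-1)(\ell-1)$ and substituting the near-tight $D$ gives, after optimizing $m \approx \sqrt{s}$, a decoding radius of the form $e/N \leq 1 - (1 + O(1/\sqrt{s})) R^{1 - 1/\sqrt{s}}$, matching the claim up to the additive $-2$ which absorbs rounding slack in the interpolation dimension count.

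The main obstacle is the root-finding phase: given the constraint $Q(X, f(X), f(\gamma X), \ldots, f(\gamma^{m-1} X)) \equiv 0$, one must efficiently enumerate all $f \in \bF_q[X]^{[\ell]}$ satisfying it. Guruswami--Rudra handle this by working in the quotient ring $\bF_q[X]/E(X)$ for a suitable irreducible $E$ and exploiting that the shift $f(X) \mapsto f(\gamma X)$ acts as a fixed $\bF_q$-linear automorphism; the resulting equation becomes an algebraic equation in a single variable over an extension, whose solutions lie in an affine subspace of dimension $\leq m-1$ and can be enumerated in time $q^{O(m)} = q^{O(\sqrt{s})}$ with polynomially bounded list size. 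The hypothesis $q \geq q_0(s)$ is used here to ensure the relevant irreducibles and root-finding subroutines behave as expected, while $s \geq s_0$ is needed so that the $1 - 1/\sqrt{s}$ exponent is well-defined and the optimization goes through.
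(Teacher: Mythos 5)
This theorem is imported verbatim from \cite{guruswami_explicit_2008}; the paper offers no proof of it, so there is nothing internal to compare against. Your sketch is a faithful outline of the standard Guruswami--Rudra argument (multivariate interpolation over sliding windows of length $m\approx\sqrt{s}$, agreement counting, and root-finding via the identity $f(\gamma X)\equiv f(X)^q \pmod{X^{q-1}-\gamma}$), and it is correct in its essentials. Two minor inaccuracies, neither fatal: substituting $Y_j=f(\gamma^{j-1}X)$ into a polynomial of $(1,\ell-1)$-weighted degree $\leq D$ gives $\deg A\leq D$ (your bound $D+(m-1)(\ell-1)$ is a harmless overestimate), and the ``solutions lie in an affine subspace of dimension $\leq m-1$'' description belongs to the later linear-algebraic refinement of the method, whereas the original root-finding step bounds the list by the degree of a univariate equation over the extension field $\bF_q[X]/(X^{q-1}-\gamma)$ --- either route yields the stated $q^{O(\sqrt{s})}$ bound.
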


Note that the running times of the above algorithms implicitly give bounds on the lengths of the lists they output.

\subsection{Quantum Codes}
In this section, we describe relevant background on quantum codes.

\begin{definition}
  A \textbf{quantum code} of \textbf{block length $n$}, \textbf{dimension $k$}, and \textbf{local dimension (that is, alphabet size) $a$} is a $a^k$-dimensional linear subspace $\cC\subseteq(\bC^a)^{\otimes n}$.
\end{definition}

The definition above of a quantum code as a linear subspace of Hilbert space assumes a unitary encoding map, as we may express $\cC=\{\ket{\Enc\ket{\phi}\ket{0^{n-k}}}:\ket{\phi}\in(\bC^a)^{\otimes k}\}$ for a unitary encoding map $\Enc:(\bC^a)^{\otimes n}\rightarrow(\bC^a)^{\otimes n}$. In this paper we restrict attention to codes with such unitary encodings unless explicitly stated otherwise. However, there are more general coding schemes with arbitrary channel encodings; we leave it as an open question whether such non-unitary encodings could be beneficial for local recovery.

Below we define the distance of a quantum code as one plus the maximum number of erasures that the code can correct.

\begin{definition}
  Let $\cC$ be a quantum code of block length $n$. Given a subset $S\subseteq[n]$, we say that $\cC$ can \textbf{decode from erasures in $S$} if there exists a decoding channel $\Dec^S:\cM([n]\setminus S)\rightarrow\cM([n])$ that satisfies
  \begin{equation*}
    \Dec^S(\psi_{[n]\setminus S}) = \psi.
  \end{equation*}
  for every $\psi\in\cC$.

  The \textbf{distance} of $\cC$ is then defined as the maximum value $d\in\bN$ such that $\cC$ can decode from erasures in every $S\subseteq[n]$ of size $|S|<d$.
\end{definition}

The well-known fact below (see e.g.~Fact~2 of~\cite{anshu_circuit_2022} for a proof) shows that for a quantum code of distance $d$, any $d-1$ codeword symbols contain no information about the encoded message. Note that there is no classical analogue to this fact, as classically every linear code has a systematic encoding, meaning that the first $k$ codword symbols equal the message.

\begin{lemma}[Local indistinguishability]
  \label{lem:localind}
  If $\cC$ is a quantum code of block length $n$ that can decode from erasures in some set $S\subseteq[n]$, then the reduced density matrix $\psi_S$ is the same for all $\psi\in\cC$.
\end{lemma}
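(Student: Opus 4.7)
The plan is to combine a purification of the code space with two applications of the quantum data processing inequality, invoking the identity $I(R:A)+I(R:B)=2S(R)$ valid for any pure state on $R\otimes A\otimes B$ to conclude that the reference register is uncorrelated with $S$. The intuition is that once $[n]\setminus S$ already determines the encoded state via $\Dec^S$, no information about that state can survive in $S$, forcing $\psi_S$ to be independent of $\psi$.

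First, I would introduce a reference register $R$ of dimension $a^k$, fix an orthonormal basis $\{\ket{\psi_i}\}$ of $\cC$, and form the maximally entangled purification $\ket{\Phi}=a^{-k/2}\sum_i\ket{i}_R\ket{\psi_i}_\cC$ living on $R\otimes[n]$. The hypothesis $\Dec^S(\psi_{[n]\setminus S})=\psi$ for pure $\psi\in\cC$ extends by linearity of $\Dec^S$ to $(I_R\otimes\Dec^S)(\Phi_{R,[n]\setminus S})=\Phi_{R,[n]}$, because each off-diagonal operator $\ket{\psi_i}\bra{\psi_j}$ appearing in $\ket{\Phi}\bra{\Phi}$ can be written as a complex linear combination of pure-state projectors onto vectors of $\cC$ via the polarization identity.

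Next, I would apply data processing in both directions. The partial trace $\Tr_S$ is a channel giving $I(R:[n]\setminus S)_\Phi\leq I(R:[n])_\Phi$, while the channel $I_R\otimes\Dec^S$ witnesses the reverse inequality, so $I(R:[n]\setminus S)_\Phi=I(R:[n])_\Phi$. Since $\ket{\Phi}$ is pure on $R\cup[n]$, the tripartite identity applied with $A=[n]\setminus S$ and $B=S$ forces $I(R:S)_\Phi=0$, i.e.\ $\Phi_{RS}=\Phi_R\otimes\Phi_S$. Expanding $\Phi_{RS}=a^{-k}\sum_{i,j}\ket{i}\bra{j}_R\otimes\Tr_{[n]\setminus S}(\ket{\psi_i}\bra{\psi_j})$ and comparing block-by-block against $a^{-k}I_R\otimes\Phi_S$ yields $(\psi_i)_S=\Phi_S$ for every $i$ (and vanishing off-diagonal partial traces), which by linearity gives $\psi_S=\Phi_S$ for every pure $\psi\in\cC$.

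The step I expect to require the most care is the initial lift of the decoding guarantee from pure code states to the joint purification $\ket{\Phi}$: because $\Dec^S$ is specified only on inputs of the form $(\psi)_{[n]\setminus S}$ with pure $\psi\in\cC$, one must justify that its linear extension correctly inverts $\Tr_S(\ket{\psi_i}\bra{\psi_j})$ for $i\neq j$, which follows from expressing each such operator as a combination of rank-one density matrices of pure states in $\cC$ and invoking linearity of the channel.
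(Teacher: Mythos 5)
Your proof is correct. The paper does not actually prove Lemma~\ref{lem:localind} itself (it defers to Fact~2 of the cited reference), so there is no in-paper argument to match; your purification-plus-data-processing route is a clean, self-contained version of the standard decoupling argument. The two points that needed care are both handled properly: (i) the lift of the decoding guarantee to the off-diagonal operators $\Tr_S(\ket{\psi_i}\bra{\psi_j})$ via polarization and linearity of $\Dec^S$ is exactly right (the vectors $\ket{\psi_i}+i^k\ket{\psi_j}$ stay in the subspace $\cC$, and the channel commutes with scalars, so normalization is harmless), and (ii) the entropy bookkeeping $I(R:[n]\setminus S)=I(R:[n])=2S(R)$ together with $I(R:[n]\setminus S)+I(R:S)=2S(R)$ for the pure tripartite state correctly forces $I(R:S)=0$, hence $\Phi_{RS}=\Phi_R\otimes\Phi_S$, and the block comparison finishes the claim for arbitrary superpositions. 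For comparison, a more elementary route avoids entropies entirely: correctability of the erasure of $S$ is equivalent to the Knill--Laflamme conditions $\bra{\psi_i}E\ket{\psi_j}=c_E\,\delta_{ij}$ for all operators $E$ supported on $S$, which immediately gives $\Tr(E\,\psi_S)=c_E$ independent of the code state and hence $\psi_S$ constant on $\cC$. Your argument buys generality (it uses only the existence of the recovery channel, not the Knill--Laflamme characterization) at the cost of invoking strong subadditivity through the data processing inequality.
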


The following well known quantum analogue of Proposition~\ref{prop:cSB} presents a tradeoff between distance and dimension for quantum codes. See for instance Section~12.4.3 of \cite{nielsen_quantum_2010} for a proof.

\begin{proposition}[Singleton bound]
  \label{prop:qSB}
  If $\cC$ is a quanum code of block length $n$, dimension $k>0$, and distance $d$, then
  \begin{equation*}
    k \leq n-2(d-1).
  \end{equation*}
\end{proposition}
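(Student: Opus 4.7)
The plan is to derive this inequality via a purification-based entropic argument, closely mirroring the no-cloning intuition that disjoint regions of a quantum codeword cannot each hold a copy of the logical information. The three main ingredients will be Lemma~\ref{lem:localind} (local indistinguishability), the subadditivity of von Neumann entropy, and a symmetrization step.

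First I would pass to a purification. Let $\ket{\Psi}_{RA}$ denote the state obtained by applying the unitary encoding of $\cC$ to one half of a maximally entangled state on an $a^k$-dimensional reference system $R$; then $\rho_R = I/a^k$ and $S(R) = k\log a$. I would then partition $A = [n]$ as $A = S_1 \sqcup S_2 \sqcup B$ with $|S_1| = |S_2| = d-1$ and $|B| = n - 2(d-1)$. (The degenerate case $n < 2(d-1)$ is handled by the same argument applied with $|S_j| = \lfloor n/2 \rfloor$ and $B=\emptyset$; it will force $k=0$ and so cannot arise under the hypothesis $k > 0$.)

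The heart of the proof is establishing the \emph{decoupling} $\rho_{R S_j} = \rho_R \otimes \rho_{S_j}$ for each $j \in \{1,2\}$. Since $|S_j| < d$, the code decodes from erasures on $S_j$, so Lemma~\ref{lem:localind} guarantees that $\psi_{S_j}$ is the same for every $\ket{\psi}\in\cC$. Writing $\ket{\Psi}_{RA} = a^{-k/2}\sum_i \ket{i}_R \otimes \ket{c_i}_A$ for an orthonormal basis $\{\ket{c_i}\}$ of $\cC$, I would apply Lemma~\ref{lem:localind} to appropriate superpositions $(\ket{c_i} \pm \ket{c_{i'}})/\sqrt{2}$ and $(\ket{c_i} \pm i\ket{c_{i'}})/\sqrt{2}$ and compare, concluding that the off-diagonal blocks $\Tr_{[n]\setminus S_j}(\ket{c_i}\bra{c_{i'}})$ vanish for $i \neq i'$. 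This is precisely the Knill--Laflamme condition for the erasure channel on $S_j$, and it upgrades to the product form $\rho_{R S_j} = \rho_R \otimes \rho_{S_j}$.

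With decoupling in hand, the rest is a short entropic calculation. From $\rho_{R S_1} = \rho_R \otimes \rho_{S_1}$ and purity of $\ket{\Psi}_{RA}$ one has $S(S_2 B) = S(R S_1) = S(R) + S(S_1) = k \log a + S(S_1)$. Subadditivity $S(S_2 B) \leq S(S_2) + S(B)$ together with $S(B) \leq |B|\log a$ then gives
\[
k\log a \;\leq\; (n - 2(d-1))\log a + S(S_2) - S(S_1).
\]
Repeating the argument with $S_1$ and $S_2$ interchanged yields the same inequality with $S(S_1)$ and $S(S_2)$ swapped; averaging the two makes the entropy-difference terms cancel and delivers $k \leq n - 2(d-1)$. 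The main obstacle is the polarization step upgrading Lemma~\ref{lem:localind}, which a priori only constrains the on-diagonal reduced states $\psi_{S_j}$, to the product-state decoupling controlling the off-diagonal blocks $\Tr_{[n]\setminus S_j}(\ket{c_i}\bra{c_{i'}})$ as well; once this link to the Knill--Laflamme conditions is established, the entropy manipulation and the symmetrization are entirely routine.
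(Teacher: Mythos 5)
Your proof is correct, and it is essentially the same entropic argument the paper uses: the paper itself only cites Nielsen--Chuang for Proposition~\ref{prop:qSB}, but its proof of the generalization Theorem~\ref{thm:singletongen} follows exactly your template (maximally entangled reference register, local indistinguishability on two disjoint size-$(d-1)$ sets, subadditivity, and summing the two inequalities to cancel the entropy terms). Your explicit polarization step upgrading Lemma~\ref{lem:localind} to the vanishing of the off-diagonal blocks $\Tr_{[n]\setminus S_j}(\ket{c_i}\bra{c_{i'}})$ is a detail the paper glosses over when asserting $\psi_{AS_iV_i}=\psi_A\otimes\psi_{S_iV_i}$, and including it is the right call.
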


We next define stabilizer codes, a well-known class of quantum codes that are defined as the simultaneous $+1$ eigenspace of a set of Pauli operators, called \textit{stabilizers}. We will first need to define the Pauli operators.

\begin{definition}
  Let $\bF_q$ be a finite field of characteristic $p$. For $\alpha\in\bF_q$, define the \textbf{$q$-ary Pauli operators} $X^\alpha,Z^\alpha\in\bC^{\bF_q\times\bF_q}\cong\bC^{q\times q}$ so that for $x\in\bF_q$,
  \begin{align*}
    X^\alpha\ket{x} &= \ket{x+\alpha} \\
    Z^\alpha\ket{x} &= e^{(2\pi i/p)\tr_{\bF_q/\bF_p}(\alpha x)}\ket{x}.
  \end{align*}
  Observe that Paulis always commute up to a phase, and specifically $e^{(2\pi i/p)\tr_{\bF_q/\bF_p}(\alpha\beta)}X^\alpha Z^\beta=Z^\beta X^\alpha$.
  
  For $n\in\bN$, an \textbf{$n$-qudit Pauli} is an operator of the form $X^\alpha Z^\beta=\bigotimes_{i=1}^nX^{\alpha_i}Z^{\alpha_i}$ for $\alpha,\beta\in\bF_q^n$. We take $\cP_q^n$ denote the group of $n$-qudit Paulis modulo the global phase, so that $\cP_q^n\cong\bF_q^{2n}$. The \textbf{support $\supp(P)$} of an $n$-qudit Pauli $P\in\cP_q^n$ is defined as the set $i\in[n]$ of qudits on which $P$ acts nontrivially. That is, $\supp(X^\alpha Z^\beta)=\{i\in[n]:(\alpha_i,\beta_i)\neq(0,0)\}$. The \textbf{Hamming weight} of $P$ is $|P|=|\supp(P)|$.
\end{definition}

As $\cP_q^n$ is isomorphic to $\bF_q^{2n}$, it is a vector space where the action of $\gamma\in\bF_q$ on $X^\alpha Z^\beta\in\cP_q^n$ gives $X^{\gamma\alpha}Z^{\gamma\beta}$. We therefore let a \textit{subspace}, or (by slight abuse of terminology) \textit{subgroup}, of $\cP_q^n$ be any subset that is a $\bF_q$-subspace under the isomorphism $\cP_q^n\cong\bF_q^{2n}$. Therefore in particular we require a ``subgroup'' of $\cP_q^n$ to be closed under the action of $\bF_q$. Furthermore, although we defined $\cP_q^n$ to be the group of Paulis modulo a global phase, we abuse notation and call a subgroup of $\cP_q^n$ \textit{abelian} if all the operators in the subgroup commute with each other, taking into account the global phase. We also denote $(\cP_q^s)^n=\cP_q^{sn}$, which we interpret as the group of length-$n$ strings of elements in $\cP_q^s$; this group will naturally arise when we consider folded stabilizer codes in Section~\ref{sec:fqTBcode}.

\begin{definition}
  For a vector space $\bF_q^s$ over a finite field, a \textbf{stabilizer code of block length $n$ over the alphabet $\bF_q^s$} is a subgroup of the Hilbert space $(\bC^{\bF_q^s})^{\otimes n}$ that is specified as the simultaneous $+1$ eigenspace of some abelian subgroup of $(\cP_q^s)^n$. This abelian subgroup is called the \textbf{stabilizer group} of the code, and its elements are the code's \textbf{stabilizers}.
\end{definition}

We will mostly be concerned with a specific type of stabilizer code called a CSS code, which is specified by two classical codes satisfying an orthogonality condition.

\begin{definition}
  Given classical codes $C_X,C_Z\subseteq(\bF_q^s)^n$ such that $C_X^\perp\subseteq C_Z$, the associated \textbf{CSS code} $\cC=\CSS(C_X,C_Z)$ is given by $\cC=\spn\{\sum_{y\in C_X^\perp}\ket{x+y}:x\in C_Z\}\subseteq(\bC^{\bF_q^s})^{\otimes n}$. Equivalently, $\cC$ is the stabilizer code with stabilizer group $\{X^\alpha Z^\beta:\alpha\in C_X^\perp,\beta\in C_Z^\perp\}$.
\end{definition}

It is well known that to decode a CSS code $\cC=\CSS(C_X,C_Z)$, it is sufficient to have classical decoders for $C_X$ and $C_Z$, as stated below:

\begin{proposition}[Well known]
  \label{prop:CSSdec}
  Let $\cC=\CSS(C_X,C_Z)$ be a CSS code of block length $n$ over the alphabet $A=\bF_q^s$ of size $a=|A|$. Let $e\geq 0$ be an integer such that for each permutation $(\alpha,\beta)$ of $(X,Z)$, there exists a (classical) decoding algorithm $\Dec_\alpha$ that takes as input a (classical) corrupted codeword $c+b$ for some $c\in C_\alpha$ and some corruption $b\in A^n$ of Hamming weight $|b|\leq e$, and outputs some $c'\in C_\alpha$ such that $c'-c\in C_\beta^\perp$.

  Then $\cC$ has a decoding algorithm $\Dec$ that recovers from errors of weight $e$, so $\cC$ has distance $d\geq 2e+1$. Furthermore, if each $\Dec_\alpha$ has running time $T_\alpha(n,a)$, then $\Dec$ has running time $T_X(n,a)+T_Z(n,a)+O(n^3\poly\log a)$.
\end{proposition}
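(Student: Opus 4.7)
The plan is to reduce the quantum decoding problem to two invocations of the supplied classical decoders via the standard stabilizer syndrome-measurement paradigm. First I would expand each Kraus operator of the weight-$\leq e$ error channel in the Pauli basis on its support, so that its effect on a codeword $\psi\in\cC$ is a linear combination of terms $P\psi Q^\dagger$ with $P,Q\in(\cP_q^s)^n$ and $\supp(P),\supp(Q)$ contained in a fixed set $W\subseteq[n]$ of size $\leq e$. Non-destructive measurement of all stabilizer generators then projects this mixture onto a single Pauli error $E=X^a Z^b\in(\cP_q^s)^n$ of weight $\leq e$ consistent with the observed syndrome, up to multiplication by a stabilizer (which acts as the identity on $\cC$). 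This is the standard argument that syndrome measurement preserves the logical content; it is the step that requires the most care, since off-diagonal terms $P\psi Q^\dagger$ with $PQ^\dagger$ outside the centralizer of the stabilizer group must be shown to vanish in expectation over the measurement outcome.

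Next I would correct the $X$- and $Z$-components of $E$ separately. Measuring $Z^\alpha$ for $\alpha\in C_Z^\perp$ yields a syndrome that determines $a$ modulo $(C_Z^\perp)^\perp=C_Z$, and symmetrically the $X$-syndromes determine $b\bmod C_X$. I would compute arbitrary coset representatives $\tilde a\in a+C_Z$ and $\tilde b\in b+C_X$ by Gaussian elimination over $\bF_q$. Writing $\tilde a=d_Z+a$ with $d_Z\in C_Z$ and $|a|\leq e$ exactly matches the input guarantee of $\Dec_Z$, so $\Dec_Z(\tilde a)$ returns some $c_Z'\in C_Z$ with $c_Z'-d_Z\in C_X^\perp$. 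Applying the Pauli correction $X^{c_Z'-\tilde a}$ leaves a residual $X$-error with exponent $(c_Z'-\tilde a)+a=c_Z'-d_Z\in C_X^\perp$, which is by definition an $X$-type stabilizer. A fully symmetric use of $\Dec_X(\tilde b)$ followed by the correction $Z^{c_X'-\tilde b}$ eliminates the $Z$-component up to a $Z$-type stabilizer. Since stabilizers act trivially on $\cC$, the final state equals $\psi$.

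The distance bound $d\geq 2e+1$ then follows from the just-established correctability together with the standard stabilizer-code argument: if two weight-$\leq e$ Pauli errors had matching syndromes but differed by a logical Pauli, their product would be a nontrivial logical operator of weight $\leq 2e$, contradicting unique decodability. For the running time, the two classical decoder calls contribute $T_X(n,a)+T_Z(n,a)$, while solving the $\bF_q$-linear systems that produce the syndromes and the coset representatives $\tilde a,\tilde b$, together with computing and applying the Pauli corrections and the stabilizer-measurement circuit, all take $O(n^3\poly\log a)$ time. The main obstacle throughout is the syndrome-projection step of the first paragraph; once that standard stabilizer-formalism lemma is invoked, the remainder is bookkeeping enabled by the direct sum structure of the CSS construction.
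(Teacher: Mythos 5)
Your proposal is correct and follows essentially the same route as the paper's proof sketch: syndrome extraction collapses the error to a Pauli $E=X^aZ^b$, Gaussian elimination produces coset representatives consistent with the measured syndromes, the two supplied classical decoders are then invoked, and the Pauli corrections leave only a stabilizer residue. If anything you are slightly more careful than the paper's terse sketch in pairing each $\Dec_\alpha$ with the error component it actually corrects (the $X$-exponent is only determined modulo $C_Z$ by the $Z$-type stabilizer measurements, so it is $\Dec_Z$ that must be applied to the $X$-error, as you do), and in flagging the projection-to-a-single-Pauli step as the place where the standard stabilizer-formalism lemma is invoked.
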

\begin{proof}[Proof sketch]
  The result is standard, so we just provide a brief outline. Let $C_X=\ker H_X$ and $C_Z=\ker H_Z$. Given a corrupted code state $\rho=\cE(\psi)$ for a code state $\psi\in\cC$ and an error channel $\cE$ acting on $\leq e$ of the codeword qudits, the decoder $\Dec(\rho)$ first performs syndrome measurements for $H_X,H_Z$ to collapse $\rho$ to $E\psi E^\dagger$, so that the error on $\psi$ is collapsed to some Pauli $E=X^{b_X}Z^{b_Z}$ of weight $\leq e$, where $s_X=H_Xb_X$ and $s_Z=H_Zb_Z$ are the outputs of the syndrome measurements. Then for each permutation $(\alpha,\beta)$ of $(X,Z)$, the decoder performs Gaussian elimination to find some $a_\alpha$ such that $H_\alpha a_\alpha=s_\alpha$, so that $a_\alpha=c_\alpha+b_\alpha$ for some (currently unknown) $c_\alpha\in C_\alpha$. The decoder then runs $\Dec_\alpha(a_\alpha)$, which outputs $c_\alpha+p_\alpha$ for some $p_\alpha\in C_\beta^\perp$. Letting $b_\alpha'=a_\alpha-(c_\alpha+p_\alpha)=b_\alpha-p_\alpha$ and $E'=X^{b_X'}Z^{b_Z'}$, the decoder then applies ${E'}^\dagger$ to output ${E'}^\dagger E\psi E^\dagger E'=\psi$. This final equality holds because by assumption $X^{p_X}$ and $Z^{p_Z}$ are stabilizers of $\cC$, so they preserve $\psi$.

  The running time of $\Dec$ defined above is $T_X(n,a)+T_Z(n,a)+O(n^3\poly\log a)$ because outside of the calls to $\Dec_X$ and $\Dec_Z$, all operations run in time $O(n^2\poly\log a)$, except the Gaussian elimination which runs in time $O(n^3\poly\log a)$.
\end{proof}

\begin{definition}
  \label{def:qLRC}
  A \textbf{quantum locally recoverable code (qLRC)} of \textbf{locality $r$}, block length $n$, and local dimension (i.e.~alphabet size) $a$ is a quantum code $\cC\subseteq(\bC^a)^{\otimes n}$ together with a set of \textbf{local recovery channels} $\Rec_i$ for every $i\in[n]$ satisfying the following properties:
  \begin{enumerate}
  \item There exists a set $I_i\subseteq[n]$ of size $|I_i|\leq r$ with $i\in I_i$ such that $\Rec_i$ is a quantum channel $\Rec_i:\cM(I_i\setminus\{i\})\rightarrow\cM(I_i)$ with input qudits $I_i\setminus\{i\}$ and output qudits $I_i$.
  \item It holds for every code state $\psi\in\cC$ that
    \begin{equation*}
      \Rec_i\otimes I_{[n]\setminus I_i}(\psi_{[n]\setminus\{i\}}) = \psi.
    \end{equation*}
  \end{enumerate}
\end{definition}

That is, a qLRC permits local recovery against erasure of a single qudit, or equivalently, against a corruption in a single known location.

Below, we present a quantum analogue of Lemma~\ref{lem:cLRClin} for stabilizer codes.

\begin{proposition}
  \label{prop:stabLRC}
  Let $\cC$ be a stabilizer code with stabilizer group $S\subseteq\cP_q^n$. Assume that for every $i\in[n]$, there exist stabilizers $P=X^\alpha Z^\beta,Q=X^\gamma Z^\delta\in S$ with $|\supp(P)\cup\supp(Q)|\leq r$ such that $(\alpha_i,\beta_i)=(1,0)$ and $(\gamma_i,\delta_i)=(0,1)$. Then $\cC$ is locally recoverable with locality $r$.
\end{proposition}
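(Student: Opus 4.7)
The plan is to explicitly construct, for each $i\in[n]$, a local recovery channel $\Rec_i:\cM(I_i\setminus\{i\})\to\cM(I_i)$ supported on $I_i:=\supp(P)\cup\supp(Q)$, which contains $i$ and satisfies $|I_i|\leq r$. The construction crucially uses that since $S$ is an $\bF_q$-subspace of $\cP_q^n$, every scalar multiple $cP,cQ$ for $c\in\bF_q$ is again a stabilizer still supported in $I_i$; these (together with the abelianness of $S$) give a commuting family of stabilizers local to $I_i$ that I will use for syndrome extraction.

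The channel $\Rec_i$ I would define in three steps. Given input $\sigma\in\cM(I_i\setminus\{i\})$: first prepare a fresh ancilla at position $i$ in the maximally mixed state $I/q$, obtaining $\tau:=(I/q)_i\otimes\sigma\in\cM(I_i)$; next, jointly measure the commuting stabilizers $\{cP:c\in B\}\cup\{cQ:c\in B\}$, where $B\subseteq\bF_q$ is any $\bF_p$-basis; finally, from the outcomes compute a pair $(a,b)\in\bF_q^2$ and apply the Pauli correction $(X_i^aZ_i^b)^{-1}$ on qudit $i$. To verify correctness on $\sigma=\psi_{[n]\setminus\{i\}}$ for a code state $\psi\in\cC$, I would first rewrite via Pauli twirling
\begin{equation*}
(I/q)_i\otimes\psi_{[n]\setminus\{i\}}\;=\;\frac{1}{q^2}\sum_{a,b\in\bF_q}X_i^aZ_i^b\,\psi\,(X_i^aZ_i^b)^\dagger,
\end{equation*}
reducing the problem to inverting a uniform random Pauli error on qudit $i$. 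On the branch with error $X_i^aZ_i^b$, the commutation identity $e^{(2\pi i/p)\tr_{\bF_q/\bF_p}(\alpha\beta)}X^\alpha Z^\beta=Z^\beta X^\alpha$ from the preliminaries gives that $X_i^aZ_i^b\psi$ is a simultaneous eigenvector of $cP$ with eigenvalue $e^{-(2\pi i/p)\tr_{\bF_q/\bF_p}(cb)}$ and of $cQ$ with eigenvalue $e^{(2\pi i/p)\tr_{\bF_q/\bF_p}(ca)}$; here I use both $cP\psi=\psi=cQ\psi$ and the fact that $cP,cQ$ act as $X^c,Z^c$ respectively on qudit $i$.

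The only technical subtleties are (a) carefully tracking the Pauli commutation phases, and (b) invoking non-degeneracy of the $\bF_p$-bilinear trace form $(u,v)\mapsto\tr_{\bF_q/\bF_p}(uv)$ on $\bF_q$ to recover $(a,b)\in\bF_q^2$ from its $\bF_p$-projections $\tr_{\bF_q/\bF_p}(cb),\tr_{\bF_q/\bF_p}(ca)$ indexed by $c\in B$; neither is a substantive obstacle. The conceptual point is that the single pair $(P,Q)$, together with its $\bF_q$-scalar multiples, provides enough syndrome information to detect and invert all $q^2$ Pauli errors on qudit $i$, while every measurement and correction remains supported on $I_i$, giving locality $r$.
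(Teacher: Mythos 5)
Your proposal is correct and follows essentially the same route as the paper's proof: measure the $\bF_q$-scalar multiples of $P$ and $Q$ (the paper writes these as $P^\eta,Q^\eta$), read off the Pauli error on qudit $i$ from the resulting commutation phases using $(\alpha_i,\beta_i)=(1,0)$ and $(\gamma_i,\delta_i)=(0,1)$, and apply the inverse Pauli, all supported on $I_i=\supp(P)\cup\supp(Q)$. Your treatment is slightly more explicit in two places — the Pauli-twirl decomposition of the maximally mixed ancilla and the use of the non-degenerate trace form to recover $(a,b)$ from the $\bF_p$-valued syndromes — but these are the same steps the paper's sketch performs via symplectic inner products.
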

\begin{proof}[Proof sketch]
  The result is a standard application of the well-known fact that a stabilizer code detects any Pauli error that anticommutes with one of the code's stabilizers. We will therefore just provide a brief proof sketch. For any given $i\in[n]$, let $P=X^\alpha Z^\beta,Q=X^\gamma Z^\delta\in S$ be the Paulis given by the proposition statement, and let $I_i=\supp(P)\cup\supp(Q)$. We will construct a recovery channel to revert the action of an arbitrary error channel (such as a totally depolarizing channel) acting on qudit $i$. The recovery channel first performs syndrome measurements for $P$ and $Q$, specifically by performing measurements on the corrupted code state for the operators $P^\eta$ and $Q^\eta$ for all $\eta\in\bF_q$. These measurements will collapse the error on qudit $i$ to a simultaneous eigenspace of these operators $P^\eta$ and $Q^\eta$ for $\eta\in\bF_q$, and the measurement outcomes (i.e.~the syndrome) give the eigenvalues of the projected state for each of these operators. That is, the error on qudit $i$ is projected to some single-qudit Pauli $E=X^{e_X}Z^{e_Z}$ for $e_X,e_Z\in\bF_q^n$ both supported in component~$i$. The syndrome measurement provides the phases $(EP^\eta)(P^\eta E)^\dagger$ and $(EQ^\eta)(Q^\eta E)^\dagger$ for all $\eta\in\bF_q$. These phases can be used to compute the symplectic inner products $s_P=e_X\cdot\beta-e_Z\cdot\alpha$ and $s_Q=e_X\cdot\delta-e_Z\cdot\gamma$. By the assumption that $(\alpha_i,\beta_i)=(1,0)$ and $(\gamma_i,\delta_i)=(0,1)$, it follows that $s_P=-(e_Z)_i$ and $s_Q=(e_X)_i$. That is, $s_P$ and $s_Q$ give the $Z$ and $X$ errors that occured on qudit $i$, respectively. The recovery channel then simply applies $X^{-e_X}Z^{-e_Z}$ to revert the error and recover the original code state.
\end{proof}

Proposition~\ref{prop:stabLRC} immediately implies that a CSS code is locally recoverable if its two classical codes are locally recoverable linear codes in the sense of Lemma~\ref{lem:cLRClin}:

\begin{corollary}
  \label{cor:CSSLRC}
  A CSS code $\cC=\CSS(C_X,C_Z)$ is locally recoverable with locality $r$ if for every $i\in[n]$, there exist parity checks $c_X'\in C_X^\perp,c_Z'\in C_Z^\perp$ such that $i\in\supp(c_X')\cap\supp(c_Z')$ and $|\supp(c_X')\cup\supp(c_Z')|\leq r$.
\end{corollary}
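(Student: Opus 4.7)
The plan is to deduce this corollary directly from Proposition~\ref{prop:stabLRC} by exhibiting suitable $X$-type and $Z$-type stabilizers for each qudit. Recall that $\cC = \CSS(C_X,C_Z)$ is by definition the stabilizer code whose stabilizer group is $\{X^\alpha Z^\beta : \alpha \in C_X^\perp,\,\beta \in C_Z^\perp\}$, so every element of $C_X^\perp$ gives a purely $X$-type stabilizer and every element of $C_Z^\perp$ gives a purely $Z$-type stabilizer.

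Fix $i \in [n]$, and let $c_X' \in C_X^\perp$ and $c_Z' \in C_Z^\perp$ be the parity checks provided by the hypothesis, so that $i \in \supp(c_X') \cap \supp(c_Z')$ and $|\supp(c_X') \cup \supp(c_Z')| \leq r$. First I would normalize: since $C_X^\perp$ is an $\bF_q$-subspace and $(c_X')_i \neq 0$, the rescaled vector $\alpha := c_X'/(c_X')_i$ still lies in $C_X^\perp$, has the same support as $c_X'$, and satisfies $\alpha_i = 1$. Likewise set $\delta := c_Z'/(c_Z')_i \in C_Z^\perp$, so that $\delta_i = 1$ and $\supp(\delta) = \supp(c_Z')$. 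Then define the Paulis $P := X^\alpha Z^0$ and $Q := X^0 Z^\delta$, both of which are stabilizers of $\cC$.

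By construction, $P$ and $Q$ satisfy $(\alpha_i,\beta_i) = (1,0)$ and $(\gamma_i,\delta_i) = (0,1)$ in the notation of Proposition~\ref{prop:stabLRC}, and the union of their supports is
\begin{equation*}
  \supp(P) \cup \supp(Q) = \supp(\alpha) \cup \supp(\delta) = \supp(c_X') \cup \supp(c_Z'),
\end{equation*}
which has size at most $r$ by hypothesis. Applying Proposition~\ref{prop:stabLRC} to $\cC$ with these choices of $P$ and $Q$ for each $i$ then yields that $\cC$ is locally recoverable with locality $r$.

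There is no real obstacle here: the proof is essentially a bookkeeping exercise that packages the CSS orthogonality data into the Pauli language of Proposition~\ref{prop:stabLRC}. The only minor subtlety is the normalization step ensuring the diagonal entries of the stabilizer coefficients equal $1$ as required by the hypothesis of Proposition~\ref{prop:stabLRC}, but this is automatic from $\bF_q$-linearity of $C_X^\perp$ and $C_Z^\perp$.
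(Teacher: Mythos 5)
Your proposal is correct and follows exactly the route the paper intends: the paper presents Corollary~\ref{cor:CSSLRC} as an immediate consequence of Proposition~\ref{prop:stabLRC} without spelling out the details, and you have simply filled in the straightforward bookkeeping — rescaling $c_X'$ and $c_Z'$ so their $i$th entries equal $1$, forming the purely $X$-type and purely $Z$-type stabilizers, and checking the support condition. No gaps.
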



\section{Singleton-Like Bounds for qLRCs}
\label{sec:singleton}
This section presents Singleton-like bounds for qLRCs, by combining techniques for proving quantum Singleton bounds (see for example Section~12.4.3 of \cite{nielsen_quantum_2010}) and classical LRC Singleton-like bounds \cite{gopalan_locality_2012}.

Below, for a qLRC $\cC$ of block length $n$ and locality $i$, we let $I_i\subseteq[n]$ denote the set of size $r$ with $i\in I_i$ such that the local recovery map $\Rec_i$ for $\cC$ recovers the $i$th code component from the components in $I_i\setminus\{i\}$.

\begin{theorem}
  \label{thm:singletongen}
  Let $\cC$ be a qLRC of block length $n$, dimension $k>0$, distance $d$, and locality $r$. Then
  \begin{equation*}
    k \leq n-2(d-1)-\left\lfloor\frac{n-(d-1)}{r}\right\rfloor-\left\lfloor\frac{n-2(d-1)-\left\lfloor\frac{n-(d-1)}{r}\right\rfloor}{r}\right\rfloor,
  \end{equation*}
\end{theorem}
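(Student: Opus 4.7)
The plan is to combine the quantum Singleton bound proof (Proposition~\ref{prop:qSB}) with a greedy set-extension argument exploiting local recoverability, in the spirit of the classical LRC Singleton-like bound of Gopalan et al.~\cite{gopalan_locality_2012}. I work in the standard purified picture: let $V$ be the encoding isometry of $\cC$ and set $\ket{\Phi}_{RC} = q^{-k/2}\sum_m \ket{m}_R \otimes V\ket{m}_C$, where $R$ is a reference register of dimension $q^k$. Then, in units of $\log q$, $S(R) = k$ and by purity $S(RT) = S(\overline T)$ for every $T \subseteq [n]$. For any $A \subseteq [n]$ from which the full code state can be recovered by a channel acting on $\overline A$, Lemma~\ref{lem:localind} yields $\rho_{RA} = \rho_R \otimes \rho_A$, hence $I(R:A) = 0$; call such an $A$ \emph{recoverable}.

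The first technical step is a ``two-erasure'' quantum Singleton inequality: for any two disjoint recoverable sets $A, B \subseteq [n]$, we have $k \leq n - |A| - |B|$. Applying data processing to the recovery channel $\overline A \to [n]$ (extended trivially to $R$) gives $I(R:\overline A) \geq I(R:[n]) = 2k$. The chain rule $I(R:\overline A) = I(R:B) + I(R:\overline A \setminus B \mid B)$, together with $I(R:B) = 0$ and the standard bound $I(R:\overline A \setminus B \mid B) \leq 2|\overline A \setminus B|$, then yields $k \leq n - |A| - |B|$.

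The second technical step is an extension lemma: if $A$ is recoverable and $j \in \overline A$ satisfies $I_j \subseteq \overline A$, then $A \cup \{j\}$ is also recoverable. Indeed, since $\Rec_j$ acts nontrivially only on $I_j \setminus \{j\} \subseteq \overline A \setminus \{j\} = \overline{A \cup \{j\}}$ and commutes with the partial trace over $A$ (which it leaves untouched), it satisfies $\Rec_j(\rho_{\overline{A \cup \{j\}}}) = \rho_{\overline A}$; composing with the recovery from $\overline A$ yields a recovery channel from $\overline{A \cup \{j\}}$. Using this lemma I construct $A$ by starting with an arbitrary $A_0 \subseteq [n]$ of size $d - 1$ (recoverable by the distance assumption) and greedily adding positions $j \in \overline A$ with $I_j \subseteq \overline A$ until no such $j$ remains. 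I aim to show that this produces $A$ with $|A| \geq (d-1) + \lfloor(n-(d-1))/r\rfloor$. A disjoint $B \subseteq [n] \setminus A$ is then built analogously, starting from an arbitrary $B_0 \subseteq [n]\setminus A$ of size $d-1$ and restricting the greedy extension to positions in $[n] \setminus A$, yielding $|B| \geq (d-1) + \lfloor(n - 2(d-1) - \lfloor(n-(d-1))/r\rfloor)/r\rfloor$. The theorem then follows by the two-erasure inequality applied to $A$ and $B$.

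The main obstacle is the combinatorial lower bound on the greedy extension sizes. A naive packing argument gives only roughly $|\overline A_0|/(2r - 1)$ extensions, since distinct recovery sets can overlap unfavorably (e.g., if many $I_i$'s share a common element that falls into $A$ early). Matching the tighter bound $\lfloor|\overline A_0|/r\rfloor$ appears to require a more refined procedure in the spirit of the union-of-recovery-sets construction of Gopalan et al., for instance by allowing an added position whose recovery set includes previously-added elements and chaining the $\Rec_j$ calls in a consistent order so that the composite channel is well defined. I expect this combinatorial piece to be the most delicate part of the full proof.
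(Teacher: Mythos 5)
Your overall architecture is the same as the paper's: build two disjoint subsets whose erasure the code tolerates (after chaining local recoveries), then run an entropic ``two-erasure Singleton'' argument. Your mutual-information version of that last step, $2k = I(R:[n]) \le I(R:\overline{A}) = I(R:B) + I(R:\overline{A}\setminus B\mid B) \le 2\,|\overline{A}\setminus B|$, is correct and is just a repackaging of the paper's computation, which adds two subadditivity inequalities for a maximally entangled reference register. Your extension lemma is also fine.

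The genuine gap is the combinatorial construction of the sets, which you explicitly leave open, and the greedy rule you actually write down (add $j$ only when all of $I_j$ lies in the current complement, then remove $j$ from the complement) does not achieve $\lfloor (n-(d-1))/r\rfloor$ extensions in the worst case: once an added position happens to lie in many other recovery sets, all of those positions are blocked permanently, and nothing in the definition of a qLRC rules out recovery-set configurations (e.g.\ many $I_j$ sharing a common element) for which the process stalls after very few steps. The paper's fix is exactly the ``chaining'' you gesture at, made precise by reserved-set bookkeeping: alongside the set $S$ of positions to be erased, maintain a set $T$; when $i$ is added to $S$, add $I_i\setminus(\{i\}\cup S)$ to $T$, and never subsequently move elements of $T$ into $S$ (they stay in the complement, available for recovery), while the part of $I_i$ already inside $S$ is handled by recovering $S$ in insertion order, so that each $\Rec_i$ reads only from $T$ and from earlier elements of $S$. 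Each step then consumes at most $r$ fresh positions from the pool $[n]\setminus(S\cup T)$ --- one for $S$ and at most $r-1$ for $T$ --- which yields exactly $\lfloor(n-(d-1))/r\rfloor$ steps while leaving $d-1$ positions for the erasure set $V_1$, and similarly for the second set. Without this (or an equivalent argument) the stated bound is not established.
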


Omitting the last term in the bound in Theorem~\ref{thm:singletongen}, it follows that $k\leq n-2(d-1)-(n-(d-1))/r+1$, or equivalently,
\begin{equation*}
  d-1 \leq \frac{\left(1-\frac{1}{r}\right)n-k+1}{2+\frac{1}{r}}.
\end{equation*}
Thus for fixed locality $r$, even as the rate $k/n\rightarrow 0$, the relative distance of a qLRC satisfies $d/n\leq 1/2-\Omega(1/r)$. In contrast, general (non-LRC) quantum codes can have $d/n\rightarrow 1/2$ as $k/n\rightarrow 0$. Thus for codes of low rate, there is a fundamental cost to imposing a constant locality $r$ in quantum codes.

This fact differs from the classical case, as the classical Singleton-like bound \cite{gopalan_locality_2012} states that every cLRC of block length $n$, dimension $k>0$, distance $d$, and locality $r$ satisifies
\begin{equation*}
  d \leq n-k-\left\lceil\frac{k}{r}\right\rceil+2.
\end{equation*}
Tamo and Barg \cite{tamo_family_2014} provided an explicit construction of cLRCs that exactly meet this bound. Therefore there exist cLRCs with $k/n\rightarrow 0$ and $d/n\rightarrow 1$, which approaches the classical Singleton bound for general (non-locally-recoverable) codes.

Because all quantum LDPC stabilizer codes are by definition quantum LRCs, the above observation in particular implies that there is a cost (in terms of rate-distance tradeoff) to requiring a quantum code be LDPC with some fixed locality. As for LRCs, letting the rate $k/n\rightarrow 0$ gives a disconnect between the quantum case and the classical case. Specifically, Theorem~\ref{thm:singletongen} implies that all qLDPC codes with constant locality (i.e.~check weight) $r$ have relative distance $d/n\leq 1/2-\Omega(1/r)$. In contrast, classically there are LDPC codes with constant locality $r$ and distance approaching $1$, such as the repetition code ($r=2$) and the Hadamard code ($r=3$) over a growing field size.

It is an interesting question to determine the optimal tradeoff between rate, distance, and locality for qLDPC codes in addition to for qLRCs, and to classify the additional cost of requiring a quantum code to be LDPC compared to just requiring local recoverability.

\begin{proof}[Proof of Theorem~\ref{thm:singletongen}]
  We partition the code symbols $Q=[n]$ into five disjoint subsets $S_1,V_1,S_2,V_2,W$. Here as a shorthand we denote $A\sqcup B$ by $AB$, so we will have $Q=S_1V_1S_2V_2W$. To construct this partition, we first construct $S_1,V_1$ as follows:
  \begin{enumerate}
  \item Initialize sets $S_1=\emptyset$ and $T_1=\emptyset$.
  \item\label{it:setS1} Repeat the following step
    \begin{equation*}
      s_1 = \left\lfloor\frac{n-(d-1)}{r}\right\rfloor
    \end{equation*}
    times: choose some $i\in Q\setminus S_1T_1$, add $i$ to $S_1$, and add all elements of $I_i\setminus(\{i\}\cup S_1)$ to $T_1$. Note that by construction $S_1\cap T_1=\emptyset$.
  \item\label{it:setV1} Set $V_1$ to be any $(d-1)$-element subset of $Q\setminus S_1T_1$.
  \end{enumerate}
  The above procedure is guaranteed to successfully output $S_1,V_1$ with $|S_1|=s_1$ and $|V_1|=d-1$, as each iteration in step~\ref{it:setS1} adds $1$ element to $S_1$ and adds at most $r$ elements to $S_1T_1$, so after all $s_1$ iterations it still holds that $|Q\setminus S_1T_1| \geq n-rs_1 \geq d-1$.
  
  Let $Q_1=Q\setminus S_1V_1$. For a code state $\rho\in\cC$, given access to the restriction $\rho_{Q_1}=\Tr_{Q\setminus Q_1}\rho$ to components in $Q_1$, we may by construction pass through all $i\in S_1$ one-by-one in the same order $S_1$ was constructed, and apply $\Rec_i$ at each step, to recover $\rho_{Q_1S_1}$. Then as $Q\setminus Q_1S_1=V_1$, which has size $d-1$, the global decoding algorithm of $\cC$ recovers $\rho$ from $\rho_{Q_1S_1}$. Thus the code components in $Q_1$ can be used to completely recover the entire codeword. It follows by Lemma~\ref{lem:localind} that the reduced density matrix $\rho_{Q\setminus Q_1}=\rho_{S_1V_1}$ is the same for all code states $\rho\in\cC$.

  Now if $|Q_1|<d-1$, then the global decoding algorithm for $\cC$ can recover a code state from components in $Q\setminus Q_1=S_1V_1$. But as we showed above that a code state can also be recovered from components in $Q_1$, the code must have dimension $k=0$, as otherwise we would be able to clone a code state by breaking it into the parts $S_1V_1$ and $Q_1$, and recovering the entire state from each part. But we assumed $k>0$, a contradiction, so it must be that $|Q_1|\geq d-1$.

  We then construct $S_1,V_2$ using a similar procedure as above, but on the restriction to components in $Q_1$:
  \begin{enumerate}
  \item Initialize sets $S_2=\emptyset$ and $T_2=\emptyset$.
  \item\label{it:setS2} Repeat the following step
    \begin{equation*}
      s_2 = \left\lfloor\frac{|Q_1|-(d-1)}{r}\right\rfloor = \left\lfloor\frac{n-2(d-1)-s_1}{r}\right\rfloor = \left\lfloor\frac{n-2(d-1)-\left\lfloor\frac{n-(d-1)}{r}\right\rfloor}{r}\right\rfloor
    \end{equation*}
    times: choose some $i\in Q_1\setminus S_2T_2$, add $i$ to $S_2$, and add all elements of $I_i\setminus(\{i\}\cup S_2)$ to $T_2$.
  \item\label{it:setV2} Set $V_2$ to be any $(d-1)$-element subset of $Q_1\setminus S_2T_2$.
  \end{enumerate}
  Note that the procedure above is guaranteed to successfully output $S_2,V_2$ with $|S_2|=s_2$ and $|V_2|=d-1$ because $|Q_1|\geq d-1$ as shown above, and each iteration of step~(\ref{it:setS1}) adds 1 element to $S_2$ and adds at most $r$ elements to $S_2T_2$, so after all $s_2$ iterations $|S_2|=s_2$ and $|Q_1\setminus S_2T_2|\geq|Q_1|-|S_2T_2|\geq |Q_1|-rs_2\geq d-1$. Letting $Q_2=Q\setminus S_2V_2$, then by similar reasoning as above for $Q_1$, the code components in $S_2V_2$ can be recovered from components in $Q_2$, so all code states $\rho\in\cC$ have the same reduced density matrix $\rho_{Q\setminus Q_2}=\rho_{S_2V_2}$.

  Now let $W=Q\setminus S_1V_1S_2V_2$, so that we have our desired partition $Q=S_1V_1S_2V_2W$. Let $q$ be the local dimension of $\cC$, and let $A$ be a set of $k$ additional qudits of local dimension $q$. Define
  \begin{equation}
    \label{eq:entcodestate}
    \ket{\psi}_{AQ} = \frac{1}{q^k}\sum_{x\in[q]^k}\ket{x}_A\otimes(\Enc\ket{x})_Q
  \end{equation}
  to be the state obtained by applying the encoding map $\Enc$ of $\cC$ to one register from a maximally entanged pair of message states $\sum_{x\in[q]^k}\ket{x}\otimes\ket{x}$. Let $\psi=\ket{\psi}\bra{\psi}$ be the associated density matrix.

  For $B\subseteq AQ$, let $S(B)=S(\psi_B)=\Tr(-\psi_B\log_q\psi_B)$ denote the von Neumann entropy of qudits $B$ in the state $\ket{\psi}$. Then
  \begin{align}
    \label{eq:entropies}
    \begin{split}
      S(A)+S(S_1V_1) &= S(AS_1V_1) = S(S_2V_2W) \leq S(S_2V_2)+S(W) \\
      S(A)+S(S_2V_2) &= S(AS_2V_2) = S(S_1V_1W) \leq S(S_1V_1)+S(W).
    \end{split}
  \end{align}
  In both lines above, the inequality holds by the subadditivity of entropy. Meanwhile, the first equality in both lines holds because for $i=1,2$, we showed above that all code states $\rho\in\cC$ have the same reduced density matrix $\rho_{S_iV_i}$, which must equal $\psi_{S_iV_i}$. Therefore $\psi_{AS_iV_i}=q^{-k}\sum_{k\in[q]^k}\ket{x}\bra{x}\otimes\rho_{S_iV_i}=\psi_A\otimes\psi_{S_iV_i}$, and thus $S(\psi_{AS_iV_i})=S(\psi_A)+S(\psi_{S_iV_i})$ as entropy is additive over tensor products.

  Adding the two inequalities in~(\ref{eq:entropies}) gives that $S(A)\leq S(W)$. Now $\psi_A=q^{-k}\sum_{x\in[q]^k}\ket{x}\bra{x}$, so $S(A)=k$. Meanwhile, $S(W)\leq|W|=n-|S_1V_2S_wV_2|=n-2(d-1)-s_1-s_2$. Thus
  \begin{equation*}
    k = S(A) \leq S(W) = n-2(d-1)-\left\lfloor\frac{n-(d-1)}{r}\right\rfloor-\left\lfloor\frac{n-2(d-1)-\left\lfloor\frac{n-(d-1)}{r}\right\rfloor}{r}\right\rfloor,
  \end{equation*}
  as desired.
\end{proof}



While Theorem~\ref{thm:singletongen} applies for arbitrary qLRCs, our explicit construction of qLRCs in Section~\ref{sec:construct} has a specific structure: the recovery sets $I_i$ form a partition of the code components $[n]$, that is, for all $i,j\in[n]$ either $I_i=I_j$ or $I_i\cap I_j=\emptyset$.

Below, we show that if we assume the recovery sets $I_i$ have this partition structure, then a stronger Singleton-like bound than in Theorem~\ref{thm:singletongen} holds. We will assume for simplicity that all recovery sets $I_i$ have size exactly $|I_i|=r$, as is the case for our explicit constructions in Section~\ref{sec:construct}; a similar result also holds in the more general case where $|I_i|\leq r$.

\begin{theorem}
  \label{thm:singletonpart}
  Let $\cC$ be a qLRC of block length $n$, dimension $k>0$, distance $d$, and locality $r$. Assume that for all $i\in[n]$, $|I_i|=r$, and for all $i,j\in[n]$, either $I_i=I_j$ or $I_i\cap I_j=\emptyset$. Then
  \begin{equation*}
    k \leq \left(1-\frac{2}{r}\right)n - 2\left(d-1-\left\lceil\frac{d-1}{r-1}\right\rceil\right).
  \end{equation*}
\end{theorem}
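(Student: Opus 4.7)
The plan is to adapt the entropy-based argument from Theorem~\ref{thm:singletongen}, but to leverage the partition structure to construct two disjoint ``recoverable complements'' $E^{(1)},E^{(2)}\subseteq[n]$ whose total size strictly exceeds what the generic construction yields. I aim to exhibit disjoint $E^{(1)},E^{(2)}$ with
\[
|E^{(1)}|+|E^{(2)}|\ \geq\ \frac{2n}{r}+2(d-1)-2\left\lceil\frac{d-1}{r-1}\right\rceil
\]
such that $Q\setminus E^{(i)}$ suffices to recover the full codeword for each $i\in\{1,2\}$. The inequality in the theorem then drops out immediately from the entropy argument, since it gives $k\leq|W|=n-|E^{(1)}|-|E^{(2)}|$.

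Let $N=n/r$ and let $B_1,\dots,B_N$ be the blocks of the partition. Write $d-1=(r-1)q+s$ with $q\geq 0$ and $0\leq s\leq r-2$, so that $\lceil(d-1)/(r-1)\rceil=q+\1[s\geq 1]$. For each block $j$ I choose nonnegative integers $(e_j^{(1)},e_j^{(2)})$ with $e_j^{(1)}+e_j^{(2)}\leq r$ and place that many components of $B_j$ in $E^{(1)}$ and $E^{(2)}$ (disjointly within the block). The key combinatorial observation is that $Q\setminus E^{(i)}$ recovers the whole codeword whenever $\sum_{j:\,e_j^{(i)}\geq 2}e_j^{(i)}\leq d-1$: one first locally recovers every block with $e_j^{(i)}=1$ using its parity check (all other $r-1$ symbols are known), after which the unknown set has size at most $d-1$ and is finished off by the distance-$d$ global decoder. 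My proposed assignment uses $q$ blocks of ``type A'' with $(r-1,1)$, $q$ blocks of ``type B'' with $(1,r-1)$, and, if $s\geq 2$, one extra block of type $(s,1)$ and one of type $(1,s)$; all remaining blocks are assigned $(1,1)$. A short case check on $s\in\{0,1\}$ versus $s\geq 2$ verifies that both feasibility budgets $\sum_j e_j^{(i)}\cdot\1[e_j^{(i)}\geq 2]$ are at most $d-1$ and that
\[
|E^{(1)}|+|E^{(2)}|\ =\ 2N+2q(r-2)+2(s-1)\cdot\1[s\geq 2]\ =\ \frac{2n}{r}+2(d-1)-2\left\lceil\frac{d-1}{r-1}\right\rceil.
\]
When $N$ is too small to accommodate all the required special blocks, I merge an A-type and a B-type block into a ``compressed'' block $(a,b)$ with $a,b\geq 2$ and $a+b=r$; a routine check shows that either the same lower bound is still attained or the claimed RHS of the theorem becomes nonpositive and the bound is vacuous since $k\geq 0$.

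The entropy argument then proceeds exactly as in the proof of Theorem~\ref{thm:singletongen}. Introduce an ancilla register $A$ of $k$ qudits and form the entangled state $\ket{\psi}_{AQ}$ as in~(\ref{eq:entcodestate}). By Lemma~\ref{lem:localind} applied to the erasure patterns $E^{(i)}$, the reduced density $\psi_{AE^{(i)}}$ factors as $\psi_A\otimes\psi_{E^{(i)}}$, giving $S(AE^{(i)})=S(A)+S(E^{(i)})$. Combining this with purity of $\ket{\psi}_{AQ}$ (so that $S(AE^{(i)})=S(WE^{(3-i)})$) and subadditivity, then adding the two resulting inequalities, yields $k=S(A)\leq S(W)\leq|W|=n-|E^{(1)}|-|E^{(2)}|$. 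Plugging in the lower bound on $|E^{(1)}|+|E^{(2)}|$ proves the theorem.

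The main obstacle is the small-$N$ regime, where the simple ``parallel A/B tracks plus padding'' construction does not fit in the available blocks and one has to use compressed blocks that spend both budgets simultaneously in a single block. Carrying out this case analysis (with particular care around $s=1$, where a residual budget of one cannot be spent on any further upgrade from $(1,1)$) is the fiddliest part of the argument.
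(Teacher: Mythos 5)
Your proposal is correct and follows essentially the same route as the paper's proof: you build two disjoint erasure patterns, each consisting of one symbol per block plus $d-1$ further symbols packed into $\lceil(d-1)/(r-1)\rceil$ blocks of the other pattern, recover the singly-erased blocks locally and the remaining $\leq d-1$ symbols globally, and then run the identical entanglement-entropy argument; your $(q,s)$ bookkeeping is just an explicit version of the paper's choice of $a=\lceil(d-1)/(r-1)\rceil$ special blocks per side. The small-$N$ edge case you flag is real but is glossed over in the paper's proof as well, so your extra care there is a bonus rather than a divergence.
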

\begin{proof}
  The proof is similar to that of Theorem~\ref{thm:singletongen}, except we choose the partition $S_1V_1S_2V_2W$ of the qudits more carefully. Specifically, again let $Q=[n]$ denote the set of code qudits, and as a shorthand denote $AB=A\sqcup B$ for disjoint $A,B\subseteq Q$.


  Because $\{I_i:i\in Q\}$ forms a partition of $Q=[n]$ with all $|I_i|=r$, we must have $r|n$. Denote by $J_1,\dots,J_{n/r}$ the distinct partition elements, so $\{I_i:i\in[n]\}=\{J_j:j\in[n/r]\}$ and therefore $[n]=J_1\sqcup\cdots\sqcup J_{n/r}$.
  
  We define a partition $Q=S_1V_1S_2V_2W$ of the qudits as follows. Let $a=\lceil(d-1)/(r-1)\rceil$, and fix an arbitrary partition $[n/r]=A_1\sqcup A_2\sqcup B$ such that $|A_1|=|A_2|=a$. Choose $S_1,S_2\subseteq[n]$ to be a pair of disjoint subsets such that $S_1$ consists of a single element of $J_j$ for each $j\in A_1\sqcup B$, and $B$ consists of a single element of $J_j$ for each $j\in A_2\sqcup B$. Note that because $r\geq 2$ by definition, we can indeed choose such $S_1,S_2$ that are disjoint.

  Then define $V_1$ to be any set of $d-1$ elements in $\sqcup_{j\in A_2}J_j\setminus S_2$, and define $V_2$ to be any set of $d-1$ elements in $\sqcup_{j\in A_1}J_j\setminus S_2$. Note that there exist such sets $V_1,V_2$ because by construction $|\sqcup_{j\in A_2}J_j\setminus S_2|=|\sqcup_{j\in A_1}J_j\setminus S_2|=(r-1)a\geq d-1$.

  Finally define $W=Q\setminus S_1V_1S_2V_2$, so that by construction $Q=S_1V_1S_2V_2W$ forms a partition of $Q=[n]$.

  For $b=1,2$, if we are given access to components in $Q\setminus S_bV_b$ of a code state $\rho\in\cC$, then as $V_b$ is by construction disjoint from $\sqcup_{i\in S_b}I_i$, we can apply the local recovery maps $\Rec_i$ for $i\in S_b$ to recover the components of $\rho$ in $S_b$. Then we have all components of $\rho$ in $Q\setminus V_b$, so as $|V_b|=d-1$, we can apply the global decoding map for $\cC$ to recover $\rho$. It follows that the components in $Q\setminus S_bV_b$ completely determine the code state $\rho$, so the reduced density matrix $\rho_{S_bV_b}$ must be the same for all $\rho\in\cC$.

  We now proceed as in the proof of Theorem~\ref{thm:singletongen}, though with different values for $|S_1|,|S_2|$. Define $\ket{\psi}$ as in~(\ref{eq:entcodestate}), so that~(\ref{eq:entropies}) again holds here. Summing these inequalities in~(\ref{eq:entropies}), we conclude as in the proof of Theorem~\ref{thm:singletongen} that
  \begin{align*}
    k
    &\leq |W| \\
    &= n-2(d-1)-|S_1|-|S_2| \\
    &= n-2(d-1)-2(a+(n/r-2a)) \\
    &= \left(1-\frac{2}{r}\right)n - 2\left(d-1-\left\lceil\frac{d-1}{r-1}\right\rceil\right),
  \end{align*}
  as desired.
\end{proof}

The separation between the bounds in Theorem~\ref{thm:singletongen} and Theorem~\ref{thm:singletonpart} raises the possibility that qLRCs could benefit from having recovery sets $I_i$ that do not simply form a partition of $[n]$. Such a situation would be in contrast to the classical case. Specifically, the classical LRCs of \cite{tamo_family_2014} have recovery sets $I_i$ forming a partition of $[n]$, yet they exactly meet the Singleton-like bound of \cite{gopalan_locality_2012}, which applies to LRCs with arbitrary recovery structure (see also Chapter~19 of \cite{guruswami_essential_2022}). Note that the cLRCs of \cite{tamo_family_2014} have linear-sized alphabets; the case of smaller alphabets is considered in \cite{guruswami_how_2019}.

\section{Basic Constructions Using Known Techniques}
\label{sec:basic}
In this section, we describe how known techniques yield some basic qLRCs, and highlight the issues with such techniques that our construction addresses.

In Section~\ref{sec:random}, we show how to randomly sample a qLRC whose rate is within $O(1/r)$ of the Singleton-like bounds in Section~\ref{sec:singleton} over an alphabet of size $q=2^{O(r)}$. However, this construction is non-explicit, and as a result we have no efficient algorithm to certify the distance bound, and no efficient algorithm to decode from errors in unknown locations. In contrast, our fqTB codes presented in Section~\ref{sec:construct} below only have rate within $O(1/\sqrt{r})$ of the Singleton-like bound, and have alphabet size growing polynomially in the block length. However, our fqTB codes are explicit, so their distance bound is guaranteed, and we provide an efficient decoding algorithm against errors in uknown locations in Section~\ref{sec:decode}.

In Section~\ref{sec:ael}, we then show how to explicitly construct an efficiently-decodable qLRC using Alon-Edmunds-Luby (AEL) \cite{alon_linear_1995} distance amplification and alphabet reduction. This construction concatenates a small random qLRC, which can be found efficiently via brute force, with a large quantum code of high rate and constant relative distance. The key step of the AEL construction is to then permute the symbols according to the edges of an expander graph, which amplifies the relative distance to almost that of the inner code. The resulting construction (see Proposition~\ref{prop:qLRCAEL} and the subsequent discussion) yields qLRCs whose rate is within $O(1/r^{1/4})$ of the Singleton-lke bounds in Section~\ref{sec:singleton} over an alphabet of size $q=2^{O(r)}$. In contrast, as mentioned above, our fqTB codes in Section~\ref{sec:construct} have a better rate-distance-locality tradeoff, as their rate is within $O(1/\sqrt{r})$ of the Singleton-like bound.



\subsection{Random qLRCs}
\label{sec:random}
In this section, we analyze the following natural random construction of a qLRC.

\begin{definition}[Random qLRC]
  \label{def:random}
  Given a block length $n$, a locality parameter $r|n$, an integer $\ell\in[n/2-n/r]$, and an alphabet $\bF_q$, we define a random qLRC to be a CSS code $\cC=\CSS(C_X,C_Z)$ that is sampled as follows. Initialize parity check matrices $H_X,H_Z$ for $C_X,C_Z$ respectively to be $n/r\times n$ matrices such that for $i\in[n]$ and $j\in[n/r]$,
  \begin{align*}
    (H_X)_{i,j} &= \begin{cases}
      1,&j\in\{ri,\dots,ri+r-1\} \\
      0,&\text{otherwise}
    \end{cases} \\
    (H_Z)_{i,j} &= \begin{cases}
      -(r-1),&j=ri \\
      1,&j\in\{ri+1,\dots,ri+r-1\} \\
      0,&\text{otherwise}.
    \end{cases}
  \end{align*}
  That is, $H_X,H_Z$ are initialized so that $\CSS(\ker H_X,\ker H_Z)$ is a well-defined CSS code that is a qLRC of maximal possible rate, whose recovery sets form a partition of the code components $[n]$. We then sequentially add $\ell$ random rows to each of $H_X$ and $H_Z$ subject to the orthogonality relations, as follows:
  \begin{enumerate}
  \item\label{it:makeHX} Repeat $\ell$ times: sample a uniformly random vector in $\text{row-span}(H_Z)^\perp\setminus\text{row-span}(H_X)$ and add it as a new row in $H_X$.
  \item\label{it:makeHZ} Repeat $\ell$ times: sample a uniformly random vector in $\text{row-span}(H_X)^\perp\setminus\text{row-span}(H_Z)$ and add it as a new row in $H_Z$.
  \end{enumerate}
  Thus we have sampled matrices $H_X,H_Z\in\bF_q^{(n/r+\ell)\times n}$ with orthgonal row-spans. Thus we may let $C_X=\ker H_X,C_Z=\ker H_Z$, obtain our desired well-defined CSS code $\cC=\CSS(C_X,C_Z)$.
\end{definition}

The following lemma is immediate from Definition~\ref{def:random}.

\begin{lemma}
  \label{lem:randomrate}
  The random qLRC $\cC$ in Definition~\ref{def:random} is a qLRC of locality $r$ and dimension $k=n-2(n/r+\ell)$. Furthermore, each $i\in[n]$ has recovery set $I_i=\{ar,\dots,ar+r-1\}$ for $a=\lfloor i/r\rfloor$.
\end{lemma}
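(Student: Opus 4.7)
The plan is to verify three assertions implicit in Definition~\ref{def:random}: (a) the matrices $H_X, H_Z$ have orthogonal row spans throughout the sampling procedure, so $\cC = \CSS(\ker H_X, \ker H_Z)$ is a well-defined CSS code; (b) $\cC$ inherits locality $r$ with the stated recovery sets $I_i$ from the initial rows of $H_X$ and $H_Z$; and (c) both $H_X$ and $H_Z$ end with rank exactly $n/r + \ell$, giving $k = n - \mathrm{rank}(H_X) - \mathrm{rank}(H_Z) = n - 2(n/r + \ell)$.

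For (a) and (b) I would inspect the initial $n/r$ rows, which are indexed by $a \in [n/r]$ and supported on the blocks $B_a = \{ar, \ldots, ar+r-1\}$ partitioning $[n]$. Corresponding initial rows of $H_X$ and $H_Z$ (same $a$) have inner product $1 \cdot (-(r-1)) + (r-1) \cdot 1 = 0$, while initial rows from different blocks have disjoint supports; every row subsequently added in step~\ref{it:makeHX} is drawn from $\text{row-span}(H_Z)^\perp$ and is therefore orthogonal to all current rows of $H_Z$, and step~\ref{it:makeHZ} is symmetric against the now-enlarged $H_X$. This proves (a). For (b), the $a$-th initial row of $H_X$ (resp.~$H_Z$) is a parity check in $C_X^\perp$ (resp.~$C_Z^\perp$) of weight $r$ supported on $B_a$, so for $i \in [n]$ with $a = \lfloor i/r \rfloor$, Corollary~\ref{cor:CSSLRC} applies with these two parity checks and yields $I_i = B_a = \{ar, \ldots, ar+r-1\}$.

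For (c), each row added during sampling is explicitly drawn outside the current row span of the matrix being augmented, so each sampling step raises the rank by exactly one; combined with the linear independence of the initial $n/r$ rows (which follows from their disjoint supports), this gives $\mathrm{rank}(H_X) = \mathrm{rank}(H_Z) = n/r + \ell$, and the standard CSS dimension formula yields the claim. The only non-cosmetic point, and really the one mild obstacle worth flagging, is that the sampling sets $\text{row-span}(H_Z)^\perp \setminus \text{row-span}(H_X)$ and its symmetric counterpart must remain nonempty throughout so that every draw is valid. At the tightest step this reduces to the dimension inequality $n/r + \ell < n - n/r - \ell$, i.e.~$\ell < n/2 - n/r$, which is precisely the hypothesis encoded in $\ell \in [n/2 - n/r]$.
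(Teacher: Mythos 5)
Your proposal is correct: the paper gives no proof of this lemma, declaring it immediate from Definition~\ref{def:random}, and your verification of the row orthogonality, the locality via Corollary~\ref{cor:CSSLRC}, the rank count for the dimension, and the nonemptiness of the sampling sets (using $\ell\in[n/2-n/r]$) is exactly the intended argument. The one point worth flagging, which the paper also elides, is that the entry $-(r-1)$ in the initial rows of $H_Z$ must be nonzero in $\bF_q$ for the position $ar$ of each block to lie in the support of a $Z$-type check, i.e.~the construction implicitly assumes $\operatorname{char}(\bF_q)\nmid(r-1)$.
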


We now analyze this distance of these random qLRCs. We will need the following definition.

\begin{definition}
  The \textbf{$q$-ary entropy function $H_q:[0,1]\rightarrow[0,1]$} is defined by
  \begin{equation*}
    H_q(x) = x\log_q(q-1) - x\log_q(x) - (1-x)\log_q(1-x).
  \end{equation*}
\end{definition}

The following proposition bounds the distance of random qLRCs. The proof is similar to that of the well-known Gilbert-Varshamov (GV) bound for truly random codes.

\begin{proposition}
  \label{prop:random}
  Given any $\delta,\epsilon>0$, the distance $d(\cC)$ of a random qLRC $\cC$ from Definition~\ref{def:random} with parameters $n$, $r$, and $\ell\geq(H_q(\delta)+\epsilon)n$ over the alphabet $\bF_q$ satisfies
  \begin{equation}
    \label{eq:randomentbound}
    \Pr[d(\cC)\geq\delta n] > 1-2q^{-\epsilon n}.
  \end{equation}
  In particular, if $q\geq 2^{2/\epsilon}$, then it holds for all $\ell\geq(\delta+\epsilon)n$ that
  \begin{equation}
    \label{eq:randomnoentbound}
    \Pr[d(\cC)\geq\delta n] > 1-2q^{-\epsilon n/2}.
  \end{equation}
\end{proposition}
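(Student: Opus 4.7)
The plan is a Gilbert-Varshamov-style union bound, combined with a careful analysis of the sequential sampling process in Definition~\ref{def:random}. Recall that for a CSS code $\cC=\CSS(C_X,C_Z)$, any nonzero vector of weight less than $d(\cC)$ must lie outside $(C_X\setminus C_Z^\perp)\cup(C_Z\setminus C_X^\perp)$. Thus
\[
\Pr[d(\cC)<\delta n] \;\leq\; \sum_{\substack{v\in\bF_q^n\\ 0<|v|<\delta n}}\Bigl(\Pr[v\in C_X\setminus C_Z^\perp]+\Pr[v\in C_Z\setminus C_X^\perp]\Bigr),
\]
and it suffices to control each summand.

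I will show that for every fixed nonzero $v$, each probability above is at most $q^{-\ell}$. Consider the $X$-side; the $Z$-side is analogous. Write $U_X,U_Z$ for the initial row-spans of $H_X,H_Z$, and write $V_X^{(j)}$ for the row-span of $H_X$ after $j$ random rows have been added in step~\ref{it:makeHX}. If $v\not\perp U_X$ then $v\notin C_X$ deterministically; if $v\in U_X^\perp\cap U_Z$ then $v\in V_Z=C_Z^\perp$ deterministically, so $v$ is never in $C_X\setminus C_Z^\perp$. Otherwise $v\in U_X^\perp\setminus U_Z$, and the $(j{+}1)$-st random row is uniform on $S_j=U_Z^\perp\setminus V_X^{(j)}$. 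Conditioning inductively on $v\perp V_X^{(j)}$, and noting $v\notin U_Z$ forces $\{h\in U_Z^\perp:h\cdot v=0\}$ to be a proper hyperplane of $U_Z^\perp$ containing $V_X^{(j)}$, the conditional probability that this random row is orthogonal to $v$ is
\[
\frac{q^{n-n/r-1}-q^{n/r+j}}{q^{n-n/r}-q^{n/r+j}} \;\leq\; \frac{q^{n-n/r-1}}{q^{n-n/r}} \;=\; \frac{1}{q},
\]
using the elementary inequality $(a-c)/(b-c)\leq a/b$ whenever $0\leq c\leq a\leq b$. Multiplying over $j=0,\ldots,\ell-1$ gives $\Pr[v\in C_X]\leq q^{-\ell}$. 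The $Z$-side follows by first conditioning on $V_X$: if $v\in V_X$ then $v\in C_X^\perp$, so $v\notin C_Z\setminus C_X^\perp$; otherwise the same telescoping argument applied to step~\ref{it:makeHZ}, with $V_X^\perp$ in place of $U_Z^\perp$, yields $\Pr[v\in C_Z\mid V_X]\leq q^{-\ell}$.

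Combining with the standard $q$-ary entropy bound $\sum_{1\leq w<\delta n}\binom{n}{w}(q-1)^w\leq q^{H_q(\delta)n}$, the union bound gives
\[
\Pr[d(\cC)<\delta n]\;\leq\; 2\,q^{H_q(\delta)n-\ell},
\]
which is at most $2q^{-\epsilon n}$ whenever $\ell\geq(H_q(\delta)+\epsilon)n$, establishing~(\ref{eq:randomentbound}). For~(\ref{eq:randomnoentbound}), I will use the estimate $H_q(\delta)=\delta\log_q(q-1)-\delta\log_q\delta-(1-\delta)\log_q(1-\delta)\leq \delta+1/\log_2 q$, so that $q\geq 2^{2/\epsilon}$ forces $H_q(\delta)\leq\delta+\epsilon/2$. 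The hypothesis $\ell\geq(\delta+\epsilon)n$ then yields $\ell\geq(H_q(\delta)+\epsilon/2)n$, and applying the first bound with $\epsilon/2$ in place of $\epsilon$ gives~(\ref{eq:randomnoentbound}).

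The main obstacle is the bookkeeping around the non-uniform, sequential sampling: each step samples from $U_Z^\perp$ (or $V_X^\perp$) minus the current row-span, and the row-span itself is built up as we sample. The saving observation is that the excluded subspaces $V_X^{(j)}$ and $V_Z^{(j)}$ are always contained in $v^\perp$ under the relevant conditioning, so removing them can only tighten the orthogonality ratio in our favor, which is exactly why the clean bound $1/q$ per step survives.
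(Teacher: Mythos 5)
Your proof is correct. The core strategy — union bound over low-weight vectors, per-vector probability bounded by telescoping the per-step survival probability to $q^{-\ell}$, then the $q$-ary entropy bound on Hamming-ball volume — is exactly the paper's. The one place you genuinely diverge is in how you handle the two sides of the CSS code. You analyze step~\ref{it:makeHX} and step~\ref{it:makeHZ} each directly, keeping explicit track of the dimensions ($\dim U_Z^\perp = n - n/r$, $\dim V_X^{(j)} = n/r + j$, etc.) and invoking the elementary inequality $(a-c)/(b-c)\leq a/b$. The paper instead analyzes only the $C_Z$ side and then observes that swapping the order of steps~\ref{it:makeHX} and~\ref{it:makeHZ} yields the same distribution over pairs $(H_X, H_Z)$ — a uniform distribution over pairs with orthogonal row-spans extending the fixed initial rows — so the $C_X$ bound follows by symmetry rather than by re-deriving. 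Your hands-on route avoids having to justify that symmetry claim, at the cost of carrying around the slightly different conditioning ($v\in U_X^\perp\setminus U_Z$ for the $X$-side, $v\notin V_X$ given $V_X$ for the $Z$-side) and verifying the dimension inequalities $n/r+j \leq n-n/r-1$ and $n/r+j\leq n-n/r-\ell-1$ that make $(a-c)/(b-c)\leq a/b$ applicable; you do note the constraint $\ell\leq n/2-n/r$ is what makes these hold, which is the same constraint that makes the paper's sampling well defined. Both are sound; the paper's symmetry argument is shorter once accepted, your direct count is more self-contained.
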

\begin{proof}
  We first prove~(\ref{eq:randomentbound}). Consider any fixed nonzero $y\in\bF_q^n$. We will first compute $\Pr[y\in C_Z\setminus C_X^\perp]$, so we can then union bound over all low-weight $y$. After step~\ref{it:makeHX} in the sampling procedure in Definition~\ref{def:random}, then $H_X$ is fully constructed and $C_X^\perp=\text{row-span}(H_X)$. If $y\in C_X^\perp$ then $y\notin C_Z\setminus C_X^\perp$, so we will condition on $y\notin C_X^\perp$. For $0\leq i\leq\ell$, let $H_Z^{(i)}$ denote the matrix $H_Z$ after the $i$th iteration of step~\ref{it:makeHZ} of the sampling procedure.

  Now if $y\in C_Z\setminus C_X^\perp$, then for every $0\leq i\leq\ell-1$, it must hold that $y\in\ker H_Z^{(i)}$, and that $y\in\ker H_Z^{(i+1)}$, where $H_Z^{(i+1)}$ is obtained from $H_Z^{(i)}$ by adding a row given by a uniformly random $h\in C_X\setminus\text{row-span}(H_Z^{(i)})$. Because $y\notin C_X^\perp$ by assumption, exactly $1/q$-fraction of the vectors in $C_X$ are orthogonal to $y$. Conditioning on the event that $y\in\ker H_Z^{(i)}$, then all vectors in $\text{row-span}(H_Z^{(i)})$ are orthogonal to $y$. Thus less than $1/q$-fraction of all $h\in C_X\setminus\text{row-span}(H_Z^{(i)})$ are orthogonal to $y$, that is,
  \begin{equation*}
    \Pr[y\in\ker H_Z^{(i+1)}\setminus C_X^\perp \mid y\in\ker H_Z^{(i)}\setminus C_X^\perp] = \Pr_{h\sim\Unif(C_X\setminus\text{row-span}(H_Z^{(i)}))}[y\cdot h=0] < \frac{1}{q}.
  \end{equation*}
  Therefore
  \begin{align*}
    \Pr[y\in C_Z\setminus C_X^\perp]
    &= \Pr[y\notin C_X^\perp]\cdot\Pr[y\in\ker H_Z^{(0)}|y\notin C_X^\perp] \\
    &\hspace{2em}\cdot\prod_{i=0}^{\ell-1}\Pr[y\in\ker H_Z^{(i+1)}\setminus C_X^\perp|y\in\ker H_Z^{(i)}\setminus C_X^\perp] \\
    &< \frac{1}{q^\ell}.
  \end{align*}

  Union bounding over all $y\in\bF_q^n$ of Hamming weight $|y|\leq\delta n$ gives
  \begin{align*}
    \Pr[\exists y\in C_Z\setminus C_X^\perp\text{ s.t. }|y|\leq\delta n]
    &< |\{y\in\bF_q^n:|y|\leq\delta n\}|\cdot\frac{q}{q^\ell} \\
    &\leq q^{H_q(\delta)n-\ell} \\
    &= q^{-\epsilon n},
  \end{align*}
  where the second inequality above follows from the well-known fact that a radius-$\delta n$ Hamming ball in $\bF_q^n$ has volume $\leq q^{H_q(\delta)n}$ (see for instance Proposition~3.3.3 in \cite{guruswami_essential_2022}), and the final equality follows from the assumption that $\ell\geq(H_q(\delta)+\epsilon)n$.

  Now observe that the final distribution over codes $\cC=\CSS(C_X,C_Z)$ remains the same if we swap the order of steps~\ref{it:makeHX},\ref{it:makeHZ} in the sampling procedure in Definition~\ref{def:random}, as in both cases by symmetry we obtain a uniform distribution over pairs of matrices $H_X,H_Z\in\bF_q^{(n/r+\ell)\times n}$ satisfying $H_XH_Z^\top=0$ such that the first $n/r$ rows of each matrix are as given in Definition~\ref{def:random}. Thus swapping $C_X$ and $C_Z$ in the above argument gives that
  \begin{equation*}
    \Pr[\exists y\in C_X\setminus C_Z^\perp\text{ s.t. }|y|\leq\delta n] < q^{-\epsilon n}.
  \end{equation*}
  Then because $d(\cC)=\min_{y\in(C_Z\setminus C_X^\perp)\cup(C_X\setminus C_Z^\perp)}|y|$, we conclude that
  \begin{equation*}
    \Pr[d(\cC)\leq\delta n] \leq 2q^{-\epsilon n}.
  \end{equation*}

  It remains the prove the second claim in the proposition statement. That is, assuming that $q\geq 2^{2/\epsilon}$ and $\ell\geq(\delta+\epsilon)n$, we want to prove that~(\ref{eq:randomnoentbound}) holds. But by definition $H_q(\delta)\leq\delta+1/\log_2q$, so for $q\geq 2^{2/\epsilon}$ then $H_q(\delta)\leq\delta+\epsilon/2$, which means that $\ell\geq(\delta+\epsilon)n\geq(H_q(\delta)+\epsilon/2)n$. Then applying~(\ref{eq:randomentbound}) with $\epsilon/2$ replacing $\epsilon$ gives
  \begin{equation*}
    \Pr[d(\cC)\geq\delta n] > 1-2q^{-\epsilon n/2},
  \end{equation*}
  as desired.
\end{proof}

Letting the alphabet size $q\rightarrow\infty$, we immediately obtain the following corollary.

\begin{corollary}
  For every fixed $\delta>0$ and for every alphabet size $q$, there exists $\epsilon=\epsilon(q)>0$ with $\epsilon(q)\rightarrow 0$ as $q\rightarrow\infty$ such that a random qLRC $\cC$ from Definition~\ref{def:random} with parameters $n$, $r$, $\ell=\lceil(H_q(\delta)+\epsilon/4)n\rceil$ has dimension
  \begin{equation*}
    k \geq \left(1-\frac{2}{r}\right)n-2d-\epsilon n,
  \end{equation*}
  and with high probability has distance $d=\delta n$.
\end{corollary}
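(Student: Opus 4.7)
The plan is to combine Proposition~\ref{prop:random} with the dimension formula from Lemma~\ref{lem:randomrate}, exploiting the elementary fact that $H_q(\delta)\to\delta$ as $q\to\infty$ for fixed $\delta\in(0,1)$. The first step is to verify this asymptotic: writing
\begin{equation*}
  H_q(\delta)-\delta \;=\; \delta\bigl(\log_q(q-1)-1\bigr) \;-\; \delta\log_q\delta \;-\; (1-\delta)\log_q(1-\delta),
\end{equation*}
each of the three summands tends to $0$ as $q\to\infty$, the first because $\log_q(q-1)\to 1$ and the last two because any fixed quantity in $\log_q$ is $\Theta(1/\log q)$.

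Next I would define $\epsilon(q):=4\bigl(H_q(\delta)-\delta\bigr)+1/\log_2 q$, or any comparable choice. This is strictly positive (for $q$ large the binary-entropy correction dominates the negative $-\delta\log_q(q/(q-1))$ piece and $H_q(\delta)\geq\delta$; for small $q$ the additive slack $1/\log_2 q$ keeps the quantity positive), and by construction $\epsilon(q)\to 0$ as $q\to\infty$.

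With $\ell=\lceil(H_q(\delta)+\epsilon(q)/4)n\rceil$, Proposition~\ref{prop:random} applied with its ``$\epsilon$'' parameter set to $\epsilon(q)/4$ gives $\Pr[d(\cC)\geq\delta n]>1-2q^{-\epsilon(q)n/4}$, which tends to $1$ as $n\to\infty$; this is the high-probability distance assertion with $d=\delta n$. For the dimension, Lemma~\ref{lem:randomrate} gives $k=(1-2/r)n-2\ell$, and substituting $\ell\leq(H_q(\delta)+\epsilon(q)/4)n+1$ followed by the inequality $2(H_q(\delta)-\delta)+\epsilon(q)/2\leq\epsilon(q)$ (which holds by construction, with $1/\log_2 q$ worth of slack) delivers $k\geq(1-2/r)n-2d-\epsilon(q)n-O(1)$. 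The $O(1)$ additive term is absorbed by taking $n$ sufficiently large, matching the claimed bound.

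There is no real obstacle: the corollary is a direct packaging of Proposition~\ref{prop:random} in the large-alphabet regime. The only small subtlety is keeping $\epsilon(q)$ strictly positive while enforcing $\epsilon(q)\to 0$, which is why I introduce the vanishing slack term $1/\log_2 q$ on top of $4(H_q(\delta)-\delta)$, and it is this same slack that absorbs the $+1$ coming from the ceiling in the definition of $\ell$.
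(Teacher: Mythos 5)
Your proposal is correct and matches the paper's (implicit) argument: the paper presents this corollary as an immediate consequence of Lemma~\ref{lem:randomrate} and Proposition~\ref{prop:random} in the limit $q\rightarrow\infty$, which is exactly the packaging you carry out. The only cosmetic caveat is that for very small $q$ and small $\delta$ the quantity $4\bigl(H_q(\delta)-\delta\bigr)+1/\log_2 q$ can be negative, but replacing $4\bigl(H_q(\delta)-\delta\bigr)$ by $\max\{4(H_q(\delta)-\delta),0\}$ (a ``comparable choice'' you already allow for) fixes this without affecting either the distance or the dimension bound.
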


Thus for large $q$, the dimension of these random qLRCs is within $2d/r$ of the Singleton-like upper bound in Theorem~\ref{thm:singletonpart}.

\begin{remark}
  For simplicity and consistency with the rest of the paper, we have chosen to consider random CSS qLRCs. By not restricting to CSS codes and instead taking random stabilizer qLRCs, one can obtain a slightly better dependence on alphabet size. That is, a smaller alphabet size $q$ would be needed to achieve any given rate-distance tradeoff dictated by Lemma~\ref{lem:randomrate} and Proposition~\ref{prop:random}. However, in the limit of large $q$, which is our primary concern, the rate-distance tradeoff is the same whether or not we restrict to CSS codes. A similar phenomenon occurs with ordinary random quantum codes (with no locality constraints), and is the reason that random stabilizer codes (i.e.~the quantum GV bound) have better distance than random CSS codes over small alphabets.
\end{remark}

\subsection{Explicit qLRCs from AEL Distance Amplification}
\label{sec:ael}
In this section, we show how to construct explicit qLRCs using Alon-Edmunds-Luby (AEL) distance amplification and alphabet reduction \cite{alon_linear_1995}. This distance amplification technique has seen extensive use in classical coding theory \cite{guruswami_expander-based_2001,guruswami_near-optimal_2002,guruswami_linear_2003,guruswami_explicit_2008,hemenway_linear-time_2018,kopparty_high-rate_2016,gopi_locally_2018,hemenway_local_2020}, but has only recently been studied in the quantum setting \cite{bergamaschi_approaching_2022,wills_tradeoff_2023}. Here we apply the AEL technique with an inner code given by a random qLRC from Section~\ref{sec:random}.

\subsubsection{Review of AEL Technique}
\label{sec:AELreview}
Here review the AEL distance amplification and alphabet reduction technique of \cite{alon_linear_1995}. We will subsequently specify how we parameterize it to obtain qLRCs. The construction will use the following notion of an expander.

\begin{definition}
  A $\Delta$-regular bipartite graph $G=(L,R,E)$ with $|L|=|R|=n$ is \textbf{$\lambda$-pseudorandom} if it holds for every $S\subseteq L$ and $T\subseteq R$ that
  \begin{equation*}
    \left||E(S,T)|-\frac{\Delta|S||T|}{n}\right| \leq \lambda\Delta\sqrt{|S||T|}.
  \end{equation*}
\end{definition}

Recall that a \textit{$\lambda$-spectral expander} is a graph for which the second largest absolute value of an eigenvalue of the random walk matrix is $\leq\lambda$. By the well-known expander mixing lemma (see for instance Leamm~4.15 in \cite{vadhan_pseudorandomness_2012}), the double cover of a $\lambda$-spectral expander is $\lambda$-pseudorandom, so the explicit near-Ramanujan graphs of \cite{mohanty_explicit_2021} prove the following.

\begin{proposition}[\cite{mohanty_explicit_2021}]
  \label{prop:expanders}
  For every integer $\Delta\geq 3$, there exists an infinite explicit family of $\Delta$-regular $\lambda$-pseudorandom graphs for $\lambda=2/\sqrt{\Delta}$.
\end{proposition}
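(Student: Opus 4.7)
The plan is to obtain the proposition as a packaging of two cited facts: the explicit near-Ramanujan construction and the standard passage from spectral expansion to bipartite pseudorandomness via the double cover. Concretely, the main non-trivial input is invoked as a black box, and the rest is routine.

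First, I would invoke the main theorem of \cite{mohanty_explicit_2021}, which produces, for every fixed integer $\Delta \geq 3$ and every constant $\epsilon > 0$, an infinite explicit family of $\Delta$-regular graphs $H$ on $n$ vertices whose adjacency matrix has second-largest absolute eigenvalue at most $2\sqrt{\Delta-1} + \epsilon$. Dividing by $\Delta$, the normalized random-walk matrix of $H$ has second-largest absolute eigenvalue at most $2\sqrt{\Delta-1}/\Delta + \epsilon/\Delta$. A one-line algebraic check ($4(\Delta-1)/\Delta^2 \leq 4/\Delta$) shows $2\sqrt{\Delta-1}/\Delta < 2/\sqrt{\Delta}$ strictly for every $\Delta \geq 3$, so choosing $\epsilon$ small enough (depending only on $\Delta$) guarantees that $H$ is a $\lambda$-spectral expander for $\lambda \leq 2/\sqrt{\Delta}$.

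Next, I would form the bipartite double cover $G = (L, R, E)$ of $H$: take $L$ and $R$ to be two disjoint copies of the vertex set of $H$, and for each edge $\{u, v\}$ of $H$ place an edge between the $L$-copy of $u$ and the $R$-copy of $v$ (and symmetrically). Then $G$ is $\Delta$-regular and bipartite with $|L| = |R| = n$, and its construction is clearly explicit given that of $H$. A standard computation shows that the singular values of the bipartite biadjacency matrix of $G$ equal the absolute values of the eigenvalues of the adjacency matrix of $H$; in particular, the normalized second singular value of $G$'s biadjacency matrix is at most $\lambda$.

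Finally, I would apply the bipartite version of the expander mixing lemma (Lemma 4.15 in \cite{vadhan_pseudorandomness_2012}), which says that for any $\Delta$-regular bipartite graph whose normalized biadjacency matrix has second singular value at most $\lambda$,
\[
\left| |E(S, T)| - \frac{\Delta |S| |T|}{n} \right| \leq \lambda \Delta \sqrt{|S| |T|}
\]
for all $S \subseteq L$ and $T \subseteq R$. Combined with the bound $\lambda \leq 2/\sqrt{\Delta}$ from the first step, this is exactly the $\lambda$-pseudorandomness condition. There is no real obstacle here; the entire argument is essentially a restatement of the discussion preceding the proposition, with the near-Ramanujan input from \cite{mohanty_explicit_2021} doing all the heavy lifting.
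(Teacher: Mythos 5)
Your proposal is correct and follows exactly the route the paper sketches in the sentence preceding the proposition: take the explicit near-Ramanujan graphs of Mohanty--O'Donnell--Paredes (second adjacency eigenvalue $\leq 2\sqrt{\Delta-1}+\epsilon$, hence normalized second eigenvalue $<2/\sqrt{\Delta}$ for small enough $\epsilon$), form the bipartite double cover, and apply the expander mixing lemma (Lemma 4.15 of Vadhan). The one-line comparison $4(\Delta-1)/\Delta^2 < 4/\Delta$ correctly certifies the needed slack, and the rest is routine; nothing is missing.
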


We now describe the AEL construction. Let $\cC_{\text{out}}$ and $\cC_{\text{in}}$ be $[[n_{\text{out}},k_{\text{out}}=R_{\text{out}}n_{\text{out}},d_{\text{out}}=\delta_{\text{out}}n_{\text{out}}]]_{q_\text{out}}$ and $[[n_{\text{in}},k_{\text{in}}=R_{\text{in}}n_{\text{in}},d_{\text{in}}=\delta_{\text{in}}n_{\text{in}}]]_{q_\text{in}}$ quantum codes respectively such that $q_{\text{out}}=q_{\text{in}}^{k_{\text{in}}}$. Let $\cC_\diamond$ denote the $[[n_{\text{out}}n_{\text{in}},k_{\text{out}},k_{\text{in}}]]_{q_{\text{in}}}$ concatenated code, so that its encoder
\begin{equation*}
  \Enc_\diamond = \Enc_{\text{in}}^{\otimes n_{\text{out}}}\circ\Enc_{\text{out}}
\end{equation*}
first applies the encoder for $\cC_{\text{out}}$, and then applies the encoder for $\cC_{\text{in}}$ to each symbol of the resulting codeword of $\cC_{\text{out}}$, where we use the assumption that $q_{\text{out}}=q_{\text{in}}^{k_{\text{in}}}$ to view each such symbol as a message for $\cC_{\text{in}}$.

Now for some $\Delta|n_{\text{in}}$, we partition each inner code block of size $n_{\text{in}}$ into $n_{\text{in}}/\Delta$ subsets each of size $\Delta$, and we fold together the $\Delta$ components in each subset to obtain a folded concatenated code $\tilde{\cC}_\diamond$, which is a $[[n=n_{\text{out}}n_{\text{in}}/\Delta,k=k_{\text{out}}k_{\text{in}}/\Delta]]_q$ code of alphabet size $q=q_{\text{in}}^\Delta$.

To construct our final $[[n,k]]_q$ code $\cC$, we permute the symbols of $\tilde{\cC}_\diamond$ according to a $\Delta$-regular $\lambda$-pseudorandom bipartite graph $G=(L,R,E)$ with $|L|=|R|=\tilde{n}_\diamond$. Formally, let $\pi_G:[n]\times[\Delta]\rightarrow[n]\times[\Delta]$ be the permutation that maps $(i,j)\in[n]\times[\Delta]$ to the unique $(i',j')\in[n]\times[\Delta]$ such that the $j$th edge incident to vertex $i\in L$ equals the $j'$th edge incident to vertex $i'\in R$. Note that here we have fixed some ordering of the edges incident to each vertex. We then let $\cC$ have encoder
\begin{equation*}
  \Enc = \pi_G\circ\Enc_\diamond
\end{equation*}
that simply encodes the message into $\cC_\diamond$, and then permutes the symbols according to $\pi_G$.  That is, for $j\in[\Delta]$, the $j$th symbol of component $i\in[n]$ in $\tilde{\cC}_\diamond$ is mapped to the $j'$th symbol of component $i'$ in $\cC$ for $(i',j')=\pi_G(i,j)$. Thus for the composition here we view $\pi_G$ as an invertible map $\mathbb{F}_q^{n\times\Delta} \to \mathbb{F}_q^{n\times\Delta}$ in the natural way.

The following lemma is immediate from the definition of the code $\cC$

\begin{lemma}
  \label{lem:AELrate}
  $\cC$ has rate $R=R_{\text{out}}R_{\text{in}}$.
\end{lemma}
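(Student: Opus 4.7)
The plan is to track how the dimension and block length evolve through each step of the AEL construction (concatenation, folding, permutation), and then read off the rate at the end. All three steps are essentially bookkeeping, so I expect no real obstacle; the only care needed is in properly converting dimensions when the alphabet changes.

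First I would compute the parameters of the concatenated code $\cC_\diamond$. Since $q_{\text{out}} = q_{\text{in}}^{k_{\text{in}}}$, each message symbol for $\cC_{\text{out}}$ (an element of an alphabet of size $q_{\text{out}}$) can be viewed as $k_{\text{in}}$ symbols over $\bF_{q_{\text{in}}}$, giving a total message length of $k_{\text{out}} k_{\text{in}}$ symbols over $\bF_{q_{\text{in}}}$. The encoder $\Enc_\diamond = \Enc_{\text{in}}^{\otimes n_{\text{out}}} \circ \Enc_{\text{out}}$ first produces a codeword of $\cC_{\text{out}}$ (of length $n_{\text{out}}$ over $q_{\text{out}}$) and then replaces each of its symbols by a codeword of $\cC_{\text{in}}$ of length $n_{\text{in}}$ over $q_{\text{in}}$, so $\cC_\diamond$ has block length $n_{\text{out}} n_{\text{in}}$ and dimension $k_{\text{out}} k_{\text{in}}$ over alphabet $q_{\text{in}}$. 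In particular, the rate of $\cC_\diamond$ over $q_{\text{in}}$ is exactly
\begin{equation*}
  \frac{k_{\text{out}} k_{\text{in}}}{n_{\text{out}} n_{\text{in}}} = R_{\text{out}} R_{\text{in}}.
\end{equation*}

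Next I would observe that folding together $\Delta$ consecutive symbols into one larger symbol is an invertible reblocking: the block length drops by a factor of $\Delta$ and the alphabet grows from $q_{\text{in}}$ to $q = q_{\text{in}}^\Delta$, while the total information content (number of $q_{\text{in}}$-symbols needed to specify a codeword) is preserved. Hence $\tilde{\cC}_\diamond$ is a code of block length $n = n_{\text{out}} n_{\text{in}}/\Delta$ and dimension $k = k_{\text{out}} k_{\text{in}}/\Delta$ over alphabet $q$, with rate $k/n = R_{\text{out}} R_{\text{in}}$. Finally, the permutation $\pi_G$ is a $q$-symbol-wise bijection applied to codewords, so composing with it preserves both block length and dimension; therefore $\cC$ inherits the rate $R_{\text{out}} R_{\text{in}}$ from $\tilde{\cC}_\diamond$, as claimed.
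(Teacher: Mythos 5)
Your proposal is correct and matches the paper, which states this lemma as immediate from the construction (concatenation gives dimension $k_{\text{out}}k_{\text{in}}$ and length $n_{\text{out}}n_{\text{in}}$ over $\bF_{q_{\text{in}}}$, and folding and permuting preserve the rate). You have simply spelled out the bookkeeping that the paper leaves implicit.
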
.

The distance of $\cC$ is given by the following proposition. The statement and proof of this AEL distance bound follows from \cite{alon_linear_1995}, and was previously shown in the quantum setting by \cite{bergamaschi_approaching_2022}; we repeat the proof in Appendix~\ref{app:omitproofs} for completeness in our specific setting. Below, we use the definitions and notation from the AEL construction above. We also say a code has \textit{decoding radius $a$} if it has a (possibly inefficient) decoding algorithm that corrects against errors in up to $a$ unknown locations.

\begin{proposition}
  \label{prop:ael}
  If $\cC_{\text{out}}$ and $\cC_{\text{in}}$ have respective decoding radii $\alpha_{\text{out}}n_{\text{out}}$ and $\alpha_{\text{in}}n_{\text{in}}$, then $\cC$ has decoding radius at least $\alpha n$ for $\alpha=\alpha_{\text{in}}-\lambda\sqrt{\alpha_{\text{in}}/\alpha_{\text{out}}}$. Furthermore, if the decoding algorithms for $\cC_{\text{out}}$ and $\cC_{\text{in}}$ run in time $\poly(n\log q)$, so does the decoding algorithm for $\cC$.
\end{proposition}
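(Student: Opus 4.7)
The plan is to decode $\cC$ by first inverting the permutation $\pi_G$, then applying the inner decoder $\Dec_{\text{in}}$ to each of the $n_{\text{out}}$ outer blocks of the resulting corrupted codeword of $\cC_\diamond$, and finally applying the outer decoder $\Dec_{\text{out}}$. The heart of the argument is a pseudorandomness calculation bounding the number of outer blocks on which $\Dec_{\text{in}}$ may fail. Let $T\subseteq R$, with $|T|=\alpha n$, denote the set of corrupted $\cC$-components. After undoing $\pi_G$, each edge incident to $T$ carries a corrupted symbol of $\cC_\diamond$ to its left endpoint, so the number of errors in outer block $a$ equals $|e(S_a,T)|$, where $S_a\subseteq L$ is the set of $n_{\text{in}}/\Delta$ left vertices whose $\Delta$-blocks lie inside outer block $a$.

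Call outer block $a$ \emph{bad} if $|e(S_a,T)|>\alpha_{\text{in}} n_{\text{in}}$, so that $\Dec_{\text{in}}$ may output an incorrect message on it; let $B\subseteq[n_{\text{out}}]$ be the set of bad blocks, and set $S=\bigcup_{a\in B}S_a$, so $|S|=|B|n_{\text{in}}/\Delta$ and $|e(S,T)|>|B|\alpha_{\text{in}} n_{\text{in}}$. The $\lambda$-pseudorandomness of $G$ gives
\begin{equation*}
  |B|\alpha_{\text{in}} n_{\text{in}} < \frac{\Delta|S||T|}{n}+\lambda\Delta\sqrt{|S||T|},
\end{equation*}
which upon substituting $|S|,|T|$ and using $n=n_{\text{out}}n_{\text{in}}/\Delta$ simplifies (after dividing by $n_{\text{in}}$) to $|B|\alpha_{\text{in}}<|B|\alpha+\lambda\sqrt{|B|\alpha n_{\text{out}}}$. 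Rearranging and squaring gives $|B|<\lambda^2\alpha n_{\text{out}}/(\alpha_{\text{in}}-\alpha)^2$, and plugging in $\alpha_{\text{in}}-\alpha=\lambda\sqrt{\alpha_{\text{in}}/\alpha_{\text{out}}}$ together with $\alpha\leq\alpha_{\text{in}}$ yields $|B|\leq\alpha_{\text{out}}n_{\text{out}}$. Thus the string (or quantum state) obtained after applying $\Dec_{\text{in}}^{\otimes n_{\text{out}}}$ differs from a codeword of $\cC_{\text{out}}$ in at most $\alpha_{\text{out}}n_{\text{out}}$ components, so $\Dec_{\text{out}}$ recovers the original message.

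I expect the main obstacle to be the quantum subtlety of treating the output of $\Dec_{\text{in}}$ on bad blocks as an error for the outer decoder. Classically this is immediate, but quantumly one must argue that since $\Dec_{\text{in}}$ is applied block-wise, it factorizes across outer-code positions, so that the composition $\Dec_{\text{in}}^{\otimes n_{\text{out}}}$ acts on the global corrupted state as a channel supported on at most $|B|\leq\alpha_{\text{out}}n_{\text{out}}$ outer components; the decoding-radius hypothesis on $\cC_{\text{out}}$ (covering arbitrary channel errors in unknown locations) then suffices. The running-time claim is immediate since $\pi_G$ and its inverse can be computed in $\poly(n\log q)$ time and both $\Dec_{\text{out}}$ and $\Dec_{\text{in}}$ run in polynomial time.
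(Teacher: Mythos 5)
Your proposal is correct and follows essentially the same route as the paper: the same three-stage decoder (unpermute, inner-decode blockwise, outer-decode), with the $\lambda$-pseudorandomness of $G$ used to bound the number of ``overloaded'' inner blocks by $\alpha_{\text{out}}n_{\text{out}}$; the paper phrases this as a proof by contradiction while you bound $|B|$ directly, but the underlying inequality is identical. Your remark about the blockwise factorization of $\Dec_{\text{in}}$ in the quantum setting is a fair point that the paper's proof sketch leaves implicit.
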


Because a code of distance $d$ may be (possibly inefficiently) decoded from $e$ errors for every $e<d/2$, and a code with decoding radius $e$ has distance $>2e$, the following corollary is immediate.

\begin{corollary}
  \label{cor:aeldis}
  If $\cC_{\text{out}}$ and $\cC_{\text{in}}$ have respective distances $\delta_{\text{out}}n_{\text{out}}$ and $\delta_{\text{in}}n_{\text{in}}$, then $\cC$ has distance at least $\delta n$ for $\delta=\delta_{\text{in}}-2\lambda\sqrt{\delta_{\text{in}}/\delta_{\text{out}}}$.
\end{corollary}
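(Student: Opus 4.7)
The plan is to derive Corollary~\ref{cor:aeldis} as an essentially immediate consequence of Proposition~\ref{prop:ael} via the standard equivalence between quantum distance and (information-theoretic) decoding radius. Recall that a quantum code of distance $d$ admits a (possibly inefficient) decoder against any $e$ errors in unknown locations whenever $e < d/2$, and conversely a code supporting such a decoder at radius $e$ must have distance strictly greater than $2e$. These two facts let me pass freely between the two parameters up to a factor of two.

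Concretely, I would fix an arbitrary $\epsilon > 0$ and set
\[
\alpha_{\text{out}} = \frac{\delta_{\text{out}}}{2} - \epsilon, \qquad \alpha_{\text{in}} = \frac{\delta_{\text{in}}}{2} - \epsilon.
\]
By the first direction above, $\cC_{\text{out}}$ and $\cC_{\text{in}}$ have decoding radii at least $\alpha_{\text{out}} n_{\text{out}}$ and $\alpha_{\text{in}} n_{\text{in}}$ respectively. Plugging these radii into Proposition~\ref{prop:ael} shows that $\cC$ has decoding radius at least $\alpha n$ for
\[
\alpha = \alpha_{\text{in}} - \lambda\sqrt{\alpha_{\text{in}}/\alpha_{\text{out}}}.
\]
Applying the second direction of the equivalence, the distance of $\cC$ is strictly greater than $2\alpha n$. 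Since this holds for every $\epsilon > 0$ and the distance is a positive integer, I can take $\epsilon \to 0^+$ and conclude that the distance of $\cC$ is at least
\[
\left(\delta_{\text{in}} - 2\lambda\sqrt{\delta_{\text{in}}/\delta_{\text{out}}}\right) n = \delta n,
\]
as claimed.

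There is essentially no technical obstacle, consistent with the text's claim that the corollary is ``immediate'': all the real work has been done in Proposition~\ref{prop:ael}. The only mild care required is in managing the strict versus weak inequalities that arise from the factor-of-two conversion between distance and decoding radius in both directions. This is handled transparently by the $\epsilon$-limit argument above; alternatively, one could carry small additive error terms through the computation and absorb them into lower-order slack, at the cost of a less clean closed-form bound.
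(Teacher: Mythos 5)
Your proposal is correct and takes essentially the same approach the paper describes: convert distance to decoding radius (losing a factor of two), feed the result into Proposition~\ref{prop:ael}, and convert back, with the $\epsilon$-limit cleanly absorbing the strict/weak inequality and integrality issues the paper glosses over in calling the corollary ``immediate.''
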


\subsubsection{Application to qLRCs}
We now apply the AEL described above to qLRCs. In this section we again continue to use the definitions and notation for the AEL construction from Section~\ref{sec:AELreview}.

\begin{lemma}
  \label{lem:AELloc}
  Let the inner code $\cC_{\text{in}}$ be a qLRC of locality $r_{\text{in}}|\Delta$, whose recovery sets $\{I_i:i\in[n_{\text{in}}]\}$ all have size $r_{\text{in}}$ and form a partition of $[n_{\text{in}}]$. The AEL construction then yields a qLRC $\cC$ of locality $\Delta r_{\text{in}}$.
\end{lemma}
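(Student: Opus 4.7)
The plan is to construct, for each $i \in [n]$, a recovery set $I_i \subseteq [n]$ of size $|I_i| \leq \Delta r_{\text{in}}$ and a corresponding recovery channel $\Rec_i$ for the AEL code $\cC$, by carefully specifying the folding step in the AEL construction so that it is aligned with the inner recovery sets, and then composing $\Delta$ inner recovery channels of $\cC_{\text{in}}$ in parallel.

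First, I would specify the folding to be \emph{aligned} with the recovery sets of $\cC_{\text{in}}$: since $r_{\text{in}} \mid \Delta$ and the recovery sets $\{I^{\text{in}}_p : p \in [n_{\text{in}}]\}$ partition $[n_{\text{in}}]$ into blocks of size $r_{\text{in}}$, one may partition each inner code block into folded blocks of size $\Delta$ so that each folded block is the disjoint union of exactly $\Delta/r_{\text{in}}$ complete recovery sets of $\cC_{\text{in}}$. Under this alignment, every inner recovery set lies entirely within a single folded block of $\tilde{\cC}_\diamond$.

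Next, fix any $i \in [n]$. For each sub-component index $j \in [\Delta]$, let $(\tilde{i}_j, \tilde{j}_j) = \pi_G^{-1}(i, j)$ in $\tilde{\cC}_\diamond$, and let $p_j$ be the associated position in the inner code copy containing $\tilde{i}_j$. Let $J_j \subseteq [n_{\text{in}}]$ be the inner recovery set containing $p_j$. By alignment, $J_j$ lies entirely in folded block $\tilde{i}_j$, so the $r_{\text{in}}$ positions of $J_j$ are mapped under $\pi_G$ to $r_{\text{in}}$ sub-components of $\cC$ spread across at most $r_{\text{in}}$ distinct $\cC$-symbols (exactly $r_{\text{in}}$ when $G$ is simple, which holds for the explicit near-Ramanujan graphs in Proposition~\ref{prop:expanders}), one of which is symbol $i$ itself. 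Let $I_i^{(j)}$ denote these $\cC$-symbols and set $I_i = \bigcup_{j \in [\Delta]} I_i^{(j)}$. Then $i \in I_i^{(j)}$ for every $j$, giving $|I_i| \leq 1 + \Delta(r_{\text{in}} - 1) \leq \Delta r_{\text{in}}$.

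To construct $\Rec_i$, the crucial observation is that the inner recovery sets $\{J_j\}_{j \in [\Delta]}$ are pairwise disjoint: assuming $G$ is simple, the $\Delta$ vertices $\tilde{i}_j$ are distinct (distinct edges at $i \in R$ come from distinct $L$-neighbors), and by alignment, recovery sets lying in distinct folded blocks are disjoint. Hence the $\Delta$ inner recovery channels $\Rec^{\text{in}}_{p_j}$ act on pairwise disjoint sub-components of $\cC$, and I define $\Rec_i$ to apply these $\Delta$ inner channels in parallel. Each inner channel reads the $r_{\text{in}} - 1$ relevant sub-components from $\psi_{I_i \setminus \{i\}}$ and outputs the full set of $r_{\text{in}}$ inner-code sub-components, including the recovered sub-component $(i, j)$. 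Combining the $\Delta$ recovered sub-components produces symbol $i$ of $\cC$, and correctness $\Rec_i \otimes I_{[n] \setminus I_i}(\psi_{[n] \setminus \{i\}}) = \psi$ for $\psi \in \cC$ follows from correctness of each inner recovery channel together with the factorization $\Enc_{\cC} = \pi_G \circ (\text{fold}) \circ \Enc_{\text{in}}^{\otimes n_{\text{out}}} \circ \Enc_{\text{out}}$.

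The main technical obstacle is ensuring that the $\Delta$ inner recoveries do not require more than single-erasure recovery within any one copy of $\cC_{\text{in}}$. The aligned folding is precisely what prevents two distinct sub-components of $\cC$-symbol $i$ from falling in the same inner recovery set: without this alignment, a single inner recovery set could contain multiple $p_j$'s, which standard local recovery of a qLRC cannot handle. Once alignment and simplicity of $G$ are in place, the remaining verification reduces to routine channel composition.
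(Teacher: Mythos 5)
Your proof is correct and follows the same approach as the paper's brief sketch: align the folding so that each inner recovery set lies within a single length-$\Delta$ block, then use simplicity of $G$ (no multi-edges) to ensure $\pi_G$ maps the $r_{\text{in}}$ elements of each recovery set to distinct $\cC$-components. You fill in considerably more detail than the paper, notably the explicit construction of $\Rec_i$ and the verification that the $\Delta$ inner recovery sets $J_j$ are pairwise disjoint so the inner channels can run in parallel — facts the paper leaves implicit but which are needed for the argument to close.
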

\begin{proof}
  We may perform the folding step of the AEL construction described in Section~\ref{sec:AELreview} such that all elements of each $I_i$ are folded into the same length-$\Delta$ component in $\tilde{\cC}_\diamond$. The permutation $\pi_G$ then sends these $r$ elements of $I_i$ to distinct components in $\cC$, assuming we choose a $\lambda$-pseudorandom graph $G$ arising from a 2-cover of a $\lambda$-spectral expander with no self-loops; the expanders of \cite{mohanty_explicit_2021} described in Proposition~\ref{prop:expanders} indeed have no self-loops. Thus each of the $\Delta$ symbols of a component of $\cC$ can be recovered from $r_{\text{in}}$ symbols in other components, so $\cC$ is a qLRC of locality $r_{\text{in}}\Delta$.
\end{proof}

In summary, if we apply AEL with a qLRC $\cC_{\text{in}}$ of locality $r_{\text{in}}$, we obtain a qLRC of locality $r=\Delta r_{\text{in}}$ that has rate $R=R_{\text{out}}R_{\text{in}}$ and relative distance at least $\delta_{\text{in}}-2\lambda\sqrt{\delta_{\text{in}}/\delta_{\text{out}}}$, where we can take $\lambda=2/\sqrt{\Delta}$.

To optimize the parameters for the resulting qLRC $\cC$, we can take $\cC_{\text{out}}$ to be a quantum Reed-Solomon code, which is the CSS code formed by two (generalized) Reed-Solomon codes\footnote{If $C$ is a length-$(q-1)$ classical RS code as defined in Definition~\ref{def:RS}, then indeed $\CSS(C,C)$ is a well-defined CSS code of of alphabet size $q$ that lies on the quantum Singleton bound. This construction can be extended to general block lengths $n$ (such that $n+1$ is not a prime power) by using \textit{generalized} RS codes. In a generalized RS code, the messages are still low-degree polynomials, but the encoding map only evaluates the message polynomial at some points in $\bF_q^*$, and multiplies these evaluations by some fixed coefficients in $\bF_q^*$. By choosing appropriate such coefficients for generalized RS codes $C_X$ and $C_Z$, we can ensure that the CSS orthogonality relations are satisfied, so that $\CSS(C_X,C_Z)$ is a well-defined CSS code on the quantum Singleton bound. See for instance Section~3.1 of \cite{bergamaschi_approaching_2022} for more details.}. This code has alphabet size $q_{\text{out}}=O(n_{\text{out}})$ and lies on the quantum Singleton bound so that $R_{\text{out}}\geq 1-2\delta_{\text{out}}$. Then taking a random qLRC from Section~\ref{sec:random} as the inner code $\cC_{\text{in}}$, we obtain the following.

\begin{proposition}
  \label{prop:qLRCAEL}
  For every $r_{\text{in}},\Delta\geq 3$ such that $r_{\text{in}}|\Delta$, and for every $\delta_{\text{in}},\delta_{\text{out}},\epsilon>0$, there exists an infinite family of qLRCs with locality $r=r_{\text{in}}\Delta$, alphabet size $q=2^{O(\Delta/\epsilon)}$, rate
  \begin{equation*}
    R = (1-2\delta_{\text{out}})\left(1-2\delta_{\text{in}}-\frac{2}{r_{\text{in}}}-2\epsilon\right),
  \end{equation*}
  and relative distance at least
  \begin{equation*}
    \delta = \delta_{\text{in}}-4\sqrt{\frac{\delta_{\text{in}}}{\delta_{\text{out}}\Delta}}.
  \end{equation*}
  Furthermore, a qLRC of block length $n$ in this family can be constructed by a $\poly(n\log q)$-time randomized algorithm with high probability, and can be decoded by a $\poly(n\log q)$-time algorithm from up to $\delta n/2$ errors.
\end{proposition}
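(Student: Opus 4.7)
The plan is to directly instantiate the AEL construction from Section~\ref{sec:AELreview} with a (generalized) quantum Reed--Solomon outer code and a random qLRC (from Section~\ref{sec:random}) as the inner code, and then read off the parameters from Lemma~\ref{lem:AELrate}, Lemma~\ref{lem:AELloc}, and Corollary~\ref{cor:aeldis}. The proof is largely parameter bookkeeping; no new ideas beyond what is already set up in Sections~\ref{sec:random} and~\ref{sec:AELreview} are needed.

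First I would fix the inner alphabet to be a prime power $q_{\text{in}} \geq 2^{2/\epsilon}$ depending only on $\epsilon$, and take $\cC_{\text{in}}$ to be a random qLRC from Definition~\ref{def:random} with block length $n_{\text{in}}$ a sufficiently large constant multiple of $\Delta$ (which also makes it a multiple of $r_{\text{in}}$, since $r_{\text{in}}\mid\Delta$), locality parameter $r_{\text{in}}$, and $\ell$ chosen so that $\ell/n_{\text{in}} = \delta_{\text{in}}+\epsilon$ (arranged by choosing $n_{\text{in}}$ so that this is an integer). Lemma~\ref{lem:randomrate} then gives locality $r_{\text{in}}$, rate $R_{\text{in}} = 1 - 2/r_{\text{in}} - 2\delta_{\text{in}} - 2\epsilon$, and recovery sets forming a partition of $[n_{\text{in}}]$; the second part of Proposition~\ref{prop:random} yields relative distance at least $\delta_{\text{in}}$ with probability $>1-2q_{\text{in}}^{-\epsilon n_{\text{in}}/2}$. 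Since both $n_{\text{in}}$ and $q_{\text{in}}$ are constants depending only on $r_{\text{in}},\Delta,\epsilon$, a valid $\cC_{\text{in}}$ is found by randomized sampling in constant time with high probability. For the outer code I would take a generalized quantum Reed--Solomon code on the quantum Singleton bound over the alphabet $q_{\text{out}} = q_{\text{in}}^{k_{\text{in}}}$, with rate $R_{\text{out}} = 1 - 2\delta_{\text{out}}$ and relative distance $\delta_{\text{out}}$; such codes exist for arbitrary block lengths $n_{\text{out}} < q_{\text{out}}$ and admit $\poly(n_{\text{out}}\log q_{\text{out}})$-time encoders and decoders.

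Next I would take the AEL permutation graph $G$ to be the $\Delta$-regular double cover of a near-Ramanujan graph from Proposition~\ref{prop:expanders}, so that $\lambda \leq 2/\sqrt{\Delta}$ and $G$ has no self-loops. The absence of self-loops is precisely what allows Lemma~\ref{lem:AELloc} to conclude that the $r_{\text{in}}$ coordinates of each inner recovery set $I_i\subseteq[n_{\text{in}}]$, once folded into a single symbol of $\tilde{\cC}_\diamond$, are routed by $\pi_G$ to $r_{\text{in}}$ distinct components of $\cC$, giving locality $r = r_{\text{in}}\Delta$. Lemma~\ref{lem:AELrate} then gives rate $R = R_{\text{out}}R_{\text{in}} = (1-2\delta_{\text{out}})(1 - 2\delta_{\text{in}} - 2/r_{\text{in}} - 2\epsilon)$, Corollary~\ref{cor:aeldis} gives relative distance at least $\delta_{\text{in}} - 2\lambda\sqrt{\delta_{\text{in}}/\delta_{\text{out}}} \geq \delta_{\text{in}} - 4\sqrt{\delta_{\text{in}}/(\delta_{\text{out}}\Delta)}$, and the alphabet size is $q = q_{\text{in}}^{\Delta} = 2^{O(\Delta/\epsilon)}$. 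The randomized $\poly(n\log q)$-time construction combines the constant-time brute-force search for $\cC_{\text{in}}$ with the explicit outer code and expander graph; the $\poly(n\log q)$-time decoder from $\delta n/2$ errors follows from Proposition~\ref{prop:ael}, using a brute-force (constant-time) decoder for $\cC_{\text{in}}$ together with the standard efficient decoder for quantum Reed--Solomon.

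The only real obstacle is verifying that all the divisibility and alphabet-matching constraints can be satisfied simultaneously for infinitely many block lengths---specifically, that a prime power $q_{\text{in}}\geq 2^{2/\epsilon}$, a constant $n_{\text{in}}$ a multiple of $\Delta$, and outer block lengths $n_{\text{out}}<q_{\text{out}}=q_{\text{in}}^{k_{\text{in}}}$ matched to expander sizes from Proposition~\ref{prop:expanders} can all coexist. This is handled by choosing $n_{\text{in}}$ as a sufficiently large multiple of $\Delta$ (ensuring $k_{\text{in}}$ is large enough that $q_{\text{out}}$ is ample), using generalized Reed--Solomon codes to allow any outer block length up to $q_{\text{out}}-1$, and passing to the infinite subsequence of $n_{\text{out}}$ for which $n = n_{\text{out}}n_{\text{in}}/\Delta$ is one of the available expander sizes.
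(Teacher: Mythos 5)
Your overall route is the same as the paper's: instantiate AEL with a generalized quantum Reed--Solomon outer code and a random qLRC inner code, invoke Proposition~\ref{prop:expanders} for the permutation graph, and read off the parameters from Lemma~\ref{lem:randomrate}, Proposition~\ref{prop:random}, Lemma~\ref{lem:AELrate}, Lemma~\ref{lem:AELloc}, and Corollary~\ref{cor:aeldis}. However, there is one genuine gap: you fix $n_{\text{in}}$ (and hence $k_{\text{in}}$) to be a \emph{constant} depending only on $r_{\text{in}},\Delta,\epsilon$. The concatenation requires $q_{\text{out}}=q_{\text{in}}^{k_{\text{in}}}$, so your outer alphabet is then also a constant, and a (generalized) quantum Reed--Solomon code over alphabet $q_{\text{out}}$ only exists for block lengths $n_{\text{out}}<q_{\text{out}}$. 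With $q_{\text{out}}$ constant there is no ``infinite subsequence of $n_{\text{out}}$'' to pass to in your final paragraph, so your construction does not yield an infinite family. (You cannot substitute a constant-alphabet outer code either, since the stated rate $(1-2\delta_{\text{out}})(\cdots)$ and distance bound require the outer code to sit exactly on the quantum Singleton bound, which forces $q_{\text{out}}\geq n_{\text{out}}$.)

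The fix, which is what the paper does implicitly, is to let the inner block length grow with the outer one: take $q_{\text{out}}=\Theta(n_{\text{out}})$, so $k_{\text{in}}=\Theta(\log_{q_{\text{in}}}n_{\text{out}})$ and $n_{\text{in}}=k_{\text{in}}/R_{\text{in}}$. The brute-force search for a good inner code then ranges over $q_{\text{in}}^{O(n_{\text{in}})}=\poly(n)$ inner codewords per candidate, so distance verification and inner decoding remain $\poly(n\log q)$-time, and the failure probability $2q_{\text{in}}^{-\epsilon n_{\text{in}}/2}$ of Proposition~\ref{prop:random} still permits efficient randomized construction with high probability. The price is exactly the one recorded in Remark~\ref{rem:effconst}: the family is efficiently constructable but not explicit, because the number of candidate inner \emph{codes} is superpolynomial. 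The rest of your argument --- the choice $q_{\text{in}}\geq 2^{2/\epsilon}$, the no-self-loop condition feeding Lemma~\ref{lem:AELloc}, and the parameter arithmetic giving $R$, $\delta$, and $q=q_{\text{in}}^{\Delta}=2^{O(\Delta/\epsilon)}$ --- is correct and matches the paper.
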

\begin{proof}
  As described above, we instantiate the AEL construction with $\cC_{\text{out}}$ to be a (generalized) qRS code of relative distance at least $\delta_{\text{out}}$, rate $R_{\text{out}}=1-2\delta_{\text{out}}$, and alphabet size $q_{\text{out}}=O(n_{\text{out}})$, and we instantiate $\cC_{\text{in}}$ to be a random qLRC from Proposition~\ref{prop:random} of relative distance at least $\delta_{\text{in}}$, rate $R_{\text{in}}=1-2\delta_{\text{in}}-2/r_{\text{in}}-2\epsilon$, locality $r_{\text{in}}$, and alphabet size $q_{\text{in}}=2^{O(1/\epsilon)}$. We take the expander $G$ to be given by Proposition~\ref{prop:expanders}. Taking $\cC$ to be the resulting code from applying the AEL construction, then the bounds on the parameters of $\cC$ in the proposition statement follow from Lemma~\ref{lem:AELloc}, Proposition~\ref{prop:random}, Lemma~\ref{lem:randomrate}, Lemma~\ref{lem:AELrate}, and Corollary~\ref{cor:aeldis}. 

  To efficiently construct $\cC$, we must simply find an inner code $\cC_{\text{in}}$ satisfying the distance bound in Proposition~\ref{prop:random}, as the code $\cC_{\text{out}}$ and the expander $G$ are by definition explicit. But because $q_{\text{out}}=O(n_{\text{out}})$, in $\poly(n\log q)$ time we may repeatedly generate a random qLRC $\cC_{\text{in}}=\CSS(C_X,C_Z)$ as described in Section~\ref{sec:random}, and via brute force check every codeword in $C_X\setminus C_Z$ and $C_Z\setminus C_X$ to compute the distance of $\cC_{\text{in}}$. By Proposition~\ref{prop:random}, with high probability a random such $\cC_{\text{in}}$ will have the desired distance $\delta_{\text{in}}$, so we successfully construct $\cC_{\text{in}}$ and therefore $\cC$ with high probability. The efficiency of the decoding algorithm follows from Proposition~\ref{prop:ael} along with the fact that RS (and therefore also qRS) codes are known to have efficient decoders from errors of weight up to half the distance (e.g.~Theorem~\ref{thm:RSdec}).
\end{proof}

\begin{remark}
  \label{rem:effconst}
  The codes in Proposition~\ref{prop:qLRCAEL} are efficiently constructable, but not technically explicit. This distinction arises because $q_{\text{out}}=\Theta(n_{\text{out}})$, so while the number of possible inner codewords $q_{\text{out}}=|\bF_{q_{\text{in}}}^{k_{\text{in}}}|$ is polynomial in $n\log q$, the number of possible inner codes $\cC_{\text{in}}=\CSS(C_X,C_Z)$ is superpolynomial in $n\log q$. We can resolve this issue to obtain explicit qLRCs $\cC$ with AEL by using an outer code $\cC_{\text{out}}$ with a smaller alphabet size than a quantum RS code. For instance, the alphabet size of $\cC_{\text{out}}$ can be reduced with another application of AEL, at the cost of a small loss in parameters.
\end{remark}

The qLRCs of locality $r$ and relative distance $\delta$ in Proposition~\ref{prop:qLRCAEL} have rate
\begin{align*}
  R
  &\leq (1-2\delta_{\text{out}})\left(1-2\delta-8\sqrt{\frac{\delta}{\delta_{\text{out}}\Delta}}-\frac{2\Delta}{r}-2\epsilon\right)
\end{align*}
for any $\epsilon>0$ that determines the alphabet size $q=2^{O(\Delta/\epsilon)}$. The RHS above is maximized when $\Delta$ is chosen to minimize the expression $8\sqrt{\delta/\delta_{\text{out}}\Delta}+2\Delta/r$; solving this minimization gives $\Delta=(2r)^{2/3}(\delta/\delta_{\text{out}})^{1/3}$, which in turn gives
\begin{align*}
  R
  &\leq (1-2\delta_{\text{out}})\left(1-2\delta-6\left(\frac{4\delta}{\delta_{\text{out}}r}\right)^{1/3}-2\epsilon\right) \\
  &\leq 1-2\delta-\Omega\left(\frac{\delta}{r}\right)^{1/3}.
\end{align*}
Thus the qLRCs constructed with AEL in Proposition~\ref{prop:qLRCAEL} have rate at least $\Omega(\delta/r)^{1/3}$ below the Singleton-like bound in Theorem~\ref{thm:singletongen}.

Furthermore, if we let $0<\delta<1/2$ be a fixed constant and let $r$ grow large, then the above inequality implies that
\begin{align*}
  R &\leq 1-2\delta-\Omega\left(\delta_{\text{out}}+\frac{1}{(\delta_{\text{out}}r)^{1/3}}\right).
\end{align*}
The RHS above is maximized at $\delta_{\text{out}}=\Theta(1/r^{1/4})$, giving
\begin{align}
  \label{eq:r14bound}
  R &\leq 1 - 2\delta - \Omega\left(\frac{1}{r}\right)^{1/4}.
\end{align}
Thus for fixed $\delta$ as $r$ grows large, the qLRCs in Proposition~\ref{prop:qLRCAEL} have rate at least $\Omega(r)^{1/4}$ below the Singelton-like bound in Theorem~\ref{thm:singletongen}.

The bound in~(\ref{eq:r14bound}) is tight, in the sense that if we fix $0<\delta<1/2$ in Proposition~\ref{prop:qLRCAEL} and then take $\delta_{\text{out}}=1/r^{1/4}$, $\Delta=r^{3/4}$, and $\epsilon=1/r^{1/4}$, Proposition~\ref{prop:qLRCAEL} gives qLRCs of relative distance $\delta$ and rate
\begin{equation*}
  R \geq 1 - 2\delta - O\left(\frac{1}{r}\right)^{1/4}
\end{equation*}
over an alphabet of size $q=2^{O(r)}$.

\section{Explicit Construction of qLRCs: Quantum Tamo-Barg Codes}
\label{sec:construct}
In this section, we present our construction of qLRCs, and show that they are indeed locally recoverable. The construction is essentially a quantum CSS version of the classical construction of Tamo and Barg \cite{tamo_family_2014}, though the quantum case is more involved due to issues surrounding orthogonality and degeneracy.

\subsection{Classical Tamo-Barg Codes}
We begin by describing the classical LRC construction of Tamo and Barg \cite{tamo_family_2014}. Recall below that for $n\in\bN$, we denote $[n]=\{0,\dots,n-1\}$. We will also use the notation from Definition~\ref{def:polyeval}. In particular, recall that the map $\evl:\bF_q[X]\rightarrow\bF_q^{\bF_q^*}$ maps a polynomial $f$ to the list of its evaluations on all inputs in $\bF_q^*$.

\begin{definition}[\cite{tamo_family_2014}]
  \label{def:TBcode}
  For a prime power $q$, a locality parameter $r|(q-1)$ with $r\geq 3$, and an integer $\ell\in[q]$, the \textbf{Tamo-Barg (TB) code} is the polynomial evaluation code $\evl(\bF_q[X]^S)$ for
  \begin{equation*}
    S = \{i\in[\ell]:i\not\equiv r-1\pmod{r}\}.
  \end{equation*}
\end{definition}

The following proposition of \cite{tamo_family_2014} states that the TB code is a cLRC achieving the Singleton-like bound.

\begin{proposition}[\cite{tamo_family_2014}]
  \label{prop:tb}
  The TB code in Definition~\ref{def:TBcode} has alphabet size $q$, block length $q-1$, dimension $\ell-\lfloor l/r\rfloor$, distance $\geq q-\ell$, and is locally recoverable with locality $r$.
\end{proposition}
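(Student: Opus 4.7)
The plan is to verify each of the four claims (alphabet size, block length, dimension, distance, local recoverability) in turn, with the main substantive content being the dimension count and the local recoverability, both of which rely on the coset structure of $\Omega_r\leq\bF_q^*$. The alphabet size $q$ and block length $q-1$ are immediate from the definition of $\evl$ on $\bF_q^*$, so I will dispatch these in a single sentence.

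For the dimension, I would first argue that $\evl$ is injective on $\bF_q[X]^S$: since $\ell\leq q-1$, every $f\in\bF_q[X]^S$ has $\deg f\leq\ell-1\leq q-2$, and a nonzero polynomial of degree at most $q-2$ cannot vanish on all of $\bF_q^*$ (which has $q-1$ elements). Hence the dimension equals $|S|$. To count $|S|$, I will count the complement: the indices $i\in[\ell]$ with $i\equiv r-1\pmod r$ are exactly $r-1,2r-1,\dots$, of which there are $\lfloor\ell/r\rfloor$ (a short case check justifies this), giving $|S|=\ell-\lfloor\ell/r\rfloor$. The distance bound $d\geq q-\ell$ then follows from the same degree observation: any nonzero $f\in\bF_q[X]^S$ has at most $\ell-1$ roots in $\bF_q^*$, so $|\evl(f)|\geq(q-1)-(\ell-1)=q-\ell$.

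The main step I want to spell out is the local recoverability of locality $r$. The key observation is that on each coset $x\Omega_r\subseteq\bF_q^*$ (of size $r$), the value of $X^r$ is constant, equal to $x^r$. Consequently, for any index $i=ar+b$ with $0\leq b\leq r-2$, the restriction of the monomial $X^i$ to $x\Omega_r$ equals $x^{ar}\cdot X^b$, a polynomial of degree at most $r-2$ in $X$. Since every $i\in S$ satisfies $i\bmod r\in\{0,1,\dots,r-2\}$, the restriction of any $f\in\bF_q[X]^S$ to $x\Omega_r$ is a polynomial of degree at most $r-2$. Given the $r-1$ evaluations of $f$ on $x\Omega_r\setminus\{y\}$ for any target point $y\in x\Omega_r$, Lagrange interpolation over $\bF_q$ uniquely recovers this degree-$\leq r-2$ polynomial and therefore its value at $y$. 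Taking the recovery sets to be the cosets $\{x\Omega_r:x\in\bF_q^*/\Omega_r\}$, which partition $\bF_q^*$ into $(q-1)/r$ blocks of size $r$, gives the desired local recovery procedure of locality $r$.

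I do not anticipate a serious obstacle: the whole argument is a quick application of the fact that $X^r$ is constant on $\Omega_r$-cosets together with Lagrange interpolation. The only mild subtlety is making sure the constraint $i\not\equiv r-1\pmod r$ (rather than some other residue class) is exactly what is needed so that every monomial behaves like a polynomial of degree $<r-1$ on each coset; this is why the excluded residue is chosen to be $r-1$, and I will point this out explicitly so the reader sees why $S$ is defined as it is.
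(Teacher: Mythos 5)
Your proof is correct, and all five claims are verified cleanly. The alphabet size, block length, dimension count (injectivity of $\evl$ on polynomials of degree $\leq\ell-1\leq q-2$, plus the count $|S|=\ell-\lfloor\ell/r\rfloor$), and the distance bound via the degree argument are all exactly as they should be.

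For the local recoverability, your route differs slightly in packaging from the paper's. You work on the \emph{primal} side: since $X^r$ is constant on each coset $x\Omega_r$ and every exponent in $S$ has residue in $\{0,\dots,r-2\}$ mod $r$, the restriction of any codeword polynomial to a coset is a polynomial of degree $\leq r-2$, recoverable by Lagrange interpolation from the other $r-1$ coset points. The paper instead works on the \emph{dual} side: it shows (Lemma~\ref{lem:piecelin}, via the same reduction modulo $X^r-x^r$) that the dual code contains the piecewise linear functions, hence for each coset a weight-$r$ parity check supported exactly on that coset, and then invokes Lemma~\ref{lem:cLRClin}. These are two formulations of the same fact --- a degree-$\leq r-2$ function on an $r$-point set is precisely one satisfying a single weight-$r$ linear constraint --- so the underlying algebra is identical. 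The dual formulation is the one the paper needs anyway, since the CSS local recoverability criterion (Corollary~\ref{cor:CSSLRC}) is stated in terms of low-weight parity checks; your interpolation view is slightly more informative in that it recovers the whole restriction to the coset, not just the single erased symbol. Your remark on why the excluded residue must be $r-1$ is exactly the right point to emphasize.
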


Below we present a quantum analogue of TB codes that we show are also locally recoverable (Corollary~\ref{cor:qTBLRC}). The proof of local recoverability in both the classical and quantum cases simply rely on some polynomial manipulations (Lemma~\ref{lem:qTBdef} and Lemma~\ref{lem:piecelin} below).

\subsection{Quantum Tamo-Barg Codes}
\label{sec:qTBcode}
We now introduce our quantum analogue of Tamo-Barg codes.

\begin{definition}
  \label{def:qTBcode}
  For a prime power $q$, a locality parameter $r|(q-1)$ with $r\geq 3$, and an integer $\ell\in[q]$, we define the \textbf{quantum Tamo-Barg (qTB) code} to be the CSS code $\cC=\CSS(C_X,C_Z)$ with $C_X=C_Z=\evl(\bF_q[X]^S)$ for
  \begin{equation}
    \label{eq:qTBsupport}
    S = \{i\in[\ell]:i\not\equiv r-1\pmod{r}\} \cup \{i\in[q-1]:i\equiv 1\pmod{r}\}.
  \end{equation}
  By construction $\cC$ has local dimension $q$ and block length $q-1$.
\end{definition}

To understand the definition of a qTB code, and to see that it gives a well defined CSS code, consider the following general fact, which presents a certain kind of ``intersection'' of CSS codes.
\begin{lemma}
  \label{lem:intersect}
  For CSS codes $\cA=\CSS(A_X,A_Z)$ and $\cB=\CSS(B_X,B_Z)$ of block length $n$ over $\bF_q$, there exists a CSS code $\cC=\CSS(C_X,C_Z)$ given by
  \begin{align*}
    C_X &= (A_X\cap B_X)+B_Z^\perp \\
    C_Z &= (A_Z\cap B_Z)+B_X^\perp,
  \end{align*}
  so that
  \begin{align*}
    C_X^\perp &= (A_X^\perp\cap B_Z)+B_X^\perp \\
    C_Z^\perp &= (A_Z^\perp\cap B_X)+B_Z^\perp.
  \end{align*}
\end{lemma}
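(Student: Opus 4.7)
The plan is to verify the two dual formulas first, since once they are in place the CSS orthogonality $C_X^\perp\subseteq C_Z$ (and symmetrically $C_Z^\perp\subseteq C_X$) follows immediately from the assumed orthogonality conditions for $\cA$ and $\cB$. The main tool is the modular law for subspaces: if $U\subseteq W$, then $(U+V)\cap W = U+(V\cap W)$.

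First I would compute $C_X^\perp$ using the standard identities $(M+N)^\perp=M^\perp\cap N^\perp$ and $(M\cap N)^\perp=M^\perp+N^\perp$:
\begin{equation*}
  C_X^\perp = \bigl((A_X\cap B_X)+B_Z^\perp\bigr)^\perp = (A_X\cap B_X)^\perp\cap B_Z = (A_X^\perp+B_X^\perp)\cap B_Z.
\end{equation*}
Now I would apply the modular law with $U=B_X^\perp$, $V=A_X^\perp$, $W=B_Z$. The hypothesis $U\subseteq W$ is precisely the CSS orthogonality $B_X^\perp\subseteq B_Z$ of the code $\cB$, so
\begin{equation*}
  (A_X^\perp+B_X^\perp)\cap B_Z = B_X^\perp + (A_X^\perp\cap B_Z),
\end{equation*}
which is the claimed formula for $C_X^\perp$. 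The formula for $C_Z^\perp$ is obtained by the completely symmetric computation, this time invoking $B_Z^\perp\subseteq B_X$ to apply the modular law.

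With the dual formulas established, I would verify that $\cC=\CSS(C_X,C_Z)$ is a well-defined CSS code, i.e.\ $C_X^\perp\subseteq C_Z$. Since $\cA$ is CSS, $A_X^\perp\subseteq A_Z$, hence $A_X^\perp\cap B_Z\subseteq A_Z\cap B_Z$, and therefore
\begin{equation*}
  C_X^\perp = (A_X^\perp\cap B_Z)+B_X^\perp \;\subseteq\; (A_Z\cap B_Z)+B_X^\perp = C_Z.
\end{equation*}
The inclusion $C_Z^\perp\subseteq C_X$ follows identically using $A_Z^\perp\subseteq A_X$.

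There is no real obstacle in this lemma; the only subtlety is recognizing that the asymmetric-looking formulas $C_X^\perp=(A_X^\perp\cap B_Z)+B_X^\perp$ are a direct consequence of the modular law, and that the precise hypothesis needed to apply it ($B_X^\perp\subseteq B_Z$ in one direction, $B_Z^\perp\subseteq B_X$ in the other) is exactly the CSS condition for $\cB$. One minor bookkeeping point worth flagging is that while the formulas for $C_X,C_Z$ are symmetric in the roles of $\cA$ and $\cB$ at the level of ``intersect-and-augment,'' the derivation itself treats $\cB$ as the ``outer'' code (whose dual conditions supply the modular-law hypothesis); the resulting code $\cC$ should be thought of as $\cA$ intersected with the code stabilizers of $\cB$, which will be useful later for combining a Tamo--Barg-like code with a space of piecewise-linear functions.
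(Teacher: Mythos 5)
Your proof is correct and follows the same chain of duality identities as the paper: compute $C_X^\perp=((A_X\cap B_X)+B_Z^\perp)^\perp=(A_X^\perp+B_X^\perp)\cap B_Z$, rewrite this as $(A_X^\perp\cap B_Z)+B_X^\perp$, and then deduce $C_X^\perp\subseteq C_Z$ from $A_X^\perp\subseteq A_Z$. The one difference is presentational: the paper leaves the step $(A_X^\perp+B_X^\perp)\cap B_Z=(A_X^\perp\cap B_Z)+B_X^\perp$ unjustified, whereas you explicitly identify it as the modular law applied with the hypothesis $B_X^\perp\subseteq B_Z$ (the CSS condition for $\cB$), which is a useful clarification of where that condition actually gets used.
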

\begin{proof}
  By definition
  \begin{align*}
    C_X^\perp
    &= ((A_X\cap B_X)+B_Z^\perp)^\perp \\
    &= (A_X\cap B_X)^\perp\cap B_Z \\
    &= (A_X^\perp+B_X^\perp)\cap B_Z \\
    &= (A_X^\perp\cap B_Z)+B_X^\perp,
  \end{align*}
  and by the same reasoning $C_Z^\perp=(A_Z^\perp\cap B_X)+B_Z^\perp$. Then because $A_X^\perp\subseteq A_Z$, it follows that $C_X^\perp\subseteq C_Z$, so $\cC$ is a well defined CSS code.
\end{proof}

We then obtain the following.

\begin{lemma}
  \label{lem:qTBdef}
  For $r|(q-1)$ with $r\geq 3$ and $\ell\in[q]$, let
  \begin{align*}
    A &= \evl(\bF_q[X]^{[\ell]}) \\
    B &= \evl(\bF_q[X]^{[q-1]\setminus(-1+r\bZ)}).
  \end{align*}
  Then $A\cap B$ is a TB code. Furthermore,
  \begin{align}
    \label{eq:polyduals}
    \begin{split}
      A^\perp &= \evl(\bF_q[X]^{[q-\ell]\setminus\{0\}}) \\
      B^\perp &= \evl(\bF_q[X]^{[q-1]\cap(1+r\bZ)}) \subseteq B.
    \end{split}
  \end{align}
  If $\ell\geq q/2$, then $A^\perp\subseteq A$, so letting $C = (A\cap B)+B^\perp$,
  then $\CSS(C,C)$ is a qTB code with
  \begin{equation*}
    C^\perp = (A^\perp\cap B)+B^\perp = \evl(\bF_q[X]^T)\subseteq C.
  \end{equation*}
  for
  \begin{equation}
    \label{eq:qTBdualsupport}
    T = \{i\in[q-\ell]\setminus\{0\}:i\not\equiv r-1\pmod{r}\} \cup \{i\in[q-1]:i\equiv 1\pmod{r}\}.
  \end{equation}
\end{lemma}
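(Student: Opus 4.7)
The plan is to unpack everything through the injective polynomial evaluation map $\evl:\bF_q[X]^{[q-1]}\to\bF_q^{q-1}$, which is a bijection by Lagrange interpolation (a Vandermonde argument). The main workhorse will be the observation that, because $\evl$ is injective on $\bF_q[X]^{[q-1]}$, one has $\evl(\bF_q[X]^{S_1})\cap\evl(\bF_q[X]^{S_2})=\evl(\bF_q[X]^{S_1\cap S_2})$ and similarly for sums, whenever $S_1,S_2\subseteq[q-1]$. With this in hand, $A\cap B=\evl(\bF_q[X]^{[\ell]\setminus(-1+r\bZ)})$, which matches Definition~\ref{def:TBcode} and so is a TB code.

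Next I would compute duals via the standard character-sum identity $\sum_{x\in\bF_q^*}x^k=-1$ if $(q-1)\mid k$ and $k\geq 0$, and $0$ otherwise. For $f\in\bF_q[X]^{[\ell]}$ and $g\in\bF_q[X]^{[q-\ell]\setminus\{0\}}$, the cross terms $x^{i+j}$ appearing in $\evl(f)\cdot\evl(g)$ satisfy $1\leq i+j\leq q-2$, so the sum vanishes; dimension counting ($\ell$ vs.~$(q-1)-\ell=q-\ell-1$) then forces equality $A^\perp=\evl(\bF_q[X]^{[q-\ell]\setminus\{0\}})$. The same kind of argument applied to $B^\perp$ reduces to checking that for $i\in[q-1]\setminus(-1+r\bZ)$ and $j\in[q-1]\cap(1+r\bZ)$ one never has $(q-1)\mid(i+j)$: the key point is $i+j\not\equiv 0\pmod r$ (because $j\equiv 1$, $i\not\equiv-1$), and $r\mid(q-1)$, so $i+j$ cannot be $0$ or $q-1$, and the range excludes $2(q-1)$. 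Dimension counting (both sides have size $(q-1)/r$) then gives equality.

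The two inclusions are then immediate set-theoretic checks: for $B^\perp\subseteq B$, I need $[q-1]\cap(1+r\bZ)\subseteq[q-1]\setminus(-1+r\bZ)$, which holds because $1\not\equiv-1\pmod r$ for $r\geq 3$; for $A^\perp\subseteq A$ under the hypothesis $\ell\geq q/2$, I need $[q-\ell]\setminus\{0\}\subseteq[\ell]$, equivalent to $q-\ell-1\leq\ell-1$. With both containments established, $\cA:=\CSS(A,A)$ and $\cB:=\CSS(B,B)$ are well-defined CSS codes, and Lemma~\ref{lem:intersect} applied to them yields a CSS code $\CSS(C,C)$ with $C=(A\cap B)+B^\perp$ and $C^\perp=(A^\perp\cap B)+B^\perp\subseteq C$.

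Finally, I would translate these formulas back through the bijection between supports and subspaces: using the sum/intersection identities from the first paragraph, $C=\evl(\bF_q[X]^S)$ for $S$ exactly as in~(\ref{eq:qTBsupport}), so $\CSS(C,C)$ is indeed the qTB code of Definition~\ref{def:qTBcode}; likewise $C^\perp=\evl(\bF_q[X]^T)$ for $T$ as in~(\ref{eq:qTBdualsupport}), after noting that $[q-\ell]\setminus\{0\}\subseteq[q-1]$ once $\ell\geq 1$. The proof is mostly bookkeeping; the only substantive step is the character-sum computation for $B^\perp$, whose subtlety is checking that the residue condition $i\not\equiv-1$, $j\equiv 1\pmod r$ together with the range constraints $i,j<q-1$ rules out every multiple of $q-1$ in $i+j$. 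That is the one place where the hypotheses $r\geq 3$ and $r\mid(q-1)$ are essential, and it is where I expect to spend the most care.
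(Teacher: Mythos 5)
Your proposal is correct and follows the same route as the paper's proof: establish the two dual identities in~(\ref{eq:polyduals}) by pairing monomials via the character sum $\sum_{\alpha\in\bF_q^*}\alpha^{i+j}$ together with dimension counting, and then derive everything else from Lemma~\ref{lem:intersect} and support bookkeeping. You spell out a few routine steps the paper leaves implicit (the check that $i+j$ cannot be a multiple of $q-1$ in the $B^\perp$ case, and the intersection/sum identities for evaluation codes via injectivity of $\evl$), but the argument is the same.
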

\begin{proof}
  It suffices to show the equalities in~(\ref{eq:polyduals}), as then the remaining claims in the lemma statement follow from the definitions of TB and qTB codes along with Lemma~\ref{lem:intersect}.

  The first equality in~(\ref{eq:polyduals}) simply states the well known fact that the dual of a Reed-Solomon code is a (generalized) Reed-Solomon code. Formally, by dimension counting it suffices to show that $\evl(\bF_q[X]^{[q-\ell]\setminus\{0\}})\subseteq A^\perp$. By using monomials as a basis for $\bF_q[X]^{[\ell]}$ and for $\bF_q[X]^{[q-\ell]\setminus\{0\}}$, it then suffices to show that for every $i\in[\ell]$ and $j\in[q-\ell]\setminus\{0\}$, we have $\evl(X^i)\cdot\evl(X^j)=0$. But
  \begin{align}
    \label{eq:dotmonomials}
    \evl(X^i)\cdot\evl(X^j)
    &=\sum_{\alpha\in\bF_q^*}\alpha^{i+j},
  \end{align}
  which indeed equals $0$ for $i+j\not\equiv 0\pmod{q-1}$.

  We now prove the second equality in~(\ref{eq:polyduals}). By dimension counting, it suffices to show that $\evl(\bF_q[X]^{[q-1]\cap(1+r\bZ)})\subseteq B^\perp$. By using monomials as a basis for $\bF_q[X]^{[q-1]\setminus(-1+r\bZ)}$ and for $\bF_q[X]^{[q-1]\cap(1+r\bZ)}$, it then suffices to show that for every $i\in[q-1]\setminus(-1+r\bZ)$ and $j\in[q-1]\cap(1+r\bZ)$, we have $\evl(X^i)\cdot\evl(X^j)=0$. But then~(\ref{eq:dotmonomials}) again holds, and $i+j\not\equiv 0\pmod{q-1}$, so $\evl(X^i)\cdot\evl(X^j)=0$ as desired.
\end{proof}

Lemma~\ref{lem:qTBdef} implies that the qTB code $\cC=\CSS(C,C)$ satisfies $C^\perp\subseteq C$, so it is indeed a well defined CSS code. Below we compute the dimension of this code.

\begin{lemma}
  \label{lem:qTBdim}
  The qTB code $\cC=\CSS(C,C)$ with parameters $q,r,\ell$ has dimension
  \begin{align*}
    k
    &= 1+|\{q-\ell\leq i\leq\ell-1:i\not\equiv\pm 1\pmod{r}\}| \\
    &= (2\ell-q)\cdot\left(1-\frac{2}{r}\right)+\epsilon
  \end{align*}
  for some $\epsilon\in[-2,2]$.
\end{lemma}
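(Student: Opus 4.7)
The plan is to compute $k$ using the standard CSS dimension formula. Since $\cC = \CSS(C, C)$ with $C^\perp \subseteq C$ (established in Lemma~\ref{lem:qTBdef}), the dimension satisfies $k = \dim_{\bF_q} C - \dim_{\bF_q} C^\perp = |S| - |T|$, where $S$ and $T$ are the monomial support sets from~(\ref{eq:qTBsupport}) and~(\ref{eq:qTBdualsupport}) respectively. So the entire argument reduces to a careful cardinality calculation on $\bZ$, followed by an elementary counting estimate.

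To compute $|S| - |T|$, I would first observe that both $S$ and $T$ contain the common subset $U := \{i \in [q-1] : i \equiv 1 \pmod{r}\}$, and that the residues $1$ and $r-1$ are distinct modulo $r$ since $r \geq 3$. These two facts yield the disjoint decompositions
\begin{align*}
  S &= U \sqcup \{i \in [\ell] : i \not\equiv \pm 1 \pmod{r}\}, \\
  T &= U \sqcup \{i \in \{1, \ldots, q-\ell-1\} : i \not\equiv \pm 1 \pmod{r}\},
\end{align*}
so the $U$-contributions cancel in $|S| - |T|$. Because $\ell \geq q/2$ is needed for the qTB code to be well-defined (the same hypothesis required in Lemma~\ref{lem:qTBdef} to ensure $A^\perp \subseteq A$), we have $q - \ell - 1 \leq \ell - 1$, so $\{1, \ldots, q-\ell-1\} \subseteq [\ell]$. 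Subtracting cardinalities then leaves precisely the count of $i \in \{0\} \cup \{q-\ell, \ldots, \ell-1\}$ with $i \not\equiv \pm 1 \pmod{r}$, and since $0 \not\equiv \pm 1 \pmod{r}$ (again using $r \geq 3$), the element $0$ contributes the leading $+1$. This gives the first claimed equality.

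For the second equality, I would invoke the elementary fact that in any integer interval of length $m$, each residue class modulo $r$ contains either $\lfloor m/r \rfloor$ or $\lceil m/r \rceil$ elements, and so lies within $1$ of $m/r$. Excluding the two residue classes $\pm 1 \pmod{r}$ therefore leaves a count within $2$ of $m(1 - 2/r)$. Applying this with $m = 2\ell - q$ and absorbing the leading $+1$ into the error term yields $k = (2\ell-q)(1 - 2/r) + \epsilon$ with $|\epsilon|$ bounded by a small absolute constant. The main obstacle, if one can call it that, is purely bookkeeping: one must organize the inclusion-exclusion on $S$ and $T$ carefully so that the overlapping contributions from residues $\equiv 1 \pmod{r}$ cancel correctly. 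No step is conceptually subtle.
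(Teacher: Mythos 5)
Your proof is correct and follows essentially the same route as the paper's: $k=\dim C-\dim C^\perp=|S|-|T|$, cancellation of the common piece $\{i\equiv 1\pmod r\}$, reduction to counting $i\in\{0\}\cup\{q-\ell,\dots,\ell-1\}$ with $i\not\equiv\pm1\pmod r$, and the same per-residue-class estimate for the second equality (you just spell out the set manipulations that the paper asserts in one line). One small remark: your cautious phrasing ``$|\epsilon|$ bounded by a small absolute constant'' is wise, since the counting argument (both yours and the paper's) actually yields $\epsilon\in(-1,3)$ after adding the leading $+1$, which does not quite land in the stated interval $[-2,2]$ --- a harmless imprecision in the lemma statement rather than in your argument.
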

\begin{proof}
  Define $S\subseteq[q-1]$ as in~(\ref{eq:qTBsupport}) and $T\subseteq S$ as in~(\ref{eq:qTBdualsupport}), so that $C=\bF_q[X]^S$ and $C^\perp=\bF_q[X]^T$. Then by definition $\cC$ has dimension
  \begin{align*}
    k
    &= \dim(C)-\dim(C^\perp) \\
    &= |S\setminus T| \\
    &= |\{0\}\cup\{q-\ell\leq i\leq\ell-1:i\not\equiv\pm 1\pmod{r}\}|,
  \end{align*}
  which proves the first equality in the lemma statement. The second equality in the lemma statement then follows because $|\{q-\ell\leq i\leq\ell-1:i\not\equiv\pm 1\pmod{r}\}|$ can differ from $(2\ell-q)(1-2/r)$ by at most $2$.
\end{proof}

We now show that the qTB code $\cC=\CSS(C,C)$ is locally recoverable. By Lemma~\ref{cor:CSSLRC}, it suffices to show that $C^\perp$ contains low-weight parity checks whose supports cover all $n$ code components. As $B^\perp\subseteq C^\perp$ (see Lemma~\ref{lem:qTBdef}), it in fact suffices to show that $B^\perp$ contains such low-weight parity checks, as is shown below.

\begin{lemma}
  \label{lem:piecelin}
  The code $B^\perp=\evl(\bF_q[X]^{[q-1]\cap(1+r\bZ)})$ as defined in Lemma~\ref{lem:qTBdef} may be equivalently characterized as follows. Let $\Omega_r=\{x\in\bF_q^*:x^r=1\}\subseteq\bF_q^*$ denote the $r$th roots of unity. Then $B^\perp$ consists of all functions $f:\bF_q^*\rightarrow\bF_q$ that can be expressed in the form $f(x)=\beta_{x\Omega_r}\cdot x$ for some $\beta\in\bF_q^{\bF_q^*/\Omega_r}$.
\end{lemma}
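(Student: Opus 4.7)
The plan is to prove the two inclusions by exhibiting an explicit factorization of the relevant polynomials and then invoking a dimension count to conclude the converse, using that $\evl$ is injective on polynomials of degree $<q-1$.

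First I would establish the easy direction, that $B^\perp$ is contained in the described set. Every $h\in\bF_q[X]^{[q-1]\cap(1+r\bZ)}$ can be written as $h(X)=\sum_{j} a_j X^{1+rj}$ where $j$ ranges over $\{0,1,\dots,(q-1)/r-1\}$, hence $h(X)=X\cdot g(X^r)$ for the polynomial $g(Y)=\sum_j a_j Y^j\in\bF_q[Y]$. For any $x\in\bF_q^*$ and any $\omega\in\Omega_r$, the value $(x\omega)^r=x^r$ depends only on the coset $x\Omega_r$, so $\beta_{x\Omega_r}:=g(x^r)$ is well defined and $\evl(h)(x)=\beta_{x\Omega_r}\cdot x$, which is exactly the claimed piecewise-linear form.

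Next, for the reverse inclusion, I would count dimensions. The space $\bF_q[X]^{[q-1]\cap(1+r\bZ)}$ has dimension $(q-1)/r$ since the exponent set $\{1,1+r,\dots,q-r\}$ has that cardinality. Its image under $\evl$ has the same dimension: a nonzero polynomial of degree at most $q-r<q-1$ cannot vanish on all of $\bF_q^*$, so $\evl$ is injective on this subspace. On the other hand, the set of functions of the form $x\mapsto\beta_{x\Omega_r}\cdot x$ is an $\bF_q$-linear subspace of $\bF_q^{\bF_q^*}$ parameterized by $\beta\in\bF_q^{\bF_q^*/\Omega_r}$, and distinct $\beta$ yield distinct functions because $x\neq 0$ on $\bF_q^*$, so this subspace has dimension $|\bF_q^*/\Omega_r|=(q-1)/r$. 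Since the first direction shows $B^\perp$ is contained in this subspace and both have the same finite dimension, they must coincide.

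There is no real obstacle here; the only subtlety is making sure the evaluation map is injective on $\bF_q[X]^{[q-1]\cap(1+r\bZ)}$, which is immediate from the degree bound $q-r<q-1=|\bF_q^*|$. If one preferred a fully constructive proof of the reverse inclusion in place of the dimension count, one could instead note that the map $x\Omega_r\mapsto x^r$ identifies $\bF_q^*/\Omega_r$ with the set of $r$th powers in $\bF_q^*$, use Lagrange interpolation to find the unique $g\in\bF_q[Y]$ of degree $<(q-1)/r$ with $g(x^r)=\beta_{x\Omega_r}$, and take $h(X)=X\cdot g(X^r)\in\bF_q[X]^{[q-1]\cap(1+r\bZ)}$.
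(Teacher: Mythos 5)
Your proposal is correct and follows essentially the same route as the paper: show the inclusion $B^\perp\subseteq\{\text{piecewise linear functions}\}$ explicitly (your factorization $h(X)=X\cdot g(X^r)$ is just a cleaner phrasing of the paper's reduction of $f$ modulo $X^r-\alpha^r$, since $\alpha^r$ is constant on the coset $\alpha\Omega_r$), and conclude the reverse inclusion by matching the dimensions $(q-1)/r$ on both sides. Your dimension count is spelled out more carefully than the paper's one-line appeal to it, but there is no substantive difference.
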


We refer to elements of $B^\perp$ as \textbf{piecewise linear functions}, as Lemma~\ref{lem:piecelin} states that $f\in B^\perp$ if and only if for each coset $x\Omega_r$ of $\Omega_r$, the function $f(x)$ agrees with the linear function $\beta_{x\Omega_r}\cdot x'$ on the restriction to inputs $x'\in x\Omega_r$. Note that these linear functions $x'\mapsto\beta_{x\Omega_r}\cdot x'$ are truly linear maps (not just affine linear).

\begin{proof}[Proof of Lemma~\ref{lem:piecelin}]
  By dimension counting, it suffices to show that every $\evl(f(X))\in B^\perp=\evl(\bF_q[X]^{[q-1]\cap(1+r\bZ)})$ is piecewise linear, that is, that $f(x)=\beta_{x\Omega_r}\cdot x$ for some $\beta\in\bF_q^{\bF_q^*/\Omega_r}$.

  Fix some $f(X)=\sum_{i\in[q-1]}f_iX^i$ with $\evl(f)\in B^\perp$, so that
  \begin{equation}
    \label{eq:finot1modr}
    f_i = 0 \hspace{.5cm} \forall i\not\equiv 1\pmod{r}.
  \end{equation}
  For every coset $\alpha\Omega_r\in\bF_q^*/\Omega_r$, then $X^r-\alpha^r$ vanishes on all $X\in\alpha\Omega_r$, so $g(X):=f(X)\pmod{X^r-\alpha^r}$ agrees with $f(X)$ on all $X\in\alpha\Omega_r$. But by definition $g(X)=\sum_{i\in[r]}g_iX^i$ with $g_i=\sum_{j\in[(q-1)/r]}f_{i+rj}\alpha^{rj}$, so~(\ref{eq:finot1modr}) implies that $g_i=0$ for all $i\neq 1$. Thus for all $X\in\alpha\Omega_r$ we have $f(X)=g(X)=g_iX$, so $f$ is indeed piecewise linear, as desired.
\end{proof}

Lemma~\ref{lem:piecelin} immediately yields the result of \cite{tamo_family_2014} that TB codes are cLRCs (Proposition~\ref{prop:tb}), and similarly shows that qTB codes are qLRCs:

\begin{corollary}
  \label{cor:qTBLRC}
  The qTB code given in Definition~\ref{def:qTBcode} is locally recoverable with locality $r$.
\end{corollary}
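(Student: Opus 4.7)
The plan is to apply Corollary~\ref{cor:CSSLRC}, which (since $C_X=C_Z=C$ for a qTB code) reduces the problem to producing, for each code component $\alpha\in\bF_q^*$, a single element of $C^\perp$ whose support has size at most $r$ and contains $\alpha$. Indeed, if we find one such parity check $f$, we can take $c_X'=c_Z'=f$ so that $\supp(c_X')\cap\supp(c_Z')\ni\alpha$ and $|\supp(c_X')\cup\supp(c_Z')|=|\supp(f)|\leq r$. By Lemma~\ref{lem:qTBdef} we have the inclusion $B^\perp\subseteq C^\perp$, so it suffices to find the required parity check inside $B^\perp$.

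Lemma~\ref{lem:piecelin} does essentially all of the work: it identifies $B^\perp$ with the space of piecewise linear functions $f(x)=\beta_{x\Omega_r}\cdot x$ parameterized by $\beta\in\bF_q^{\bF_q^*/\Omega_r}$. Given the target component $\alpha$, I would take $\beta$ to be the indicator of the coset $\alpha\Omega_r$, i.e.\ set $\beta_{\alpha\Omega_r}=1$ and $\beta_{\gamma}=0$ for every other coset $\gamma\in\bF_q^*/\Omega_r$. The resulting function satisfies $f(x)=x$ for $x\in\alpha\Omega_r$ and $f(x)=0$ otherwise. Because $0\notin\bF_q^*$, no point of $\alpha\Omega_r$ is sent to $0$, so $\supp(f)=\alpha\Omega_r$, which has cardinality $|\Omega_r|=r$ and contains $\alpha$.

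This supplies the parity check demanded by Corollary~\ref{cor:CSSLRC} and immediately yields local recoverability with locality $r$. There is no real obstacle here; the substantive content has already been absorbed into Lemma~\ref{lem:qTBdef} (which secures $B^\perp\subseteq C^\perp$ and is why the extra monomials were added to $S$ in Definition~\ref{def:qTBcode} in the first place) and Lemma~\ref{lem:piecelin} (which unpacks the coset-by-coset linear structure of $B^\perp$). The only small point worth noting is that the linear pieces are $x\mapsto x$ rather than constants, which is precisely what guarantees that the support has full size $r$ rather than $r-1$.
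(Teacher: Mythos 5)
Your proof is correct and follows essentially the same route as the paper's: invoke Corollary~\ref{cor:CSSLRC}, use Lemma~\ref{lem:qTBdef} to get $B^\perp\subseteq C^\perp$, and use Lemma~\ref{lem:piecelin} to extract the coset-supported piecewise linear function $f_\alpha$ with $\supp(f_\alpha)=\alpha\Omega_r$. The remark that $0\notin\bF_q^*$ guarantees the support has full size $r$ is a nice explicit touch, but the argument is the same as the paper's.
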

\begin{proof}
  By Lemma~\ref{cor:CSSLRC}, it suffices to show that the qTB code $\cC=\CSS(C,C)$ has $C$ (classically) locally recoverable with locality $r$. Now by Lemma~\ref{lem:qTBdef}, $C^\perp$ contains $B^\perp=\evl(\bF_q[X]^{[q-1]\cap(1+r\bZ)})$. For every $\alpha\in\bF_q^*$, Lemma~\ref{lem:piecelin} implies that $B^\perp$ contains the function
  \begin{equation*}
    f_\alpha(x) = \begin{cases}
      x,&x\in\alpha\Omega_r\\
      0,&x\notin\alpha\Omega_r,
    \end{cases}
  \end{equation*}
  which has $\supp(f)=\alpha\Omega_r$, that is, $f_\alpha$ is supported on $r$ components including $\alpha$. Thus Lemma~\ref{lem:cLRClin} implies that $C$ is a cLRC with locality $r$, so $\cC$ is a qLRC with locality $r$.
\end{proof}

\subsection{Folded Quantum Tamo-Barg Codes}
\label{sec:fqTBcode}
We are not able to show that the qTB codes in Definition~\ref{def:qTBcode} have distance approaching the Singleton-like bound. Rather, we will only obtain such near-optimal distance for the \textit{folded} version of these codes, defined below.

\begin{definition}
  \label{def:fqTBcode}
  As in Definition~\ref{def:qTBcode}, let $\cC=\CSS(C_X,C_Z)$ be the qTB code with parameters $q,r,\ell$, so that $C_X=C_Z=\evl(\bF_q[X]^S)$ for $S\subseteq[q-1]$ given by~(\ref{eq:qTBsupport}). Given an additional folding parameter $s|(q-1)/r$, we define the \textbf{folded quantum Tamo-Barg (fqTB) code} $\tilde{\cC}$ to be the quantum code of local dimension $q^s$ and block length $(q-1)/s$ obtained as follows. Fix a generator $\omega_{q-1}$ for $\bF_q^*$, and then for every $i\in[(q-1)/s]$, we block together the $s$ components (each of local dimension $q$) at positions $\{\omega_{q-1}^{si},\omega_{q-1}^{si+1},\dots,\omega_{q-1}^{si+s-1}\}$ in $\cC$ into a single component (of local dimension $q^s$) of the folded code $\tilde{\cC}$. Let $\tilde{C}_X,\tilde{C}_Z$ denote the $\bF_q$-linear codes obtained by similarly folding $C_X,C_Z$, so that $\tilde{\cC}=\CSS(\tilde{C}_X,\tilde{C}_Z)$.
\end{definition}

As a point of notation, we will almost universally use $\tilde{\alpha}$ to denote the $s$-folded version of an object $\alpha$, where $s$ will be clear from context. Specifically, we denote $\tilde{\bF}_q^*=\{\{\omega_{q-1}^{si},\dots,\omega_{q-1}^{si+s-1}\}:i\in[(q-1)/s]\}$ to be the partition of $\bF_q^*$ into the $(q-1)/s$ blocks of size $s$ into which we folded the qTB code in Defintion~\ref{def:fqTBcode}. For a function $a:\bF_q^*\rightarrow\bF_q$, we denote $\tilde{a}:\tilde{\bF}_q^*\rightarrow\bF_q^s$ the associated folded function given by $\tilde{a}(S)=(a(x))_{x\in S}$ for $S\in\tilde{\bF}_q^*$. For instance, if $a$ is a codeword of the classical code $C$ associated to a qTB code $\cC=\CSS(C,C)$, then $\tilde{a}$ is a codeword of the folded code $\tilde{C}$ associated to the fqTB code $\tilde{\cC}=\CSS(\tilde{C},\tilde{C})$. We also extend this notation to subspaces $V\subseteq\bF_q^{\bF_q^*}$, so that $\tilde{V}=\{\tilde{a}:a\in V\}$; thus indeed $\tilde{C}$ refers to the folded version of a code $C$, meaning that $\tilde{C}$ is obtained by replacing all codewords $a\in C$ with their folded version $\tilde{a}$. We furthermore define $\fevl:\bF_q[X]\rightarrow(\bF_q^s)^{\tilde{\bF}_q^*}$ by $\fevl(f)=\widetilde{\evl(f)}$, meaning that $\fevl$ for applies the evaluation map $\evl$, and then folds the output.

For a given $x=\omega_{q-1}^b\in\bF_q^*$, we let $\foldi{x}\in\tilde{\bF}_q^*$ denote the unique element of $\tilde{\bF}_q^*$ that contains $x$, so that
\begin{equation*}
  \foldi{x} = \{\omega_{q-1}^{\lfloor b/s\rfloor\cdot s+j}:j\in[s]\}.
\end{equation*}
For a polynomial $f(X)\in\bF_q[X]$, we will also let $\tilde{f}(\foldi{x})=\fevl(f)(\foldi{x})$. Therefore in particular if $\fevl(f)$ is a folded codeword, then $\tilde{f}(F_x)\in\bF_q^s$ denotes the component of this folded codeword associated to the partition element in $\tilde{\bF}_q^*$ that contains $x\in\bF_q^*$.

Folding a code by definition preserves the rate, so a folded qTB code has the same rate as the associated unfolded qTB code, which we computed in Lemma~\ref{lem:qTBdim}. Below, we use our proof in Lemma~\ref{lem:piecelin} and Corollary~\ref{cor:qTBLRC} that qTB codes are qLRCs to show that folded qTB codes are also qLRCs.

\begin{corollary}
  \label{cor:fqTBLRC}
  The fqTB code give in Definition~\ref{def:fqTBcode} is locally recoverable with locality $r$.
\end{corollary}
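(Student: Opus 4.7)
The plan is to reduce to the unfolded case by observing that the piecewise linear parity checks $f_\alpha \in B^\perp \subseteq C^\perp$ used in the proof of Corollary~\ref{cor:qTBLRC} remain effective after folding. Applying Corollary~\ref{cor:CSSLRC} to $\tilde{\cC} = \CSS(\tilde{C},\tilde{C})$, it suffices to exhibit, for every folded component $F \in \tilde{\bF}_q^*$, a single parity check in $\tilde{C}^\perp$ whose folded support contains $F$ and has size at most $r$; such a parity check will simultaneously serve as the $X$-type and $Z$-type check required by the corollary.

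The main step is a short combinatorial observation: each coset $\alpha \Omega_r \subseteq \bF_q^*$ meets each folded component at most once. Writing $\alpha = \omega_{q-1}^a$ and using the divisibility hypothesis $s \mid (q-1)/r$ to set $(q-1)/r = sk$, the elements of $\alpha \Omega_r$ are $\omega_{q-1}^{a + isk}$ for $i \in [r]$; since $isk$ is a multiple of $s$, their fold indices are $\lfloor a/s \rfloor + ik \pmod{(q-1)/s}$, and these are distinct as $i$ ranges over $[r]$ (because $(q-1)/s = rk$). Since folding preserves the standard dot product, $\tilde{f}_\alpha \in \tilde{C}^\perp$, and the above shows that its folded support has size exactly $r$.

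To conclude, given any $F \in \tilde{\bF}_q^*$, pick any $\alpha \in F$; then $F \in \supp(\tilde{f}_\alpha)$ and $|\supp(\tilde{f}_\alpha)| = r$ in the folded sense, so Corollary~\ref{cor:CSSLRC} yields local recoverability of $\tilde{\cC}$ with locality $r$. There is no real obstacle in this argument: the whole proof reduces to the divisibility check above, which simply reflects that the choice of folding---along consecutive powers of $\omega_{q-1}$ with $s \mid (q-1)/r$---is precisely the one that respects the coset structure of the recovery sets.
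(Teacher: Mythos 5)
Your key combinatorial observation is correct and is the same one the paper relies on: because $s\mid (q-1)/r$, multiplication by an $r$th root of unity shifts exponents by a multiple of $s$, so each coset $\alpha\Omega_r$ meets each folded block at most once, and hence the recovery structure survives folding. However, there is a genuine gap in your reduction. You claim it suffices to exhibit, for each folded component $F$, a \emph{single} parity check $\tilde{f}_\alpha\in\tilde{C}^\perp$ (for one chosen $\alpha\in F$) whose folded support contains $F$ and has size $r$. This is not sufficient when $s>1$: the unfolded check $f_\alpha$ is supported only on $\alpha\Omega_r$, which intersects the block $F$ in the single point $\alpha$. Erasing the folded component $F$ erases all $s$ underlying $\bF_q$-symbols in $F$, and the one $\bF_q$-linear equation coming from $f_\alpha$ determines only the symbol at position $\alpha$; it says nothing about the other $s-1$ erased symbols. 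Equivalently, Corollary~\ref{cor:CSSLRC} (via Proposition~\ref{prop:stabLRC}) is justified in the paper for qudits of local dimension $q$, where one $X$-type and one $Z$-type check pin down a single-qudit Pauli; for a local dimension of $q^s$ one needs $s$ independent checks of each type to do so, and a single check through $F$ cannot certify local recoverability of $F$.

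The fix is immediate from your own computation and is what the paper does: use \emph{all} $s$ checks $\{f_{\alpha'}:\alpha'\in F\}$. Since multiplication by $\omega_r^i$ maps the entire block $F$ to the block $F_{\alpha\omega_r^i}$, the folded supports of these $s$ checks all coincide with $\{F_{\alpha\omega_r^i}:i\in[r]\}$, so each of the $s$ erased symbols of $F$ is recovered by its own check while the total set of folded components accessed is still only the $r-1$ blocks $F_{\alpha\omega_r^i}$, $\omega_r^i\in\Omega_r\setminus\{1\}$. The paper phrases this as recovering each unfolded symbol $x'\in F$ separately from $x'(\Omega_r\setminus\{1\})$ and observing that $\bigsqcup_{x'\in F}x'(\Omega_r\setminus\{1\})=F(\Omega_r\setminus\{1\})$. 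With that one-line repair your argument matches the paper's.
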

\begin{proof}
  Let $\cC=\CSS(C,C)$ denote the qTB code, and let $\tilde{\cC}=\CSS(\tilde{C},\tilde{C})$ denote the associated folded qTB code. Our proof of Lemma~\ref{lem:piecelin} and Corollary~\ref{cor:qTBLRC} implies that for every $x\in\bF_q^*$, component $x$ of $\cC$ can be recovered from the components at positions in $x(\Omega_r\setminus\{1\})$.

  Now for every $F_x\in\tilde{\bF}_q^*$, component $F_x$ of $\tilde{\cC}$ consists of the components $x'$ of $\cC$ for all $x'\in F_x$. As stated above, each such component $x'$ of $\cC$ can be recovered from the components in $x'(\Omega_r\setminus\{1\})$ of $\cC$. But by definition the union of all these recovery sets $x'(\Omega_r\setminus\{1\})$ for $x'\in F_x$ equals the union of the $r-1$ partition elements $F_{x\omega_r^i}$ for $\omega_r^i\in\Omega_r\setminus\{1\}$, that is,
  \begin{equation*}
    \bigsqcup_{x'\in F_x}x'(\Omega_r\setminus\{1\}) = F_x(\Omega_r\setminus\{1\}).
  \end{equation*}
  Therefore component $F_x$ of $\tilde{C}$ can be recovered from these $r-1$ components $F_{x\omega_r^i}$ for $\omega_r^i\in\Omega_r\setminus\{1\}$ by simply separately recovering the symbol at each position $x'\in F_x\subseteq\bF_q^*$ from the symbols at positions in $x'(\Omega_r\setminus\{1\})$ using the recovery maps for the unfolded qTB code $\cC$. Therefore $\tilde{\cC}$ is a qLRC of locality $r$, as desired.
\end{proof}

\section{Bounding the Distance}
\label{sec:distance}
In this section, we construct explicit qLRCs approaching the Singleton-like bound. We first bound the distance of the qTB codes introduced in Definition~\ref{def:qTBcode} in Section~\ref{sec:construct}. Interestingly, we are only able to get a distance bound for these codes that is reminiscent of the Johnson bound, so that this bound in particular only approaches the Singleton bound for codes of small rate. However, we then show that the folded version of these qTB codes introduced in Defintion~\ref{def:fqTBcode} have distance approaching the Singleton bound for all rates, for sufficiently large locality parameters $r$. It remains an open question whether the unfolded qTB codes have distance approaching the Singleton bound.

Note that classical Tamo-Barg codes are subcodes of Reed-Solomon codes, so their distance bound immediately follows from the distance of RS codes. That is, the distance of a classical TB code with parameters $q,r,\ell$ is at least the distance $d=q-\ell$ of a classical RS code with parameters $q,\ell$. However, our quantum TB codes are not subcodes of RS codes due to the orthogonality condition for CSS codes which necessitates extra high degree terms in the polynomials. As a result our distance analysis in the quantum case is much more involved and delicate. Indeed, the following corollary of Theorem~\ref{thm:singletonpart} shows that the distance of a qTB code with parameters $q,r,\ell$ must be strictly worse asymptotically than the distance $d=q-\ell$ of a qRS code with parameters $q,\ell$.

\begin{corollary}
  \label{cor:qTBupperdis}
  The distance $d$ of the qTB code $\cC=\CSS(C,C)$ with parameters $q,r,\ell$ satisfies
  \begin{equation*}
    d \leq \left(1-\frac{1}{r}\right)(q-\ell)+5.
  \end{equation*}
\end{corollary}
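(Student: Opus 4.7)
The plan is to derive this as a direct corollary of Theorem~\ref{thm:singletonpart} applied to the qTB code. First I would verify that the qTB code satisfies the hypotheses of that theorem: by Lemma~\ref{lem:piecelin} and Corollary~\ref{cor:qTBLRC}, the natural local recovery sets are $I_\alpha = \alpha \Omega_r$ for $\alpha \in \bF_q^*$; since $\Omega_r$ is a subgroup of $\bF_q^*$ of order exactly $r$, its cosets partition $\bF_q^* \cong [n]$ into blocks of equal size $r$. So every $|I_\alpha| = r$ and the recovery sets form a partition, exactly matching the hypothesis of Theorem~\ref{thm:singletonpart}.

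Next I would substitute the block length $n = q-1$ and the dimension formula $k = (2\ell - q)(1 - 2/r) + \epsilon$ from Lemma~\ref{lem:qTBdim} (with $|\epsilon| \le 2$) into the bound
\[
k \leq \left(1 - \frac{2}{r}\right)n - 2\left(d - 1 - \left\lceil\tfrac{d-1}{r-1}\right\rceil\right),
\]
and rearrange to isolate the $d$-term on the left. Using $1 - 2/r = (r-2)/r$, this yields
\[
2\left(d - 1 - \left\lceil\tfrac{d-1}{r-1}\right\rceil\right) \leq \tfrac{r-2}{r}\bigl(2(q-\ell) - 1\bigr) - \epsilon.
\]
I would then apply $\lceil (d-1)/(r-1)\rceil \leq (d-1)/(r-1) + 1$ on the left, collect terms using $1 - 1/(r-1) = (r-2)/(r-1)$, and divide through by $2(r-2)/(r-1)$. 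The leading term simplifies as $\tfrac{r-1}{2r}\cdot 2(q-\ell) = (1 - 1/r)(q-\ell)$, which is exactly the main term in the target bound.

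The only real task remaining is bookkeeping of the additive $O(1)$ slack, which is where the constant $5$ comes from. Four sources of slack need to be tracked for $r \geq 3$: the $+1$ from the ceiling estimate, the $\epsilon \in [-2,2]$ from Lemma~\ref{lem:qTBdim}, the $-1$ inside $(2(q-\ell)-1)$, and the factor $(r-1)/(r-2) \le 2$ that multiplies the $\epsilon$ and ceiling corrections after dividing through. Each piece is bounded by a small absolute constant, and a direct (if slightly tedious) estimate shows the sum is at most $5$. This constant-chasing is the only mildly nontrivial step, but it is entirely mechanical: there is no conceptual obstacle beyond verifying that the hypotheses of Theorem~\ref{thm:singletonpart} apply.
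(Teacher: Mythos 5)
Your proposal is correct and is essentially the paper's own proof: the paper likewise obtains the corollary by substituting the dimension formula of Lemma~\ref{lem:qTBdim} into the bound of Theorem~\ref{thm:singletonpart}, bounding the ceiling term, and rearranging, with the constant $5$ absorbing the same $O(1)$ slack you track (your extra verification that the recovery sets $\alpha\Omega_r$ partition $\bF_q^*$ with $|I_\alpha|=r$ is a point the paper leaves implicit).
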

\begin{proof}
  Plugging the expression for the dimension $k$ of $\cC$ from Lemma~\ref{lem:qTBdim} into the bound in Theorem~\ref{thm:singletonpart} gives that
  \begin{align*}
    (2\ell-q)\left(1-\frac{2}{r}\right)-2
    &\leq \left(1-\frac{2}{r}\right)(q-1)-2\left(1-\frac{1}{r-1}\right)(d-1)+2.
  \end{align*}
  Rearranging terms in the above expression then yields the desired result.
\end{proof}

Similar reasoning as in Corollary~\ref{cor:qTBupperdis} also implies that the folded qTB code with parameters $q,r,\ell,s$ has asymptotically worse distance than the folded qRS code with parameters $q,\ell,s$, as the folded qRS code lies on the quantum Singleton bound like its unfolded counterpart.

\subsection{Distance of Unfolded Quantum Tamo-Barg Codes}
\label{sec:qTBdis}
We now show the following bound on the distance of the qTB codes from Definition~\ref{def:qTBcode}. Recall that the following result was stated informally as Theorem~\ref{thm:qTBdisinf} in Section~\ref{sec:qTBinf}.

\begin{theorem}
  \label{thm:qTBdis}
  The qTB code $\cC$ in Definition~\ref{def:qTBcode} with a prime locality parameter $r$ has distance at least
  \begin{equation}
    \label{eq:qTBdis}
    d = (q-1)\left(1-\frac{1}{2r}-\sqrt{\frac{1}{4r^2}+\frac{r-1}{r}\cdot\frac{\ell-1}{q-1}}\right).
  \end{equation}
\end{theorem}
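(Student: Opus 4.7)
The plan is to work exclusively with the classical code $C = \evl(\bF_q[X]^S)$ underlying $\cC = \CSS(C,C)$; since $d(\cC) = \min_{c \in C \setminus C^\perp} |c|$ and Lemma~\ref{lem:qTBdef} gives the containment $\evl(\bF_q[X]^{[q-1]\cap(1+r\bZ)}) \subseteq C^\perp$, it suffices to lower bound $|\evl(f)|$ for every $f \in \bF_q[X]^S$ whose image is not already piecewise linear. Using the direct sum decomposition $\bF_q[X]^S = \bF_q[X]^A \oplus \bF_q[X]^{[q-1]\cap(1+r\bZ)}$ with $A = \{i \in [\ell] : i \not\equiv \pm 1 \pmod r\}$, I would write $f = g + h$ and reduce to the case $g \neq 0$.

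Following the sketch in Section~\ref{sec:qTBoverviewinf}, I would introduce the auxiliary polynomial
\begin{equation*}
  G(X) \;:=\; \prod_{i=1}^{r-1}\bigl(\omega_r^{-i}\, g(\omega_r^i X) \;-\; g(X)\bigr),
\end{equation*}
and establish three properties. First, $\deg G \leq (r-1)(\ell-1)$, immediate since each factor has degree at most $\deg g \leq \ell-1$. Second, $G \not\equiv 0$: the $i$-th factor expands as $\sum_{k \in A} g_k\bigl(\omega_r^{i(k-1)} - 1\bigr) X^k$, and the primality of $r$ together with $i \in \{1,\dots,r-1\}$ gives $\gcd(r,i) = 1$, so $\omega_r^{i(k-1)} \neq 1$ whenever $k \not\equiv 1 \pmod r$; since every $k \in A$ satisfies this, each factor is nonzero whenever $g \neq 0$. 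Third, every $x \in \bF_q^*$ with $f(x) = f(\omega_r^i x) = 0$ for some $i \in \{1,\dots,r-1\}$ is a root of $G$: the piecewise linearity of $h$ from Lemma~\ref{lem:piecelin} says that on the coset $x\Omega_r$ the function $h$ acts as $y \mapsto \beta y$ for some $\beta \in \bF_q$, so $\omega_r^{-i} h(\omega_r^i x) = h(x)$, and hence $\omega_r^{-i} g(\omega_r^i x) - g(x) = -\omega_r^{-i} h(\omega_r^i x) + h(x) = 0$.

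The remaining step is to aggregate. Let $Z = \{x \in \bF_q^* : f(x) = 0\}$, so that $|Z| = (q-1) - |\evl(f)|$, and for each of the $(q-1)/r$ cosets $\gamma$ of $\Omega_r$ in $\bF_q^*$ set $m_\gamma = |Z \cap \gamma|$. By the third property above, each $x \in Z \cap \gamma$ is a root of at least $m_\gamma - 1$ of the factors, so $G$ has at least $\sum_\gamma m_\gamma (m_\gamma - 1)$ roots counted with multiplicity. Convexity of $t \mapsto t(t-1)$ applied to the $m_\gamma$'s, whose sum is $|Z|$, yields $\sum_\gamma m_\gamma(m_\gamma - 1) \geq \frac{r|Z|^2}{q-1} - |Z|$. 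Combining this with the degree bound on $G$ gives the quadratic inequality $\frac{r|Z|^2}{q-1} - |Z| \leq (r-1)(\ell-1)$; solving for $|Z|$ and substituting $|\evl(f)| = (q-1) - |Z|$ yields exactly~\eqref{eq:qTBdis}.

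The main obstacle, and the only step where the primality of $r$ is invoked, is the non-vanishing of $G$: without $r$ prime, some $i \in \{1,\dots,r-1\}$ with $\gcd(r,i) > 1$ would satisfy $\omega_r^{i(k-1)} = 1$ for certain $k \in A$, potentially making one of the factors vanish identically and collapsing the construction. Conceptually the argument is a ``Johnson-like'' trade-off: one gains a factor of $r$ in the count of colliding zeros within cosets of $\Omega_r$ at the price of a factor $(r-1)$ in the degree of $G$, and it is this $r$-versus-$(r-1)$ tension that produces the square-root shape of the bound.
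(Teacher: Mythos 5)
Your proposal is correct and follows essentially the same route as the paper's proof: the same decomposition $f=g+h$, the same auxiliary polynomial $G(X)=\prod_{i=1}^{r-1}(\omega_r^{-i}g(\omega_r^iX)-g(X))$ with the identical degree bound, non-vanishing argument via primality of $r$, root-detection via piecewise linearity of $h$, and the same convexity step $\sum_\gamma m_\gamma(m_\gamma-1)\geq \frac{r|Z|^2}{q-1}-|Z|$. Nothing is missing.
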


The reader is referred to Section~\ref{sec:qTBoverviewinf} for more context surrounding the above bound and its apparent similarity to the Johnson bound.

\begin{proof}[Proof of Theorem~\ref{thm:qTBdis}]
  Let $\cC=\CSS(C,C)$ be a qTB code with parameters $q,r,\ell$. Fix an arbitrary $\evl(f)\in C\setminus C^\perp$. Our goal is to show that $|\evl(f)|\geq d$.

  For a high-level proof overview, the reader is referred to the proof sketch in Section~\ref{sec:qTBoverviewinf}. In brief, we will define a polynomial $G$ that has many roots associated to the roots of $f$, but also has low degree. We then obtain the desired result by comparing bounds on the number of roots and the degree of $G$.

  We now give the formal proof. As $C=\evl(\bF_q[X]^S)$ for $S=([\ell]\setminus(-1+r\bZ))\cup([q-1]\cap(1+r\bZ))$, we may write $f(X)=g(X)+h(X)$, where $g(X)\in\bF_q[X]^{[\ell]\setminus(\pm 1+r\bZ)}$ and $h(X)\in\bF_q[X]^{[q-1]\cap(1+r\bZ)}$. Then $h(X)$ is piecewise linear by Lemma~\ref{lem:piecelin}. As $\evl(h)\in C^\perp$ and by assumption $\evl(f)\notin C^\perp$, we must have $g\neq 0$.
  
  Define a polynomial $G(X)\in\bF_q[X]$ by
  \begin{equation}
    \label{eq:GJohnson}
    G(X) = \prod_{i=1}^{r-1}(\omega_r^{-i}g(\omega_r^iX)-g(X)).
  \end{equation}
  Letting $g(X)=\sum_jg_jX^j$, then by definition the $j$th coefficient of $\omega_r^{-i}g(\omega_r^iX)-g(X)$ equals $(\omega_r^{i(j-1)}-1)g_j$. Because $g_j=0$ for all $j\equiv 1\pmod{r}$, and $r$ is prime so that $\omega_r^{i(j-1)}\neq 1$ for all $j\not\equiv 1\pmod{r}$, it follows that $(\omega_r^{i(j-1)}-1)g_j=0$ iff $g_j=0$. Therefore in particular $\deg(\omega_r^{-i}g(\omega_r^iX)-g(X))=\deg(g)\leq\ell-1$, so $\deg(G)=(r-1)\deg(g)\leq(r-1)(\ell-1)$. It also follows that $G\neq 0$.

  We now bound the number of roots of $G$ in terms of the number of roots of $f$. If $f(x)=0$ and $f(\omega_r^ix)=0$ for some $x\in\bF_q^*$ and some $1\leq i\leq r-1$, then
  \begin{align*}
    \omega_r^{-i}g(\omega_r^ix)-g(x)
    &= -\omega_r^{-i}h(\omega_r^ix)+h(x) = -h(x)+h(x) = 0,
  \end{align*}
  where the second equality above holds because $h$ is piecewise linear.

  Therefore letting $\Omega_r=\{1,\omega_r,\dots,\omega_r^{r-1}\}$ denote the $r$th roots of unity in $\bF_q^*$, then for every ordered pair of distinct elements $(x,y=\omega_r^ix)$ within some coset $x\Omega_r$ such that $f(x)=f(y)=0$, the polynomial $G(X)$ has an associated root at $X=x$. Therefore $G(X)$ has $(r-|\evl(f)|_{x\Omega_r}|)(r-|\evl(f)|_{x\Omega_r}|-1)$ roots within a given coset $x\Omega_r$, where $|\evl(f)|_{x\Omega_r}|$ denotes the Hamming weight of the restriction of $\evl(f)$ to $x\Omega_r$. Thus the total number of roots of $G$ is at least
  \begin{align*}
    \hspace{1em}&\hspace{-1em} \sum_{x\Omega_r\in\bF_q^*/\Omega_r}(r-|\evl(f)|_{x\Omega_r}|)(r-|\evl(f)|_{x\Omega_r}|-1) \\
    &= \frac{q-1}{r}\bE_{x\Omega_r\sim\Unif(\bF_q^*/\Omega_r)}[(r-|\evl(f)|_{x\Omega_r}|)^2]-(q-1-|\evl(f)|) \\
    &\geq \frac{q-1}{r}\bE_{x\Omega_r\sim\Unif(\bF_q^*/\Omega_r)}[(r-|\evl(f)|_{x\Omega_r}|)]^2-(q-1-|\evl(f)|) \\
    &= \frac{r}{q-1}(q-1-|\evl(f)|)^2-(q-1-|\evl(f)|).
  \end{align*}
  But we showed above that $G$ is a nonzero polynomial of degree $\leq(r-1)(\ell-1)$, so we must have
  \begin{equation*}
    \frac{r}{q-1}(q-1-|\evl(f)|)^2-(q-1-|\evl(f)|) \leq (r-1)(\ell-1).
  \end{equation*}
  Solving this quadratic equation and rearranging terms gives that $|\evl(f)|\geq d$ for $d$ given in~(\ref{eq:qTBdis}), as desired.
\end{proof}

\subsection{Distance of Folded Quantum Tamo-Barg Codes}
\label{sec:fqTBdis}
In this section, we show the following bound on the distance of the fqTB codes from Definition~\ref{def:fqTBcode}.

\begin{theorem}
  \label{thm:fqTBdis}
  Let $\tilde{\cC}$ be the fqTB code in Definition~\ref{def:fqTBcode} with parameters $q,r,\ell,s$ such that $r$ is prime and such that the uncertainty principle in Proposition~\ref{prop:uncertainty} holds for $r$ over $\bF_q$. Then $\tilde{\cC}$ has distance at least
  \begin{equation}
    \label{eq:fqTBdis}
    d = \frac{q-1}{s}\left(1-\frac{\ell-1}{q-1}-\epsilon\right).
  \end{equation}
  for
  \begin{equation}
    \label{eq:fqTBloss}
    \epsilon = \max_{1\leq m\leq r}\min\left\{\left(1-\frac{\ell-1}{q-1}\right)\frac{m-1}{r},\frac{1}{m}+\frac{m-1}{s}\right\}.
  \end{equation}
\end{theorem}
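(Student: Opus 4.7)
The plan is to fix an arbitrary $\evl(f) \in C \setminus C^\perp$ and, as in the proof of Theorem~\ref{thm:qTBdis}, decompose $f = g + h$ with $g \in \bF_q[X]^{[\ell]\setminus(\pm 1 + r\bZ)}$ nonzero (since $\evl(h) \in C^\perp$ and $\evl(f) \notin C^\perp$) and $h \in \bF_q[X]^{[q-1]\cap(1+r\bZ)}$ piecewise linear on the cosets of $\Omega_r$. Let $W$ denote the number of folded blocks on which $\tilde f$ vanishes identically, so the folded weight of $\tilde f$ is $(q-1)/s - W$; the theorem reduces to upper bounding $W$ by the claimed expression. I would introduce a threshold parameter $m \in \{1, \ldots, r\}$ and split the argument based on the \emph{residue support} $R(g) = \{i \bmod r : g_i \neq 0\} \subseteq \bZ/r\bZ$.

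\emph{Case A ($|R(g)| \geq m$; determinant method).} I would generalize the polynomial $G(X)$ from~(\ref{eq:GJohnson}) to exploit the folded block structure. For each folded block $B_x = \{x, \omega_{q-1} x, \ldots, \omega_{q-1}^{s-1} x\}$ on which $f$ vanishes, the equations $g(\omega_{q-1}^j x) + \alpha_j \omega_{q-1}^j x = 0$ for $j = 0, \ldots, s-1$, where $\alpha_0, \ldots, \alpha_{s-1}$ are the unknown linear coefficients of $h$ on the $s$ distinct cosets of $\Omega_r$ meeting $B_x$, allow one to eliminate the $\alpha_j$ by forming an $s \times s$ determinant expression in the values of $g$ that must vanish. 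The resulting polynomial $G(X)$ has a root at the leading element of every entirely-zero folded block, its degree is bounded in terms of $\ell - 1$ and $m$ via a careful expansion, and it is nonzero precisely because $|R(g)| \geq m$ exhibits a nonvanishing minor of the underlying coefficient matrix. Comparing the root count with the degree bound yields a bound on $W$ matching the first term $(1 - (\ell-1)/(q-1))(m-1)/r$ in $\epsilon$.

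\emph{Case B ($|R(g)| \leq m - 1$; uncertainty principle).} I would apply Proposition~\ref{prop:uncertainty} cosetwise. Identifying $\Omega_r \cong \bZ/r\bZ$, the restriction $f|_{c\Omega_r}$ is a function on $\bZ/r\bZ$ whose Fourier (DFT) support is contained in $R(g) \cup \{1\}$ and hence has size at most $m$. The uncertainty principle then forces any nonzero restriction to have support of size at least $r - m + 1$, i.e., at most $m - 1$ zeros per ``live'' coset. Controlling the number of ``dead'' cosets (those on which $f|_{c\Omega_r} \equiv 0$) by observing that a dead coset imposes a system of polynomial equations on $g$'s coset-level coefficients, and then converting from pointwise zeros of $f$ into entirely-zero folded blocks, yields the second term $1/m + (m-1)/s$ of $\epsilon$: the $(m-1)/s$ contribution tracks at most $m-1$ live-coset zeros scaled against blocks of size $s$, while the $1/m$ contribution arises from bounding the dead cosets.

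Combining the two cases, for each threshold $m$ the codeword lies in exactly one of them, so $\epsilon(f) \leq \max\{A(m), B(m)\}$ where $A(m), B(m)$ are the two loss terms above. Optimizing over $m$ gives $\epsilon \leq \min_m \max\{A(m), B(m)\}$; since $A$ is monotonically increasing in $m$ while $B$ is convex with its minimum near $m \approx \sqrt{s}$, the two curves realize a saddle point at which $\min_m \max = \max_m \min$, reproducing the formula in the statement. The main obstacle will be the Case~A construction: designing the determinant polynomial so that (i) its degree matches the required bound, (ii) it is guaranteed nonzero precisely under the hypothesis $|R(g)| \geq m$, and (iii) its root set captures the leading element of every entirely-zero folded block. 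This demands a delicate algebraic analysis of the Vandermonde-like interaction between the $s$ cosets of $\Omega_r$ meeting each folded block and the monomial residue structure of $g$, and is where the folding parameter $s$ genuinely improves upon the unfolded argument in Theorem~\ref{thm:qTBdis}.
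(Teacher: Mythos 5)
Your high-level template is right — a decomposition $f = g + h$, a determinant argument, an uncertainty-principle argument, and an optimization over a parameter $m$ — but the details are scrambled in ways that would sink the proof.

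\textbf{The two methods are assigned to the wrong loss terms.} In the paper, the uncertainty principle (Proposition~\ref{prop:uncertainty}), applied cosetwise exactly as you describe in your Case~B, shows that when the residue support $M$ of $f$ has size $m$, every live coset has at most $m-1$ zeros and there are at most $(\ell-1)/r$ dead cosets; this yields the loss $\left(1-\frac{\ell-1}{q-1}\right)\frac{m-1}{r}$, the \emph{first} term of~(\ref{eq:fqTBloss}). The determinant method gives the \emph{second} term $\frac{1}{m}+\frac{m-1}{s}$. You have these swapped. Moreover, your Case~B interpretation (``$(m-1)/s$ tracks live-coset zeros scaled against blocks of size $s$, $1/m$ bounds dead cosets'') does not come from the uncertainty argument at all; in the paper the $(m-1)/s$ term is a boundary loss from folded blocks whose last $m-1$ positions cannot be aligned with the $m$ shifted evaluation points, and the $1/m$ term arises from comparing $\deg G \leq m(\ell-1)$ against the root count, both inside the determinant argument.

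\textbf{The determinant construction you propose would not work.} You suggest eliminating the piecewise-linear coefficients $\alpha_j$ of $h$ on the $s$ cosets meeting a zero block via an $s\times s$ determinant. But each element of a folded block lies in a distinct coset of $\Omega_r$, so each equation $g(\omega_{q-1}^j x) + \alpha_j \omega_{q-1}^j x = 0$ involves exactly one unknown $\alpha_j$; these equations decouple completely, and there is nothing to eliminate by a determinant. The paper's construction is of a different nature: it forms an $m\times m$ matrix $A(X)$ whose $(i,j)$ entry is $\omega_r^{-i}g(\omega_r^i\omega_{q-1}^j X)$, with $m=|M|$ equal to the residue support size. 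The $\omega_r^{-i}$-normalization ensures that $\omega_r^{-i}h(\omega_r^i\omega_{q-1}^j x) = h(\omega_{q-1}^j x)$ is $i$-independent, so for any $i$ with $\fevl(f)$ vanishing on the block $F_{\omega_r^i x}$, the $i$th row of $A(x)$ equals the fixed vector $(-h(\omega_{q-1}^j x))_j$. This forces a rank drop and hence a high-multiplicity root of $\det A$ via Lemma~\ref{lem:detrank}. The key structural fact you need (and haven't identified) is that $s \mid (q-1)/r$ implies $\omega_r^i \omega_{q-1}^j x$ and $\omega_r^i x$ lie in the same folded block for $j < m \leq s$ (modulo the $(m-1)/s$ boundary defect).

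\textbf{No case split is needed.} The paper proves both lower bounds simultaneously for the single value $m = |M|$ determined by $f$, takes the better of the two, and then takes the worst case over $m \in \{1,\dots,r\}$, giving directly $\max_m \min\{A(m),B(m)\}$. Your threshold-based split forces you to argue $\min_m\max = \max_m\min$, which you justify only by a hand-wave about a saddle point. Beyond being unnecessary, your split has a real problem: in your Case~A you only know $|R(g)| \geq m$, but the determinant bound $\frac{1}{m'} + \frac{m'-1}{s}$ is not monotone in $m'=|R(g)|$, so knowing $m' \geq m$ does not bound the loss by $B(m)$. The paper avoids this entirely by fixing $m=|M|$ exactly.

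Finally, a smaller note: the paper's Claim~\ref{claim:determinant} requires $A(X)$ to be nonsingular as a matrix over $\bF_q[X]$; the argument decomposes $A$ as a sum of exactly $|M|$ rank-one matrices whose factor vectors form two Vandermonde systems, and linear independence of the degree-graded second family requires an argument that does not reduce to ``a nonvanishing minor of the coefficient matrix'' as you assert. You should make precise why both Vandermonde systems have full rank, as the paper does.
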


Applying the technical Lemma~\ref{lem:boundloss} proved in Appendix~\ref{app:teclem} to bound the expression $\epsilon$ in~(\ref{eq:fqTBloss}) yields us the following corollary.

\begin{corollary}
  \label{cor:fqTBdisnice}
  Let $\tilde{\cC}$ be the fqTB code in Definition~\ref{def:fqTBcode} with parameters $q,r,\ell,s$ such that $r$ is prime and such that the uncertainty principle in Proposition~\ref{prop:uncertainty} holds for $r$ over $\bF_q$. If $s\geq 2r^2$, then $\tilde{\cC}$ has distance at least
  \begin{equation*}
    d \geq \frac{q-1}{s}\left(1-\frac{\ell-1}{q-1}-\left(1+\frac{r^2}{s}\right)\sqrt{\frac{1}{r}\left(1-\frac{\ell-1}{q-1}\right)}\right).
  \end{equation*}
\end{corollary}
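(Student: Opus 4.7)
The plan is simply to invoke Theorem~\ref{thm:fqTBdis} and reduce the corollary to a purely arithmetic bound on the expression $\epsilon$ defined in~(\ref{eq:fqTBloss}). Setting $\rho := 1 - (\ell-1)/(q-1)$ for brevity, this arithmetic claim, which is exactly the content of Lemma~\ref{lem:boundloss}, asserts
\begin{equation*}
  \max_{1\leq m\leq r}\min\left\{\rho\,\frac{m-1}{r},\,\frac{1}{m}+\frac{m-1}{s}\right\}\;\leq\;\left(1+\frac{r^2}{s}\right)\sqrt{\frac{\rho}{r}}
\end{equation*}
whenever $s\geq 2r^2$. Substituting this upper bound into~(\ref{eq:fqTBdis}) gives the corollary. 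So the work is entirely in proving the displayed max-min inequality.

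Write $A(m):=\rho(m-1)/r$ and $B(m):=1/m+(m-1)/s$. The function $A$ is affine and increasing, while $B$ has derivative $-1/m^2+1/s$, negative for $m<\sqrt{s}$. The hypothesis $s\geq 2r^2$ guarantees $\sqrt{s}>r$, so $B$ is strictly decreasing on $[1,r]$. As is typical for a max-min of a monotone-increasing against a monotone-decreasing sequence, the maximizer lies near the crossing point, which heuristically occurs where $\rho\,m/r\approx 1/m$, i.e.\ at $m\sim\sqrt{r/\rho}$. This explains the shape of the claimed bound: the value at the crossing is roughly $\sqrt{\rho/r}$, and the correction $r^2/s$ absorbs the $(m-1)/s$ term at the (integer) point chosen.

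I would then split into two cases. If $\rho<1/r$, then for every $m\in[1,r]$ we have $A(m)\leq\rho<\sqrt{\rho/r}$, and the bound follows immediately with room to spare. Otherwise $\rho\geq 1/r$, in which case $m^*:=\lfloor\sqrt{r/\rho}\rfloor$ lies in $[1,r]$ (the upper bound uses $\rho\geq 1/r\geq r/(r+1)^2$). For $m\leq m^*$ the inner minimum is at most $A(m)\leq A(m^*)\leq \rho\sqrt{r/\rho}/r=\sqrt{\rho/r}$. For $m\geq m^*+1$, the monotonicity of $B$ on $[1,r]$ gives
\begin{equation*}
  \min\{A(m),B(m)\}\leq B(m^*+1)\leq \frac{1}{\sqrt{r/\rho}}+\frac{\sqrt{r/\rho}}{s}=\sqrt{\frac{\rho}{r}}+\frac{1}{s}\sqrt{\frac{r}{\rho}}.
\end{equation*}
A short calculation using $\rho\geq 1/r$ shows $\sqrt{r/\rho}\leq r^2\sqrt{\rho/r}$ (equivalently $r^2\rho^2\geq 1$), so the second summand is at most $(r^2/s)\sqrt{\rho/r}$.

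Combining the two sub-cases yields the target bound $(1+r^2/s)\sqrt{\rho/r}$. The only real obstacle in the argument is noticing that the inequality needed in the ``large $m$'' sub-case, namely $\sqrt{r/\rho}/s\leq(r^2/s)\sqrt{\rho/r}$, holds precisely when $\rho\geq 1/r$: this is what forces the case split. Everything else is routine monotonicity bookkeeping, with no further coding-theoretic input beyond Theorem~\ref{thm:fqTBdis} itself.
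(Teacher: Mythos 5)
Your proposal is correct and follows essentially the same route as the paper: the corollary is exactly Theorem~\ref{thm:fqTBdis} combined with the arithmetic bound on $\epsilon$ that the paper isolates as Lemma~\ref{lem:boundloss}, including the same case split on whether $\rho\geq 1/r$ and the same crossing-point analysis of the increasing term $\rho(m-1)/r$ against the decreasing term $1/m+(m-1)/s$. The only (immaterial) difference is that you evaluate at the integer $\lfloor\sqrt{r/\rho}\rfloor$ and split the range of $m$, whereas the paper bounds $\epsilon$ by the value at the exact real crossing point and then uses $1/\sqrt{1-x}\leq 1+x$.
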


Recall that an informal statement of Corollary~\ref{cor:fqTBdisnice} was given in Theorem~\ref{thm:fqTBdisinf} in Section~\ref{sec:qTBinf}, which intuitively states the following. Because folding a code preserves its rate, Lemma~\ref{lem:qTBdim} implies that as $q,r,\ell,s\rightarrow\infty$, Corollary~\ref{cor:fqTBdisnice} provides a bound on the relative distance of a fqTB code of rate $R$ that approches
\begin{equation*}
  \frac{d}{(q-1)/s} \rightarrow \frac{1-R}{2}.
\end{equation*}
This bound indeed approaches the ordinary Singleton bound (Proposition~\ref{prop:qSB}) for general quantum codes.

There is one subtlety with our bound from Theorem~\ref{thm:fqTBdis} and Corollary~\ref{cor:fqTBdisnice}, namely, that we require the uncertainty principle in Proposition~\ref{prop:uncertainty} to hold for $r$ over $\bF_q$. Proposition~\ref{prop:uncertainty} shows that for every fixed prime $r$, this uncertainty principle can only fail to hold for finite fields whose characteristic lies in a finite set depending on $r$. Thus we must first fix the locality parameter $r$, and then let the block length grow arbitrarily large. Specifically, by letting $q$ be an arbitrary power of an arbitrary prime outside of this bad finite set depending on $r$, we obtain our desired infinite family of qLRCs satisfying the distance bounds described above.


To prove Theorem~\ref{thm:fqTBdis}, we will use the following well known fact about determinants. For completeness, a proof is provided in Appendix~\ref{app:teclem}.

\begin{lemma}[Well known]
  \label{lem:detrank}
  For every $m\in\bN$, the determinant polynomial $\det\in\bF_q[(X_{ij})_{i,j\in[m]}]$ has a root of multiplicity $m-t$ at every matrix $(x_{ij})_{i,j\in[m]}$ of rank $t$. 
\end{lemma}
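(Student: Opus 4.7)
The plan is to expand $\det(A+Y)$ as a polynomial in the entries of a generic matrix $Y = (Y_{ij})_{i,j\in[m]}$ by using the multilinearity of $\det$ in its columns, and then to observe that the rank-$t$ hypothesis forces all homogeneous components of low degree in $Y$ to vanish identically. Recall that showing $\det$ has a root of multiplicity $\geq m-t$ at $A$ amounts to showing that the Taylor expansion $\det(A+Y)$, viewed as a polynomial in $Y$, has no terms of total degree $< m-t$.

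First I would write, denoting by $A_j$ and $Y_j$ the $j$-th columns of $A$ and $Y$,
\[
\det(A+Y) \;=\; \det(A_1+Y_1,\dots,A_m+Y_m) \;=\; \sum_{S\subseteq[m]} \det(M_S(Y)),
\]
where $M_S(Y)$ is the $m\times m$ matrix whose $j$-th column equals $Y_j$ for $j\in S$ and $A_j$ for $j\notin S$. By construction each summand $\det(M_S(Y))$ is homogeneous of degree $|S|$ in the entries of $Y$, so the degree-$k$ component of $\det(A+Y)$ equals $\sum_{|S|=k}\det(M_S(Y))$.

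The key step is the following rank argument. Fix any $S\subseteq[m]$ with $|S|=k<m-t$. Then $M_S(Y)$ contains $m-k>t$ columns that are drawn from $A$, namely $\{A_j:j\notin S\}$. Since $A$ has column rank $t$, these $m-k$ vectors lie in a $t$-dimensional subspace of $\bF_q^m$ and must be linearly dependent. Consequently $\det(M_S(Y))$ vanishes identically as a polynomial in $Y$, and the entire degree-$k$ component of $\det(A+Y)$ is zero. This yields the claimed vanishing to order $m-t$ at $A$.

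No step here looks delicate: the multilinearity expansion is standard, and the rank/dependence observation is immediate. If one wished also to verify that the multiplicity is \emph{exactly} $m-t$ when $\mathrm{rank}(A)=t$ (not needed for the applications here), one could choose $S$ of size $m-t$ so that $\{A_j:j\notin S\}$ is a basis of the column span of $A$, set $Y_j=0$ for $j\notin S$, and pick $Y_j$ for $j\in S$ extending this basis to $\bF_q^m$; then every $S'\neq S$ with $|S'|=m-t$ contains some index $j\in S'\setminus S$ where $Y_j=0$, killing $\det(M_{S'}(Y))$, while $\det(M_S(Y))\neq 0$.
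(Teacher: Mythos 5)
Your proof is correct and is essentially the paper's argument in a different guise: the paper notes that the order-$d$ partial derivatives of $\det$ are (up to sign) $(m-d)\times(m-d)$ minors, which vanish at a rank-$t$ matrix when $d<m-t$, while you extract the same vanishing minors as the degree-$d$ homogeneous components of the multilinear expansion of $\det(A+Y)$. If anything, your Taylor-expansion phrasing is the slightly safer one over $\bF_q$, since it certifies multiplicity directly without invoking derivatives in positive characteristic (harmless here anyway, as $\det$ is multilinear).
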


We will also use the following uncertainty principle over finite fields. Below we use the following notation for a finite field $\bF_q$ and a prime $r|(q-1)$. For a polynomial $f(X)=\sum_{i\in[q]}f_iX^i\in\bF_q[X]$, we let $|f|$ denote the number of $i\in[q]$ for which the coefficient $f_i$ is nonzero, while we let $|\evl(f)|_{\Omega_r}|$ denote the number of distinct $r$th roots of unity $\omega_r^i\in\Omega_r\subseteq\bF_q$ for which $f(\omega_r^i)$ is nonzero.

\begin{proposition}[\cite{goldstein_inequalities_2005}]
  \label{prop:uncertainty}
  For every fixed prime $r$, the following holds for all but finitely many primes $p$: if $\bF_q$ is a field of characteristic $p$ with $r|(q-1)$, then every nonzero $f\in\bF_q[X]$ of degree $<r$ satisfies
  \begin{equation*}
    |f|+|\evl(f)|_{\Omega_r}| \geq r+1.
  \end{equation*}
\end{proposition}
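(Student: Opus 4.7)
Write $f(X)=\sum_{i=0}^{r-1}f_iX^i$, set $a=(f_0,\dots,f_{r-1})\in\bF_q^r$, and set $b=(f(1),f(\omega_r),\dots,f(\omega_r^{r-1}))\in\bF_q^r$. These are related by $b=Va$, where $V\in\bF_q^{r\times r}$ is the DFT matrix $V_{ij}=\omega_r^{ij}$. By the definitions in the excerpt, $|f|=|a|$ and $|\evl(f)|_{\Omega_r}|=|b|$. Thus the proposition is equivalent to the assertion that $|a|+|Va|\geq r+1$ for every nonzero $a\in\bF_q^r$, i.e., the classical uncertainty principle for the prime-order DFT, now with coefficients in $\bF_q$.

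\textbf{Reduction to nonvanishing of minors.} Suppose for contradiction that $|a|+|b|\leq r$ for some nonzero $a$. Let $S=\supp(a)$ and $T=\supp(b)$, so $|[r]\setminus T|=r-|T|\geq|S|$. Choose $T'\subseteq[r]\setminus T$ with $|T'|=|S|$. Restricting the equation $Va=b$ to rows in $T'$ gives $V_{T',S}\cdot a|_S=0$, where $a|_S\in\bF_q^S$ is nonzero. Consequently, the square submatrix $V_{T',S}$ has $\det V_{T',S}=0$ in $\bF_q$. It therefore suffices to produce a finite set of primes $E=E(r)$ such that for every $p\notin E$, every square submatrix of the DFT matrix $V$ over $\bF_q$ has \emph{nonzero} determinant.

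\textbf{Chebotarev over $\bC$ and reduction modulo $p$.} Over $\bC$, the classical theorem of Chebotarev (1926) states that for $r$ prime, every square submatrix of the $r\times r$ DFT matrix of $\bZ/r$ has nonzero determinant. Equivalently, viewing $\det V_{T',S}$ as a polynomial in a primitive $r$th root of unity with integer coefficients, it defines a \emph{nonzero} element $m_{T',S}\in\cO_r:=\bZ[\zeta_r]$. The image in $\bF_q$ via any ring homomorphism $\cO_r\to\bF_q$ sending $\zeta_r\mapsto\omega_r$ vanishes only if the rational prime $p=\chr(\bF_q)$ divides the norm $N_{\bQ(\zeta_r)/\bQ}(m_{T',S})\in\bZ\setminus\{0\}$. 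Since $V$ has only finitely many square submatrices, the set
\begin{equation*}
  E(r) \;=\; \bigcup_{T',S}\{\text{prime divisors of }N_{\bQ(\zeta_r)/\bQ}(m_{T',S})\}
\end{equation*}
is finite. For any $p\notin E(r)$ with $r\mid q-1$, every minor of $V$ over $\bF_q$ is nonzero, contradicting the singularity of $V_{T',S}$ derived above and completing the proof.

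\textbf{Main obstacle.} The crux is Chebotarev's theorem itself. A clean modern proof proceeds by a $p$-adic deformation: lift $\zeta_r$ to $\bZ_\ell[\zeta_r]$ for a suitable auxiliary prime $\ell$, write $\zeta_r=1+\pi u$ for a uniformizer $\pi$ above $\ell$, Taylor-expand $\det V_{T',S}$ in powers of $\pi$, and verify that the leading coefficient is (up to a unit) a nonzero Vandermonde-type determinant. Attempting to run this argument directly in characteristic $p$ is delicate precisely because the leading Taylor coefficient can vanish after reduction modulo $p$; this is why the finite exceptional set $E(r)$ is intrinsic to the statement rather than an artifact of the reduction step. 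Once Chebotarev is granted over $\bC$, the descent to $\bF_q$ sketched above is elementary.
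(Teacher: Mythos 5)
Your proposal is correct and follows essentially the same route the paper takes: reducing the uncertainty principle to the nonvanishing of all square minors of the $r\times r$ DFT matrix, invoking Chebotar\"{e}v's theorem over $\bC$, and descending to $\bF_q$ by excluding the finitely many primes dividing the norms of the (nonzero) minors in $\bZ[\zeta_r]$ — exactly the chain of reductions the paper attributes to \cite{tao_uncertainty_2004} and Section~6 of \cite{goldstein_inequalities_2005}. Your writeup is in fact more explicit than the paper's own discussion, and correctly identifies Chebotar\"{e}v's theorem as the one ingredient taken as a black box.
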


Proposition~\ref{prop:uncertainty} can be viewed as an uncertainty principle, as it says that if $f$ has few nonzero coefficients, then it has many nonzero evaluation points on $r$th roots of unity.

This uncertainty principle is implied by the statement that all minors of the Vandermonde matrix $(\omega_r^{ij})_{i,j\in[r]}$ are nonzero (a proof of this implication is given by \cite{tao_uncertainty_2004}). Over the complex field, this Vandermonde matrix also has no vanishing minors, as was first proved by Chebotar\"{e}v; additional proofs have since been given, e.g.~\cite{evans_generalized_1976,tao_uncertainty_2004,frenkel_simple_2004,goldstein_inequalities_2005}. The discussion in Section~6 of \cite{goldstein_inequalities_2005} shows that for every fixed $r$, Chebotar\"{e}v's theorem over $\bC$ implies the same result over finite fields of all but finitely many characteristics. Thus Proposition~\ref{prop:uncertainty} holds. We remark that a similar result is also shown in \cite{zhang_chebotarev_2019}.


\begin{proof}[Proof of Theorem~\ref{thm:fqTBdis}]
  Let $\cC=\CSS(C,C)$ be a qTB code with parameters $q,r,\ell$ as in the theorem statement, and let $\tilde{\cC}=\CSS(\tilde{C},\tilde{C})$ be the folding of $\cC$ for folding parameter $s|(q-1)/r$. Fix an arbitrary $f(X)=\sum_{i\in[q-1]}f_iX^i\in\bF_q[X]$ such that $\evl(f)\in C\setminus C^\perp$, with associated folded codeword $\fevl(f)\in\tilde{C}\setminus\tilde{C}^\perp$. Our goal is to show that $|\fevl(f)|\geq d$.

  Let $M=\{i\in[r]:\exists j\equiv i\pmod{r}\text{ with }f_j\neq 0\}$ and $m=|M|$. In the two claims below, we show two upper bounds on $|\fevl(f)|\geq d$; the first bound is tighter when $m$ is small, while the second is tighter when $m$ is large.
  \begin{claim}
    \label{claim:uncertainty}
    \begin{equation*}
      |\fevl(f)| \geq \frac{q-1}{s}\left(1-\frac{\ell-1}{q-1}\right)\left(1-\frac{m-1}{r}\right).
    \end{equation*}
  \end{claim}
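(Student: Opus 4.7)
The plan is to reduce the claim to counting nonzero values of $f$ on $\bF_q^*$ organized by cosets of the subgroup $\Omega_r$ of $r$th roots of unity. On each such coset I would apply the uncertainty principle (Proposition~\ref{prop:uncertainty}) to relate the parameter $m$ to the number of points where $f$ must vanish, and then separately bound the number of cosets on which $f$ vanishes identically using the polynomial structure of $f$. Finally, I would convert the resulting count of nonzero values into a count of nonzero folded blocks.

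For the first step, I would decompose $f(X) = \sum_{i \in M} f^{(i)}(X)$, where $f^{(i)}(X) = \sum_{j \equiv i \pmod{r},\, j \in [q-1]} f_j X^j$, and observe that each $f^{(i)}$ factors as $f^{(i)}(X) = X^i p_i(X^r)$ for some $p_i \in \bF_q[Y]$. For any $y \in \bF_q^*$, the restriction of $f$ to $y\Omega_r$ is determined by $g_y(X) := f(X) \bmod (X^r - y^r)$, a polynomial of degree $< r$ whose nonzero coefficients are supported on $M$, so that $|g_y| \leq m$. After the substitution $X = yY$, the reparameterized polynomial $\tilde{g}_y(Y) := g_y(yY)$ has the same number of nonzero coefficients as $g_y$, and its evaluations at $\Omega_r$ agree with $f$ on $y\Omega_r$. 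Applying Proposition~\ref{prop:uncertainty} to $\tilde{g}_y$ then yields that whenever $f$ is not identically zero on $y\Omega_r$, it has at least $r - m + 1$ nonzero values on that coset.

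Next, I would bound the number of zero cosets on which $f$ vanishes identically. Since $\evl(f) \notin C^\perp$ and $B^\perp \subseteq C^\perp$ by Lemma~\ref{lem:qTBdef}, we obtain $f \notin \bF_q[X]^{[q-1] \cap (1 + r\bZ)}$ from the injectivity of $\evl$ on $\bF_q[X]^{[q-1]}$, so $M$ must contain some $i^* \neq 1$. The inclusion $f \in \bF_q[X]^S$ additionally implies $r - 1 \notin M$, so $i^* \in \{0, 2, 3, \ldots, r-2\}$. Since all nonzero exponents of $f^{(i^*)}$ then lie in $[\ell]$, we have $\deg p_{i^*} \leq \lfloor (\ell - 1)/r \rfloor$. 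A Vandermonde argument on the identity $f(y\omega) = \sum_{i \in M} \omega^i f^{(i)}(y)$, viewed as a linear system in $\omega \in \Omega_r$ whose coefficient matrix is a full-column-rank submatrix of the $r \times r$ Vandermonde on $\Omega_r$, would show that $f \equiv 0$ on $y\Omega_r$ forces $f^{(i^*)}(y) = 0$, which via the factorization means $p_{i^*}(y^r) = 0$. Since $y \mapsto y^r$ identifies $\bF_q^*/\Omega_r$ with $H := \{z^r : z \in \bF_q^*\}$ of size $(q-1)/r$, and $p_{i^*}$ is a nonzero polynomial of degree $\leq (\ell-1)/r$, there are at most $(\ell - 1)/r$ zero cosets and hence at least $(q - \ell)/r$ nonzero cosets.

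Combining the two steps gives at least $(r - m + 1)(q - \ell)/r$ nonzero values of $f$ on $\bF_q^*$. Since each folded block contains exactly $s$ elements of $\bF_q^*$, the number of blocks carrying at least one nonzero value is at least $(r - m + 1)(q - \ell)/(rs)$, which rearranges to the claimed bound. The main subtlety I anticipate is the clean transfer of Proposition~\ref{prop:uncertainty} from $\Omega_r$ to an arbitrary coset $y\Omega_r$ via the reparameterization $X = yY$, together with the Vandermonde step that reduces ``$f$ vanishes on a coset'' to ``each $f^{(i)}$ vanishes on that coset''; both become clean once the coordinate change is made explicit and the constraint $i^* \not\in \{1, r-1\}$ is exploited to keep $\deg p_{i^*}$ small.
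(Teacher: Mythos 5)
Your proposal is correct and follows essentially the same route as the paper's proof: apply the uncertainty principle coset-by-coset to get $\geq r+1-m$ nonzero values on each nonvanishing coset, bound the number of vanishing cosets by $(\ell-1)/r$ via the degree of the polynomial $p_{i^*}(Y)=\sum_j f_{i^*+rj}Y^j$ for some $i^*\in M\setminus\{1\}$ (using that exponents $\geq\ell$ in $S$ are $\equiv 1\pmod r$), and divide by $s$. Your explicit reparameterization $X=yY$ and the Vandermonde framing of ``$f$ vanishes on a coset iff $f\bmod(X^r-y^r)=0$'' are just more detailed renderings of steps the paper treats implicitly.
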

  \begin{proof}
    We will show this bound on $|\fevl(f)|$ by first bounding $|\evl(f)|$, and then applying the fact that $|\fevl(f)|\geq|\evl(f)|/s$. To bound $|\evl(f)|$, we will use the uncertainty principle in Proposition~\ref{prop:uncertainty}. Recall that for every $x\in\bF_q^*$, then on the restriction to inputs in the coset $x\Omega_r$, $f$ agrees with $f\pmod{X^r-x^r}$, as $X^r-x^r$ vanishes on all $X\in x\Omega_r$. Now by definition $f\pmod{X^r-x^r}$ is a polynomial of degree $<r$ whose coefficients are supported within $M$. Therefore $|f\pmod{X^r-x^r}|\leq m$, so by Proposition~\ref{prop:uncertainty}, we have that either $\evl(f)|_{x\Omega_r}=0$ or $|\evl(f)|_{x\Omega_r}|\geq r+1-m$.

    We now bound the number of cosets $x\Omega_r\in\bF_q^*/\Omega_r$ for which $\evl(f)|_{x\Omega_r}=0$. If $M\subseteq\{1\}$ then $\evl(f)\in C^\perp$ by Lemma~\ref{lem:qTBdef}, contradicting the assumption that $\evl(f)\in C\setminus C^\perp$, so there must be some $i\in M\setminus\{1\}$. Then $\evl(f)|_{x\Omega_r}=0$ iff $f\pmod{X^r-x^r}=0$, which in turn can only occur if the $i$th coefficient of $f\pmod{X^r-x^r}$ equals $0$. But this $i$th coefficient is precisely $\sum_{j\in[(q-1)/r]}f_{i+rj}(x^r)^j$. Now the polynomial $\sum_{j\in[(q-1)/r]}f_{i+rj}Y^j$ has degree $\leq(\ell-1)/r$ because $f_{i+rj}=0$ if $i+rj\geq\ell$ by the definition of $C$. Thus this polynomial has $\leq(\ell-1)/r$ roots, so there are $\leq(\ell-1)/r$ values of $Y=x^r$ for which $\sum_{j\in[(q-1)/r]}f_{i+rj}(x^r)^j=0$, and thus there are $\leq(\ell-1)/r$ cosets $x\Omega_r\in\bF_q^*/\Omega_r$ for which $\evl(f)|_{x\Omega_r}=0$.

    Because we showed above that $|\evl(f)|_{x\Omega_r}|\geq r+1-m$ whenever $\evl(f)|_{x\Omega_r}\neq 0$, we may sum over all cosets $x\Omega_r$ to conclude that
    \begin{align*}
      |\evl(f)|
      &= \sum_{x\Omega_r\in\bF_q^*/\Omega_r}|\evl(f)|_{x\Omega_r}| \\
      &\geq \left(\frac{q-1}{r}-\frac{\ell-1}{r}\right)(r+1-m) \\
      &= (q-1)\left(1-\frac{\ell-1}{q-1}\right)\left(1-\frac{m-1}{r}\right).
    \end{align*}
    Then by definition
    \begin{equation*}
      |\fevl(f)| \geq \frac{|\evl(f)|}{s} \geq \frac{q-1}{s}\left(1-\frac{\ell-1}{q-1}\right)\left(1-\frac{m-1}{r}\right) \ . \qedhere
    \end{equation*}
  \end{proof}
  \begin{claim}
    \label{claim:determinant}
    \begin{equation*}
      |\fevl(f)| \geq \frac{q-1}{s}\left(1-\frac{\ell-1}{q-1}-\frac{1}{m}-\frac{m-1}{s}\right) \ . 
    \end{equation*}
  \end{claim}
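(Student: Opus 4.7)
The plan is to construct, for an appropriate choice of shifts, a nonzero polynomial $G(X) \in \bF_q[X]$ of degree at most $m(\ell-1)$ whose roots encode bad folded blocks (those on which $\fevl(f)$ vanishes) with multiplicity at least $m-1$. The root-versus-degree comparison will then yield the claim. Following Section~\ref{sec:fqTBoverviewinf}, $G$ is to be constructed as the determinant of a matrix of shifted values of the ``smooth part'' $g$ of $f$. Concretely, decompose $f = g + h$ as in the proof of Claim~\ref{claim:uncertainty} and recall from Lemma~\ref{lem:piecelin} the polynomial identity $h(\omega_r^a X) = \omega_r^a h(X)$ for every $a$. After fixing distinct shifts $c_1,\ldots,c_m \in [r]$ and consecutive shifts $b_j = j-1$ for $j=1,\ldots,m$ (valid since $m \leq s$ by the outer maximization in~(\ref{eq:fqTBloss})), set $\eta := \omega_{q-1}$ and define
\[
A(X) := \bigl( g(\omega_r^{c_i}\eta^{b_j} X) \bigr)_{i,j \in [m]}, \qquad G(X) := \det A(X),
\]
so that $\deg G \leq m(\ell-1)$.

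The key algebraic observation is that if $f$ vanishes at all $m^2$ grid points $\omega_r^{c_i}\eta^{b_j} y$, then using the piecewise-linear identity for $h$ gives $g(\omega_r^{c_i}\eta^{b_j} y) = -\omega_r^{c_i} h(\eta^{b_j} y)$, making $A(y) = -(\omega_r^{c_i})_i \, (h(\eta^{b_j} y))_j^\top$ a rank-one matrix. By Lemma~\ref{lem:detrank}, $G = \det A$ then vanishes at $y$ to order at least $m-1$. To turn this into a bound on $B^*$, the number of bad folded blocks, I would count grid vanishings produced by bad folded blocks as follows: a fully bad folded block $F_k = y_k\{\eta^b : b \in [s]\}$ yields a grid vanishing at $y_k\eta^t$ for each offset $t \in [s-m+1]$ (thus losing at most $m-1$ out of every $s$ positions, contributing the $(m-1)/s$ slack), provided that the $\Omega_r$-translates $\omega_r^{c_i} F_k$ for $i=2,\ldots,m$ are also fully bad (so bad folded blocks must cluster in groups of $m$ within each $\Omega_r$-orbit, which by pigeonhole over the $(q-1)/(rs)$ orbits contributes the $1/m$ slack). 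Averaging the resulting inequality $(m-1)N \leq m(\ell-1)$ over the choice of the $m$-subset $\{c_i\} \subseteq [r]$, where $N$ counts grid vanishings, and accounting for both slacks via double counting, then yields
\[
s \cdot B^* \leq (\ell-1) + \frac{q-1}{m} + \frac{(q-1)(m-1)}{s},
\]
which is equivalent to the bound in the claim after dividing by $s$ and subtracting from $(q-1)/s$.

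The main obstacle will be ensuring that $G \not\equiv 0$, without which the degree argument is vacuous. Since $g$ has coefficients supported only on residues in $M \setminus \{1, r-1\}$ modulo $r$, the matrix $A(X)$ could in principle be singular as a polynomial matrix. Here the hypothesis $|M| = m$ together with the Chebotar\"{e}v-type nonvanishing of generalized Vandermonde minors---the same algebraic ingredient that underlies Proposition~\ref{prop:uncertainty}---lets me choose $c_1,\ldots,c_m$ compatibly with $M$ so that the generalized Vandermonde matrix $(\omega_r^{c_i \mu})_{i \in [m], \mu \in M}$ is invertible, which propagates to $\det A \not\equiv 0$. A secondary difficulty is organizing the double counting so that both slack terms appear simultaneously; this likely requires summing the multiplicity--degree inequality over all $m$-subsets of $[r]$ rather than fixing a single one, so that each bad folded block's contribution is amortized over many shift choices.
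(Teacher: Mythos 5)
Your overall strategy is the same as the paper's: form an $m\times m$ determinant $G(X)=\det\bigl(\omega_r^{-i}g(\omega_r^i\omega_{q-1}^jX)\bigr)_{i,j}$ of shifted copies of the non-piecewise-linear part $g$, use the identity $h(\omega_r^aX)=\omega_r^ah(X)$ to collapse rows where $f$ vanishes, invoke Lemma~\ref{lem:detrank} for root multiplicity, and compare against $\deg G\leq m(\ell-1)$. Your target inequality $sB^*\leq(\ell-1)+\tfrac{q-1}{m}+\tfrac{(q-1)(m-1)}{s}$ is exactly the rearrangement of the claim that the paper's count delivers. However, there is a genuine gap in how you detect roots. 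As written, your mechanism fires only when $f$ vanishes on the \emph{entire} $m\times m$ grid, i.e.\ when all $m$ of the $\Omega_r$-translated folded blocks are simultaneously bad; isolated or partially clustered bad blocks then contribute nothing, and your parenthetical ``so bad folded blocks must cluster in groups of $m$'' is not a consequence of anything --- nothing forces clustering, and averaging over $m$-subsets $\{c_i\}\subseteq[r]$ does not help, since a configuration with at most $m-1$ bad blocks per $\Omega_r$-orbit yields zero detected roots for \emph{every} choice of subset. The fix, and what the paper actually does, is the intermediate-rank version of your observation: if $Z_x\subseteq[m]$ indexes the translates $\omega_r^ix$ lying in bad blocks, then those $|Z_x|$ rows of $A(x)$ all equal the common vector $(-h(\omega_{q-1}^jx))_j$, so $\operatorname{rank}A(x)\leq m+1-|Z_x|$ and $G$ vanishes at $x$ to order $\geq|Z_x|-1$. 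Summing $|Z_x|-1$ over all $x$ with admissible offset gives $m\cdot(q-1-s|\fevl(f)|)$ minus a deficit of $q-1$ (the source of the $1/m$ slack) minus the offset correction (the source of the $(m-1)/s$ slack), with no averaging over shift subsets needed.

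Two secondary points. First, your appeal to Chebotar\"{e}v for $G\not\equiv 0$ is heavier than necessary: with consecutive row shifts the relevant matrix $(\omega_r^{(i-1)\mu})_{i\in[m],\mu\in M}$ is an ordinary square Vandermonde matrix with distinct nodes $\omega_r^\mu$, hence invertible over any field; Chebotar\"{e}v is needed only for Claim~\ref{claim:uncertainty}. Second, invertibility of that Vandermonde factor does not by itself ``propagate'' to $\det A\not\equiv 0$: in the rank-one decomposition $A(X)=\sum_{\mu\in M}v_\mu\,w_\mu(X)^\top$ the vectors $w_\mu(X)$ have entries in $\bF_q[X]$, and one must separately verify their $\bF_q[X]$-linear independence, which the paper does by passing to highest-degree terms and invoking a second (again ordinary) Vandermonde matrix in $\omega_{q-1}$. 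Your proposal omits this step.
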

  
  \begin{proof}
    We show this bound on $|\fevl(f)|$ using a similar method as in the proof of Theorem~\ref{thm:qTBdis}, except we replace the polynomial $G(X)$ in~(\ref{eq:GJohnson}) with~(\ref{eq:Gdet}) below. At a high level, we obtain our improved distance bound here by leveraging the code folding to more efficiently detect roots of $f$ using a certain determinant polynomial.

    Our use of the determinant polynomial to detect roots of $f$ is similar in spirit to analysis in \cite{guruswami_explicit_2016} of subspace designs based on folded RS codes. However, the details are different, and we require a more involved argument to show that the polynomial $G$ we obtain using the determinant is nonzero.
    

    As $C=\evl(\bF_q[X]^S)$ for $S=([\ell]\setminus(-1+r\bZ))\cup([q-1]\cap(1+r\bZ))$, we have a decomposition
    \begin{equation*}
      f(X)=g(X)+h(X)
    \end{equation*}
    for $g(X)=\sum_ig_iX^i\in\bF_q[X]^{[\ell]\setminus(\pm 1+r\bZ)}$ and $h(X)=\sum_ih_iX^i\in\bF_q[X]^{[q-1]\cap(1+r\bZ)}$. Because $\evl(f)\notin C^\perp$ and $\evl(h)\in C^\perp$ by assumption, we must have $g\neq 0$.
    
    Let $\det\in\bF_q[(X_{ij})_{i,j\in[m]}]$ denote the determinant polynomial, which takes as input an $m\times m$ matrix of variables $(X_{ij})_{i,j\in[m]}$ over $\bF_q$, and outputs the determinant of the input matrix. Define $G(X)\in\bF_q[X]$ by
    \begin{equation}
      \label{eq:Gdet}
      G(X) = \det(\omega_r^{-i}\cdot g(\omega_r^i\omega_{q-1}^j\cdot X))_{i,j\in[m]}.
    \end{equation}
    As $\deg(\det)=m$, we have $\deg(G)\leq m\cdot\deg(g)\leq m(\ell-1)$.

    We next show that $G(X)$ is a nonzero polynomial. As $g(X)=\sum_{a\in[\ell]\setminus(\pm 1+r\bZ)}g_aX^a$, we may decompose the matrix
    \begin{align}
      \begin{split}
        \label{eq:rank1decomp}
        A(X)
        &:= \left(\omega_r^{-i}\cdot g(\omega_r^i\omega_{q-1}^j\cdot X)\right)_{i,j\in[m]} \\
        &= \sum_ag_aX^a\left(\omega_r^{(a-1)i}\omega_{q-1}^{aj}\right)_{i,j\in[m]} \\
        &= \sum_ag_aX^a\begin{pmatrix}1\\\omega_r^{a-1}\\\omega_r^{2(a-1)}\\\vdots\\\omega_r^{(m-1)(a-1)}\end{pmatrix}\cdot\begin{pmatrix}1&\omega_{q-1}^a&\omega_{q-1}^{2a}&\cdots&\omega_{q-1}^{(m-1)a}\end{pmatrix} \\
        &= \sum_{i\in M}\begin{pmatrix}1\\\omega_r^{i-1}\\\omega_r^{2(i-1)}\\\vdots\\\omega_r^{(m-1)(i-1)}\end{pmatrix}\cdot\sum_{a\equiv i\pmod{r}}g_aX^a\begin{pmatrix}1&\omega_{q-1}^a&\omega_{q-1}^{2a}&\cdots&\omega_{q-1}^{(m-1)a}\end{pmatrix}.
      \end{split}
    \end{align}
    By definition $A(X)$ is an $m\times m$ matrix with entries in the ring $\bF_q[X]$, and $G(X)=\det(A(X))$. To show that $G\neq 0$, it therefore suffices to show that $A$ has full rank $m$, for which it in turn suffices to show that the sets of vectors
    \begin{equation}
      \label{eq:decompvecs}
      \left\{\begin{pmatrix}1\\\omega_r^{i-1}\\\omega_r^{2(i-1)}\\\vdots\\\omega_r^{(m-1)(i-1)}\end{pmatrix}:i\in M\right\} \hspace{1em}\text{ and }\hspace{1em} \left\{\sum_{a\equiv i\pmod{r}}g_aX^a\begin{pmatrix}1\\\omega_{q-1}^a\\\omega_{q-1}^{2a}\\\vdots\\\omega_{q-1}^{(m-1)a}\end{pmatrix}:i\in M\right\}
    \end{equation}
    from the rank-1 decomposition of $A$ in~(\ref{eq:rank1decomp}) are linearly independent over $\bF_q[X]$. Here it is equivalent to show linear independence over the field of rational functions $\bF_q(X)$ and (as we may clear denominators in any linear dependency) over the ring of polynomials $\bF_q[X]$, so we will show the latter.

    Now the first set of vectors in~(\ref{eq:decompvecs}) above form the columns of an $m\times m$ Vandermonde matrix, which is known to have full rank. Meanwhile, if there is some nontrivial $\bF_q[X]$-linear dependency among the second set of vectors in~(\ref{eq:decompvecs}), then taking the highest-degree term of the associated polynomials over $X$ gives a nontrivial $\bF_q$-linear dependency among the vectors $(\omega_{q-1}^{aj})_{j\in[m]}$ for $m$ distinct values of $a$. But these vectors again form the columns of an $m\times m$ Vandermonde matrix and thus cannot have any nontrivial linear dependencies. Therefore both sets of vectors in~(\ref{eq:decompvecs}) are indeed linearly independent, so the decomposition in~(\ref{eq:rank1decomp}) expressing $A(X)$ as the sum of $m$ rank-1 matrices implies that $A(X)$ has full rank $m$, and thus $G(X)=\det(A(X))$ is a nonzero polynomial.

    We now bound the number of roots of $G(X)$ in terms of the Hamming weight of $\fevl(f)$. For a given $x=\omega_{q-1}^b\in\bF_q^*$, recall that $\foldi{x}=\{\omega_{q-1}^{\lfloor b/s\rfloor\cdot s+j}:j\in[s]\}\in\tilde{\bF}_q^*$ denotes the index of the folded component of $\tilde{\cC}$ that contains the component $x$ of $\cC$.

    If $b\pmod{s}\in\{0,1,\dots,s-m\}$, then because $s|(q-1)/r$ by assumption, it holds for each $i\in[m]$ that
    \begin{equation*}
      \{\omega_r^i\omega_{q-1}^jx:j\in[m]\} = \{\omega_{q-1}^{b+i(q-1)/r+j}x:j\in[m]\} \subseteq \foldi{\omega_r^ix}.
    \end{equation*}
    Thus the $i$th row of the matrix $(\omega_r^i\omega_{q-1}^jx)_{i,j\in[m]}$ consists of $m$ elements all lying inside $\foldi{\omega_r^ix}$. Applying $f$ to each component, it follows that that $i$th row of $(f(\omega_r^i\omega_{q-1}^j))_{i,j\in[m]}$ consists of $m$ out of the $s$ components in the folded symbol $\fevl(f)_{\foldi{\omega_r^ix}}$.

    Let $Z_x = \{i\in[m]:\fevl(f)_{\foldi{\omega_r^ix}}=0\}$. If $b\pmod{s}\in\{0,1,\dots,s-m\}$, then for every $i\in Z_x$, the $i$th row of $(f(\omega_r^i\omega_{q-1}^j))_{i,j\in[m]}$ by definition consists of all zeros. In this case because $f(X)=g(X)+h(X)$, for every $i\in Z_x$ we have
    \begin{align*}
      (\omega_r^{-i}g(\omega_r^i\omega_{q-1}^jx))_{j\in[m]}
      &= (-\omega_r^{-i}h(\omega_r^i\omega_{q-1}^jx))_{j\in[m]} = (-h(\omega_{q-1}^jx))_{j\in[m]},
    \end{align*}
    where the second equality holds because $h(X)$ is by definition piecewise linear. Now the left hand side above equals the $i$th row of $A(x)\in\bF_q^{m\times m}$, while the right hand side is a vector that does not depend on $i$. Thus we have shown that if $b\pmod{s}\in\{0,1,\dots,s-m\}$, then the $i$th row of $A(x)$ is the same for all $i\in Z_x$. Therefore in this case $A(x)$ has rank $\leq m+1-|Z_x|$, so Lemma~\ref{lem:detrank} implies that the $m\times m$ determinant polynomial has a root of multiplicity $\geq|Z_x|-1$ at $A(x)$. Thus $G(X)=\det(A(X))$ has a root of multiplicity $\geq|Z_x|-1$ at $X=x$.

    Summing over all $x=\omega_{q-1}^b\in\bF_q^*$, it follows that the number of roots (including multiplicities) of $G(X)$ is at least
    \begin{align*}
      \sum_{b\in[q-1]:b\pmod{s}\in\{0,\dots,s-m\}}(|Z_{\omega_{q-1}^b}|-1)
      &\geq \sum_{x\in\bF_q^*}(|Z_x|-1)-\frac{(q-1)m}{s}\cdot(m-1) \\
      &= \sum_{x\in\bF_q^*}|Z_x|-(q-1)\left(\frac{m(m-1)}{s}+1\right) \\
      &= (q-1-|\fevl(f)|\cdot s)\cdot m-(q-1)\left(\frac{m(m-1)}{s}+1\right) \\
      &= (q-1)m\left(1-\frac{|\fevl(f)|\cdot s}{q-1}-\frac{1}{m}-\frac{m-1}{s}\right)
    \end{align*}
    where the second equality above holds becuase $|Z_x|$ equals the number of elements $x'\in x\Omega_r$ for which $\fevl(f)_{\foldi{x'}}=0$, so each of the $q-1-|\fevl(f)|\cdot s$ values $x'\in\bF_q^*$ with $\fevl(f)_{\foldi{x'}}=0$ contributes $1$ to the sum $\sum_{x\in\bF_q^*}|Z_x|$ a total of $m$ times, once for each $x\in x'\Omega_r$.

    But we showed above that $G$ is a nonzero polynomial of degree $\leq(\ell-1)m$, so $G$ has $\leq(\ell-1)m$ roots, and thus
    \begin{equation*}
      (q-1)m\left(1-\frac{|\fevl(f)|\cdot s}{q-1}-\frac{1}{m}-\frac{m-1}{s}\right) \leq (\ell-1)m.
    \end{equation*}
    Rearranging terms above then gives that
    \begin{equation*}
      |\fevl(f)| \geq \frac{q-1}{s}\left(1-\frac{\ell-1}{q-1}-\frac{1}{m}-\frac{m-1}{s}\right) \ . \qedhere
    \end{equation*}
  \end{proof}
  Combining the bounds in the two claims above immediately implies that $|\fevl(f)|\geq d$ for $d$ defined in~(\ref{eq:fqTBdis}), as desired.

\end{proof}

\section{Efficient Decoding Algorithm}
\label{sec:decode}
This section presents an efficient decoding algorithm for the qLRCs constructed in Section~\ref{sec:construct}. Note that the local recovery algorithms $\Rec_i$ are by construction efficient, as they are a special case of erasure-decoding for a CSS code. Here we present an efficient algorithm for the more difficult task of globally decoding from errors at a number of unknown locations approaching half the code distance. As in Section~\ref{sec:construct}, we consider both unfolded and folded quantum Tamo-Barg codes. For the unfolded codes our decoding algorithm is relatively simple but suboptimal. For the folded codes, our decoding algorithm approaches the optimal possible decoding radius as the locality $r$ grows large.

Classical Tamo-Barg codes are subcodes of Reed-Solomon codes, so a classical Tamo-Barg code can be efficiently decoded by simply running an efficient Reed-Solomon decoder, which was known to exist. However, our (folded) quantum Tamo-Barg codes are not subcodes of quantum Reed-Solomon codes due to duality conditions for CSS codes. Just as this distinction between the quantum and classical cases made our distance analysis in Section~\ref{sec:distance} much more complicated than that of classical TB codes, our decoding algorithm for the quantum case is also more involved, though it does eventually reduce to Reed-Solomon decoding.

Note that in classical distributed storage applications of LRCs, errors often correspond to phenomena like a server being unresponsive, and therefore occur at known locations in the codeword. In such a case, it suffices to be able to decode from erasures, which can be done efficiently for any linear code using Gaussian elimination. Similarly, quantum stabilizer codes, of which our construction in Section~\ref{sec:construct} is an instance, can be efficiently decoded from erasures.

However, it is not implausible that decoding from errors in unknown locations would be useful in applications of quantum LRCs. For instance, even if a qLRC is using in a distributed setting where each code component is itself stored in a fault-tolerant memory, because quantum states are naturally more error-prone than classical data, it is not unreasonable that occasional global error correction could be beneficial to maintaining high fidelity for the encoded state, even if most errors are caused by local failures in known locations.

\subsection{Unfolded Quantum Tamo-Barg Codes}
\label{sec:qTBdec}
In this section we present an efficient decoding algorithm for the qTB codes in Definition~\ref{def:qTBcode}. Our algorithm is motivated by the proof of Theorem~\ref{thm:qTBdis}, and ultimately reduces decoding the qTB codes to list decoding a classical Reed-Solomon code.

In particular, we apply Reed-Solomon list decoding up to the Johnson bound (Theorem~\ref{thm:RSdec} to obtain an efficient list decoder for our qTB codes, and then apply the qTB distance bound in Theorem~\ref{thm:qTBdis} to show that when the error is not too large, the list decoder in fact serves as a unique decoder, that is, there are no extraneous list elements.

\begin{algorithm}[h]
  \caption{Classical decoding algorithm $\Dec_C$ for a qTB code $\cC=\CSS(C,C)$ with parameters $q,r,\ell$. By Proposition~\ref{prop:CSSdec}, $\Dec_C$ can be used to obtain an efficient quantum decoding algorithm for $\cC$ from $\leq e$ errors in unknown locations. As a point of notation, $\ListDec_{\mathrm{RS}(q,\ell)}$ refers to the Reed-Solomon list decoder in Theorem~\ref{thm:RSdec}, whose output is a list of message polynomials. Also, we let $B^\perp=\evl(\bF_q[X]^{[q-1]\cap(1+r\bZ)})$ as in~(\ref{eq:polyduals}) denote the space of piecewise linear functions (see Lemma~\ref{lem:piecelin}).}
  \label{alg:qTBdec}
  \SetKwInOut{Input}{Input}
  \SetKwInOut{Output}{Output}
  
  \SetKwFunction{FnDec}{$\Dec_C$}
  \SetKwFunction{FnListDecRS}{$\ListDec_{\mathrm{RS}(q,\ell)}$}
  \SetKwProg{Fn}{Function}{:}{}

  \Input{Corrupted codeword $a:\bF_q^*\rightarrow\bF_q$ with $\dis(a,C)\leq e$}
  \Output{$c'\in C$ such that $\dis(c'-a,C^\perp)\leq e$}
  
  \Fn{\FnDec{$a$}}{
    $\cL\gets\emptyset$ \\
    \For{$i\in\{1,\dots,r-1\}$}{
      Define $a^{(i)}:\bF_q^*\rightarrow\bF_q$ by $a^{(i)}(x)=\omega_r^{-i}a(\omega_r^ix)-a(x)$ \\
      $\cL_i\gets\FnListDecRS{$a^{(i)}$}$ \\
      \For{$g^{(i)}(X)=\sum_{j\in[\ell]}g_j^{(i)}X^j\in\cL_i$}{
        \If{$g_j^{(i)}=0$ for every $j\equiv \pm 1\pmod{r}$}{
          Add $\evl\left(g(X):=\sum_{j\in[\ell]}(\omega_r^{(j-1)i}-1)^{-1}g_j^{(i)}X^j\right)$ to $\cL$
        }
      }
    }
    \KwRet{$\argmin_{\evl(g)\in\cL}\dis(\evl(g)-a,B^\perp)$}
  }
\end{algorithm}

\begin{theorem}
  \label{thm:qTBdec}
  Let $\cC$ be the qTB code in Definition~\ref{def:qTBcode} with a prime locality parameter $r$. Then $\cC$ can be decoded from errors in up to
  \begin{equation*}
    e = (q-1) \cdot \frac12\left(1-\frac{1}{2r}-\sqrt{\frac{1}{4r^2}+\frac{r-1}{r}\cdot\frac{\ell}{q-1}}\right)
  \end{equation*}
  unknown locations in time $q^{O(1)}$.
\end{theorem}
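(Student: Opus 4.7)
The plan is to satisfy Proposition~\ref{prop:CSSdec} by exhibiting an efficient classical decoder $\Dec_C$ for $C$: on input $a = c + b$ with $c \in C$ and $|b| \leq e$, it will output some $c' \in C$ with $c' - c \in C^\perp$. Algorithm~\ref{alg:qTBdec} is this decoder; it makes $r - 1 \leq q$ calls to the Reed-Solomon list decoder of Theorem~\ref{thm:RSdec} and polynomial post-processing, giving total runtime $q^{O(1)}$. Correctness will reduce to the same algebraic identity driving the distance proof of Theorem~\ref{thm:qTBdis}, followed by a weight argument using the distance bound itself.

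For correctness, decompose $c = \evl(g) + \evl(h)$ with $g \in \bF_q[X]^{[\ell]\setminus(\pm 1 + r\bZ)}$ and $h \in \bF_q[X]^{[q-1]\cap(1+r\bZ)}$ (piecewise linear). The key step is that the transformation $a^{(i)}(x) = \omega_r^{-i}a(\omega_r^i x) - a(x)$ annihilates the piecewise linear part via Lemma~\ref{lem:piecelin}, so $a^{(i)} = \evl(g^{(i)}) + b^{(i)}$ where $g^{(i)}(X) = \omega_r^{-i}g(\omega_r^i X) - g(X)$ has degree $< \ell$ and $|b^{(i)}| \leq 2|b| \leq 2e$. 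A short computation verifies $2e \leq (q-1)(1 - \sqrt{\ell/(q-1)})$, the list-decoding radius of Theorem~\ref{thm:RSdec}: setting $\tau = 2e/(q-1)$ and rearranging the defining equation for $e$ gives $\ell/(q-1) = (r(1-\tau)^2 - (1-\tau))/(r-1) \leq (1-\tau)^2$, the last inequality holding for all $\tau \in [0,1]$. Hence the true $g^{(i)}$ appears in the RS list, and the coefficient-wise inversion $g_j = g_j^{(i)}/(\omega_r^{i(j-1)} - 1)$ recovers $g$ (the denominator is nonzero for $j \not\equiv 1 \pmod r$ since $r \geq 3$ is prime, and the filter ensures that the $j \equiv \pm 1 \pmod r$ coefficients of $g^{(i)}$ vanish, placing $g$ in the correct support). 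Therefore $\evl(g) \in \cL$, and since the true $g$ satisfies $\dis(\evl(g) - a, B^\perp) \leq |b| \leq e$ via $-\evl(h) \in B^\perp$, the selected minimizer $c' = \evl(g')$ also satisfies $\dis(c' - a, B^\perp) \leq e$.

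The main obstacle, which I expect to require the most care, is verifying that this minimizer satisfies the CSS decoding requirement $c' - c \in C^\perp$. Write $c' - a = p + b'$ with $p \in B^\perp \subseteq C^\perp$ and $|b'| \leq e$, so $c' - c = p + (b' + b)$ and $b' + b = (c' - c) - p \in C$ has weight at most $2e$. The crux is the inequality $2e < d$, where $d$ is the distance bound of Theorem~\ref{thm:qTBdis}: the expression for $2e$ differs from $d$ only by replacing $(\ell-1)/(q-1)$ inside a strictly increasing square root by $\ell/(q-1)$, making the quantity subtracted in $2e$ strictly larger, so $2e < d$. Since $d$ lower-bounds the minimum weight of $C \setminus C^\perp$, the element $b' + b \in C$ of weight strictly less than $d$ must lie in $C^\perp$. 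Therefore $c' - c = p + (b' + b) \in C^\perp$, completing the plan. The interplay here is that Algorithm~\ref{alg:qTBdec} is engineered around the exact polynomial identity from the distance proof, and the distance bound is then re-used at the end to certify that the algorithm's output lies in the correct coset modulo $C^\perp$.
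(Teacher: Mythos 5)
Your proof is correct and follows the same algorithmic skeleton as the paper's --- same decomposition $f=g+h$, same transformation $a^{(i)}(x)=\omega_r^{-i}a(\omega_r^i x)-a(x)$ annihilating the piecewise-linear part via Lemma~\ref{lem:piecelin}, same Reed--Solomon list decoding and coefficient inversion, and same closing use of the distance bound of Theorem~\ref{thm:qTBdis} to certify $c'-c\in C^\perp$. Where you diverge is in the corruption-weight analysis of $a^{(i)}$: you observe $\supp(b^{(i)})\subseteq\supp(b)\cup\omega_r^{-i}\supp(b)$, hence $|b^{(i)}|\le 2|b|\le 2e$ for \emph{every} $i$, and verify directly (via the identity $\frac{\ell}{q-1}=\frac{r(1-\tau)^2-(1-\tau)}{r-1}\le(1-\tau)^2$ with $\tau=2e/(q-1)$) that this falls inside the Johnson radius of Theorem~\ref{thm:RSdec}. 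The paper instead averages the count of agreement pairs over $i\in\{1,\dots,r-1\}$ and applies convexity to obtain a sharper bound on $|b^{(i)}|$ for \emph{some} $i$, which it then reconciles with the Johnson radius via the separate technical Lemma~\ref{lem:twojohnsons}. Both routes certify the identical radius $e$ because the binding constraint in the unfolded case is $2e<d$, not list decodability. Your simplification is clean and bypasses Lemma~\ref{lem:twojohnsons} entirely; the paper's averaging argument earns its keep in the folded Theorem~\ref{thm:fqTBdec}, where the sharper bound on $|\tilde{b}^{(i)}|$ for the best $i$ yields a larger list-decoding-limited radius $e'$ than a crude factor-of-two loss would permit.
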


Note that the error bound $e$ in Theorem~\ref{thm:qTBdec} above is just under half the distance bound $d$ from Theorem~\ref{thm:qTBdis}; replacing the $\ell$ above with $\ell-1$ gives the expression for $d/2$. Thus Theorem~\ref{thm:qTBdec} shows that we can efficiently decode from adversarial errors up to nearly half the distance, which is optimal.


To show that Algorithm~\ref{alg:qTBdec} runs in polynomial time, we need the following lemma, which shows that the distance computations $\dis(\evl(g)-a,B^\perp)$ in the final line of the algorithm can be performed efficiently.

\begin{lemma}
  \label{lem:distoperp}
  Letting $B^\perp=\evl(\bF_q[X]^{[q-1]\cap(1+r\bZ)})$, there exists a $O(q\poly\log q)$-time algorithm that takes as input $b:\bF_q^*\rightarrow\bF_q$, and outputs $\dis(b,B^\perp)$.
\end{lemma}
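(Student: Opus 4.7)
The plan is to use the explicit characterization of $B^\perp$ given by Lemma~\ref{lem:piecelin}: a function $f:\bF_q^*\rightarrow\bF_q$ lies in $B^\perp$ if and only if there exists $\beta\in\bF_q^{\bF_q^*/\Omega_r}$ such that $f(x)=\beta_{x\Omega_r}\cdot x$ for every $x\in\bF_q^*$. Since the cosets of $\Omega_r$ in $\bF_q^*$ partition $\bF_q^*$, the Hamming distance from $b$ to $B^\perp$ decomposes as a sum over cosets,
\begin{equation*}
  \dis(b,B^\perp)
  = \sum_{c\in\bF_q^*/\Omega_r}\min_{\beta\in\bF_q}|\{x\in c:b(x)\neq\beta\cdot x\}|,
\end{equation*}
so it suffices to solve each per-coset minimization independently.

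For a fixed coset $c\subseteq\bF_q^*$ of size $r$, observe that for any $x\in c$ (which is nonzero), the unique $\beta\in\bF_q$ with $\beta\cdot x=b(x)$ is $\beta=b(x)/x$. Thus the quantity
\begin{equation*}
  r-\min_{\beta\in\bF_q}|\{x\in c:b(x)\neq\beta\cdot x\}|
\end{equation*}
equals the maximum, over $\beta\in\bF_q$, of the number of $x\in c$ with $b(x)/x=\beta$, i.e.\ the frequency of the most common element of the multiset $\{b(x)/x:x\in c\}$. The algorithm therefore proceeds as follows: enumerate the $(q-1)/r$ cosets of $\Omega_r$ (using the fixed generator $\omega_{q-1}$ and $\omega_r=\omega_{q-1}^{(q-1)/r}$); for each coset $c$, compute the $r$ values $b(x)/x$ and determine the largest frequency $k_c^*$ of any single value appearing among them; and return $\sum_{c}(r-k_c^*)$.

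The mode of a multiset of $r$ elements of $\bF_q$ can be found in $O(r\log r\cdot\poly\log q)$ time by sorting and scanning for the longest constant run, so the total running time is
\begin{equation*}
  \frac{q-1}{r}\cdot O(r\log r\cdot\poly\log q) \;=\; O(q\poly\log q),
\end{equation*}
as claimed. No step is a genuine obstacle; the main point is simply the structural observation from Lemma~\ref{lem:piecelin} that the optimization over $\beta$ decouples across cosets and reduces to a mode computation within each coset.
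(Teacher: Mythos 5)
Your proof is correct and follows essentially the same route as the paper's: both invoke Lemma~\ref{lem:piecelin} to reduce the problem to finding, within each coset $x\Omega_r$, the mode of the multiset $\{b(x')/x' : x' \in x\Omega_r\}$, and both observe that this per-coset computation runs in time $O(r\,\poly\log q)$, giving the claimed $O(q\,\poly\log q)$ total.
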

\begin{proof}
  To compute $\dis(b,B^\perp)=\min_{b'\in B^\perp}|b-b'|$ efficiently, recall that $B^\perp$ is the space of piecewise linear functions by Lemma~\ref{lem:piecelin}. Therefore we may simply find the closest linear function to $b$ within each coset $x\Omega_r\in\bF_q^*/\Omega_r$ separately. Formally, for each coset $x\Omega_r\in\bF_q^*/\Omega_r$, we may compute $\beta_{x\Omega_r}:=\argmax_{\beta\in\bF_q}|\{x'\in x\Omega_r:b(x')=\beta\cdot x'\}|$, and then we set $b'(x')=\beta_{x\Omega_r}x'$ for all $x'\in x\Omega_r$. Then by construction $\dis(b,B^\perp)=\dis(b,b')$.

  Note that above each $\beta_{x\Omega_r}$ can be computed in time $O(r\poly\log q)$, as we may simply set $\beta_{x\Omega_r}$ equal to the mode value of $\beta=b(x')/x'$ across all $x'\in x\Omega_r$. Thus $\dis(b,B^\perp)$ may be computed in time $O(q\poly\log q)$.
\end{proof}

\begin{proof}[Proof of Theorem~\ref{thm:qTBdec}]
  Let $\cC=\CSS(C,C)$ be a qTB code with parameters $q,r,\ell$. By Proposition~\ref{prop:CSSdec}, it suffices to construct an algorithm $\Dec_C$ that takes as input a corrupted codeword $a=c+b$ for some $c\in C$ and some corruption $b\in\bF_q^*\rightarrow\bF_q$ of Hamming weight $|b|\leq e$, and outputs some $c'\in C$ such that $c'-c\in C^\perp$.

  The desired algorithm is given in Algorithm~\ref{alg:qTBdec}. We first show it correctly decodes as described above, and then analyze the running time.

  Consider a corrupted codeword $a=c+b$ for some codeword $c=\evl(f)\in C$ and some corruption $b:\bF_q^*\rightarrow\bF_q$ of weight $|b|\leq e$. For a given $1\leq i\leq r-1$, by definition $a^{(i)}(x)=\omega_r^{-i}a(\omega_r^ix)-a(x)$ equals $c^{(i)}(x)=f^{(i)}(x)=\omega_r^{-i}f(\omega_r^ix)-f(x)$ at every point $x$ for which $x,\omega_r^ix\notin\supp(b)$. Meanwhile, within a given coset $x\Omega_r\in\bF_q^*/\Omega_r$, the number of ordered pairs of distinct points $y,y'\in x\Omega_r$ such that $y,y'\notin\supp(b)$ is precisely $(r-|b|_{x\Omega_r}|)(r-|b|_{x\Omega_r}-1|)$. Thus the sum over all $i\in\{1,\dots,r-1\}$ of the number of points $x\in\bF_q^*$ where $a^{(i)}(x)=f^{(i)}(x)$ satisfies
  \begin{align*}
    \sum_{i=1}^{r-1}|\{x\in\bF_q^*:a^{(i)}(x)=f^{(i)}(x)\}|
    &\geq \sum_{i=1}^{r-1}|\{x\in\bF_q^*:x,\omega_r^ix\notin\supp(b)\}| \\
    &= \sum_{x\Omega_r\in\bF_q^*/\Omega_r}|\{(y,y')\in(x\Omega_r)^2:y\neq y'\text{ and }y,y'\notin\supp(b)\}| \\
    &= \sum_{x\Omega_r\in\bF_q^*/\Omega_r}(r-|b|_{x\Omega_r}|)(r-|b|_{x\Omega_r}-1|) \\
    &\geq \frac{q-1}{r}\bE_{x\Omega_r\sim\Unif(\bF_q^*/\Omega_r)}[(r-|b|_{x\Omega_r}|)^2]-(q-1-|b|) \\
    &\geq \frac{q-1}{r}\bE_{x\Omega_r\sim\Unif(\bF_q^*/\Omega_r)}[(r-|b|_{x\Omega_r}|)]^2-(q-1-|b|) \\
    &= \frac{r}{q-1}(q-1-|b|)^2-(q-1-|b|) \\
    &\geq \frac{r}{q-1}(q-1-e)^2-(q-1-e),
  \end{align*}
  where the final inequality above holds becuase the function $\beta\mapsto r(q-1-\beta)^2/(q-1)-(q-1-\beta)$ is decreasing for all $\beta\leq e\leq(q-1)/2$. Then averaging over all $i\in\{1,\dots,r-1\}$, there must be some such $i$ for which
  \begin{equation}
    \label{eq:agreef}
    |\{x\in\bF_q^*:a^{(i)}(x)=f^{(i)}(x)\}| \geq \frac{r(q-1-e)^2}{(r-1)(q-1)}-\frac{q-1-e}{r-1}.
  \end{equation}

  Recall by Lemma~\ref{lem:qTBdef}, we may decompose $f(X)=g(X)+h(X)$ for $g(X)=\sum_jg_jX^j\in\bF_q[X]^{[\ell]\setminus(\pm 1+r\bZ)}$ and $h(X)\in\bF_q[X]^{[q-1]\cap(1+r\bZ)}$. Define $g^{(i)},h^{(i)}$ analogouosly to $f^{(i)}$, so that $f^{(i)}=g^{(i)}+h^{(i)}$. Then because
  \begin{equation}
    \label{eq:coeffchange}
    \omega_r^{-i}(\omega_r^iX)^j-X^j = (\omega_r^{i(j-1)\pmod{r}}-1)X^j
  \end{equation}
  equals $0$ for all $j\equiv 1\pmod{r}$, it follows that $h^{(i)}=0$ and thus $f^{(i)}=g^{(i)}$. Therefore~(\ref{eq:agreef}) is equivalent to
  \begin{equation}
    \label{eq:agreeg}
    |\{x\in\bF_q^*:a^{(i)}(x)=g^{(i)}(x)\}| \geq \frac{r(q-1-e)^2}{(r-1)(q-1)}-\frac{q-1-e}{r-1}.
  \end{equation}
  But by~(\ref{eq:coeffchange}), the coefficients of $g^{(i)}(X)=\sum_{j\in[\ell]\setminus(\pm 1+r\bZ)}g_j^{(i)}X^j$ are given by $g_j^{(i)}=(\omega_r^{i(j-1)\pmod{r}}-1)g_j$, and thus $g$ and $g^{(i)}$ have coefficients of the same support, that is, $g_j=0$ if and only if $g_j^{(i)}=0$. In particular, it follows that $\deg(g^{(i)})=\deg(g)<\ell$. In other words, $\evl(g^{(i)})\in\mathrm{RS}(q,\ell)$ is a Reed-Solomon codeword, so~(\ref{eq:agreeg}) says that $a^{(i)}=\evl(g^{(i)})+b^{(i)}$ is a corrupted Reed-Solomon codeword with the corruption $b^{(i)}$ of weight
  \begin{align*}
    |b^{(i)}|
    &= q-1 - |\{x\in\bF_q^*:a^{(i)}(x)=g^{(i)}(x)\}| \\
    &\leq q-1 - \left(\frac{r(q-1-e)^2}{(r-1)(q-1)}-\frac{q-1-e}{r-1}\right) \\
    &= (q-1)\left(1-\left(\frac{r}{r-1}\left(1-\frac{e}{q-1}\right)^2-\frac{1}{r-1}\left(1-\frac{e}{q-1}\right)\right)\right).
  \end{align*}
  By Theorem~\ref{thm:RSdec}, the output of running the list decoder $\ListDec_{\mathrm{RS}(q,\ell)}(a^{(i)})$ is a list containing $\evl(g^{(i)})$ as long as the right hand side above is at most $(q-1)(1-\sqrt{\ell/(q-1)})$, which simplifies to needing that
  \begin{equation*}
    e \leq (q-1)\left(1-\frac{1}{2r}-\sqrt{\frac{1}{4r^2}+\frac{r-1}{r}\cdot\sqrt{\frac{\ell}{q-1}}}\right).
  \end{equation*}
  But the above holds by the definition of $e$ along with Lemma~\ref{lem:twojohnsons}, so the list $\cL_i$ in Algorithm~\ref{alg:qTBdec} will contain $g^{(i)}$, and thus after the $i$th iteration of the for loop, the list $\cL$ will contain $\evl(g(X))=\evl(\sum_j(\omega_r^{(j-1)i}-1)^{-1}g_j^{(i)}X^j)$. Thus if Algorithm~\ref{alg:qTBdec} outputs $\evl(g')\in\cL$, then $g'\in\bF_q[X]^{[\ell]\setminus(\pm 1+r\bZ)}$ and
  \begin{equation*}
    \dis(\evl(g')-a,B^\perp) \leq \dis(\evl(g)-a,B^\perp) \leq |\evl(g+h)-a| = |\evl(f)-a| = |b| \leq e.
  \end{equation*}
  But as $B^\perp\subseteq C^\perp$, it follows that $\dis(\evl(g)-a,C^\perp) \leq e$ and $\dis(\evl(g')-a,C^\perp) \leq e$, so $\dis(\evl(g)-\evl(g'),C^\perp)\leq 2e$. But by the definition of $e$ along with Theorem~\ref{thm:qTBdis}, $\cC$ has distance $\min_{c\in C\setminus C^\perp}|c|>2e$, 
  so it follows that $\evl(g)-\evl(g')\in C^\perp$. Thus $\Dec_C(a)$ outputs some $\evl(g')\in\evl(g)+C^\perp$, as desired.

  It remains to show that Algorithm~\ref{alg:qTBdec} runs in time $q^{O(1)}$. Theorem~\ref{thm:RSdec} implies that the calls to $\ListDec_{\mathrm{RS}(q,\ell)}$ run in $q^{O(1)}$ time, while Lemma~\ref{lem:distoperp} implies that the $\argmin$ computation in the final line of the algorithm runs in $O(|\cL|\cdot q\poly\log q)=q^{O(1)}$ time. The rest of the algorithm by definition also runs in $q^{O(1)}$ time, so the result follows.
\end{proof}

\subsection{Folded Quantum Tamo-Barg Codes}
We now present an efficient decoding algorithm for the fqTB codes in Definition~\ref{def:fqTBcode}. The algorithm is similar to the algorithm presented in Section~\ref{sec:qTBdec} for the unfolded qTB codes, and decodes up to an error fraction approaching half our distance bound in Theorem~\ref{thm:fqTBdis}. Yet because the distance bound in Theorem~\ref{thm:fqTBdis} approaches the quantum Singleton bound as the locality parameter $r$ grows large, our decoding algorithm for fqTB codes for large $r$ tolerates error fractions approaching half the quantum Singleton, which is optimal.

We obtain this improved error tolerance for fqTB decoding compared to qTB decoding by leveraging the folding in two separate ways, namely, through the improved distance of fqTB codes (Theorem~\ref{thm:fqTBdis}) compared to qTB codes (Theorem~\ref{thm:qTBdis}), and through the improved list-decodability of classical fRS codes (Theorem~\ref{thm:fRSdec}) compared to RS codes (Theorem~\ref{thm:RSdec}). Combining these improvements for folded codes, but otherwise following the proof of Theorem~\ref{thm:qTBdec}, we obtain the following result.

\begin{algorithm}[h]
  \caption{Classical decoding algorithm $\Dec_{\tilde{C}}$ for a fqTB code $\tilde{\cC}=\CSS(\tilde{C},\tilde{C})$ with parameters $q,r,\ell,s$. By Proposition~\ref{prop:CSSdec}, $\Dec_{\tilde{C}}$ can be used to obtain an efficient quantum decoding algorithm for $\tilde{\cC}$ from $\leq e$ errors in unknown locations. As a point of notation, $\ListDec_{\mathrm{fRS}(q,\ell)}$ refers to the folded Reed-Solomon list decoder in Theorem~\ref{thm:fRSdec}, whose output is a list of message polynomials. Also, we let $\tilde{B}^\perp=\fevl(\bF_q[X]^{[q-1]\cap(1+r\bZ)})$ as in~(\ref{eq:polyduals}) denote the folded space of piecewise linear functions (see Lemma~\ref{lem:piecelin}).}
  \label{alg:fqTBdec}
  \SetKwInOut{Input}{Input}
  \SetKwInOut{Output}{Output}
  
  \SetKwFunction{FnDec}{$\Dec_{\tilde{C}}$}
  \SetKwFunction{FnListDecRS}{$\ListDec_{\mathrm{fRS}(q,\ell,s)}$}
  \SetKwProg{Fn}{Function}{:}{}

  \Input{Corrupted codeword $\tilde{a}\in\widetilde{\bF_q^{\bF_q^*}}$ with $\dis(\tilde{a},\tilde{C})\leq e$}
  \Output{$\tilde{c}'\in\tilde{C}$ such that $\dis(\tilde{c}'-\tilde{a},\tilde{C}^\perp)\leq e$}
  
  \Fn{\FnDec{$\tilde{a}$}}{
    $\cL\gets\emptyset$ \\
    \For{$i\in\{1,\dots,r-1\}$}{
      Define $a^{(i)}:\bF_q^*\rightarrow\bF_q$ by $a^{(i)}(x)=\omega_r^{-i}a(\omega_r^ix)-a(x)$ \\
      $\cL_i\gets\FnListDecRS{$\tilde{a}^{(i)}$}$ \\
      \For{$g^{(i)}(X)=\sum_{j\in[\ell]}g_j^{(i)}X^j\in\cL_i$}{
        \If{$g_j^{(i)}=0$ for every $j\equiv \pm 1\pmod{r}$}{
          Add $\fevl(g)$ to $\cL$ for $g(X):=\sum_{j\in[\ell]}(\omega_r^{(j-1)i}-1)^{-1}g_j^{(i)}X^j$
        }
      }
    }
    \KwRet{$\argmin_{\fevl(g)\in\cL}\dis(\fevl(g)-\tilde{a},\tilde{B}^\perp)$}
  }
\end{algorithm}

\begin{theorem}
  \label{thm:fqTBdec}
  Let $\tilde{\cC}=\CSS(\tilde{C},\tilde{C})$ be the fqTB code in Definition~\ref{def:fqTBcode} with parameters $q,r,\ell,s$ such that $r$ is prime and such that the uncertainty principle in Proposition~\ref{prop:uncertainty} holds for $r$ over $\bF_q$. Then $\cC$ can be decoded from errors in up to
  \begin{equation}
    \label{eq:fqTBdecradius}
    e = \min\left\{\frac{d}{2}-1,e'\right\}
  \end{equation}
  unknown locations in time $q^{O(\sqrt{s})}$, where $d$ is defined as in~(\ref{eq:fqTBdis}) and $e'$ is defined as in~(\ref{eq:listdecradius}).
\end{theorem}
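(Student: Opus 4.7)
The plan is to mirror the proof of Theorem~\ref{thm:qTBdec} almost step for step, with folding-aware versions of each estimate. By Proposition~\ref{prop:CSSdec}, it suffices to supply a classical decoder $\Dec_{\tilde{C}}$ meeting the specification of Algorithm~\ref{alg:fqTBdec}: on input $\tilde{a} = \fevl(f) + \tilde{b}$ with $|\tilde{b}| \leq e$ and $f \in \bF_q[X]^S$, return some $\fevl(g') \in \fevl(f) + \tilde{C}^\perp$ in time $q^{O(\sqrt{s})}$. As in the unfolded case I would decompose $f = g + h$ with $g \in \bF_q[X]^{[\ell]\setminus(\pm 1 + r\bZ)}$ and $h \in \bF_q[X]^{[q-1]\cap(1+r\bZ)}$, observe that $h$ is piecewise linear (Lemma~\ref{lem:piecelin}) so that the operator $f(X) \mapsto f^{(i)}(X) := \omega_r^{-i}f(\omega_r^i X) - f(X)$ kills $h$ and rescales the coefficients of $g$ by the nonzero scalars $(\omega_r^{(j-1)i}-1)$ for $j \not\equiv \pm 1 \pmod r$ (nonzero since $r$ is prime). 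Hence $\fevl(g^{(i)}) = \fevl(f^{(i)})$ is a folded Reed--Solomon codeword of degree $<\ell$, and the algorithm's task reduces to recovering $g^{(i)}$ via $\ListDec_{\mathrm{fRS}(q,\ell,s)}$ and inverting the coefficient rescaling.

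The main calculation is to exhibit some $i \in \{1,\dots,r-1\}$ for which $\tilde{a}^{(i)} - \fevl(g^{(i)}) =: \tilde{b}^{(i)}$ lies within the list-decoding radius guaranteed by Theorem~\ref{thm:fRSdec}. Because $s \mid (q-1)/r$, multiplication by $\omega_r$ permutes the $N := (q-1)/s$ folded block labels $\tilde{\bF}_q^*$, and these partition into $N/r$ orbits (``super-blocks'') under $\Omega_r$. A folded block $\foldi{y}$ is zero in $\tilde{b}^{(i)}$ whenever both $\foldi{y}$ and $\foldi{\omega_r^i y} = \omega_r^i \foldi{y}$ are zero in $\tilde{b}$; inside a super-block with $t$ blocks in $\supp(\tilde{b})$, counting ordered pairs of distinct blocks that avoid $\supp(\tilde{b})$ yields $(r-t)(r-t-1)$ such zero positions across $i = 1,\dots,r-1$. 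Summing over super-blocks and applying Cauchy--Schwarz to $\{r-t_B\}_B$, analogously to the estimate in the proof of Theorem~\ref{thm:qTBdec}, gives
\begin{equation*}
\sum_{i=1}^{r-1}\bigl(N - |\tilde{b}^{(i)}|\bigr) \;\geq\; \frac{r(N-e)^2}{N} - (N-e),
\end{equation*}
so at least one $i$ satisfies $|\tilde{b}^{(i)}| \leq N - \tfrac{1}{r-1}\bigl(\tfrac{r(N-e)^2}{N} - (N-e)\bigr)$. Defining $e'$ in~(\ref{eq:listdecradius}) as the largest value of $e$ for which this upper bound lies under the fRS decoding radius from Theorem~\ref{thm:fRSdec}, the hypothesis $e \leq e'$ guarantees that $g^{(i)} \in \cL_i$; since $g^{(i)}$ has vanishing coefficients at all indices $j \equiv \pm 1 \pmod r$, the consistency check in the algorithm passes, and $\fevl(g)$ is added to $\cL$.

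Uniqueness of the output then follows from the distance bound: any $\fevl(g')$ returned satisfies $\dis(\fevl(g')-\tilde{a}, \tilde{B}^\perp) \leq \dis(\fevl(g)-\tilde{a}, \tilde{B}^\perp) \leq |\tilde{b}| \leq e$, and combined with $\tilde{B}^\perp \subseteq \tilde{C}^\perp$ the triangle inequality yields $\dis(\fevl(g)-\fevl(g'), \tilde{C}^\perp) \leq 2e < d$ using $e \leq d/2-1$; Theorem~\ref{thm:fqTBdis} then forces $\fevl(g)-\fevl(g') \in \tilde{C}^\perp$, as required. The runtime is dominated by the $r-1$ invocations of $\ListDec_{\mathrm{fRS}(q,\ell,s)}$, each running in time $q^{O(\sqrt{s})}$ by Theorem~\ref{thm:fRSdec}; a folded analogue of Lemma~\ref{lem:distoperp} (computing distance to $\tilde{B}^\perp$ by choosing an optimal linear function on each $\Omega_r$-coset and then aggregating inside each folded block) handles the final $\argmin$ in $\poly(q)$ time. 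The main subtlety I anticipate is correctly threading the compatibility condition $s \mid (q-1)/r$ through the super-block counting: without it the $\Omega_r$-action on $\tilde{\bF}_q^*$ is not even well defined, so the entire Cauchy--Schwarz averaging step breaks down and no single $i$ can be guaranteed to produce a decodable fRS instance.
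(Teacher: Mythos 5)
Your proposal follows the paper's proof essentially step for step: the reduction via Proposition~\ref{prop:CSSdec}, the decomposition $f=g+h$ with $h$ annihilated by $f\mapsto f^{(i)}$, the super-block pair-counting plus Cauchy--Schwarz averaging over $i\in\{1,\dots,r-1\}$ to land one $\tilde a^{(i)}$ inside the fRS list-decoding radius of Theorem~\ref{thm:fRSdec}, and the uniqueness argument via $2e<d$ and Theorem~\ref{thm:fqTBdis} are all exactly the paper's argument (and your emphasis on $s\mid(q-1)/r$ making the $\Omega_r$-action on $\tilde\bF_q^*$ well defined is the right thing to highlight). The one imprecision is your sketch of the folded analogue of Lemma~\ref{lem:distoperp}: optimizing the linear function on each $\Omega_r$-coset independently minimizes the \emph{unfolded} disagreement, not $\dis(\cdot,\tilde B^\perp)$ in the folded metric, since a folded block counts as correct only if all $s$ of its symbols (living in $s$ different cosets) simultaneously match; the paper's Lemma~\ref{lem:distoperpfolded} instead jointly optimizes the $s$ values of $\beta$ within each super-block by observing that any fully matching folded block pins them all down, leaving only $r$ candidates to try per super-block.
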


For sufficiently large folding parameter $s$, we can simplify the decoding radius in~(\ref{eq:fqTBdecradius}) to obtain the following bound, which approaches half the distance bound in Corollary~\ref{cor:fqTBdisnice}.
\begin{corollary}
  \label{cor:fqTBdecradiusnice}
  For any $\gamma>0$, let $\tilde{\cC}=\CSS(\tilde{C},\tilde{C})$ be the fqTB code in Definition~\ref{def:fqTBcode} with parameters $q,r,\ell,s$ such that $r$ is a prime for which the uncertainty principle in Proposition~\ref{prop:uncertainty} holds for $r$ over $\bF_q$. Furthermore assume that $s\geq s_0(\gamma,r)$ and $q\geq q_0(s)$, where $s_0(\gamma,r)$ and $q_0(s)$ are some sufficiently large numbers with respect to $\gamma,r$ and $s$ respectively. Then $\cC$ can be decoded from errors in up to
  \begin{equation*}
    e \geq \frac{q-1}{s}\cdot\frac12\left(1-\frac{\ell-1}{q-1}-\sqrt{\frac{1}{r}\left(1-\frac{\ell-1}{q-1}\right)}-\gamma\right),
  \end{equation*}
  unknown locations in time $q^{O(\sqrt{s})}$.
\end{corollary}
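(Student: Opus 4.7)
The plan is to derive Corollary~\ref{cor:fqTBdecradiusnice} from Theorem~\ref{thm:fqTBdec} by showing that, for sufficiently large $s$ and $q$, both terms in the minimum $e=\min\{d/2-1,\,e'\}$ exceed the claimed expression. I will treat the two terms separately, since they have different origins: $d/2-1$ comes from the distance bound of Corollary~\ref{cor:fqTBdisnice}, while $e'$ comes from the folded Reed--Solomon list-decoding radius of Theorem~\ref{thm:fRSdec} applied inside the decoding algorithm (Algorithm~\ref{alg:fqTBdec}).

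First, for $d/2-1$, I would directly invoke Corollary~\ref{cor:fqTBdisnice}, which (once $s\geq 2r^2$) gives
\begin{equation*}
  \frac{d}{2}-1 \geq \frac{q-1}{2s}\left(1-\frac{\ell-1}{q-1}-\Bigl(1+\tfrac{r^2}{s}\Bigr)\sqrt{\tfrac{1}{r}\bigl(1-\tfrac{\ell-1}{q-1}\bigr)}\right)-1.
\end{equation*}
The gap between this and the target bound comes from the $r^2/s$ factor multiplying the square root and the additive $-1$. Choosing $s_0(\gamma,r)$ large enough so that $\frac{r^2}{s}\sqrt{1/r}\leq \gamma/4$, and then $q_0(s)$ large enough that the additive $-1$ contributes at most $\gamma n/8$ (where $n=(q-1)/s$), we recover the claimed lower bound on $d/2-1$.

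Second, for $e'$, I would retrace the reduction used in the proof of Theorem~\ref{thm:qTBdec} but at the folded level, using the key observation that since $s\mid(q-1)/r$, multiplication by $\omega_r^i$ preserves the folding blocks $F_x$. Thus $\tilde{a}^{(i)}$ defined block-wise from $a^{(i)}(x)=\omega_r^{-i}a(\omega_r^ix)-a(x)$ is a well-defined folded vector, and the agreement-counting argument of Theorem~\ref{thm:qTBdec} transfers verbatim to folded positions: summing $|\{F_x:\tilde{a}^{(i)}(F_x)=\fevl(f^{(i)})(F_x)\}|$ over $i=1,\dots,r-1$ and applying Cauchy--Schwarz on the folded cosets of $\tilde{\Omega}_r$ (each of size $r$), one sees that for some $i$ the folded error on $\tilde{a}^{(i)}$ is at most
\begin{equation*}
  n\left(1-\frac{r(1-|\tilde{b}|/n)^2}{r-1}+\frac{1-|\tilde{b}|/n}{r-1}\right).
\end{equation*}
Plugging this into Theorem~\ref{thm:fRSdec}, whose radius is $n\bigl(1-(1+2/\sqrt{s})R^{1-1/\sqrt{s}}\bigr)-2$ with $R=\ell/(q-1)$, one gets the condition defining $e'$. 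As $s\to\infty$ the factor $(1+2/\sqrt{s})R^{1-1/\sqrt{s}}\to R$, so solving the resulting quadratic in $|\tilde{b}|/n$ gives $e'/n\to \frac{1}{2}\bigl(1-R-\sqrt{(1-R)/r}\bigr)$. Choosing $s_0(\gamma,r)$ large enough to make the fRS overhead $(2/\sqrt{s})R^{1-1/\sqrt{s}}+R^{1-1/\sqrt{s}}-R$ contribute at most $\gamma/4$ to the radius, and $q_0(s)$ large enough to absorb the additive $-2$, yields $e'$ above the claimed bound.

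The main obstacle will be the second step: bookkeeping the multiple sources of slack in the fRS decoding radius (the $(1+2/\sqrt{s})$ multiplicative factor, the $R^{1-1/\sqrt{s}}$ exponent shift, the additive $-2$, and the $1/(r-1)$ factors from averaging across the $r-1$ choices of $i$) so that each contributes at most a constant fraction of $\gamma$. Everything else is essentially algebra: once $s_0(\gamma,r)$ is chosen to make each of these error terms small relative to $\gamma\sqrt{(1-R)/r}$, and $q_0(s)$ is chosen so that the additive $O(1)$ slippage is negligible compared to $n$, both branches of the minimum exceed the target bound and the corollary follows.
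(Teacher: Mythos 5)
Your overall strategy matches the paper's: start from Theorem~\ref{thm:fqTBdec}, which gives decoding radius $e=\min\{d/2-1,\,e'\}$, and show each branch of the minimum exceeds the target for large $s$ and $q$. Your treatment of the $d/2-1$ branch via Corollary~\ref{cor:fqTBdisnice} is fine. The problem is in the $e'$ branch, where there is a genuine gap.

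Solving the quadratic condition coming from the averaging over $i\in\{1,\dots,r-1\}$ and the fRS radius does \emph{not} give $e'/n\to\frac12\bigl(1-R-\sqrt{(1-R)/r}\bigr)$. Writing $u=1-e'/n$ and $R=\ell/(q-1)$, the condition is $\frac{r}{r-1}u^2-\frac{1}{r-1}u\geq R'$ with $R'\to R$, whose solution is
\begin{equation*}
  \frac{e'}{n}\;\longrightarrow\;1-\frac{1}{2r}-\sqrt{\frac{1}{4r^2}+\frac{r-1}{r}\,R},
\end{equation*}
i.e.\ the Johnson-like radius, which is a different quantity from the target bound (the target is just half the distance bound minus $\gamma$; it does not arise from this quadratic at all). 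Consequently your argument never establishes the one inequality that actually makes the corollary work, namely that the Johnson-like list-decoding radius dominates half the distance bound:
\begin{equation*}
  \frac12\left(1-\frac{\ell-1}{q-1}-\sqrt{\frac{1}{r}\left(1-\frac{\ell-1}{q-1}\right)}\right)\;\leq\;1-\frac{1}{2r}-\sqrt{\frac{1}{4r^2}+\frac{r-1}{r}\cdot\frac{\ell-1}{q-1}}.
\end{equation*}
This is not a routine algebraic check: the paper isolates it as Lemma~\ref{lem:johnsonvsfrs} and proves it by a partial-derivative argument combined with a numerical grid verification. Without this comparison (or some substitute for it), you have not shown $e'\geq$ the claimed radius, so the minimum could in principle be achieved by $e'$ below the target. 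A secondary, non-fatal remark: re-deriving the folded agreement-counting reduction is unnecessary for the corollary, since Theorem~\ref{thm:fqTBdec} already packages that analysis into the closed-form expression for $e'$ in~(\ref{eq:listdecradius}); the corollary only requires taking limits of that expression and then invoking the comparison lemma above.
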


Specifically, Corollary~\ref{cor:fqTBdecradiusnice} follows from showing that our current bound on $d/2-1$ in Corollary~\ref{cor:fqTBdisnice} is smaller than $e'$, so our distance bound is the limiting contraint on decoding radius. Thus if we could show a tighter distance bound for fqTB codes than Theorem~\ref{thm:fqTBdis}, we would immediately obtain efficient decoding up to a larger radius. For instance, if we could show that the fqTB codes have relative distance $d/((q-1)/s)\geq 1-\ell/(q-1)-O(1/r)$, it would immediately follow that for all sufficiently large $s,q$, Algorithm~\ref{alg:fqTBdec} decodes fqTB codes for errors on up to $(1-\ell/(q-1)-O(1/r))/2$ fraction of the qudits

\begin{proof}[Proof of Corollary~\ref{cor:fqTBdecradiusnice}]
  Letting $s\rightarrow\infty$ in the bound on the fqTB distance $d$ in Corollary~\ref{cor:fqTBdisnice} and in the expression for $e'$ in~(\ref{eq:listdecradius}), we see that for all sufficiently large $s$ relative to $\gamma,r$ and sufficiently large $q$ relative to $s$, then
  \begin{align*}
    \frac{d}{2}-1 &\geq \frac{q-1}{s}\cdot\frac12\left(1-\frac{\ell-1}{q-1}-\sqrt{\frac{1}{r}\left(1-\frac{\ell-1}{q-1}\right)}-\gamma\right)
  \end{align*}
  and
  \begin{align*}
    e' &\geq \frac{q-1}{s}\cdot\left(1-\frac{1}{2r}-\sqrt{\frac{1}{4r^2}+\frac{r-1}{r}\cdot\frac{\ell-1}{q-1}}-\frac{\gamma}{2}\right).
  \end{align*}
  Thus to show the desired inequality in the corollary statement, because $e=\min\{d/2-1,e'\}$ by Theorem~\ref{thm:fqTBdec}, it suffices to show that
  \begin{align*}
    \frac{q-1}{s}\cdot\frac12\left(1-\frac{\ell-1}{q-1}-\sqrt{\frac{1}{r}\left(1-\frac{\ell-1}{q-1}\right)}-\gamma\right)
    &\leq \frac{q-1}{s}\cdot\left(1-\frac{1}{2r}-\sqrt{\frac{1}{4r^2}+\frac{r-1}{r}\cdot\frac{\ell-1}{q-1}}-\frac{\gamma}{2}\right),
  \end{align*}
  or equivalently, that
  \begin{align*}
    \frac12\left(1-\frac{\ell-1}{q-1}-\sqrt{\frac{1}{r}\left(1-\frac{\ell-1}{q-1}\right)}\right)
    &\leq 1-\frac{1}{2r}-\sqrt{\frac{1}{4r^2}+\frac{r-1}{r}\cdot\frac{\ell-1}{q-1}}.
  \end{align*}
  But the above inequality follows directly from Lemma~\ref{lem:johnsonvsfrs}.
\end{proof}

To show that Algorithm~\ref{alg:fqTBdec} runs in polynomial time, we need the following analogue of Lemma~\ref{lem:distoperp} for the folded case, which shows that the distance computations $\dis(\fevl(g)-\tilde{a},\tilde{B}^\perp)$ in the final line of the algorithm can be performed efficiently.

\begin{lemma}
  \label{lem:distoperpfolded}
  Letting $\tilde{B}^\perp=\fevl(\bF_q[X]^{[q-1]\cap(1+r\bZ)})$, there exists a $O(r\cdot q\poly\log q)$-time algorithm that takes as input $\tilde{b}\in\widetilde{\bF_q^{\bF_q^*}}$, and outputs $\dis(\tilde{b},\tilde{B}^\perp)$.
\end{lemma}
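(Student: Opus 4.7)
The plan is to exploit a grid-like structural decomposition of the folded setting that reduces the distance computation to $m=(q-1)/(rs)$ independent mode-finding subproblems. By Lemma~\ref{lem:piecelin}, every codeword of $\tilde{B}^\perp$ is the folded evaluation of a piecewise linear $f(x)=\beta_{x\Omega_r}\cdot x$ parameterized by $\beta:\bF_q^*/\Omega_r\to\bF_q$. Writing $b(x)$ for the unfolded value of $\tilde{b}$ at position $x\in\bF_q^*$ and setting $\gamma(x)=b(x)/x$ (whenever $b(x)\neq 0$ with the natural convention otherwise), a block $F\in\tilde{\bF}_q^*$ is matched by $\tilde{f}$ iff $\beta_{x\Omega_r}=\gamma(x)$ for every $x\in F$.

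The key structural claim I will establish is: since $s\mid(q-1)/r$, writing $(q-1)/r=sm$, the $(q-1)/s=rm$ blocks partition into $m$ groups of $r$ blocks each (by block index modulo $m$) such that all blocks in the same group touch exactly the same set of $s$ cosets, while different groups touch disjoint sets of cosets. This follows from a short calculation using the generator $\omega_{q-1}$: the block at index $j$ contains the positions $\{\omega_{q-1}^{sj+i}:i\in[s]\}$, whose coset indices $(sj+i)\bmod sm$ depend only on $j\bmod m$. Consequently, the coset-$\beta$ assignment decouples across groups, and the folded distance decomposes as $\dis(\tilde{b},\tilde{B}^\perp)=\sum_{g\in[m]}d_g$, where $d_g$ is the minimum unmatched-block count within group $g$. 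Within a single group, the $r$ blocks all touch the same ordered tuple of $s$ cosets, and block $F$ has profile $\pi_F=(\gamma(x_1),\dots,\gamma(x_s))\in\bF_q^s$ (where $x_1,\dots,x_s$ enumerate $F$ according to this fixed coset ordering); matching $F$ requires the shared $s$-coset values of $\beta$ to exactly equal $\pi_F$. Thus matching multiple blocks in the same group requires them to share a profile, giving $d_g = r - (\text{mode count among the } r \text{ profiles of group }g)$.

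The algorithm is then: compute $\gamma(x)$ for every $x\in\bF_q^*$, form each block's profile, and for each of the $m$ groups hash the $r$ profiles to find the mode count, outputting $\sum_g(r-\text{mode count})$. Counting $\bF_q$-operations (each $\poly\log q$), the total runtime is $O(q\,\poly\log q)$, comfortably within the claimed $O(r\cdot q\,\poly\log q)$ bound. The main nontrivial step is verifying the structural decomposition claim above; without it, one might reasonably fear complex coupling between cosets across blocks (since a block's matching depends jointly on $s$ coset choices), but the divisibility $s\mid(q-1)/r$ cleanly confines all such coupling to within a single group of $r$ blocks and $s$ cosets, reducing the problem to elementary mode-finding.
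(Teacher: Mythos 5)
Your proof is correct and takes essentially the same approach as the paper: your ``groups of $r$ blocks sharing the same $s$ cosets'' are exactly the paper's cosets $\foldi{x}\Omega_r\in\tilde{\bF}_q^*/\Omega_r$, and reducing each group to finding the most common profile among its $r$ blocks is precisely what the paper does (it brute-forces over the $r$ candidate $\beta$'s, one per block, in $O(r^2s\poly\log q)$ per group, whereas you hash the profiles --- a minor implementation difference, both within the stated bound). Your explicit verification of the grid structure via $s\mid(q-1)/r$ is a nice touch that the paper leaves implicit.
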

\begin{proof}
  To compute $\dis(\tilde{b},\tilde{B}^\perp)=\min_{\tilde{b}'\in \tilde{B}^\perp}|\tilde{b}-\tilde{b}'|$ efficiently, recall that $B^\perp$ is the space of piecewise linear functions by Lemma~\ref{lem:piecelin}. Therefore we may consider each $\foldi{x}\Omega_r\in\tilde{\bF}_q^*/\Omega_r$ separately, and for each $\foldi{x}\Omega_r$ we must compute the sequence of values $\beta=(\beta_{y\Omega_r})_{y\in\foldi{x}}\in\bF_q^{\foldi{x}}$ that maximizes the number $w(\beta)$ of coset elements $x'\in x\Omega_r$ for which all $y'\in\foldi{x'}$ have $b(y')=\beta_{y'\Omega_r}\cdot y'$; note here that our notation suppresses the implicit dependence of $\beta$ and $w(\beta)$ on the choice of coset $\foldi{x}\Omega_r$. Indeed, by definition we will then have $\dis(\tilde{b},\tilde{B}^\perp)=(q-1)/s-\sum_{\foldi{x}\Omega_r\in\tilde{\bF}_q^*/\Omega_r}w(\beta^{(\foldi{x}\Omega_r)})$, where $\beta^{(\foldi{x}\Omega_r)}$ denotes the optimal choice of $\beta$ for a given $\foldi{x}\Omega_r$.

  To compute the optimal $\beta=\beta^{(\foldi{x}\Omega_r)}$ for a given $\foldi{x}\Omega_r\in\tilde{\bF}_q^*/\Omega_r$, observe that if $w(\beta)>0$, then there must be some $x'\in x\Omega_r$ for which all $y'\in\foldi{x'}$ have $b(y')=\beta_{y'\Omega_r}\cdot y'$. Therefore $\beta_{y'\Omega_r}=b(y')/y'$ for all $y'\in\foldi{x'}$, which completely determines the value of $\beta$. Thus we may simply loop through the $r$ coset elements $x'\in x\Omega_r$, and for each $x'$ compute $w(\beta)$ for $\beta$ given by $\beta_{y'\Omega_r}=b(y')/y'$ for all $y'\in\foldi{x'}$; then whichever of these $r$ values of $\beta$ maximizes $w(\beta)$ is the optimal value $\beta=\beta^{(\foldi{x}\Omega_r)}$. This algorithm by definition computes a given $\beta^{(\foldi{x}\Omega_r)}$ in time $O(r^2s\poly\log q)$ time, so it computes all $\beta^{(\foldi{x}\Omega_r)}$ for $\foldi{x}\Omega_r\in\tilde{\bF}_q^*/\Omega_r$ in $O(r\cdot q\poly\log q)$ time. Thus we can compute $\dis(\tilde{b},\tilde{B}^\perp)$ in $O(r\cdot q\poly\log q)$ time.
\end{proof}

\begin{proof}[Proof of Theorem~\ref{thm:fqTBdec}]
  The proof will follow closely the proof of Theorem~\ref{thm:fqTBdec}, with the main difference our use of the folded RS list decoding algorithm in Theorem~\ref{thm:fRSdec} and the folded qTB distance bound in Theorem~\ref{thm:fqTBdis} in place of their unfolded counterparts.

  By Proposition~\ref{prop:CSSdec}, it suffices to construct an algorithm $\Dec_{\tilde{C}}$ that takes as input a corrupted codeword $\tilde{a}=\tilde{c}+\tilde{b}$ for some $\tilde{c}\in \tilde{C}$ and some corruption $\tilde{b}\in\widetilde{\bF_q^{\bF_q^*}}$ of Hamming weight $|\tilde{b}|\leq e$, and outputs some $\tilde{c}'\in\tilde{C}$ such that $\tilde{c}'-\tilde{c}\in \tilde{C}^\perp$.

  The desired algorithm is given in Algorithm~\ref{alg:fqTBdec}. We first show it correctly decodes as described above, and then analyze the running time.

  Consider a corrupted codeword $\tilde{a}=\tilde{c}+\tilde{b}$ for some codeword $\tilde{c}=\fevl(f)\in\tilde{C}$ and some corruption $\tilde{b}\in\widetilde{\bF_q^{\bF_q^*}}$ of weight $|\tilde{b}|\leq e$. For a given $1\leq i\leq r-1$, by definition $a^{(i)}(x)=\omega_r^{-i}a(\omega_r^ix)-a(x)$ equals $c^{(i)}(x)=f^{(i)}(x)=\omega_r^{-i}f(\omega_r^ix)-f(x)$ at every point $x\in\bF_q^*$ for which $x,\omega_r^ix\notin\supp(b)$. Meanwhile, within a given coset $x\Omega_r\in\bF_q^*/\Omega_r$, the number of ordered pairs of distinct points $y,y'\in x\Omega_r$ such that $y,y'\notin\supp(b)$ is precisely $(r-|b|_{x\Omega_r}|)(r-|b|_{x\Omega_r}-1|)$. Thus the sum over all $i\in\{1,\dots,r-1\}$ of the number of points $\foldi{x}\in\tilde{\bF}_q^*$ where $\tilde{a}^{(i)}(\foldi{x})=\tilde{f}^{(i)}(\foldi{x})$ satisfies
  \begin{align*}
    \hspace{1em}&\hspace{-1em} \sum_{i=1}^{r-1}|\{\foldi{x}\in\tilde{\bF}_q^*:\tilde{a}^{(i)}(\foldi{x})=\tilde{f}^{(i)}(\foldi{x})\}| \\
                &\geq \sum_{i=1}^{r-1}|\{\foldi{x}\in\tilde{\bF}_q^*:\foldi{x},\omega_r^i\foldi{x}\notin\supp(\tilde{b})\}| \\
                &= \sum_{\foldi{x}\Omega_r\in\tilde{\bF}_q^*/\Omega_r}|\{(\foldi{y},\foldi{y'})\in(\foldi{x}\Omega_r)^2:\foldi{y}\neq\foldi{y'}\text{ and }\foldi{y},\foldi{y'}\notin\supp(\tilde{b})\}| \\
                &= \sum_{\foldi{x}\Omega_r\in\tilde{\bF}_q^*/\Omega_r}(r-|\tilde{b}|_{\foldi{x}\Omega_r}|)(r-|\tilde{b}|_{\foldi{x}\Omega_r}-1|) \\
                &\geq \frac{q-1}{rs}\bE_{\foldi{x}\Omega_r\sim\Unif(\tilde{\bF}_q^*/\Omega_r)}[(r-|\tilde{b}|_{\foldi{x}\Omega_r}|)^2]-\left(\frac{q-1}{s}-|\tilde{b}|\right) \\
                &\geq \frac{q-1}{rs}\bE_{\foldi{x}\Omega_r\sim\Unif(\tilde{\bF}_q^*/\Omega_r)}[(r-|\tilde{b}|_{\foldi{x}\Omega_r}|)]^2-\left(\frac{q-1}{s}-|\tilde{b}|\right) \\
                &= \frac{rs}{q-1}\left(\frac{q-1}{s}-|\tilde{b}|\right)^2-\left(\frac{q-1}{s}-|\tilde{b}|\right) \\
                &\geq \frac{rs}{q-1}\left(\frac{q-1}{s}-e\right)^2-\left(\frac{q-1}{s}-e\right),
  \end{align*}
  where the final inequality above holds becuase the function $\beta\mapsto rs((q-1)/s-\beta)^2/(q-1)-((q-1)/s-\beta)$ is decreasing for all $\beta\leq e\leq(q-1)/2s$. Then averaging over all $i\in\{1,\dots,r-1\}$, there must be some such $i$ for which
  \begin{equation}
    \label{eq:agreeffolded}
    |\{\foldi{x}\in\tilde{\bF}_q^*:\tilde{a}^{(i)}(\foldi{x})=\tilde{f}^{(i)}(\foldi{x})\}| \geq \frac{rs}{(r-1)q-1}\left(\frac{q-1}{s}-e\right)^2-\frac{1}{r-1}\left(\frac{q-1}{s}-e\right).
  \end{equation}

  Recall by Lemma~\ref{lem:qTBdef}, we may decompose $f(X)=g(X)+h(X)$ for $g(X)=\sum_jg_jX^j\in\bF_q[X]^{[\ell]\setminus(\pm 1+r\bZ)}$ and $h(X)\in\bF_q[X]^{[q-1]\cap(1+r\bZ)}$. Define $g^{(i)},h^{(i)}$ analogouosly to $f^{(i)}$, so that $f^{(i)}=g^{(i)}+h^{(i)}$. Then because the expression in~(\ref{eq:coeffchange}) equals $0$ for all $j\equiv 1\pmod{r}$, it follows that $h^{(i)}=0$ and thus $f^{(i)}=g^{(i)}$. Therefore~(\ref{eq:agreeffolded}) is equivalent to
  \begin{equation}
    \label{eq:agreegfolded}
    |\{\foldi{x}\in\tilde{\bF}_q^*:\tilde{a}^{(i)}(\foldi{x})=\tilde{g}^{(i)}(\foldi{x})\}| \geq \frac{rs}{(r-1)q-1}\left(\frac{q-1}{s}-e\right)^2-\frac{1}{r-1}\left(\frac{q-1}{s}-e\right).
  \end{equation}
  But by~(\ref{eq:coeffchange}), the coefficients of $g^{(i)}(X)=\sum_{j\in[\ell]\setminus(\pm 1+r\bZ)}g_j^{(i)}X^j$ are given by $g_j^{(i)}=(\omega_r^{i(j-1)\pmod{r}}-1)g_j$, and thus $g$ and $g^{(i)}$ have coefficients of the same support, that is, $g_j=0$ if and only if $g_j^{(i)}=0$. In particular, it follows that $\deg(g^{(i)})=\deg(g)<\ell$. In other words, $\evl(g^{(i)})\in\mathrm{fRS}(q,\ell)$ is a folded Reed-Solomon codeword, so~(\ref{eq:agreegfolded}) says that $\tilde{a}^{(i)}=\fevl(g^{(i)})+\tilde{b}^{(i)}$ is a corrupted Reed-Solomon codeword with the corruption $\tilde{b}^{(i)}$ of weight
  \begin{align*}
    |\tilde{b}^{(i)}|
    &= \frac{q-1}{s} - |\{\foldi{x}\in\tilde{\bF}_q^*:\tilde{a}^{(i)}(\foldi{x})=\tilde{g}^{(i)}(\foldi{x})\}| \\
    &\leq \frac{q-1}{s} - \left(\frac{rs}{(r-1)q-1}\left(\frac{q-1}{s}-e\right)^2-\frac{1}{r-1}\left(\frac{q-1}{s}-e\right)\right) \\
    &= \frac{q-1}{s}\left(1-\left(\frac{r}{r-1}\left(1-\frac{es}{q-1}\right)^2-\frac{1}{r-1}\left(1-\frac{es}{q-1}\right)\right)\right).
  \end{align*}
  By Theorem~\ref{thm:fRSdec}, the output of running the list decoder $\ListDec_{\mathrm{fRS}(q,\ell,s)}(\tilde{a}^{(i)})$ is a list containing $\fevl(\tilde{g}^{(i)})$ as long as the right hand side above is at most
  \begin{equation*}
    \frac{q-1}{s}\left(1-\left(1+\frac{2}{\sqrt{s}}\right)\left(\frac{\ell}{q-1}\right)^{1-1/\sqrt{s}}\right)-2,
  \end{equation*}
  which simplifies to needing that $e\leq e'$ for
  \begin{equation}
    \label{eq:listdecradius}
    e' = \frac{q-1}{s}\left(1-\frac{1}{2r}-\sqrt{\frac{1}{4r^2}+\frac{r-1}{r}\left(\left(1+\frac{2}{\sqrt{s}}\right)\left(\frac{\ell}{q-1}\right)^{1-1/\sqrt{s}}+\frac{2s}{q-1}\right)}\right).
  \end{equation}
  But the above holds by the definition of $e$, so the list $\cL_i$ in Algorithm~\ref{alg:fqTBdec} will contain $g^{(i)}$, and thus after the $i$th iteration of the for loop, the list $\cL$ will contain $\fevl(g(X))=\fevl(\sum_j(\omega_r^{(j-1)i}-1)^{-1}g_j^{(i)}X^j)$. Thus if Algorithm~\ref{alg:fqTBdec} outputs $\fevl(g')\in\cL$, then $g'\in\bF_q[X]^{[\ell]\setminus(\pm 1+r\bZ)}$ and
  \begin{equation*}
    \dis(\fevl(g')-\tilde{a},\tilde{B}^\perp) \leq \dis(\fevl(g)-\tilde{a},\tilde{B}^\perp) \leq |\fevl(g+h)-\tilde{a}| = |\fevl(f)-\tilde{a}| = |\tilde{b}| \leq e.
  \end{equation*}
  But as $\tilde{B}^\perp\subseteq\tilde{C}^\perp$, it follows that $\dis(\fevl(g)-\tilde{a},\tilde{C}^\perp) \leq e$ and $\dis(\fevl(g')-\tilde{a},\tilde{C}^\perp) \leq e$, so $\dis(\fevl(g)-\fevl(g'),\tilde{C}^\perp)\leq 2e$. But by the definition of $e$ along with Theorem~\ref{thm:fqTBdis}, $\cC$ has distance $\min_{c\in C\setminus C^\perp}|c|>2e$, so it follows that $\fevl(g)-\fevl(g')\in\tilde{C}^\perp$. Thus $\Dec_{\tilde{C}}(\tilde{a})$ outputs some $\fevl(g')\in\fevl(g)+\tilde{C}^\perp$, as desired.

  It remains to show that Algorithm~\ref{alg:fqTBdec} runs in time $q^{O(\sqrt{s})}$. Theorem~\ref{thm:fRSdec} implies that the calls to $\ListDec_{\mathrm{fRS}(q,\ell,s)}$ run in $q^{O(\sqrt{s})}$ time, while Lemma~\ref{lem:distoperpfolded} implies that the $\argmin$ computation in the final line of the algorithm runs in $q^{O(1)}$ time. The rest of the algorithm by definition also runs in $q^{O(1)}$ time, so the result follows.
\end{proof}

\section{Impossibility of Quantum Locally Correctable Codes}
In this section, we show that quantum codes are inherently unable to perform local recovery from a large number of erasures. Thus the local decoding capabilities of qLRCs is in some sense close to optimal for quantum codes. This result is in constrast to the classical setting, where there exist locally correctable codes (LCCs), which can recover every given code symbol from a constant number $r$ of other code symbols, even after a linear number of code symobls have been erased.

Our impossibility result is stated below. Informally, it states that any qudit in a quantum code that can be recovered from two disjoint sets of other qudits must be useless for error correction; that is, such a qudit is entirely unentangled from the remainder of the code state, and contains no information about the encoded message.

\begin{theorem}
  \label{thm:notworec}
  Let $\cC$ be a quantum code of block length $n$ and dimension $k>0$. Assume that for some $i\in[n]$, there exist two subsets $I_i^1,I_i^2\subseteq[n]$ satisfying $I_i^1\cap I_i^2=\{i\}$ such that for each $b=1,2$, there is an associated recovery channel $\Rec_i^b:\cM(I_i^b\setminus\{i\})\rightarrow\cM(I_i^b)$ with the guarantee that for every code state $\psi\in\cC$,
  \begin{equation*}
    \Rec_i^b\otimes I_{[n]\setminus I_i^b}(\psi_{[n]\setminus\{i\}}) = \psi.
  \end{equation*}
  Then there exists a 1-qudit density matrix $\alpha\in\cM(1)$ such that every $\psi\in\cC$ can be decomposed as $\psi=\alpha_i\otimes\psi_{[n]\setminus\{i\}}$.
\end{theorem}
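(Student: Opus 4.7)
The plan is to convert the two disjoint recovery hypotheses into two quantum Markov chain conditions at qudit $i$, and then combine them via weak monotonicity of the von Neumann entropy to force the entropy of qudit $i$ to vanish on every pure code state. We fix a pure $|\psi\rangle\in\cC$ and partition $[n]=\{i\}\sqcup A\sqcup B\sqcup C$, where $A=I_i^1\setminus\{i\}$, $B=I_i^2\setminus\{i\}$ (disjoint by the hypothesis $I_i^1\cap I_i^2=\{i\}$), and $C=[n]\setminus(I_i^1\cup I_i^2)$. Writing $S(X)$ for the von Neumann entropy of the reduction of $\psi=|\psi\rangle\langle\psi|$ to a qudit subset $X$, the $b=1$ recovery condition $(\Rec_i^1\otimes I_{BC})(\psi_{ABC})=\psi_{iABC}$ is precisely the Petz / Hayden--Jozsa--Petz--Winter characterization of the quantum Markov chain $\{i\}-A-BC$, and therefore equivalent to $I(i:BC\mid A)_\psi=0$; symmetrically, $b=2$ gives $I(i:AC\mid B)_\psi=0$.

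Because $|\psi\rangle_{iABC}$ is pure we have $S(iABC)=0$ and $S(ABC)=S(i)$, so expanding $I(i:BC\mid A)=S(iA)+S(ABC)-S(A)-S(iABC)=0$ collapses to $S(iA)=S(A)-S(i)$, and symmetrically $S(iB)=S(B)-S(i)$. We then apply weak monotonicity of the von Neumann entropy, $S(XY)+S(YZ)\geq S(X)+S(Z)$, to the tripartite reduction $\psi_{AiB}$ with central register $\{i\}$; this yields
\[
S(iA)+S(iB)\geq S(A)+S(B).
\]
Substituting the two identities gives $-2S(i)\geq 0$, so $S(i)=0$. Hence $\psi_i=|\alpha\rangle\langle\alpha|_i$ is a pure single-qudit state, and purity of $|\psi\rangle$ then forces the factorization $|\psi\rangle=|\alpha\rangle_i\otimes|\psi'\rangle_{[n]\setminus\{i\}}$.

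It remains to argue that $\alpha$ is the same single-qudit state for every code state. The existence of $\Rec_i^1$ shows that $\cC$ can decode from the erasure of qudit $i$, so Lemma~\ref{lem:localind} forces $\psi_i$ to coincide for every $\psi\in\cC$, and this common marginal is the required $\alpha$; the factorization above then extends by convexity to mixed states in the code space. The main step that requires care is verifying that the hypothesis delivers the coherent QMC statement on the full register $BC$ (rather than merely the marginal identity $\Rec_i^b(\psi_A)=\psi_{iA}$), but this is immediate because $\Rec_i^b$ is tensored with the identity on $[n]\setminus I_i^b$, so the channel preserves all correlations with $BC$. I do not anticipate any further obstacles; once the two QMC identities are in place, the remainder is the short entropy calculation above.
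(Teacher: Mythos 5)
Your proof is correct, but it takes a genuinely different route from the paper's. You convert each recovery hypothesis into a quantum Markov chain condition $I(i:BC\mid A)_\psi=0$ and $I(i:AC\mid B)_\psi=0$ (the direction you need — recoverability implies vanishing CMI — follows from data processing plus the chain rule, and is the easy half of Hayden--Jozsa--Petz--Winter), use purity of the global state to rewrite these as $S(iA)=S(A)-S(i)$ and $S(iB)=S(B)-S(i)$, and then weak monotonicity $S(iA)+S(iB)\geq S(A)+S(B)$ forces $S(i)=0$, hence a pure marginal and the tensor factorization. The paper instead runs a no-cloning-style argument: it applies both recovery channels simultaneously, placing the two recovered copies of qudit $i$ in fresh registers $i_1,i_2$ (the disjointness hypothesis is what lets the two channels commute), and observes that tracing out either fresh register leaves the pure state $\psi$, which forces each recovered copy to be in tensor product with the rest. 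The paper's argument is more elementary (it uses only the fact that a pure marginal of a pure state factorizes), whereas yours leans on strong subadditivity; in exchange, the entropic route is shorter once the machinery is granted and is the natural starting point for robust or approximate versions of the impossibility result (via approximate Markov chains). Your final step — invoking Lemma~\ref{lem:localind} with $S=\{i\}$ to conclude that $\alpha$ is the same across all code states — is exactly the paper's final step as well. One small presentational note: the identity $S(ABC)=S(i)$ and the vanishing of $S(iABC)$ are where purity of $\psi$ enters, so the argument as written applies to pure code states; your closing remark about extending to mixed states supported on $\cC$ by convexity is fine but not needed for the statement as given.
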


In comparison, qLRCs require each qudit $i\in[n]$ to be recoverable from just a single set $I_i\setminus\{i\}$ of other qudits. Thus Theorem~\ref{thm:notworec} shows that it is impossible to extend a qLRC to even just have a second disjoint recovery set for each qudit.

This result is in contrast to the classical case, where code components can be recovered from many different disjoint subsets of components. Indeed, classically there exist locally correctable code (LCCs), which have the property that even after erasing any constant fraction of the code components, each erased component can be recovered from some consant number of unerased components.

Theorem~\ref{thm:notworec} shows that no such code LCC can exist quantumly. Specifically, assume a quantum code $\cC$ is such that each code qudit $i$ can be recovered from qudits $I_i\setminus\{i\}$. Then Theorem~\ref{thm:notworec} shows that for every $i\in[n]$, if the qudits in $I_i$ are erased, then it is impossible to locally recover qudit $i$ without performing a more global decoding operation that also recovers other qudits in $I_i$.

\begin{proof}[Proof of Theorem~\ref{thm:notworec}]
  Define two additional qudits $i_1=n$ and $i_2=n+1$. For $b=1,2$, define $J^b=I_i^b\cup\{i_1\}\setminus\{i\}$. Now for any given pure code state $\psi=\ket{\psi}\bra{\psi}\in\cC$, let $\rho\in\cM([n+2])$ be the state obtained by applying $\Rec_i^1$ but placing the recovered copy of qudit $i$ in register $i_1$, and also applying $\Rec_i^2$ but placing the recovered copy of qudit $i$ in register $i_2$. That is,
  \begin{align*}
    \rho &= (\Rec_i^1)_{J^1}(\Rec_i^2)_{J^2}(\psi).
  \end{align*}
  Note that by definition $J^1$, $J^2$, and $\{i\}$ are all disjoint, so in particular $(\Rec_i^1)_{J^1}$ and $(\Rec_i^2)_{J^2}$ act on disjoint sets of qudits and therefore commute. Letting $\psi_{[n]\setminus\{i\}\cup\{i_b\}}$ denote the state $\psi$ with qudit $i$ moved to position $i_b$, it follows that
  \begin{align}
    \label{eq:traceonerec}
    \begin{split}
      \Tr_{i_1}(\Tr_i\rho)
      &= (\Rec_i^2)_{J^2}\Tr_{i_1}(\Rec_i^1)_{J^1}(\psi_{[n]\setminus\{i\}}) \\
      &= (\Rec_i^2)_{J^2}\Tr_{i_1}(\psi_{[n]\setminus\{i\}\cup\{i_b\}}) \\
      &= (\Rec_i^2)_{J^2}(\psi_{[n]\setminus\{i\}}) \\
      &= \psi_{[n]\setminus\{i\}\cup\{i_2\}}.
    \end{split}
  \end{align}
  Therefore we have shown that tracing over $i_1$ in $\Tr_i\rho$ yields the pure state $\psi_{[n]\setminus\{i\}\cup\{i_2\}}\cong\psi$, and thus there is a tensor decomposition
  \begin{equation*}
    \Tr_i\rho=\alpha_{i_1}\otimes\psi_{[n]\setminus\{i\}\cup\{i_2\}}
  \end{equation*}
  for some 1-qudit density matrix $\alpha\in\cM(1)$.

  But the same reasoning used to show~(\ref{eq:traceonerec}) also implies that $\Tr_{i_2}(\Tr_i\rho) = \psi_{[n]\setminus\{i\}\cup\{i_1\}}$, so
  \begin{align*}
    \psi_{[n]\setminus\{i\}\cup\{i_1\}}
    &= \Tr_{i_2}(\Tr_i\rho) \\
    &= \Tr_{i_2}(\alpha_{i_1}\otimes\psi_{[n]\setminus\{i\}\cup\{i_2\}}) \\
    &= \alpha_{i_1}\otimes\psi_{[n]\setminus\{i\}},
  \end{align*}
  or equivalently,
  \begin{equation*}
    \psi = \alpha_i\otimes\psi_{[n]\setminus\{i\}}.
  \end{equation*}

  It remains to be shown that $\alpha_i$ is the same for all $\psi\in\cC$. This conclusion follows Lemma~\ref{lem:localind}, which states that if a quantum code can recover from erasures on a given set of qudits (here $\Rec_i^b$ recovers from erasures on qudit $i$), then the reduced density matrix of a code state on those qudits contains no information about the encoded message state.
\end{proof}



\section{Acknowledgments}
We thank Thiago Bergamaschi for helpful discussions.

\bibliographystyle{alpha}
\bibliography{library}

\appendix

\section{Technical Lemmas}
\label{app:teclem}
Below we prove the technical lemmas used in the Section~\ref{sec:fqTBdis} for bounding the distance of the fqTB codes.

The following proof of Lemma~\ref{lem:detrank} is well known, and is included for completeness.

\begin{proof}[Proof of Lemma~\ref{lem:detrank}]
  To show that $\det$ has a root of multiplicity $m-t$ at $x=(x_{ij})_{i,j\in[m]}$ of rank $t$, it suffices to show that all $d$th derivatives of $\det$ vanish at $x$ for all $d\in[m-t]$. But by definition every $d$th derivative of $\det(X_{ij})_{i,j\in[m]}$ either vanishes, or is equal (up to a sign) of the derivative of a $(m-d)\times(m-d)$ submatrix of $(X_{ij})_{i,j\in[m]}$. For $X_{ij}=x_{ij}$ and $d\in[m-t]$, every such submatrix has rank at most $\rank(x_{ij})_{i,j\in[m]}=t<m-d$, so every such submatrix is not full rank and therfore has determinant $0$. Thus every $d$th derivative of $\det(X_{ij})_{i,j\in[m]}$ for $d\in[m-t]$ vanishes at $X_{ij}=x_{ij}$, and thus $\det$ has a root of multiplicity $m-t$ at $x$.
\end{proof}

The following lemma bounds the term $\epsilon$ defined by~(\ref{eq:fqTBloss}) in Theorem~\ref{thm:fqTBdis}.

\begin{lemma}
  \label{lem:boundloss}
  Define $\epsilon$ as in~(\ref{eq:fqTBloss}). If we set $s=cr^2$ for $c\geq 2$ and let $\lambda=1-(\ell-1)/(q-1)$, then
  \begin{equation}
    \label{eq:boundloss}
    \epsilon
    \leq \left(1+\frac{1}{c}\right)\sqrt{\frac{\lambda}{r}}.
  \end{equation}
\end{lemma}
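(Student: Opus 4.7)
The plan is to upper bound
$$\epsilon = \max_{1\leq m\leq r}\min\{A(m), B(m)\},$$
where $A(m) := \lambda(m-1)/r$ is monotonically increasing and $B(m) := 1/m + (m-1)/s$ is (roughly) U-shaped, by splitting the range of $m$ at a threshold where $A$ controls one side and $B$ controls the other. Equating $\lambda m/r \approx 1/m$ suggests the crossover near $m \approx \sqrt{r/\lambda}$ with value $\approx \sqrt{\lambda/r}$; the extra factor $1/c$ will come from absorbing the $(m-1)/s$ term in $B$ using $s = cr^2$.

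First I would dispose of the degenerate case $\lambda < 1/r$, where $\epsilon \leq \max_m A(m) \leq \lambda(r-1)/r < \lambda$ and the inequality $\lambda^2 < \lambda/r$ immediately yields $\lambda < \sqrt{\lambda/r} \leq (1+1/c)\sqrt{\lambda/r}$. In the main regime $\lambda \geq 1/r$, so that $\sqrt{r/\lambda} \leq r$, I set the threshold
$$m^* := \left\lceil\sqrt{r/\lambda}\right\rceil \in \{1,\dots,r\}.$$
For every $1 \leq m \leq m^*$, monotonicity of $A$ together with the bound $m^* - 1 \leq \sqrt{r/\lambda}$ gives
$$A(m) \leq A(m^*) = \lambda(m^*-1)/r \leq \sqrt{\lambda/r}.$$
For every $m^* < m \leq r$, using that $1/m$ is decreasing and $(m-1)/s$ is increasing in $m$,
$$B(m) \leq \frac{1}{m^*+1} + \frac{r-1}{s} \leq \sqrt{\lambda/r} + \frac{1}{cr},$$
where the first summand uses $m^* + 1 > \sqrt{r/\lambda}$ and the second uses $s = cr^2$. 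Finally the assumption $\lambda \geq 1/r$ yields $1/(cr) \leq (1/c)\sqrt{\lambda/r}$, so $B(m) \leq (1 + 1/c)\sqrt{\lambda/r}$.

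Combining the two sub-ranges shows $\min\{A(m),B(m)\} \leq (1+1/c)\sqrt{\lambda/r}$ for every $1 \leq m \leq r$, which establishes the claimed bound on $\epsilon$. The only mild obstacle is end-point bookkeeping (ensuring $m^* \leq r$, and handling the case $m^* = r$ in which the second sub-range is empty), but these are routine. No deeper ideas are needed — the estimate is essentially a two-sided envelope argument for the $\min$ of a monotone increasing function and a U-shaped function.
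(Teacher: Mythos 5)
Your proof is correct and follows essentially the same approach as the paper: split on whether $\lambda\geq 1/r$, and in the main case locate the crossover of the increasing branch $\lambda(m-1)/r$ and the decreasing branch near $m\approx\sqrt{r/\lambda}$, absorbing the $(m-1)/s$ term via $s=cr^2$ to get the $(1+1/c)$ factor. The only difference is cosmetic — you use a discrete ceiling threshold and bound each sub-range separately, whereas the paper evaluates both (slightly enlarged) branches at the exact real crossover point $m^*=1/\sqrt{\lambda/r-1/s}$ — and both executions are sound.
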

\begin{proof}
  By definition
  \begin{equation*}
    \epsilon \leq \max_{1\leq m\leq r}\lambda\cdot\frac{m-1}{r} < \lambda.
  \end{equation*}
  Thus if $\lambda\leq 1/r$, then
  \begin{equation*}
    \epsilon \leq \lambda = \sqrt{\lambda^2} \leq \sqrt{\frac{\lambda}{r}},
  \end{equation*}
  from which~(\ref{eq:boundloss}) follows. 

  Therefore assume that $\lambda\geq 1/r$. By definition
  \begin{equation*}
    \epsilon \leq \max_{1\leq m\leq r}\min\left\{\frac{\lambda}{r}\cdot m,\;\frac{1}{m}+\frac{1}{s}\cdot m\right\}.
  \end{equation*}
  Because $s=cr^2\geq r^2$ by assumption, the expression $1/m+m/s$ is decreasing in $m$ for $1\leq m\leq r$, while the expression $\lambda m/r$ is increasing in $m$. Thus because $m^*=1/\sqrt{\lambda/r-1/s}$ is the unique positive real number for which $\lambda m^*/r=1/m^*+m^*/s$, it follows that
  \begin{align*}
    \epsilon
    &\leq \frac{\lambda}{r}\cdot m^* = \frac{1}{m^*}+\frac{1}{s}\cdot m^* \\
    &= \sqrt{\frac{\lambda}{r}} \cdot \frac{1}{\sqrt{1-\frac{r}{\lambda s}}} \\
    &\leq \sqrt{\frac{\lambda}{r}} \cdot \frac{1}{\sqrt{1-\frac{1}{c}}} \\
    &\leq \left(1+\frac{1}{c}\right)\sqrt{\frac{\lambda}{r}},
  \end{align*}
  where the second inequality above holds because $s=cr^2$ and $\lambda\geq 1/r$, and the final inequality holds because $1/\sqrt{1-x}\leq 1+x$ for all $x\leq 1/2$. Thus~(\ref{eq:boundloss}) holds.
\end{proof}

The following bound is used in the proof of Theorem~\ref{thm:qTBdec}.

\begin{lemma}
  \label{lem:twojohnsons}
  For all real numbers $0\leq x\leq 1/6$ and $1/2\leq y\leq 1$, it holds that
  \begin{equation*}
    \frac12\left(1-x-\sqrt{x^2+(1-2x)y}\right) \leq 1-x-\sqrt{x^2+(1-2x)\sqrt{y}}.
  \end{equation*}
\end{lemma}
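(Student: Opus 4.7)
The plan is to reduce the inequality to a one-variable monotonicity statement. Define
\begin{equation*}
  f(y) := 1-x-\sqrt{x^2+(1-2x)y}, \qquad g(y) := 2f(\sqrt{y})-f(y),
\end{equation*}
so the desired inequality is precisely $g(y)\geq 0$. A direct substitution gives
\begin{equation*}
  g(y) = (1-x)+\sqrt{x^2+(1-2x)y}-2\sqrt{x^2+(1-2x)\sqrt{y}},
\end{equation*}
and one checks immediately that $g(1)=0$, since $f(1)=0$. The plan is therefore to prove $g$ is non-increasing on $(0,1]$, so that $g(y)\geq g(1)=0$ for all $y\leq 1$; combined with the trivial condition $x<1/2$ (implied by $x\leq 1/6$), this will finish the proof.

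To show $g$ is non-increasing, I will differentiate. A short computation yields
\begin{equation*}
  g'(y) = \frac{1-2x}{2\sqrt{x^2+(1-2x)y}}-\frac{1-2x}{2\sqrt{y}\,\sqrt{x^2+(1-2x)\sqrt{y}}}.
\end{equation*}
Since $1-2x>0$, the sign of $g'(y)$ equals the sign of the bracketed difference of reciprocals. Clearing denominators and squaring both positive quantities, $g'(y)\leq 0$ is equivalent to
\begin{equation*}
  y\bigl(x^2+(1-2x)\sqrt{y}\bigr) \;\leq\; x^2+(1-2x)y,
\end{equation*}
which rearranges to the manifestly non-negative
\begin{equation*}
  x^2(1-y)+(1-2x)\,y\,(1-\sqrt{y}) \;\geq\; 0
\end{equation*}
for all $y\in[0,1]$ and $x\in[0,1/2]$. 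Hence $g'\leq 0$ throughout $(0,1]$, completing the argument.

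The main (and essentially only) technical step is the clean reduction of the sign of $g'(y)$ to the last displayed inequality; care is needed to track that all quantities being squared are non-negative, but no cases or delicate estimates arise. A side benefit of this approach is that it shows the hypotheses $x\leq 1/6$ and $y\geq 1/2$ are not actually required: the lemma holds for all $x\in[0,1/2]$ and $y\in[0,1]$, though we only need the stated range for the application in Theorem~\ref{thm:qTBdec}.
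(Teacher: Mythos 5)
Your proof is correct and follows essentially the same route as the paper's: fix $x$, observe that the difference of the two sides vanishes at $y=1$, and establish the required monotonicity in $y$ by differentiating and comparing the two square-root terms. Your final rearrangement to the manifestly non-negative expression $x^2(1-y)+(1-2x)\,y\,(1-\sqrt{y})$ is in fact cleaner than the paper's quadratic-root computation, and it correctly shows the hypotheses can be relaxed to $x\in[0,1/2]$, $y\in[0,1]$.
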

\begin{proof}
  Let $f(x,y)$ denote the LHS minus the RHS of the inequality in the lemma statement. Letting $D=[0,1/6]\times[1/2,1]$, then our goal is to show that $f(x,y)\leq 0$ for all $(x,y)\in D$. By definition $f(x,1)=0$ for all $0\leq x\leq 1/6$, so it is sufficient to show that $\pdv{f}{y}\geq 0$ for all $(x,y)\in D$. Now by definition
  \begin{equation*}
    \pdv{f}{y} = -\frac{(1-2x)/2}{2\sqrt{x^2+(1-2x)y}}+\frac{(1-2x)/(2\sqrt{y})}{2\sqrt{x^2+(1-2x)\sqrt{y}}},
  \end{equation*}
  so setting the above $\geq 0$ and rearranging gives that we must show
  \begin{equation*}
    (1-2x-x^2)y-(1-2x)\sqrt{y}+x^2 \leq 0
  \end{equation*}
  But solving the quadratic equation on the LHS above gives roots at $\sqrt{y}=1$ and $\sqrt{y}=x^2/(1-2x-x^2)\leq 1/23$ for $0\leq x\leq 1/6$, so indeed the above inequality holds for all $(x,y)\in D$, and thus all $(x,y)\in D$ have $\pdv{f}{y}\geq 0$ and therefore $f(x,y)\leq 0$, as desired.
\end{proof}

The following bound is used in the proof of Theorem~\ref{thm:fqTBdec}.

\begin{lemma}
  \label{lem:johnsonvsfrs}
  For all real numbers $0\leq x\leq 1/6$ and $0\leq y\leq 1/2$, it holds that
  \begin{equation*}
    \frac12(y-\sqrt{2xy}) \leq 1-x-\sqrt{x^2+(1-2x)(1-y)}.
  \end{equation*}
\end{lemma}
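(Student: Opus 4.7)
My strategy is a clean case split on whether $y\leq 2x$ or $y>2x$, followed by rearrangement and squaring in the nontrivial case. The key observation that motivates this split is that the left-hand side factors as $\tfrac12\sqrt{y}(\sqrt{y}-\sqrt{2x})$, so its sign is governed entirely by the comparison $y \,?\, 2x$.

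\textbf{Case 1: $y\leq 2x$.} Here $y-\sqrt{2xy}=\sqrt{y}(\sqrt{y}-\sqrt{2x})\leq 0$, so the LHS is non-positive. To see the RHS is non-negative, it suffices to show $\sqrt{x^2+(1-2x)(1-y)}\leq 1-x$. Both sides are non-negative (since $x\leq 1/2$), so squaring gives the equivalent condition $y(1-2x)\geq 0$, which holds. Thus $\text{LHS}\leq 0\leq \text{RHS}$.

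\textbf{Case 2: $y>2x$.} Rearrange the desired inequality as
\begin{equation*}
  \sqrt{x^2+(1-2x)(1-y)}\;\leq\;\Bigl(1-x-\tfrac{y}{2}\Bigr)+\tfrac12\sqrt{2xy}.
\end{equation*}
The right-hand side is at least $1-1/6-1/4=7/12>0$ in the given range (and the $\sqrt{2xy}/2$ term is non-negative), so both sides are non-negative and squaring is valid. Expanding and cancelling the common terms $x^2+1-2x-y+2xy$ from both sides, a short calculation reduces the inequality to
\begin{equation*}
  \frac{y(y-2x)}{4}+\Bigl(1-x-\tfrac{y}{2}\Bigr)\sqrt{2xy}\;\geq\;0.
\end{equation*}
In Case 2 both summands are non-negative: the first because $y>2x$, and the second because $1-x-y/2\geq 7/12>0$. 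This completes the proof.

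\textbf{Main obstacle.} There is no deep obstacle; the proof is routine algebra. The only care required is (a) verifying non-negativity of both sides before squaring in Case 2, and (b) organizing the squaring so that the $\sqrt{2xy}$ term ends up with a manifestly non-negative coefficient. It is tempting to isolate the $\sqrt{2xy}$ term and square twice, but this is unnecessary and introduces extraneous case work; leaving $\sqrt{2xy}/2$ on the same side as the other RHS terms yields the cleanest cancellation.
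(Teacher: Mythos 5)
Your proof is correct, and it takes a genuinely different — and cleaner — route than the paper's. The paper sets $f(x,y)$ equal to the difference of the two sides, computes the partial derivatives $\pdv{f}{x}$, $\pdv{f}{y}$, and the mixed derivative, establishes sign conditions on subregions of the domain, and then falls back on a numerical verification of $f\leq -2/n$ over a grid with $n=1000$ combined with the Lipschitz bounds $\pdv{f}{x}\leq 1$, $\pdv{f}{y}\leq 1/2$. Your argument replaces all of this with a two-case algebraic check: the factorization $y-\sqrt{2xy}=\sqrt{y}(\sqrt{y}-\sqrt{2x})$ correctly identifies $y\le 2x$ as the region where the LHS is nonpositive (and there the RHS is nonnegative since $(1-2x)y\ge 0$), and in the complementary case the single squaring is legitimate because $1-x-y/2\ge 7/12>0$ on the stated domain; I have checked the expansion and it does reduce to $\tfrac{y(y-2x)}{4}+\bigl(1-x-\tfrac{y}{2}\bigr)\sqrt{2xy}\ge 0$, whose two terms are manifestly nonnegative when $y>2x$. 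What your approach buys is a fully rigorous, numerics-free proof that a referee can verify line by line; what the paper's approach buys is essentially nothing here beyond being the first thing one tries — your version is the one I would keep.
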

\begin{proof}
  Let $f(x,y)$ denote the LHS minus the RHS of the inequality in the lemma statement. Letting $D[0,1/6]\times[1/2,1]$, then our goal is to show that $f(x,y)\leq 0$ for all $(x,y)\in D$. Computing the partial derivatives
  \begin{align*}
    \pdv{f}{x}(x,y) &= -\frac{1}{2\sqrt{2}}\cdot\sqrt{\frac{y}{x}}+1-\frac{1-x-y}{\sqrt{x^2+(1-2x)(1-y)}} \\
    \pdv{f}{y}(x,y) &= \frac12\left(1-\sqrt{\frac{x}{2y}}\right)-\frac{1-2x}{2\sqrt{x^2+(1-2x)(1-y)}} \\
    \pdv{f}{x}{y}(x,y) &= -\frac{1}{4\sqrt{2}}\cdot\frac{1}{\sqrt{xy}} - \frac{(1-x-y)(1-2x)}{2(x^2+(1-2x)(1-y))^{3/2}} + \frac{1}{\sqrt{x^2+(1-2x)(1-y)}},
  \end{align*}
  it follows that the statements below hold for all $(x,y)\in D$:
  \begin{enumerate}
  \item\label{it:fx0} $f(x,0)=0$
  \item\label{it:fy0} $f(0,y)=y/2+\sqrt{1-y}-1\leq 0$
  \item\label{it:pfx0} ${\pdv{f}{x}}(x,0)=0$
  \item\label{it:pfxy} ${\pdv{}{y}}{\pdv{f}{x}}(x,y)={\pdv{f}{x}{y}}(x,y)\leq 0$ if $x\leq 1/50$ or $y\leq 1/20$
  \item\label{it:pfxl} ${\pdv{f}{x}}(x,y)\leq 1$
  \item\label{it:pfyl} ${\pdv{f}{y}}(x,y)\leq 1/2$
  \end{enumerate}
  Combining item~\ref{it:pfx0} and item~\ref{it:pfxy} above implies that ${\pdv{f}{x}}(x,y)\leq 0$ for all $(x,y)\in D$ such that either $x\leq 1/50$ or $y\leq 1/20$. Combining this fact with item~\ref{it:fy0} above implies that $f(x,y)\leq 0$ for all $(x,y)\in D$ such that either $x\leq 1/50$ or $y\leq 1/20$. 

  Therefore if we find some sufficiently large integer $n\geq 100$ such that $f(x,y)\leq -2/n$ for all $(x,y)\in([1/100,1/6]\cap\bZ/n)\times([1/40,1/2]\cap\bZ/n)$, then item~\ref{it:pfxl} and item~\ref{it:pfyl} above imply that $f(x,y)\leq 0$ for all $(x,y)\in[1/50,1/6]\times[1/20,1/2]$, which combined with our conclusion above that $f(x,y)\leq 0$ when $x\leq 1/50$ or $y\leq 1/20$, implies the desired inequality $f(x,y)\leq 0$ for all $(x,y)\in D$. But we may numerically verify that $n=1000$ satisfies the desired property, which completes the proof. 
\end{proof}

\section{Omitted Proofs}
\label{app:omitproofs}
This section provides proofs that were omitted in the main text.

\begin{proof}[Proof of Proposition~\ref{prop:ael}]
  The desired decoding algorithm for $\cC$ simply applies $\pi_G^{-1}$ to unpermute the symbols, then applies the decoder for $\cC_{\text{in}}$ to each of the $n_{\text{out}}$ inner code blocks, and finally applies the decoder for $\cC_{\text{out}}$ to the resulting state. By definition this decoder runs in $\poly(n\log q)$ if the outer and inner codes' decoding algorithms both run in $\poly(n\log q)$ time. Thus we just need to verify the correctness of this decoding algorithm.

  Assume that the corruptions occur on some set $T\subseteq[n]$ of $|T|\leq\alpha n$ components of $\cC$. Also assume for a contradiction that our decoder fo $\cC$ fails to recover the original message, which can only occur if after applying $\pi_G^{-1}$, at least $\alpha_{\text{out}}n_{\text{out}}$ of the inner code blocks have $\geq\alpha_{\text{in}}n_{\text{in}}$ corruptions, so that all of these inner code blocks are ``overloaded'' and their inner decodings fail to recover the correct value of the outer code component. Let $S\subseteq[n]\cong[n_{\text{out}}]\times[n_{\text{in}}/\Delta]$ denote the set of all the folded components in in these overloaded blocks. Then because $\geq\alpha_{\text{out}}n_{\text{out}}$ inner code blocks are overloaded, we have
  \begin{equation}
    \label{eq:Slowbound}
    |S| \geq \alpha_{\text{out}}n_{\text{out}}\cdot\frac{n_{\text{in}}}{\Delta} = \alpha_{\text{out}}n
  \end{equation}
  while because each of the $|S|\Delta/n_{\text{in}}$ overloaded inner code block has $\geq\alpha_{\text{in}}n_{\text{in}}$ of its $n_{\text{in}}$ symbols corrupted, we have
  \begin{equation*}
    |E(S,T)| \geq \frac{|S|\Delta}{n_{\text{in}}}\cdot\alpha_{\text{in}}n_{\text{in}} = \alpha_{\text{in}}\Delta|S|,
  \end{equation*}
  as each corrupted inner code symbol in an overloaded inner code block corresponds to an edge from $S$ to $T$. But the expander mixing lemma implies that
  \begin{align*}
    |E(S,T)|
    &\leq \frac{\Delta|S||T|}{n}+\lambda\Delta\sqrt{|S||T|} \\
    &\leq \alpha\Delta|S|+\lambda\Delta\sqrt{|S|\cdot\alpha n} \\
    &< \left(\alpha_{\text{in}}-\lambda\sqrt{\frac{\alpha_{\text{in}}}{\alpha_{\text{out}}}}\right)\Delta|S|+\lambda\Delta\sqrt{|S|\cdot\alpha_{\text{in}}n},
  \end{align*}
  where we have applied the definition of $\alpha=\alpha_{\text{in}}-\lambda\sqrt{\alpha_{\text{in}}/\alpha_{\text{out}}}$. Combining the two above inequalities gives that
  \begin{equation*}
    \alpha_{\text{in}}\Delta|S| < \left(\alpha_{\text{in}}-\lambda\sqrt{\frac{\alpha_{\text{in}}}{\alpha_{\text{out}}}}\right)\Delta|S|+\lambda\Delta\sqrt{|S|\cdot\alpha_{\text{in}}n},
  \end{equation*}
  which simplifies to
  \begin{equation*}
    |S| < \sqrt{\alpha_{\text{out}}n},
  \end{equation*}
  contradicting~(\ref{eq:Slowbound}). Thus the assumption that the decoder fails was false, so the decoder must succeed in correcting errors on any set $|T|$ of $|T|\leq\alpha n$ components of $\cC$.
\end{proof}

\end{document}